\renewcommand\thesection      {\arabic{section}}
\renewcommand\thesubsection   {\thesection.\arabic{subsection}}
\newcommand{\identity}{\ensuremath{\mathds{1}}}
\newcommand{\bigzero}{\mbox{\normalfont\Large\bfseries 0}}
\newcommand{\bigI}{\mbox{\normalfont\Large\bfseries I}}
\newcommand{\rvline}{\hspace*{-\arraycolsep}\vline\hspace*{-\arraycolsep}}
\pgfplotsset{compat=1.9}
\crefname{subsection}{Section}{Sections}
\Crefname{subsection}{Section}{Sections}
\renewcommand\p@subsection{}%
\theoremstyle{plain}
\newtheorem{theorem}{Theorem}
\newtheorem{conj}{Conjecture}
\newtheorem{lem}{Lemma}
\newtheorem{cor}{Corollary}
\newtheorem{defi}{Definition}
\theoremstyle{remark}
\newtheorem{rem}{Remark}
\newcommand{\spec}{\operatorname{Spec}}
\newcommand{\supp}{\operatorname{supp}}
\newcommand*{\fancyrefthmlabelprefix}{thm}
\newcommand*{\fancyreflemlabelprefix}{lem}
\newcommand*{\fancyrefcorlabelprefix}{cor}
\newcommand*{\fancyrefdefilabelprefix}{defi}
\newcommand*{\fancyrefalglabelprefix}{alg}
\newcommand*{\frefalgname}{algorithm}
\newcommand*{\Frefalgname}{Algorithm}
\newcommand*{\fancyrefapplabelprefix}{app}
\newcommand*{\frefappname}{appendix}
\newcommand*{\Frefappname}{Appendix}
\definecolor{Green}{HTML}{00AD69}  
\def\beq{\begin{equation}}
\def\eeq{\end{equation}}
\def\bq{\begin{quote}}
\def\eq{\end{quote}}
\def\ben{\begin{enumerate}}
\def\een{\end{enumerate}}
\def\bit{\begin{itemize}}
\def\eit{\end{itemize}}
\def\l|{\left|}
\def\r|{\right|}
\newcommand\C{\mathbbm{C}}
\newcommand\Z{\mathbbm{Z}}
\newcommand\cB{\mathcal{B}}
\newcommand{\cD}{\mathcal{D}}
\newcommand{\cL}{\mathcal{L}}
\newcommand{\cU}{\mathcal{U}}
\newcommand{\cC}{\mathcal{C}}
\newcommand{\Ad}{\operatorname{Ad}}
\newcommand{\ZZ}{\mathbb{Z}}
\newcommand{\id}{\operatorname{id}}
\newcommand{\cS}{\mathcal{S}}
\newcommand{\cO}{\mathcal{O}}
\newcommand{\cH}{\mathcal{H}}
\newcommand{\RR}{\mathbb{R}}
\newcommand{\NN}{\mathbb{N}}
\newcommand{\eps}{\epsilon}
\newcommand{\CX}{\mathit{CX}}
\newcommand{\CZ}{\mathit{CZ}}
\definecolor{Green}{HTML}{00AD69}  
\definecolor{coolblue}{RGB}{0,102,102}
\definecolor{coolgreen}{RGB}{0,153,153}
\definecolor{shinyblue}{RGB}{0,255,255}
\definecolor{lightblue}{RGB}{102,210,255}
\definecolor{lightpurple}{RGB}{140,30,255}
\definecolor{lightpink}{RGB}{204,0,204}
\definecolor{midblue}{RGB}{0,102,204}
\definecolor{midpink}{RGB}{153,0,153}
\definecolor{darkblue}{RGB}{0,0,153}
\definecolor{cyan}{RGB}{0,204,204}
\definecolor{lightgreen}{RGB}{0,255,128}
\definecolor{midgreen}{RGB}{0,204,0}
\definecolor{darkgreen}{RGB}{0,102,51}
\definecolor{midyellow}{RGB}{204,204,0}
\definecolor{darkyellow}{RGB}{153,153,0}
\definecolor{darkpurple}{RGB}{102,0,102}
\definecolor{darkred}{RGB}{102,0,0}
\definecolor{neworange}{RGB}{255,153,51}
\tikzset{baseline={([yshift=-.5ex]current bounding box.center)}}
\tikzset{every path/.style={ line width=0.5pt, line cap=round }}
\colorlet{Virtual}{RedOrange}
\tikzstyle{Bevel} = [ preaction = { draw, white, line width=2pt,  line cap = round } ]
\tikzstyle{Bevel_wide} = [ preaction = { draw, white, line width=4pt,  line cap = round } ]
\tikzstyle{Mark} = [ draw=black, fill=black, line width=0.2pt, inner sep=1.5pt ]
\tikzstyle{Mark_large} = [ inner sep=2.1pt ]
\tikzstyle{Mark_small} = [ inner sep=1pt   ]
\tikzstyle{Mark_medium} = [ inner sep=1.3pt ]
\tikzstyle{Mark_tiny} = [ inner sep=0.8pt ]
\tikzstyle{Mark_fdisk} = [ circle ]
\tikzstyle{Mark_disk} = [ circle ]
\tikzstyle{Mark_square} = [ rectangle ]
\tikzstyle{Mark_fsquare} = [ rectangle ]
\newcommand{\myArrowStyle}{line width=0.4pt,length=3pt,width=3.5pt}
\tikzstyle{->-} = [ decoration={ markings, mark = at position 0.50*\pgfdecoratedpathlength+0.6*3pt with \arrow{>[\myArrowStyle]} }, postaction={decorate} ]
\tikzstyle{-<-} = [ decoration={ markings, mark = at position 0.50*\pgfdecoratedpathlength+0.4*3pt with \arrow{<[\myArrowStyle]} }, postaction={decorate} ]
\tikzstyle{->-25} = [ decoration={ markings, mark = at position 0.25*\pgfdecoratedpathlength+0.6*3pt with \arrow{>[\myArrowStyle]} }, postaction={decorate} ]
\tikzstyle{-<-25} = [ decoration={ markings, mark = at position 0.25*\pgfdecoratedpathlength+0.4*3pt with \arrow{<[\myArrowStyle]} }, postaction={decorate} ]
\tikzstyle{->-75} = [ decoration={ markings, mark = at position 0.75*\pgfdecoratedpathlength+0.6*3pt with \arrow{>[\myArrowStyle]} }, postaction={decorate} ]
\tikzstyle{-<-75} = [ decoration={ markings, mark = at position 0.75*\pgfdecoratedpathlength+0.4*3pt with \arrow{<[\myArrowStyle]} }, postaction={decorate} ] 
\begin{document}

\title{Efficient and simple Gibbs state preparation of the 2D toric code\\
via duality to classical Ising~chains}

\author{Pablo Páez-Velasco}
\email{pablopaez@ucm.es}
\affiliation{Departamento de An\'{a}lisis Matemático y Matemática Aplicada, Universidad Complutense de Madrid, 28040 Madrid, Spain}
\affiliation{Instituto de Ciencias Matemáticas, 28049 Madrid, Spain}
\author{Niclas Schilling}
\email{niclas.schilling@student.uni-tuebingen.de}
\affiliation{Fachbereich Physik, Universität Tübingen, 72076 Tübingen, Germany}
\author{Samuel O. Scalet}
\email{sos25@cam.ac.uk}
\affiliation{Department of Applied Mathematics and Theoretical Physics, University of Cambridge, Wilberforce Road, Cambridge, CB3 0WA, United Kingdom}
\affiliation{IBM Quantum, IBM T. J. Watson Research Center, Yorktown Heights, NY, 10598, USA}
\author{Frank Verstraete}
\email{fv285@cam.ac.uk}
\affiliation{Department of Applied Mathematics and Theoretical Physics, University of Cambridge, Wilberforce Road, Cambridge, CB3 0WA, United Kingdom}
\affiliation{Department of Physics and Astronomy, Ghent University, Krijgslaan 281, 9000 Gent, Belgium}
\author{{\'A}ngela Capel}
\email{ac2722@cam.ac.uk}
\affiliation{Department of Applied Mathematics and Theoretical Physics, University of Cambridge, Wilberforce Road, Cambridge, CB3 0WA, United Kingdom}
\affiliation{Fachbereich Mathematik, Universität Tübingen, 72076 Tübingen, Germany}

\begin{abstract}
We introduce the notion of polynomial-depth duality transformations, which relates two sets of operator algebras through a conjugation by a poly-depth quantum circuit, and make use of this to construct efficient Gibbs samplers for a variety of interesting quantum Hamiltonians as they are poly-depth dual to classical Hamiltonians. This is for example the case for the 2D toric code, which is demonstrated to be poly-depth dual to two decoupled classical Ising spin chains for any system size, and we give evidence that such dualities hold for a wide class of stabilizer Hamiltonians.  Additionally, we extend the above notion of duality to Lindbladians in order to show that mixing times and other quantities such as the spectral gap or the modified logarithmic Sobolev inequality are preserved under duality.
\end{abstract}
\maketitle

\section{Introduction}

Given a Hamiltonian $H$ and some inverse temperature $0 < \beta < \infty$, its associated Gibbs state is defined as
\begin{equation}
    \sigma(\beta) = \frac{e^{-\beta H}}{\Tr[e^{-\beta H}]} \, .
\end{equation}
The Gibbs state of a Hamiltonian describes its equilibrium properties at a given inverse temperature \cite{alhambraquantum}, and has been a subject of interest in the community for a long time \cite{molnar.2015,landoncardinal.2013,riera.2012,muller.2015}. One of the main ambitions in the field is being able to prepare Gibbs states efficiently with a quantum computer. There are multiple algorithms for preparing Gibbs states \cite{Temme2010Metropolis,Poulin.2009,Chowdhury.2017,Harrow.2020,Scalet.2025, Fawzi.2024}, but an avenue that has gained special attraction is that of algorithms based on dissipation \cite{Kastoryano2016GibbsSamplersCommuting,Ding2024EfficientQuantumGibbs,Gilyen.2024,Bardet.2023,Capel2024Gibbs}.  In order to prepare Gibbs states---also known as \textit{Gibbs sampling}---efficiently using dissipation, there are two conditions that must hold: the Lindbladian associated to the model must thermalize \textit{fast} and it must be efficiently implementable on a quantum computer. 

\textbf{In this Letter, we prove that efficient Gibbs state preparation is preserved under conjugation by polynomial-depth quantum circuits}. This fact lies at the core of the subsequent results obtained and motivates the definition of \textit{poly-depth dual families of Hamiltonians}, which will be the main focus of the text. Indeed, as any Hamiltonian composed of commuting Pauli strings can be diagonalized using a poly-depth Clifford circuit \cite{Dehaene.2003,van2020circuit}, it is possible to perform efficient Gibbs sampling for any such Hamiltonian by sampling its classical poly-depth dual counterparts, provided these can be efficiently sampled. A similar approach is studied in \cite{hwang2024gibbsstatepreparationcommuting}. Furthermore, Clifford circuits can be simulated efficiently on classical computers, so the Gibbs sampling algorithms we propose in this work can be realized entirely without using quantum circuits. Even in settings where equilibrium properties like free energies can be efficiently computed classically, quantum Gibbs sampling can be useful to simulate more complicated properties such as the behavior under quantum quenches.

In order to study explicit examples, in \cref{sec:dualitytoricising} we provide a formal proof of the 2D toric code being poly-depth dual to two decoupled classical Ising chains. In particular, as these can be efficiently sampled, this allows us to provide an \textbf{efficient Gibbs sampling algorithm for the 2D toric code for any system size whose runtime is independent of $\beta$}. Note that up to date it was only proven that the Davies generator associated to the 2D toric code was gapped \cite{Alicki_2009,ding2025polynomialtimepreparationlowtemperaturegibbs} and, more recently, had a positive modified log-Sobolev inequality (MLSI) \cite{stengele2025modifiedlogarithmicsobolevinequalities}. Therefore, we obtain a more efficient and practical sampler. 

Furthermore, we provide \textbf{explicit circuits mapping several well-known quantum models, such as the 3D toric code, Haah's code or the X-cube, to simple classical models}. The results obtained, which are summarized in \cref{tab:diagonalizationresults} and discussed in \cref{sec:comm_Pauli_oper}, are backed by a computer-assisted proof up to a large, finite system size.
Should these poly-depth dualities hold for every system size, our methods would provide, for the first time, a constructive method for performing efficient Gibbs sampling for these models. Therefore, the Gibbs state associated with the models could be efficiently sampled for every $\beta < \beta_0$, where $\beta_0 < \infty$ only in the 3D toric code case and $\beta_0=\infty$ in the remaining cases. 

Lastly, the notion of poly-depth duality can be extended to the context of Lindbladians. In this context, dual Lindbladians arise as an alternative to well-known Lindbladians such as the Davies generator \cite{davies_GeneratorsDynamicalSemigroups_1979} or the CKG Lindbladian \cite{Chen2023EfficientExact}. Indeed, if $\cL$ is fast/rapid mixing, then any Lindbladian which is \textit{dual} to it---as defined in \cref{sec:preservationmixingtimes}---is also fast/rapid mixing. Furthermore, if $\cL$ is efficiently implementable in a quantum computer, every dual to it will be efficiently implementable, irrespective of its locality, as long as the duality is given with respect to a poly-depth quantum circuit. 

In \cref{sec:preservationmixingtimes} it is shown that fundamental properties such as uniqueness and full-rankness of fixed points, as well as estimates on the mixing times of the dynamics, are preserved under conjugation by unitaries. This in particular provides us with a plethora of examples of Lindbladians satisfying rapid/fast mixing (which is implied by an MLSI/gap, respectively). Furthermore, given an initial sampler in terms of a Lindbladian with MLSI, we show that when restricting to dualities with poly-log circuits, we obtain an even more efficient sampling. 

\section{Dualities and efficient sampling}

In this work, we will consider a possibly infinite graph $V$ and increasing families of finite subgraphs $\{ \Lambda\}_{\Lambda\Subset V} $, with corresponding Hamiltonians $\{ H_\Lambda\}_{\Lambda\Subset V}$. We will typically consider square lattices and the spins of the system will be placed at the vertices or the midpoints of the edges of the lattices.

\begin{defi}[Poly-depth dual Hamiltonians, dual Lindbladians]\label{def:dual}
Two families of Hamiltonians $\{ H_\Lambda^1\}_{\Lambda\Subset V}$ and $\{ H_\Lambda^2\}_{\Lambda\Subset V}$ are \emph{poly-depth dual} if for every $\Lambda \Subset V$ there exists a unitary matrix $U_\Lambda$ which can be implemented by a circuit of polynomial depth in $|\Lambda|$ consisting of two-local gates, and such that $H^1_\Lambda = U_\Lambda H^2_\Lambda U_\Lambda^\dagger$.

\end{defi}

Note that our definition of duality is not standard; in \cite{weinstein2019universality} two Hamiltonians are defined to be dual if their partition functions are proportional, and thus the sum of their eigenvalues is preserved. In particular, our definition of duality, which preserves all the eigenvalues---and the number of interactions---implies that of \cite{weinstein2019universality}. On the other hand, the notion of duality introduced in \cite{HastingsWen.2005} is restricted to poly-log-depth circuits and only requires the preservation of the ground state energy---and degeneracy---of the Hamiltonian. 

Poly-depth dualities do not preserve many physical properties, such as topological order or any notion of locality. In particular, our duality sits between that studied in \cite{weinstein2019universality} and the one from \cite{HastingsWen.2005}, and is best characterized as a duality that \textit{preserves polynomial-time guarantees}. 
The lack of physicality of our duality is a crucial feature as it allows us to leverage efficient sampling algorithms for physically much simpler models to obtain algorithms for topologically ordered Hamiltonians.

The main tool that will be used throughout the rest of the paper is the following.

\begin{lem}
\label{lem:efficientgibbssampling}
    Fix $\beta<\infty$. If $\{H_\Lambda \}_{\Lambda\Subset V}$ and $\{ \widetilde{H}_\Lambda \}_{\Lambda\Subset V}$ are poly-depth dual families of Hamiltonians and there is an efficient quantum Gibbs sampler for $ \{ \sigma_\beta(H_\Lambda) \}_{\Lambda\Subset V}$, then there is another efficient quantum Gibbs sampler for $\{\sigma_\beta(\widetilde{H}_\Lambda)\}_{\Lambda\Subset V}$. 
\end{lem}
More specifically, for any $\Lambda$, if $H_\Lambda$ and $\widetilde{H}_\Lambda$ are related by a poly-depth unitary $U_\Lambda$, i.e. $H_\Lambda= U_\Lambda \widetilde{H}_\Lambda U_\Lambda^\dagger$ and $\sigma_\beta(H_\Lambda)$ is efficiently sampled with $\mathcal{C}_\Lambda$, then  $\sigma_\beta(\widetilde{H}_\Lambda)$ is efficiently sampled with $U_\Lambda \mathcal{C}_\Lambda$. 
Note that $\mathcal{C}_\Lambda$ acts on the physical and additional ancillary qubits, where the latter are used in order to generate the randomness as we start from an initial pure state and generate a mixed state.
In this work, we will find dualities to \emph{classical} models, which means the entire circuit $\mathcal C$ and randomness generation can be performed classically.
See \cref{fig:gibbssampler} for a better understanding.

From a complexity-theoretic point of view the above also rules out poly-depth dualities between computationally hard and easy Hamiltonians \cite{sly2012,Bergamaschi_2024}. It remains an open question whether such examples exist for local Hamiltonians with the same spectrum. Nevertheless, even in this case, finding explicit circuit dualities remains a nontrivial task that we address in the following sections.

\begin{figure}
    \centering
    \begin{tikzpicture}[scale=1]
    \filldraw[shift={(1.094, 4.163)}, scale=0.5, fill=RoyalBlue!80, opacity=0.3] (0, 0) rectangle (0.564, -1.411);
    \draw[shift={(1.094, 4.022)}, scale=0.5] (0, 0) -- (-1.693, 0);
    \draw[shift={(1.094, 3.598)}, scale=0.5] (0, 0) -- (-1.693, 0);
    \draw[shift={(1.376, 3.598)}, scale=0.5] (0, 0) -- (0.282, 0);
    \draw[shift={(1.799, 3.598)}, scale=0.5] (0, 0) -- (0.282, 0);
    \draw[shift={(1.376, 4.022)}, scale=0.5] (0, 0) -- (1.129, 0);
    \draw[shift={(2.223, 4.022)}, scale=0.5] (0, 0) -- (0.564, 0);
    \draw[shift={(2.223, 3.598)}, scale=0.5] (0, 0) -- (0.564, 0);
    \draw[shift={(1.094, 3.104)}, scale=0.5] (0, 0) -- (-1.693, 0);
    \draw[shift={(1.517, 3.104)}, scale=0.5] (0, 0) -- (-0.282, 0);
    \draw[shift={(1.799, 3.104)}, scale=0.5] (0, 0) -- (0.282, 0);
    \filldraw[shift={(1.94, 4.163)}, scale=0.5, fill=RoyalBlue!80, opacity=0.3] (0, 0) rectangle (0.564, -1.411);
    \filldraw[shift={(1.517, 3.739)}, scale=0.5, fill=RoyalBlue!80, opacity=0.3] (0, 0) rectangle (0.564, -1.411);
    \filldraw[shift={(1.094, 2.893)}, scale=0.5, fill=RoyalBlue!80, opacity=0.3] (0, 0) -- (0, 0.706) -- (0.564, 0.706) -- (0.564, 0);
    \filldraw[shift={(1.94, 2.893)}, scale=0.5, fill=RoyalBlue!80, opacity=0.3] (0, 0) -- (0, 0.706) -- (0.564, 0.706) -- (0.564, 0);
    \draw[shift={(2.223, 3.104)}, scale=0.5] (0, 0) -- (0.564, 0);
    \filldraw[shift={(1.094, 0.847)}, xscale=0.5, yscale=-0.5, fill=RoyalBlue!80, opacity=0.3] (0, 0) rectangle (0.564, -1.411);
    \draw[shift={(1.094, 0.988)}, xscale=0.5, yscale=-0.5] (0, 0) -- (-1.693, 0);
    \draw[shift={(1.094, 1.411)}, xscale=0.5, yscale=-0.5] (0, 0) -- (-1.693, 0);
    \draw[shift={(1.376, 1.411)}, xscale=0.5, yscale=-0.5] (0, 0) -- (0.282, 0);
    \draw[shift={(1.799, 1.411)}, xscale=0.5, yscale=-0.5] (0, 0) -- (0.282, 0);
    \draw[shift={(1.376, 0.988)}, xscale=0.5, yscale=-0.5] (0, 0) -- (1.129, 0);
    \draw[shift={(2.223, 0.988)}, xscale=0.5, yscale=-0.5] (0, 0) -- (0.564, 0);
    \draw[shift={(2.223, 1.411)}, xscale=0.5, yscale=-0.5] (0, 0) -- (0.564, 0);
    \filldraw[shift={(1.94, 0.847)}, xscale=0.5, yscale=-0.5, fill=RoyalBlue!80, opacity=0.3] (0, 0) rectangle (0.564, -1.411);
    \filldraw[shift={(1.517, 1.623)}, xscale=0.5, yscale=-0.5, fill=RoyalBlue!80, opacity=0.3] (0, 0) -- (0, 0.706) -- (0.564, 0.706) -- (0.564, 0);
    \draw[shift={(4.198, 4.022)}, scale=0.5] (0, 0) -- (0.564, 0);
    \draw[shift={(4.198, 3.598)}, scale=0.5] (0, 0) -- (0.564, 0);
    \filldraw[shift={(3.916, 4.163)}, scale=0.5, fill=RoyalBlue!80, opacity=0.3] (0, 0) rectangle (0.564, -1.411);
    \filldraw[shift={(3.916, 2.893)}, scale=0.5, fill=RoyalBlue!80, opacity=0.3] (0, 0) -- (0, 0.706) -- (0.564, 0.706) -- (0.564, 0);
    \draw[shift={(4.198, 3.104)}, scale=0.5] (0, 0) -- (0.564, 0);
    \draw[shift={(3.634, 4.022)}, scale=0.5] (0, 0) -- (0.564, 0);
    \draw[shift={(3.634, 3.598)}, scale=0.5] (0, 0) -- (0.564, 0);
    \draw[shift={(3.634, 3.104)}, scale=0.5] (0, 0) -- (0.564, 0);
    \draw[shift={(4.198, 0.988)}, xscale=0.5, yscale=-0.5] (0, 0) -- (0.564, 0);
    \draw[shift={(4.198, 1.411)}, xscale=0.5, yscale=-0.5] (0, 0) -- (0.564, 0);
    \filldraw[shift={(3.916, 0.847)}, xscale=0.5, yscale=-0.5, fill=RoyalBlue!80, opacity=0.3] (0, 0) rectangle (0.564, -1.411);
    \draw[shift={(3.634, 0.988)}, xscale=0.5, yscale=-0.5] (0, 0) -- (0.564, 0);
    \draw[shift={(3.634, 1.411)}, xscale=0.5, yscale=-0.5] (0, 0) -- (0.564, 0);
    \draw[shift={(1.658, 2.752)}, scale=0.5, dotted] (0, 0) -- (0, -1.693);
    \draw[shift={(4.057, 2.681)}, scale=0.5, dotted] (0, 0) -- (0, -1.693);
    \draw[shift={(2.646, 3.528)}, scale=0.5, dotted] (0, 0) -- (1.693, 0);
    \draw[shift={(2.646, 1.199)}, scale=0.5, dotted] (0, 0) -- (1.693, 0);
    \draw[shift={(6.668, 4.022)}, xscale=-0.5, yscale=0.5] (0, 0) -- (0.564, 0);
    \draw[shift={(6.668, 0.988)}, scale=-0.5] (0, 0) -- (0.564, 0);
    \filldraw[fill=Lavender, opacity=0.3] (4.48, 5.644) rectangle (6.385, 0.494);
    \node[anchor=center, font=\LARGE] at (3.069, 2.328) {$U$};
    \node[anchor=center, font=\LARGE] at (5.433, 2.54) {$\mathcal{C}$};
    \node[anchor=center] at (6.979, 4.022) {$|0\rangle$};
    \node[anchor=center] at (6.985, 0.988) {$|0\rangle$};
    \draw[dotted] (6.773, 3.387) -- (6.773, 1.693);
    \draw[shift={(6.668, 5.434)}, xscale=-0.5, yscale=0.5] (0, 0) -- (0.564, 0);
    \draw[shift={(6.668, 4.587)}, xscale=-0.5, yscale=0.5] (0, 0) -- (0.564, 0);
    \draw[shift={(6.668, 4.587)}, xscale=-0.5, yscale=0.5] (0, 0) -- (0.564, 0);
    \node[anchor=center] at (7.267, 5.01) {$|0\rangle^{\otimes m}$};
    \draw[shift={(6.668, 4.728)}, xscale=-0.5, yscale=0.5] (0, 0) -- (0.564, 0);
    \draw[shift={(6.668, 5.257)}, xscale=-0.5, yscale=0.5] (0, 0) -- (0.564, 0);
    \draw[dotted] (6.526, 5.157) -- (6.526, 4.842);
    \draw[->] (6.773, 6.209) -- (1.976, 6.209);
    \node[anchor=center] at (4.516, 6.491) {Time};
    \draw[shift={(4.48, 5.434)}, xscale=-0.5, yscale=0.5] (0, 0) -- (0.564, 0);
    \draw[shift={(4.48, 4.587)}, xscale=-0.5, yscale=0.5] (0, 0) -- (0.564, 0);
    \draw[shift={(4.48, 4.728)}, xscale=-0.5, yscale=0.5] (0, 0) -- (0.564, 0);
    \draw[shift={(4.48, 5.257)}, xscale=-0.5, yscale=0.5] (0, 0) -- (0.564, 0);
    \draw[dotted] (4.339, 5.157) -- (4.339, 4.842);
    \node[anchor=center] at (3.387, 5.009) {discard};
    \draw [thick,decorate,decoration={calligraphic brace,amplitude=10pt}] (4.233, 0.423) -- (0.988, 0.423) node[midway,below,yshift=-10pt,font=\small\sffamily]{Poly($n$)};
    \draw [thick,decorate,decoration={calligraphic brace,amplitude=10pt}] (6.35, 0) -- (4.516, 0) node[midway,below,yshift=-10pt,font=\small\sffamily]{Poly($n$)};
    \draw [thick,decorate,decoration={calligraphic brace,amplitude=10pt}] (7.62, 4.092) -- (7.62, 0.847) node[midway,right,font=\small\sffamily]{\ \ \ $n$};
    \draw [thick,decorate,decoration={calligraphic brace,amplitude=10pt}] (0, 0.847) -- (0, 4.092) node[midway,left,font=\small\sffamily]{$n$\ \ \ \ };
\end{tikzpicture}
    \caption{Visual representation of the setting described in \cref{lem:efficientgibbssampling}. Note that the number of qubits is $|\Lambda| = n$, whilst the depth of $U_\Lambda$ is polynomial in $n$. $\mathcal{C}_\Lambda$ is assumed to be an efficient Gibbs sampler for $\sigma_\beta(H)$ taking $a$ ancilla qubits.}
    \label{fig:gibbssampler}
\end{figure}
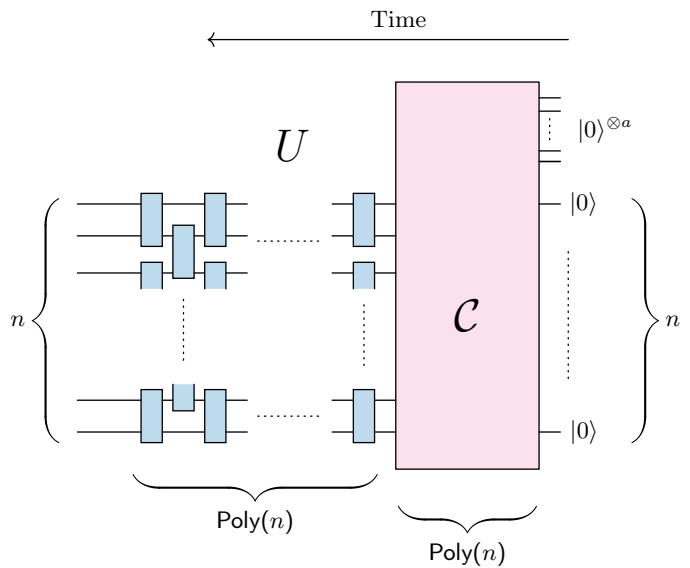

The proof of \cref{lem:efficientgibbssampling} is included in \cref{sec:appendix_technical}. This is in fact very powerful, as we will show that several well-known Hamiltonians---including some whose ground state(s) have non-trivial topological order---can be mapped via a poly-depth quantum circuit to very simple classical Hamiltonians. Thus, efficient sampling for the Gibbs states of those Hamiltonians will follow immediately from the efficient sampling of the Gibbs states of their classical counterparts.

Let us now show where the motivation for our definition of duality comes from. Let $H$ be a Hamiltonian
\begin{equation}
\label{eq:Hamiltonian}
H = \sum_{i = 1}^m \alpha_i H_i,    
\end{equation}
where $\{H_i\}_{i = 1}^m$ is a set of mutually commuting \textit{Pauli operators} and $\{\alpha_i\}_{i = 1}^m$ are some scalars. A Pauli operator is given by the tensor product of the Pauli matrices $\sigma_x, \sigma_y, \sigma_z$ and the identity.

It is known that the terms $\{H_i\}_{i = 1}^m$ can be simultaneously diagonalized, as they commute. Furthermore, this diagonalization can be performed using an explicit one-dimensional quantum circuit of quadratic depth in the number of spins of the system, as proven in \cite{van2020circuit}, following an algorithm originally presented in \cite{aaronson2004simulation}.

One key property of the aforementioned algorithm is its classical complexity of $\cO(n^2\max(m,n))$, where $n$ is the number of spins of the system and $m$ is the number of terms of $H$. Furthermore, the resulting circuit is a Clifford circuit of depth $\cO(n^2)$ \cite{van2020circuit}.

In this paper, we study several well-known quantum models of the form shown in \cref{eq:Hamiltonian}, and make use of this algorithm to obtain classical poly-depth dual Hamiltonians for each of them, which in general do not have a clear structure. For this reason, we apply a simple algorithm, which we call \textit{pseudo-Gaussian elimination algorithm} \ref{alg:gaussian}---which has a classical complexity of $\cO(n^2m^2)$, and obtains a quantum circuit of depth at most $\cO(mn^2)$---that allows us to further restructure, via conjugation by $\CX$ gates, the classical Hamiltonian obtained in order to identify its structure.

\section{Duality between 2D toric code and classical Ising chains}
\label{sec:dualitytoricising}

One of the main results of this work is the poly-depth duality between the two-dimensional toric code and two decoupled classical Ising chains. A duality between these models in the sense of having the same partition function had already been established in \cite{nussinov2009topological}, but here we extend this to our stronger definition of poly-depth duality, i.e. conjugation by unitaries that are additionally of poly-depth. The proof is deferred to \cref{sec:2Dtoriccode_proof}. Nevertheless, in this section we will provide a brief introduction to the model along with a discussion of the result.

The two-dimensional toric code \cite{kitaev2003toric,nussinov2008toric} is defined on a square lattice with periodic boundary conditions. We define these lattices following the same notation as in \cite{Lucia_Pérez-García_Pérez-Hernández_2023}; indeed, let $L \in \NN$, and define $\mathbb{S}_L$ as $\RR \slash \!\!\! \sim$, where the quotient is taken with respect to the relation $x \sim x + L$ for every $x \in \RR$. Without loss of generality, one can identify $\mathbb{S}_L$ with $[0, L)$. Let $(V_L, \mathcal{E}_L)$ be the square lattice on the torus $\mathbb{S}_L \times \mathbb{S}_L$ with vertices on the integer coordinates. Let $\Lambda_L$ be the set of spins of the system. In this case, one spin is located at the midpoint of every edge in $\mathcal{E}_L$ (see \cref{fig:toric3x3}).

\begin{figure}
    \centering
    \begin{tikzpicture}[scale=1]
    \node[Mark, Mark_disk] at (1.049, 1.138) {};
    \node[Mark, Mark_disk] at (2.178, 1.138) {};
    \node[Mark, Mark_disk] at (2.742, 1.703) {};
    \node[Mark, Mark_disk] at (1.613, 1.703) {};
    \node[Mark, Mark_disk] at (0.484, 1.703) {};
    \node[Mark, Mark_disk] at (1.049, 2.267) {};
    \node[Mark, Mark_disk] at (0.484, 2.832) {};
    \node[Mark, Mark_disk] at (1.049, 3.396) {};
    \node[Mark, Mark_disk] at (2.178, 3.396) {};
    \node[Mark, Mark_disk] at (1.613, 2.832) {};
    \node[Mark, Mark_disk] at (2.178, 2.267) {};
    \node[Mark, Mark_disk] at (2.742, 2.832) {};
    \draw[shift={(0.484, 1.138)}, yscale=-1] (0, 0) rectangle (2.258, -2.258);
    \draw[shift={(0.484, 2.267)}, yscale=-1] (0, 0) -- (2.258, 0);
    \draw[shift={(1.613, 1.138)}, yscale=-1] (0, 0) -- (0, -2.258);
    \draw[shift={(0.484, 1.138)}, yscale=-1] (0, 0) -- (0, 1.129) -- (1.129, 1.129) -- (1.129, 0);
    \draw[shift={(1.613, 0.009)}, yscale=-1] (0, 0) -- (1.129, 0) -- (1.129, -1.129);
    \draw[shift={(2.742, 0.009)}, yscale=-1] (0, 0) -- (1.129, 0) -- (1.129, -1.129) -- (0, -1.129);
    \draw[shift={(3.871, 1.138)}, yscale=-1] (0, 0) -- (0, -1.129) -- (-1.129, -1.129);
    \draw[shift={(3.871, 2.267)}, yscale=-1] (0, 0) -- (0, -1.129) -- (-1.129, -1.129);
    \node[Mark, Mark_disk] at (0.484, 0.574) {};
    \node[Mark, Mark_disk, black!50] at (1.049, 0.009) {};
    \node[Mark, Mark_disk] at (1.613, 0.574) {};
    \node[Mark, Mark_disk, black!50] at (2.178, 0.009) {};
    \node[Mark, Mark_disk] at (2.742, 0.574) {};
    \node[Mark, Mark_disk, black!50] at (3.307, 0.009) {};
    \node[Mark, Mark_disk, black!50] at (3.871, 0.574) {};
    \node[Mark, Mark_disk, black!50] at (3.871, 1.703) {};
    \node[Mark, Mark_disk] at (3.307, 2.267) {};
    \node[Mark, Mark_disk, black!50] at (3.871, 2.832) {};
    \node[Mark, Mark_disk] at (3.307, 3.396) {};
    \node[Mark, Mark_disk] at (3.307, 1.138) {};
    \node[anchor=center] at (1.049, 3.678) {$1$};
    \node[anchor=center] at (2.178, 3.678) {$2$};
    \node[anchor=center] at (3.307, 3.678) {$3$};
    \node[anchor=center] at (0.202, 2.832) {$4$};
    \node[anchor=center] at (1.331, 2.832) {$5$};
    \node[anchor=center] at (2.46, 2.832) {$6$};
    \node[anchor=center] at (1.049, 2.055) {$7$};
    \node[anchor=center] at (2.178, 2.055) {$8$};
    \node[anchor=center] at (3.307, 2.055) {$9$};
    \node[anchor=center] at (0.202, 1.703) {$10$};
    \node[anchor=center] at (1.331, 1.703) {$11$};
    \node[anchor=center] at (2.46, 1.703) {$12$};
    \node[anchor=center] at (1.049, 0.927) {$13$};
    \node[anchor=center] at (2.178, 0.927) {$14$};
    \node[anchor=center] at (3.307, 0.927) {$15$};
    \node[anchor=center] at (0.202, 0.574) {$16$};
    \node[anchor=center] at (1.331, 0.574) {$17$};
    \node[anchor=center] at (2.46, 0.574) {$18$};
    \fill[shift={(0.847, 0.049)}, xscale=0.831, yscale=-0.883, RoyalBlue!80, opacity=0.5] (0, 0) -- (0, -0.073) -- (0.763, -0.073) -- (0.763, -1.127) -- (-0.291, -1.127) -- (-0.291, -0.073) -- (0, -0.073) -- (0, 0) -- (-0.364, 0) -- (-0.364, -1.199) -- (0.835, -1.199) -- (0.835, 0) -- (0, 0);
    \fill[shift={(1.976, 0.049)}, xscale=0.831, yscale=-0.883, RoyalBlue!80, opacity=0.5] (0, 0) -- (0, -0.073) -- (0.763, -0.073) -- (0.763, -1.127) -- (-0.291, -1.127) -- (-0.291, -0.073) -- (0, -0.073) -- (0, 0) -- (-0.364, 0) -- (-0.364, -1.199) -- (0.835, -1.199) -- (0.835, 0) -- (0, 0);
    \fill[shift={(3.105, 0.049)}, xscale=0.831, yscale=-0.883, RoyalBlue!80, opacity=0.5] (0, 0) -- (0, -0.073) -- (0.763, -0.073) -- (0.763, -1.127) -- (-0.291, -1.127) -- (-0.291, -0.073) -- (0, -0.073) -- (0, 0) -- (-0.364, 0) -- (-0.364, -1.199) -- (0.835, -1.199) -- (0.835, 0) -- (0, 0);
    \fill[shift={(0.847, 1.178)}, xscale=0.831, yscale=-0.883, RoyalBlue!80, opacity=0.5] (0, 0) -- (0, -0.073) -- (0.763, -0.073) -- (0.763, -1.127) -- (-0.291, -1.127) -- (-0.291, -0.073) -- (0, -0.073) -- (0, 0) -- (-0.364, 0) -- (-0.364, -1.199) -- (0.835, -1.199) -- (0.835, 0) -- (0, 0);
    \fill[shift={(1.976, 1.178)}, xscale=0.831, yscale=-0.883, RoyalBlue!80, opacity=0.5] (0, 0) -- (0, -0.073) -- (0.763, -0.073) -- (0.763, -1.127) -- (-0.291, -1.127) -- (-0.291, -0.073) -- (0, -0.073) -- (0, 0) -- (-0.364, 0) -- (-0.364, -1.199) -- (0.835, -1.199) -- (0.835, 0) -- (0, 0);
    \fill[shift={(3.105, 1.178)}, xscale=0.831, yscale=-0.883, RoyalBlue!80, opacity=0.5] (0, 0) -- (0, -0.073) -- (0.763, -0.073) -- (0.763, -1.127) -- (-0.291, -1.127) -- (-0.291, -0.073) -- (0, -0.073) -- (0, 0) -- (-0.364, 0) -- (-0.364, -1.199) -- (0.835, -1.199) -- (0.835, 0) -- (0, 0);
    \fill[shift={(0.847, 2.307)}, xscale=0.831, yscale=-0.883, RoyalBlue!80, opacity=0.5] (0, 0) -- (0, -0.073) -- (0.763, -0.073) -- (0.763, -1.127) -- (-0.291, -1.127) -- (-0.291, -0.073) -- (0, -0.073) -- (0, 0) -- (-0.364, 0) -- (-0.364, -1.199) -- (0.835, -1.199) -- (0.835, 0) -- (0, 0);
    \fill[shift={(1.976, 2.307)}, xscale=0.831, yscale=-0.883, RoyalBlue!80, opacity=0.5] (0, 0) -- (0, -0.073) -- (0.763, -0.073) -- (0.763, -1.127) -- (-0.291, -1.127) -- (-0.291, -0.073) -- (0, -0.073) -- (0, 0) -- (-0.364, 0) -- (-0.364, -1.199) -- (0.835, -1.199) -- (0.835, 0) -- (0, 0);
    \fill[shift={(3.105, 2.307)}, xscale=0.831, yscale=-0.883, RoyalBlue!80, opacity=0.5] (0, 0) -- (0, -0.073) -- (0.763, -0.073) -- (0.763, -1.127) -- (-0.291, -1.127) -- (-0.291, -0.073) -- (0, -0.073) -- (0, 0) -- (-0.364, 0) -- (-0.364, -1.199) -- (0.835, -1.199) -- (0.835, 0) -- (0, 0);
    \fill[shift={(1.543, 1.076)}, yscale=-1, Lavender, opacity=0.5] (0, 0) -- (-0.414, 0) -- (-0.414, -0.13) -- (0, -0.13) -- (0, -0.543) -- (0.13, -0.543) -- (0.13, -0.13) -- (0.543, -0.13) -- (0.543, 0) -- (0.13, 0) -- (0.13, 0.414) -- (0, 0.414) -- cycle;
    \fill[shift={(0.414, 1.076)}, yscale=-1, Lavender, opacity=0.5] (0, 0) -- (-0.414, 0) -- (-0.414, -0.13) -- (0, -0.13) -- (0, -0.543) -- (0.13, -0.543) -- (0.13, -0.13) -- (0.543, -0.13) -- (0.543, 0) -- (0.13, 0) -- (0.13, 0.414) -- (0, 0.414) -- cycle;
    \fill[shift={(2.671, 1.076)}, yscale=-1, Lavender, opacity=0.5] (0, 0) -- (-0.414, 0) -- (-0.414, -0.13) -- (0, -0.13) -- (0, -0.543) -- (0.13, -0.543) -- (0.13, -0.13) -- (0.543, -0.13) -- (0.543, 0) -- (0.13, 0) -- (0.13, 0.414) -- (0, 0.414) -- cycle;
    \fill[shift={(1.543, 2.205)}, yscale=-1, Lavender, opacity=0.5] (0, 0) -- (-0.414, 0) -- (-0.414, -0.13) -- (0, -0.13) -- (0, -0.543) -- (0.13, -0.543) -- (0.13, -0.13) -- (0.543, -0.13) -- (0.543, 0) -- (0.13, 0) -- (0.13, 0.414) -- (0, 0.414) -- cycle;
    \fill[shift={(0.414, 2.205)}, yscale=-1, Lavender, opacity=0.5] (0, 0) -- (-0.414, 0) -- (-0.414, -0.13) -- (0, -0.13) -- (0, -0.543) -- (0.13, -0.543) -- (0.13, -0.13) -- (0.543, -0.13) -- (0.543, 0) -- (0.13, 0) -- (0.13, 0.414) -- (0, 0.414) -- cycle;
    \fill[shift={(2.671, 2.205)}, yscale=-1, Lavender, opacity=0.5] (0, 0) -- (-0.414, 0) -- (-0.414, -0.13) -- (0, -0.13) -- (0, -0.543) -- (0.13, -0.543) -- (0.13, -0.13) -- (0.543, -0.13) -- (0.543, 0) -- (0.13, 0) -- (0.13, 0.414) -- (0, 0.414) -- cycle;
    \fill[shift={(1.543, 3.334)}, yscale=-1, Lavender, opacity=0.5] (0, 0) -- (-0.414, 0) -- (-0.414, -0.13) -- (0, -0.13) -- (0, -0.543) -- (0.13, -0.543) -- (0.13, -0.13) -- (0.543, -0.13) -- (0.543, 0) -- (0.13, 0) -- (0.13, 0.414) -- (0, 0.414) -- cycle;
    \fill[shift={(0.414, 3.334)}, yscale=-1, Lavender, opacity=0.5] (0, 0) -- (-0.414, 0) -- (-0.414, -0.13) -- (0, -0.13) -- (0, -0.543) -- (0.13, -0.543) -- (0.13, -0.13) -- (0.543, -0.13) -- (0.543, 0) -- (0.13, 0) -- (0.13, 0.414) -- (0, 0.414) -- cycle;
    \fill[shift={(2.671, 3.334)}, yscale=-1, Lavender, opacity=0.5] (0, 0) -- (-0.414, 0) -- (-0.414, -0.13) -- (0, -0.13) -- (0, -0.543) -- (0.13, -0.543) -- (0.13, -0.13) -- (0.543, -0.13) -- (0.543, 0) -- (0.13, 0) -- (0.13, 0.414) -- (0, 0.414) -- cycle;
    \fill[shift={(5.353, 2.593)}, yscale=-1, Lavender, opacity=0.5] (0, 0) -- (-0.414, 0) -- (-0.414, -0.13) -- (0, -0.13) -- (0, -0.543) -- (0.13, -0.543) -- (0.13, -0.13) -- (0.543, -0.13) -- (0.543, 0) -- (0.13, 0) -- (0.13, 0.414) -- (0, 0.414) -- cycle;
    \fill[shift={(5.257, 0.261)}, xscale=0.831, yscale=-0.883, RoyalBlue!80, opacity=0.5] (0, 0) -- (0, -0.073) -- (0.763, -0.073) -- (0.763, -1.127) -- (-0.291, -1.127) -- (-0.291, -0.073) -- (0, -0.073) -- (0, 0) -- (-0.364, 0) -- (-0.364, -1.199) -- (0.835, -1.199) -- (0.835, 0) -- (0, 0);
    \node[anchor=center] at (5.459, 1.443) {$\sigma_z$};
    \node[anchor=center] at (6.164, 0.808) {$\sigma_z$};
    \node[anchor=center] at (5.459, 0.102) {$\sigma_z$};
    \node[anchor=center] at (4.753, 0.808) {$\sigma_z$};
    \node[anchor=center] at (5.423, 3.277) {$\sigma_x$};
    \node[anchor=center] at (6.129, 2.642) {$\sigma_x$};
    \node[anchor=center] at (4.753, 2.642) {$\sigma_x$};
    \node[anchor=center] at (5.423, 2.007) {$\sigma_x$};
\end{tikzpicture}
    \caption{Visual representation of a $3 \times 3$ two-dimensional toric code. We have used pink crosses and blue squares to represent star and plaquette operators, respectively.}
    \label{fig:toric3x3}
\end{figure}

In order to define the 2D toric code Hamiltonian, we use the following notation; given a vertex of the lattice $v \in V_L$, we denote by $\partial v$ the set of the four spins that lay in the edges adjacent to $v$. We also define $p$ as any four-spin set such that its associated edges---which by a slight abuse of notation we also denote by $p \subset \mathcal{E}_L$---form a square. 

Thus, the Hamiltonian associated to this model is given by
\begin{equation}
\label{eq:HamiltonianTC1}
H_{\mathit{TC}} = - \sum_{v \in V_L} J_v A_v - \sum_{p \subset \mathcal{E}_L} J_p B_p,   
\end{equation}
where $J_v, J_p \in \RR$ for every $v \in V_L$ and every $p \subset \mathcal{E}_L$,
\[
A_v := \bigotimes_{i \in \partial v } \sigma_x^i,\quad B_p := \bigotimes_{i \in p} \sigma_z^i.
\]
The above operators are known as star and plaquette operators, respectively. See \cref{fig:toric3x3} for a visual representation of both of them. 

In this context, we prove the following theorem: 
\begin{theorem}[Theorem \ref{thm:result}, informal version]
\label{thm:informalduality2TC}
Let $H_{TC}$ be the two-dimensional toric code Hamiltonian defined on an $L \times L$ lattice, written as in \cref{eq:HamiltonianTC1}. Then, there exists a quantum circuit $C$ composed of $\cO(L^3)$ $\CX$ gates and $\cO(L^2)$ Hadamard gates such that 
\[
C \Big(\sum_{v \in V_L} J_v A_v\Big)C^\dagger
\]
is an Ising chain Hamiltonian and 
\[
C \Big(\sum_{p \subset \mathcal{E}_L} J_p B_p\Big)C^\dagger
\]
corresponds to another Ising chain Hamiltonian with disjoint support from the first.
\end{theorem}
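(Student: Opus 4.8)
The plan is to build $C$ as a three-stage composition $C=C_{\mathrm{star}}\,C_{\mathrm H}\,C_{\mathrm{plaq}}$, treating in turn the two mutually commuting halves $H_A:=\sum_{v}J_vA_v$ and $H_B:=\sum_{p}J_pB_p$ of $H_{TC}$, and exploiting the fact---already used in \cref{sec:Isingtononinter} and made precise by \cref{eq:effectCNOT}---that conjugating a purely $\sigma_z$-type (resp.\ $\sigma_x$-type) Hamiltonian by a $\CX$ gate acts as an elementary column operation on the binary ``terms $\times$ qubits'' incidence matrix, while preserving the $\sigma_z$-type (resp.\ $\sigma_x$-type) form and all commutators. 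As a reusable lemma I would first isolate the following generalisation of \cref{sec:Isingtononinter}: if $\{g_r=\bigotimes_{e\ni r}\sigma_z^e\}_r$ are the ``$\sigma_z$-star'' operators of a finite connected graph whose edges carry the qubits, then a $\CX$ circuit conjugates $\sum_r\lambda_r g_r$ into a nearest-neighbour periodic Ising chain $\sum_i\lambda'_i\,\sigma_z^i\sigma_z^{i+1}$ on a cyclic subset of the qubits. The reason is that the column space of the vertex--edge incidence matrix is the even-weight subspace of $\mathbb{F}_2^{V}$, which coincides with the column space of the incidence matrix of a periodic Ising chain on $|V|$ spins; since $\CX$-conjugation realises arbitrary column operations, the two matrices are column-equivalent, and an explicit circuit of $\cO(|V|\cdot|E|)$ gates is obtained by telescoping along a spanning tree exactly as in \cref{fig:1DIsing}. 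The $\sigma_x$-version follows by conjugating additionally with a Hadamard on every qubit.

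\emph{The plaquette half.} The operators $B_p=\bigotimes_{e\in p}\sigma_z^e$ are precisely the $\sigma_z$-star operators of the dual lattice of $\mathbb{S}_L\times\mathbb{S}_L$ (faces $\leftrightarrow$ dual vertices, edges $\leftrightarrow$ dual edges). Applying the lemma to this dual graph yields a $\CX$ circuit $C_{\mathrm{plaq}}$ with $\cO(L^3)$ gates such that $C_{\mathrm{plaq}}H_BC_{\mathrm{plaq}}^\dagger$ is an Ising chain supported on a distinguished set $Q_B$ of $L^2$ edge-qubits; the remaining $L^2$ edge-qubits, call them $Q_A$, drop out of its support, just as the first spin does in \cref{sec:Isingtononinter}.

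\emph{The star half.} Being a $\CX$ circuit, $C_{\mathrm{plaq}}$ sends the $\sigma_x$-type operators $A_v$ to $\sigma_x$-type operators $\widetilde A_v$ that still commute with the $\sigma_z\sigma_z$ chain on $Q_B$; since that chain is connected, any $\sigma_x$-type operator commuting with it is supported either inside $Q_A$ or on all of $Q_A\cup Q_B$. The key point is that the edge partition $\mathcal{E}_L=Q_A\sqcup Q_B$ and the telescoping order in the previous step can be chosen, using the self-duality of the torus together with the decomposition of $\mathcal{E}_L$ into a spanning tree of the dual lattice, the duals of a spanning tree of the primal lattice, and the two remaining homologically non-trivial edges (which carry the two logical qubits), so that \emph{every} $\widetilde A_v$ is supported entirely inside $Q_A$. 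Granting this, apply $C_{\mathrm H}:=\bigotimes_{e\in Q_A}H_e$ ($\cO(L^2)$ Hadamards, leaving the chain on $Q_B$ untouched): the $\widetilde A_v$ become $\sigma_z$-type operators on $Q_A$, namely the $\sigma_z$-star operators of the primal lattice graph. A final pass of the lemma, this time with $\CX$ gates acting only inside $Q_A$, gives $C_{\mathrm{star}}$ with $\cO(L^3)$ gates mapping them to an Ising chain supported entirely on $Q_A$, disjoint from $Q_B$. Taking $C:=C_{\mathrm{star}}\,C_{\mathrm H}\,C_{\mathrm{plaq}}$ and adding up the gate counts proves the statement.

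\emph{Main obstacle.} The column-space / Gaussian-elimination bookkeeping and the gate counting are routine; the real difficulty is the highlighted claim that the plaquette circuit can be orchestrated so as to simultaneously disentangle the stars from $Q_B$ while keeping both sectors classical, which is exactly where the two non-contractible loops of the torus must be handled by hand. A computer-assisted verification for small $L$, of the kind used elsewhere in the paper, is the natural consistency check.
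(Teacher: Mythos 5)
Your overall architecture (first deal with the plaquette sector, then the star sector, with a Hadamard layer in between) is close in spirit to the paper's, and your observation that $\CX$ conjugation acts as column operations on the $\mathbb{F}_2$ incidence matrix of a $\sigma_z$-type (resp.\ $\sigma_x$-type) Hamiltonian is exactly the right bookkeeping device. However, the step you yourself flag as the "main obstacle" is not a technicality to be checked afterwards: it is the entire content of the theorem, and your proposal does not supply it. The column-equivalence lemma only controls one Pauli type at a time, and a $\CX(v,w)$ that adds column $w$ to column $v$ in the $\sigma_z$ sector simultaneously adds column $v$ to column $w$ in the $\sigma_x$ sector; a circuit $C_{\mathrm{plaq}}$ chosen solely to put $H_B$ into chain form will therefore generically scramble the supports of the $A_v$, and your commutation argument cannot rescue this (indeed $\prod_{e\in Q_B}\sigma_x^e$ commutes with the whole chain on $Q_B$, so "commutes with a connected $\sigma_z\sigma_z$ chain" does not force support inside $Q_A$). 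The paper closes this gap by a structural constraint you do not impose: it fixes a bipartition $\Lambda_A\sqcup\Lambda_B$ of the qubits in advance and builds the first-stage circuit $\tilde C$ exclusively from gates with control in $\Lambda_A$ and target in $\Lambda_B$. With that constraint, $\sigma_z$-strings can only gain or lose support inside $\Lambda_A$ and $\sigma_x$-strings only inside $\Lambda_B$, so the two sectors can be steered simultaneously; the paper then verifies by an explicit case analysis (the gate sets $C_1,\dots,C_4$) that each plaquette loses exactly its $\Lambda_A$ support and each star exactly its $\Lambda_B$ support, and only afterwards converts each decoupled "comb" into a chain by gates acting within a single sector, where no cross-talk can occur. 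Your appeal to self-duality, spanning trees and the two non-contractible loops gestures at the right combinatorics but does not produce such a circuit, and "verify for small $L$" is precisely what the paper's Theorem~\ref{thm:result} is written to avoid.

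Two secondary points. First, your gate count is internally inconsistent: the lemma you state yields $\cO(|V|\cdot|E|)=\cO(L^4)$ column operations for the dual lattice, yet you then assert $C_{\mathrm{plaq}}$ has $\cO(L^3)$ gates; obtaining the $\cO(L^3)$ bound requires the specific telescoping structure of the paper's $C_3^{(i,j)}$ and $C_4^i$ (each of size $\cO(L)$, over $\cO(L^2)$ sites), not generic Gaussian elimination. Second, your target is a periodic Ising chain on $L^2$ spins per sector, whereas the paper (cf.\ \cref{lem:equivalencelemma} and \cref{fig:toricising}) lands on open chains with boundary fields plus two completely free spins carrying the ground-space degeneracy; both targets are efficiently sampleable, so this is cosmetic, but the term-counting and the accounting of the four-fold degeneracy differ and should be made consistent with whichever normal form you choose.
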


The proof, which we defer to \cref{sec:2Dtoriccode_proof}, is inspired by the circuits produced by the algorithm from \cite{van2020circuit} and \cref{alg:gaussian}. 
From analyzing such finite-sized circuits, we find an explicit description of the circuit $C$ needed to diagonalize the Hamiltonian of the toric code for any lattice size $L \times L$. 

In fact, the circuit $C$ can be decomposed into two circuits, namely $V_1$ and $V_2$. When conjugating the toric code Hamiltonian by $V_1$, its operators get restricted to only act on a subset of their support, i.e.,
\[
V_1 A_v(V_1)^\dagger= \hspace{-0.3cm}\bigotimes_{i\in (\partial v)'\subset \partial v}\hspace{-0.3cm}\sigma^i_x,\quad \text{and}\quad V_1 B_p (V_1)^\dagger = \bigotimes_{i \in p' \subset p} \sigma_z^i,
\]
for every $v \in V_L$ and every $p \subset \mathcal{E}_L$. 

This results in two decoupled simple and classical systems, whose interaction graph is a tree. Finally, for each decoupled system we will construct $V_2$, which maps the resulting interaction terms in each decoupled system to Ising interactions (see \cref{sec:Isingtononinter}) thus resulting in two Ising chains. 

See \cref{fig:toricising} for a visual representation of the final two Ising chains that are dual to the toric code model in a $3 \times 3$ lattice. Notice that, as we will prove in \cref{sec:2Dtoriccode_proof}, the resulting Ising chains are of length $L^2-1$ and do not include every spin of the system, but rather there will always be two non-interacting spins, which are related to the four-dimensional ground space of the toric code. These two non-interacting spins can also be seen as the logical subspace of the 2D toric code.

Since our proof provides an explicit construction of the circuit realizing the duality, \cref{thm:informalduality2TC} has an additional application: Our algorithm allows to explicitly prepare Gibbs states within the logical sector \cite{bergamaschi2025rapidmixinggibbsstates}. An interesting open question that can be addressed experimentally with this is whether the logical information of Gibbs states within the logical sector could be recovered using active error correction routines.

The following theorem gives our main application of the duality in this section, the efficient ground and Gibbs state preparation for the toric code.
\begin{theorem}\label{thm:efficient_tc}
    The ground and Gibbs states of the 2D toric code can be prepared with a gate complexity of $O(L^3)$ for any $0\le\beta\le\infty$.
\end{theorem}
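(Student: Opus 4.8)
\emph{Proof plan.}
The plan is to obtain this as an essentially immediate consequence of the poly-depth duality of \cref{thm:informalduality2TC} (formally \cref{thm:result}) together with \cref{lem:efficientgibbssampling} and the triviality of Gibbs states of classical one-dimensional spin chains. Let $C$ be the circuit of \cref{thm:informalduality2TC}, with $O(L^3)$ $\CX$ gates and $O(L^2)$ Hadamards, so that $C H_{TC} C^\dagger = H_1 + H_2$, where $H_1$ and $H_2$ are classical Ising chains on disjoint sets of spins and two further spins lie outside the support of $C H_{TC} C^\dagger$. Since conjugation by a unitary commutes with the Gibbs map, $\sigma_\beta(H_{TC}) = C^\dagger\big(\sigma_\beta(H_1)\otimes\sigma_\beta(H_2)\otimes \tfrac{\identity}{4}\big)C$ for every $0\le\beta\le\infty$, where the factor $\identity/4$ is the maximally mixed state on the two free spins (which are exactly the spins responsible for the $4$-fold topological degeneracy of the toric code), and the $\beta=\infty$ case is read as the limiting normalized ground-space projector. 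So it suffices to prepare the classical state $\sigma_\beta(H_1)\otimes\sigma_\beta(H_2)\otimes\identity/4$ with $O(L)$ gates and ancillas, and then apply $C^\dagger$.

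For the classical part I would proceed as in \cref{sec:Isingtononinter}: a length-$L$ classical Ising chain is mapped to a \emph{non-interacting} classical Hamiltonian $-\sum_i J_i'\sigma_z^i$ by a depth-$O(L)$ circuit of $\CX$ gates, whose Gibbs state is a product $\bigotimes_i \mathrm{diag}(p_i,1-p_i)$ of single-qubit diagonal states. Each such factor is prepared from $\ket{0}$ by one rotation producing $\sqrt{p_i}\ket{0}+\sqrt{1-p_i}\ket{1}$ on an ancilla, a $\CX$ onto the system qubit, and then discarding the ancilla---$O(1)$ gates per spin. Conjugating this product back through the depth-$O(L)$ $\CX$ circuit gives $\sigma_\beta(H_j)$. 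For $\beta=\infty$ all $p_i\in\{0,\tfrac12,1\}$, so no ancilla or rotation is needed: one prepares a fixed computational-basis state (a single ground state) or places Hadamards on any zero-coupling spins for the maximally mixed ground-space state. Tensoring the maximally mixed state on the last two spins finishes this step with $O(L)$ gates.

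Assembling the pieces: it is cleanest to fold the two depth-$O(L)$ Ising$\to$non-interacting $\CX$ circuits into $C$ from the outset, so that $C H_{TC} C^\dagger$ is already fully non-interacting and classical; then one prepares the trivial product Gibbs (or ground) state above and applies $C^\dagger$. The gate count is dominated by $C^\dagger$, giving $O(L^3)$ two-local gates, and correctness follows from $\sigma_\beta(H_{TC})=C^\dagger\sigma_\beta\!\big(CH_{TC}C^\dagger\big)C$ and its $\beta\to\infty$ limit. Equivalently, this is \cref{lem:efficientgibbssampling} applied with the decoupled classical chains as the ``easy'' side and a direct preparation circuit in place of a dissipative sampler.

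I do not expect a substantive obstacle here; the only care needed is bookkeeping. One must verify that the diagonalizing circuit of \cref{thm:result} leaves exactly two spins free, that these carry precisely the toric code's $4$-fold degeneracy so that the $\identity/4$ factor reproduces $\sigma_\beta$ on the nose, and that the $\beta\to\infty$ limit of this identity is the normalized ground-space projector. A secondary point of hygiene is to fix the meaning of ``prepare the Gibbs state'': the construction above produces the mixed state honestly using $O(L)$ ancillas (via purification), but it equally well yields a sampler in the sense of \cref{lem:efficientgibbssampling}, and either reading is covered. The real content of the statement has already been carried out in \cref{sec:2Dtoriccode_proof}; the present theorem is its payoff.
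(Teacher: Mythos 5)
Your proposal is correct and is essentially the paper's argument: combine \cref{thm:result} with \cref{lem:efficientgibbssampling} and then prepare the classical Gibbs state of the decoupled Ising chains; the paper realizes this last step as sequential sampling of the conditional distributions $p(x_{i+1}|x_i)$, which (as noted at the end of \cref{sec:Isingtononinter}) is equivalent to your further $\CX$-reduction to a non-interacting model followed by single-spin preparation. The only slip is bookkeeping: the two chains have $\Theta(L^2)$ spins, so the Ising-to-non-interacting circuits and the product-state preparation cost $O(L^2)$ rather than $O(L)$ gates, which of course still leaves the total dominated by the $O(L^3)$ gates of $C$.
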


Note that our circuit model assumes gates between arbitrary pairs of qubits. In particular, if a more restrictive circuit connectivity is assumed, the size of the circuit may increase up to a $O(L^2)$ factor.

While we include the ground state preparation in the theorem, more efficient algorithms for this part are known \cite{bravyi2022adaptiveconstantdepthcircuitsmanipulating,Chen2024quantumcircuits,bravyi2006,yang2023acceleratinginexacthypergradientdescent,PRXQuantum.3.040315,10.21468/SciPostPhys.6.3.029}.
Note that the only error dependence is due to the sampling error of a \emph{single classical} bit with a given binary distribution, while the quantum circuit is inherently error-free for a gate set containing noiseless $\CX$ and $H$ gates.

The proof of this theorem only involves sampling from a classical Gibbs state of two Ising chains and then applying the explicit circuit from \cref{thm:informalduality2TC}. Sampling from a one-dimensional Ising chain can be achieved by a simple iterative procedure that we describe in \cref{sec:2Dtoriccode_proof}.

Prior approaches to efficient sampling exist based on Davies generators \cite{Alicki_2009} and modifications thereof \cite{ding2025polynomialtimepreparationlowtemperaturegibbs} achieving fast mixing with a polynomial scaling in $\beta$, which come with cumbersome and less efficient implementations based on implementing the Lindbladian evolution. Moreover, for the Gibbs state preparation, we achieve a circuit complexity of $\mathcal{O}(N^{3/2})$ on $N=|\Lambda| = \mathcal{O}(L^2)$, independently of $\beta$. To the best of our knowledge, this improves the results known up to date, which relied on dissipative Gibbs sampling algorithms using the Davies generator, and required a quantum circuit of complexity $\widetilde{\mathcal{O}}(N^{3}\exp(\beta))$ \cite{Alicki_2009} or $\widetilde{\mathcal{O}}(N^5\beta)$ \cite{ding2025polynomialtimepreparationlowtemperaturegibbs}, or more recently $\widetilde{\mathcal{O}}(N^{2}\exp(\beta))$  \cite{stengele2025modifiedlogarithmicsobolevinequalities}. Recently, an alternative approach studied in \cite{schmidhuber2025hamiltoniandecodedquantuminterferometry} obtains a classical algorithm running in $\mathcal{O}(N^4)$ time. Lastly, a depth of $\widetilde{\mathcal{O}}(N^{2})$ was obtained for the defected 2D toric code in \cite{hwang2024gibbsstatepreparationcommuting}. A more detailed discussion is deferred to \Cref{sec:implementation_davies,sec:previous_Gibbs_sampling}.

\begin{figure}
    \centering
    \begin{tikzpicture}[scale=1]
    \node[Mark, Mark_disk] at (1.128, 1.129) {};
    \node[Mark, Mark_disk] at (2.257, 1.129) {};
    \node[Mark, Mark_disk] at (2.822, 1.693) {};
    \node[Mark, Mark_disk] at (1.693, 1.693) {};
    \node[Mark, Mark_disk] at (0.564, 1.693) {};
    \node[Mark, Mark_disk] at (1.128, 2.258) {};
    \node[Mark, Mark_disk] at (0.564, 2.822) {};
    \node[Mark, Mark_disk] at (1.128, 3.387) {};
    \node[Mark, Mark_disk] at (2.257, 3.387) {};
    \node[Mark, Mark_disk] at (1.693, 2.822) {};
    \node[Mark, Mark_disk] at (2.257, 2.258) {};
    \node[Mark, Mark_disk] at (2.822, 2.822) {};
    \draw[shift={(0.564, 1.129)}, yscale=-1] (0, 0) rectangle (2.258, -2.258);
    \draw[shift={(0.564, 2.258)}, yscale=-1] (0, 0) -- (2.258, 0);
    \draw[shift={(1.693, 1.129)}, yscale=-1] (0, 0) -- (0, -2.258);
    \draw[shift={(0.564, 1.129)}, yscale=-1] (0, 0) -- (0, 1.129) -- (1.129, 1.129) -- (1.129, 0);
    \draw[shift={(1.693, 0)}, yscale=-1] (0, 0) -- (1.129, 0) -- (1.129, -1.129);
    \draw[shift={(2.822, 0)}, yscale=-1] (0, 0) -- (1.129, 0) -- (1.129, -1.129) -- (0, -1.129);
    \draw[shift={(3.95, 1.129)}, yscale=-1] (0, 0) -- (0, -1.129) -- (-1.129, -1.129);
    \draw[shift={(3.95, 2.258)}, yscale=-1] (0, 0) -- (0, -1.129) -- (-1.129, -1.129);
    \node[Mark, Mark_disk] at (0.564, 0.564) {};
    \node[Mark, Mark_disk] at (1.693, 0.564) {};
    \node[Mark, Mark_disk] at (2.822, 0.564) {};
    \node[Mark, Mark_disk] at (3.386, 2.258) {};
    \node[Mark, Mark_disk] at (3.386, 3.387) {};
    \node[Mark, Mark_disk] at (3.386, 1.129) {};
    \node[anchor=center] at (1.128, 3.739) {$1$};
    \node[anchor=center] at (2.257, 3.739) {$2$};
    \node[anchor=center] at (3.386, 3.739) {$3$};
    \node[anchor=center] at (0.246, 2.822) {$4$};
    \node[anchor=center] at (1.41, 2.822) {$5$};
    \node[anchor=center] at (2.539, 2.822) {$6$};
    \node[anchor=center] at (1.128, 2.046) {$7$};
    \node[anchor=center] at (2.257, 2.046) {$8$};
    \node[anchor=center] at (3.386, 2.011) {$9$};
    \node[anchor=center] at (0.211, 1.693) {$10$};
    \node[anchor=center] at (1.41, 1.693) {$11$};
    \node[anchor=center] at (2.539, 1.693) {$12$};
    \node[anchor=center] at (0.846, 0.988) {$13$};
    \node[anchor=center] at (2.61, 0.953) {$14$};
    \node[anchor=center] at (3.386, 0.917) {$15$};
    \node[anchor=center] at (0.176, 0.564) {$16$};
    \node[anchor=center] at (1.481, 0.212) {$17$};
    \node[anchor=center] at (2.539, 0.564) {$18$};
    \filldraw[fill=RoyalBlue!80, fill opacity=0.3] 
    (1.693, 0.564) ellipse[x radius=0.2, y radius=-0.2];
    \filldraw[fill=RoyalBlue!80, fill opacity=0.3] 
    (3.386, 2.258) ellipse[x radius=0.2, y radius=-0.2];
    \filldraw[fill=Lavender, fill opacity=0.3] 
    (1.128, 3.387) ellipse[x radius=0.2, y radius=-0.2];
    \filldraw[fill=Lavender, fill opacity=0.3] 
    (2.822, 1.693) ellipse[x radius=0.2, y radius=-0.2];
    \node[Mark, Mark_disk, black!50] at (1.128, 0) {};
    \node[Mark, Mark_disk, black!50] at (2.257, 0) {};
    \node[Mark, Mark_disk, black!50] at (3.386, 0) {};
    \node[Mark, Mark_disk, black!50] at (3.95, 0.564) {};
    \node[Mark, Mark_disk, black!50] at (3.95, 1.693) {};
    \node[Mark, Mark_disk, black!50] at (3.95, 2.822) {};
    \filldraw[shift={(1.693, 0.431)}, rotate=90, fill=RoyalBlue!80, fill opacity=0.3] (0, 0) .. controls (0, 0.094) and (0.047, 0.141) .. (0.141, 0.141) .. controls (0.235, 0.141) and (0.282, 0.094) .. (0.282, 0) -- (0.282, -1.129) .. controls (0.282, -1.223) and (0.235, -1.27) .. (0.141, -1.27) .. controls (0.047, -1.27) and (0, -1.223) .. (0, -1.129) -- cycle;
    \filldraw[shift={(1.128, 0.995)}, rotate=90, fill=RoyalBlue!80, fill opacity=0.3] (0, 0) .. controls (0, 0.094) and (0.047, 0.141) .. (0.141, 0.141) .. controls (0.235, 0.141) and (0.282, 0.094) .. (0.282, 0) -- (0.282, -1.129) .. controls (0.282, -1.223) and (0.235, -1.27) .. (0.141, -1.27) .. controls (0.047, -1.27) and (0, -1.223) .. (0, -1.129) -- cycle;
    \filldraw[shift={(1.128, 2.124)}, rotate=90, fill=RoyalBlue!80, fill opacity=0.3] (0, 0) .. controls (0, 0.094) and (0.047, 0.141) .. (0.141, 0.141) .. controls (0.235, 0.141) and (0.282, 0.094) .. (0.282, 0) -- (0.282, -1.129) .. controls (0.282, -1.223) and (0.235, -1.27) .. (0.141, -1.27) .. controls (0.047, -1.27) and (0, -1.223) .. (0, -1.129) -- cycle;
    \filldraw[shift={(2.257, 2.124)}, rotate=90, fill=RoyalBlue!80, fill opacity=0.3] (0, 0) .. controls (0, 0.094) and (0.047, 0.141) .. (0.141, 0.141) .. controls (0.235, 0.141) and (0.282, 0.094) .. (0.282, 0) -- (0.282, -1.129) .. controls (0.282, -1.223) and (0.235, -1.27) .. (0.141, -1.27) .. controls (0.047, -1.27) and (0, -1.223) .. (0, -1.129) -- cycle;
    \filldraw[shift={(1.27, 1.136)}, scale=-1, fill=RoyalBlue!80, fill opacity=0.3] (0, 0) .. controls (0, 0.094) and (0.047, 0.141) .. (0.141, 0.141) .. controls (0.235, 0.141) and (0.282, 0.094) .. (0.282, 0) -- (0.282, -1.129) .. controls (0.282, -1.223) and (0.235, -1.27) .. (0.141, -1.27) .. controls (0.047, -1.27) and (0, -1.223) .. (0, -1.129) -- cycle;
    \filldraw[shift={(1.128, 3.253)}, rotate=90, fill=Lavender, fill opacity=0.3] (0, 0) .. controls (0, 0.094) and (0.047, 0.141) .. (0.141, 0.141) .. controls (0.235, 0.141) and (0.282, 0.094) .. (0.282, 0) -- (0.282, -1.129) .. controls (0.282, -1.223) and (0.235, -1.27) .. (0.141, -1.27) .. controls (0.047, -1.27) and (0, -1.223) .. (0, -1.129) -- cycle;
    \filldraw[shift={(0.564, 2.688)}, rotate=90, fill=Lavender, fill opacity=0.3] (0, 0) .. controls (0, 0.094) and (0.047, 0.141) .. (0.141, 0.141) .. controls (0.235, 0.141) and (0.282, 0.094) .. (0.282, 0) -- (0.282, -1.129) .. controls (0.282, -1.223) and (0.235, -1.27) .. (0.141, -1.27) .. controls (0.047, -1.27) and (0, -1.223) .. (0, -1.129) -- cycle;
    \filldraw[shift={(1.693, 2.688)}, rotate=90, fill=Lavender, fill opacity=0.3] (0, 0) .. controls (0, 0.094) and (0.047, 0.141) .. (0.141, 0.141) .. controls (0.235, 0.141) and (0.282, 0.094) .. (0.282, 0) -- (0.282, -1.129) .. controls (0.282, -1.223) and (0.235, -1.27) .. (0.141, -1.27) .. controls (0.047, -1.27) and (0, -1.223) .. (0, -1.129) -- cycle;
    \filldraw[shift={(0.564, 1.559)}, rotate=90, fill=Lavender, fill opacity=0.3] (0, 0) .. controls (0, 0.094) and (0.047, 0.141) .. (0.141, 0.141) .. controls (0.235, 0.141) and (0.282, 0.094) .. (0.282, 0) -- (0.282, -1.129) .. controls (0.282, -1.223) and (0.235, -1.27) .. (0.141, -1.27) .. controls (0.047, -1.27) and (0, -1.223) .. (0, -1.129) -- cycle;
    \filldraw[shift={(1.693, 1.559)}, rotate=90, fill=Lavender, fill opacity=0.3] (0, 0) .. controls (0, 0.094) and (0.047, 0.141) .. (0.141, 0.141) .. controls (0.235, 0.141) and (0.282, 0.094) .. (0.282, 0) -- (0.282, -1.129) .. controls (0.282, -1.223) and (0.235, -1.27) .. (0.141, -1.27) .. controls (0.047, -1.27) and (0, -1.223) .. (0, -1.129) -- cycle;
    \filldraw[shift={(0.705, 1.701)}, scale=-1, fill=Lavender, fill opacity=0.3] (0, 0) .. controls (0, 0.094) and (0.047, 0.141) .. (0.141, 0.141) .. controls (0.235, 0.141) and (0.282, 0.094) .. (0.282, 0) -- (0.282, -1.129) .. controls (0.282, -1.223) and (0.235, -1.27) .. (0.141, -1.27) .. controls (0.047, -1.27) and (0, -1.223) .. (0, -1.129) -- cycle;
    \filldraw[shift={(2.078, 3.391)}, rotate=-135, fill=Lavender, fill opacity=0.3] (0, 0) .. controls (0, -0.094) and (-0.047, -0.141) .. (-0.141, -0.141) .. controls (-0.235, -0.141) and (-0.282, -0.094) .. (-0.282, 0) -- (-0.282, 1.023) .. controls (-0.282, 1.117) and (-0.235, 1.165) .. (-0.141, 1.165) .. controls (-0.047, 1.165) and (0, 1.117) .. (0, 1.023) -- cycle;
    \filldraw[shift={(2.257, 0.995)}, rotate=90, fill=RoyalBlue!80, fill opacity=0.3] (0, 0) .. controls (0, 0.094) and (0.047, 0.141) .. (0.141, 0.141) .. controls (0.235, 0.141) and (0.282, 0.094) .. (0.282, 0) -- (0.282, -1.129) .. controls (0.282, -1.223) and (0.235, -1.27) .. (0.141, -1.27) .. controls (0.047, -1.27) and (0, -1.223) .. (0, -1.129) -- cycle;
    \filldraw[shift={(2.642, 0.574)}, rotate=135, yscale=-1, fill=RoyalBlue!80, fill opacity=0.3] (0, 0) .. controls (0, -0.094) and (-0.047, -0.141) .. (-0.141, -0.141) .. controls (-0.235, -0.141) and (-0.282, -0.094) .. (-0.282, 0) -- (-0.282, 1.023) .. controls (-0.282, 1.117) and (-0.235, 1.165) .. (-0.141, 1.165) .. controls (-0.047, 1.165) and (0, 1.117) .. (0, 1.023) -- cycle;
\end{tikzpicture}
    \caption{Visual representation of the dual model obtained for a $3 \times 3$ two-dimensional toric code, which corresponds to two decoupled one-dimensional Ising chains. Note that the final dual Hamiltonian acts trivially on both spins $3$ and $16$. Pink and blue colors represent the two decoupled final Ising chains. Each bar denotes a $\sigma_z \otimes \sigma_z$ interaction, and every circle represents a $\sigma_z$ magnetic field.}
    \label{fig:toricising}
\end{figure}

\section{Duality between commuting Pauli operators and classical models}\label{sec:comm_Pauli_oper}

As indicated in the previous section, the explicit circuit obtained in \cref{thm:informalduality2TC} was inspired by a computational procedure, which we explain here together with a plethora of analogous poly-depth dualities.
We provide an algorithm, which takes as input a commuting Pauli Hamiltonian and outputs---deterministically in time polynomial in the number of sites and terms---a polynomially sized circuit that maps the Hamiltonian to a classical model (see \cite{github2025}). Before even considering asymptotic scaling, this can be seen as a practical tool to preprocess Hamiltonians for quantum state preparation tasks. We break down the sampling problem into an often simpler state preparation problem followed by postprocessing by an explicit efficient quantum circuit.

While the procedure, described in more detail in \cref{sec:appendixtableaus}, provably diagonalizes every given input Hamiltonian by performing operations that correspond to gates on a tableau containing the Hamiltonian coefficients, the locality of the final Hamiltonian cannot be predetermined. However, with a suitable design of \cref{alg:gaussian}, we were able to identify patterns in the algorithmic output corresponding to the eight interaction models given in the left column of \cref{tab:diagonalizationresults}. Tracing back these patterns allowed us to empirically establish poly-depth dualities in the sense of \cref{def:dual} between the initial lattice models and well-understood interaction terms such as the 1D Ising model. This results in the following conjecture. 

\begin{conj}\label{conj:dualities}
The poly-depth dualities listed in \cref{tab:diagonalizationresults} hold in the sense of \cref{def:dual}.
\end{conj}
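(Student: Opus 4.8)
The plan is to promote the finite-size computer-assisted verification to a proof valid for every system size, handling each entry of \cref{tab:diagonalizationresults} separately and following the template established for the 2D toric code (\cref{thm:informalduality2TC}). For a lattice model $\{H_\Lambda\}$ from the left column, I would first run \cref{alg:gaussian} together with the diagonalization routine of \cite{van2020circuit} on an increasing family of lattice sizes, recording the emitted Clifford circuit in a fixed normal order (all Hadamards first, then layered $\CX$ gates). Because each model is a translation-invariant family indexed by a single size parameter, the circuits should organize into a size-independent bulk pattern plus boundary corrections, and the first concrete task is to read off an explicit ansatz $C_\Lambda = V_2(\Lambda)\,V_1(\Lambda)$ as a function of that parameter, exactly as the decomposition $C = V_2 V_1$ used in \cref{sec:dualitytoricising}.

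The second step is to verify the ansatz \emph{for all sizes} by induction on the size parameter (equivalently, on the number of stabilizer generators already processed). The only technical input is the elementary rule for conjugating a Pauli string by an $H$ or $\CX$ gate: such a gate modifies only those generators supported on the one or two qubits on which it acts, and does so by a deterministic, explicit rule. Using this, one shows that $V_1$ restricts every generator (star, plaquette, cube, \dots) to a strict subset of its support, so that the resulting interaction hypergraph becomes a forest, with the few residual non-interacting spins accounting for the ground-space degeneracy; one then shows, exactly as in \cref{sec:Isingtononinter}, that $V_2$ straightens each tree component into a 1D Ising chain or a non-interacting Hamiltonian, matching the right column of \cref{tab:diagonalizationresults}. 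Finally one counts gates to confirm the depth is polynomial in $|\Lambda|$ and invokes \cref{def:dual} (and, for the sampling consequences, \cref{lem:efficientgibbssampling}).

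The principal obstacle is Haah's code, and to a lesser extent the X-cube model: their fracton (type-II) character means the supports of the stabilizer generators grow with the lattice size and the emitted circuit need not admit a clean translation-invariant closed form, so identifying a closed-form, provably correct size-$\Lambda$ ansatz --- and, crucially, bounding its depth by a polynomial --- is far more delicate than in the toric-code case. A secondary difficulty is that \cref{alg:gaussian} is only guaranteed to output \emph{some} classical Hamiltonian, not the specific model named in the table; matching the two may require an additional round of $\CX$-conjugation (as already illustrated in \cref{sec:Isingtononinter}), which must itself be given an explicit size-$\Lambda$ description and shown to be poly-depth. Should the natural circuit depth for some entry turn out to be superpolynomial, the conjecture for that entry would require either a cleverer circuit or a relaxation of the poly-depth clause in \cref{def:dual}.
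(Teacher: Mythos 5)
Your proposal is a roadmap rather than a proof, and it is worth being clear that the paper itself does not prove this statement either: it is stated as a conjecture, supported only by running the diagonalization algorithm and \cref{alg:gaussian} up to $L=90$ (2D) and $L=20$ (3D), together with the one fully worked-out case of the 2D toric code in \cref{sec:2Dtoriccode_proof}. The authors explicitly defer the general proof, saying they expect it to follow by ``a proof analogous to the case of the toric code'' but omit it because it requires a separate tedious description for each model. Your plan --- extract a size-independent circuit ansatz $C_\Lambda = V_2(\Lambda)V_1(\Lambda)$ from the algorithmic output, verify it by induction using the $\CX$/Hadamard conjugation rules of \cref{eq:effectCNOT}, show $V_1$ decouples and localizes the generators and $V_2$ straightens each component into an Ising chain, then count gates --- is exactly the strategy the paper envisions and carries out for the toric code, so in that sense you are on the paper's intended route.

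However, as a proof of the conjecture the proposal has the obvious gap that you yourself flag: every substantive step is conditional on first \emph{finding} a closed-form, provably correct ansatz for each of the seven remaining models, and none is exhibited. The induction step (``one shows that $V_1$ restricts every generator to a strict subset of its support, so that the interaction hypergraph becomes a forest'') is precisely the model-specific case analysis that constitutes the entire content of the toric-code proof (the case distinctions over $(i,j)$ for plaquettes and the counting argument via $\mathcal{C}_w$ and $\mathcal{A}_w$ for stars); asserting that the analogous analysis goes through for, say, Haah's code or the X-cube is not a proof that it does. Your concern about Haah's code is also slightly misplaced in one respect: its stabilizer generators are strictly local (two-qubit-per-site cube operators of fixed support size), so the difficulty is not growing generator supports but rather the lack of an obvious geometric decoupling pattern analogous to the comb structure of \cref{fig:twocombs}; the table nonetheless claims the dual is again two decoupled Ising chains. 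In short, the proposal correctly identifies the intended proof technique but does not close the conjecture, and neither does the paper.
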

Our evidence consists of computing the output of the above algorithm for system sizes up to $L=90$ (2D models) or $L=20$ (3D models), which constitutes a proof for such systems sizes and leaves us with little doubt about its correctness in arbitrary sizes.
If the claimed structure of the output is correct, the polynomial asymptotic circuit size provably follows from runtime bounds on the algorithm.
However, we omit a formal proof that would involve a similar analysis to the one in the previous section, as it involves a separate tedious description for each model.

Henceforth adopting this conjecture, we state the following application to state preparation. 

\begin{theorem}\label{thm:Gibbs_samplers_Pauli_models}
    Assuming~\cref{conj:dualities}, there are polynomial-time Gibbs sampling algorithms for the models in \cref{tab:diagonalizationresults} at any $0\le\beta<\infty$ except for the 3D toric code for which there is a polynomial-time Gibbs sampler for $\beta<\beta_0$ for some critical temperature $1/\beta_0$.
\end{theorem}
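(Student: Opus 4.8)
The plan is to reduce quantum Gibbs sampling for each model to classical Gibbs sampling of its dual, using the two ingredients already in hand. Fix a model from \cref{tab:diagonalizationresults}, with Hamiltonian family $\{H_\Lambda\}_{\Lambda\Subset V}$, and let $\{\widetilde H_\Lambda\}_{\Lambda\Subset V}$ be the classical family listed in the corresponding right-column entry. By \cref{conj:dualities} these families are poly-depth dual in the sense of \cref{def:dual}, and the conjugating unitaries $U_\Lambda$ are precisely the Clifford circuits produced by \cref{alg:gaussian}, whose depth is polynomial in $|\Lambda|$ by the runtime bounds on that algorithm. Hence, by \cref{lem:efficientgibbssampling}, it suffices to construct for each classical family $\{\widetilde H_\Lambda\}$ an efficient Gibbs sampler $\mathcal{C}_\Lambda$; the sampler for the quantum model is then $U_\Lambda\mathcal{C}_\Lambda$, of gate complexity polynomial in $|\Lambda|$.

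I would then dispatch the classical duals according to their structure. A non-interacting classical Hamiltonian (every term acting on a single site) has a product Gibbs state, sampled in linear time by drawing each bit from an explicit Bernoulli distribution. A disjoint union of one-dimensional Ising chains --- with arbitrary couplings, an optional longitudinal field, and open or periodic boundary conditions --- is sampled exactly at every $\beta<\infty$ by the transfer-matrix forward-sampling scheme already invoked for \cref{thm:efficient_tc}: multiply $2\times 2$ transfer matrices to obtain the needed marginals, then reveal the spins sequentially, each conditional law being an explicit Bernoulli. Since, for every model in the table other than the 3D toric code, the dual is a disjoint union of pieces of these two kinds, this yields polynomial-time Gibbs samplers valid for all $0\le\beta<\infty$, i.e.\ $\beta_0=\infty$.

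The 3D toric code is the one exception because its classical dual contains a genuinely three-dimensional classical Ising model on a cubic sublattice (the sector carrying the loop-like excitations), and this model has a finite-temperature phase transition at some $\beta_c\in(0,\infty)$. In the high-temperature (uniqueness) regime $\beta<\beta_0$, the single-site heat-bath dynamics of that Ising model mixes in time $O(|\Lambda|\log|\Lambda|)$ by standard rapid-mixing results for spin systems off criticality (alternatively, a cluster algorithm can be used); running this Markov chain --- classically, then loading the outcome, or as a Lindbladian --- provides the sampler $\mathcal{C}_\Lambda$, and composing with $U_\Lambda$ as above gives the claimed polynomial-time Gibbs sampler for $\beta<\beta_0$. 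One may take $\beta_0=\beta_c$ using sharp high-temperature mixing theorems, or more conservatively any $\beta_0$ below the Dobrushin uniqueness threshold, which already suffices for the statement.

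The main obstacle is this last step. First, one must identify exactly which classical model the relevant sector of the 3D toric code maps to under the circuit of \cref{alg:gaussian} and verify it is (a disjoint union involving) the standard 3D Ising model rather than a model with atypical couplings or geometry; second, one must cite a rapid-mixing theorem covering the precise family of couplings and boundary conditions so produced, uniformly in $\Lambda$. Everything else is bookkeeping: confirming that the classical duals of the remaining models decompose into single-site and 1D-Ising pieces, and tracking that the circuit depth of $U_\Lambda$ plus the cost of the classical sampler stays polynomial in $|\Lambda|$.
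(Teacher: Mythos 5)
Your overall route is the paper's route: invoke \cref{conj:dualities} to get the poly-depth dual classical family, apply \cref{lem:efficientgibbssampling} to reduce to classical sampling of the duals, sample the Ising chains exactly by the sequential conditional-probability scheme of \cref{thm:efficient_tc}, and handle the one non-chain component of the 3D toric code dual by high-temperature Glauber dynamics. Two caveats. First, your claimed dichotomy---``for every model other than the 3D toric code, the dual is a disjoint union of single-site pieces and 1D Ising chains''---is not what \cref{tab:diagonalizationresults} says: the 2D color code with $L \bmod 3 = 2$ dualizes to \emph{lasso} Ising chains (a chain with one extra bond closing a loop), and the X-cube dual also contains general 1D nearest-neighbor systems. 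These are still efficiently samplable (the lasso has bounded treewidth, so exact forward sampling works, and the paper separately proves an MLSI for its Davies generator in \cref{sec:lasso}), but your proof as written does not cover them. Second, for the 3D toric code you insist on identifying the non-chain sector as the standard 3D Ising model; the paper deliberately avoids this, since that identification is only believed, not verified. What is verified (and what the conjecture actually asserts) is the weaker property that the dual is a classical local model with constant-degree interaction graph---each term on at most four spins, each spin in at most four terms---and high-temperature mixing of Glauber dynamics for such bounded-degree local models already gives the sampler for $\beta<\beta_0$. Your fallback to a Dobrushin-type uniqueness threshold is essentially this argument, so the fix is to drop the 3D-Ising identification entirely rather than treat it as an obstacle to be overcome.
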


The proof of this theorem, analogously to the previous section, consists of applying the circuit to the respective classical Gibbs sampler, and hence only requires such a classical algorithm. These are given by the previously described Ising chains which are again sampled in the same way as described in the section before with the only exception being the 3D toric code.

For the 3D toric code, the dual system consists of local interactions with bounded degree interaction graphs. For these systems, efficient Gibbs sampling follows from results on the mixing time of Glauber dynamics at high temperature \cite{caputo2015approximatetensorizationentropyhigh}.
At low temperature, phase transitions in the 3D toric code have been studied in \cite{castelnovo2008threetoric}, such that we do not expect efficiency in that regime as thermal phase transitions are commonly linked to computational ones \cite{Harrow.2020}. It is not clear whether the critical temperature of the phase transition coincides with the limit of computational efficiency.

Note that the selection of models is motivated by the findings in \cite{weinstein2019universality}, which proves this duality in a weaker sense: The factorization of the partition function into partition functions of the dual models. Their results imply that computing the 3D toric code partition function is as hard as computing the partition function of the 3D Ising model, explaining the breakdown of the efficiency of our algorithms for this model at low temperatures.
\begin{table*}
    \centering
    \includegraphics[scale = 0.88]{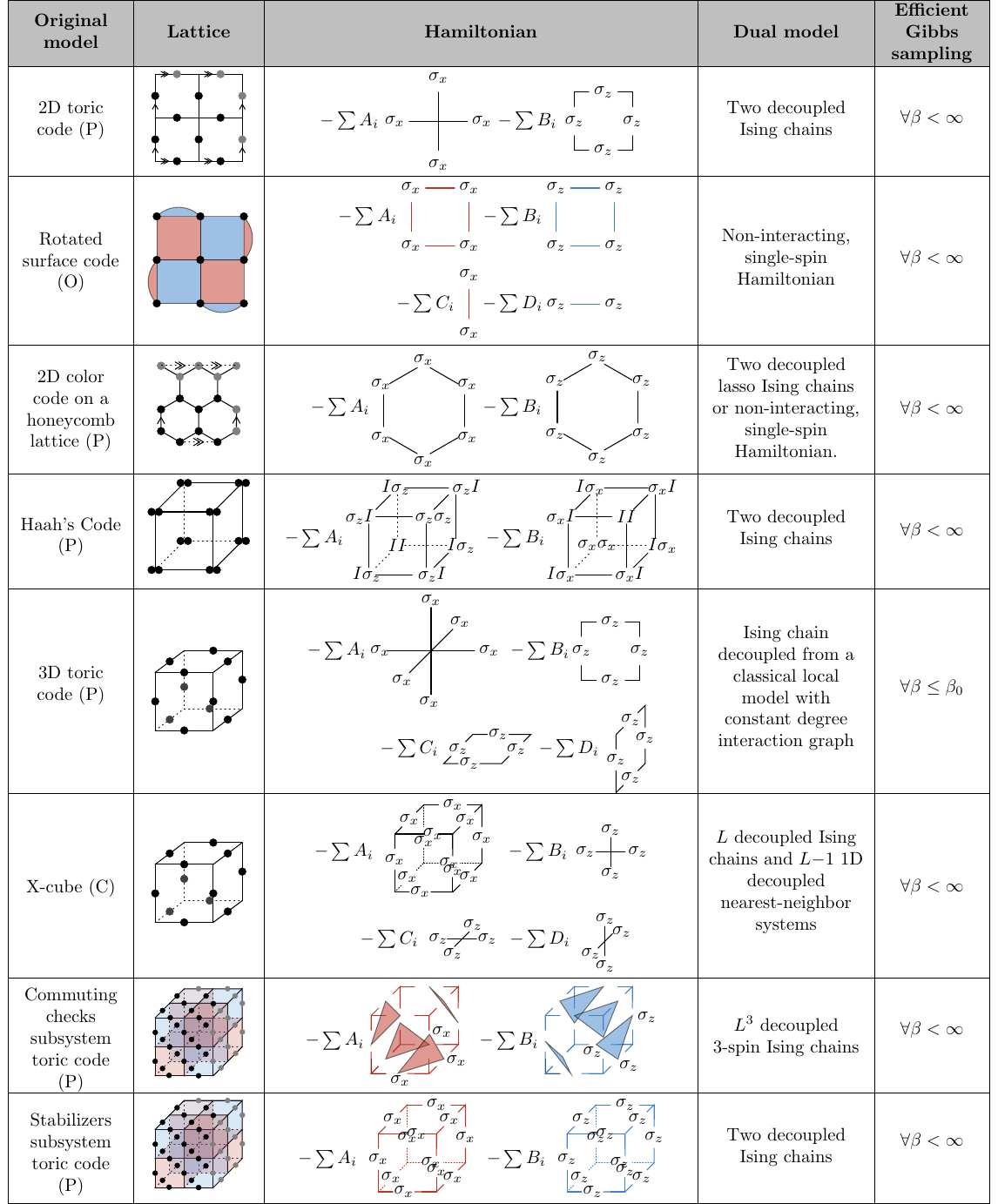}
    \caption{Summary of the results obtained for the different models considered. The boundary conditions considered depend on the model; most models have been studied under periodic (P) boundary conditions, whilst others have been studied with open (O) or cylindrical (C) boundary conditions. Except for the 2D toric code, for which we incorporate a formal proof of the duality, all of the above findings are verified for system sizes up to $90 \times 90$ in the two-dimensional models, and $20 \times 20 \times 20$ in the three-dimensional models. Efficient Gibbs sampling follows by \cref{lem:efficientgibbssampling} from the poly-depth duality of the models. Refer to \cref{sec:lasso} for a proof of the existence of an efficient Gibbs sampling method for the lasso Ising chains, while the other models follow from well-known results, see \cref{sec:appendixtableaus} for more details.}
    \label{tab:diagonalizationresults}
\end{table*}

\section{Preservation of mixing times under conjugation by unitaries}
\label{sec:preservationmixingtimes}

In this section, we apply the notion of duality explored above in the context of Lindbladians. Given a state $\rho\in \cS(\cH)$, let us recall that a Lindbladian is given by 
 \begin{equation} \label{eq GKSL Lindblad form (rep in unitary invariance)}
        \cL(\rho) = -i[H, \rho] + \sum_{k} \gamma_k \left[L_k\rho L_k^\dagger - \frac{1}{2}\{L_k^\dagger L_k, \rho\} \right],
    \end{equation}
with positive constants $\gamma_k$, a Hamiltonian $H=H^\dagger$, and (bounded) jump-operators $\{L_k\} \subseteq \cB(\cH)$. For an arbitrary unitary $U \in \cU(\cH)$ define the map $\widetilde{\cL}$ as
    \begin{equation} \label{eq rotated Lindbladian}
        \widetilde{\cL}:=\Ad_{U}\circ\cL\circ\Ad_{U^\dagger} \; 
    \end{equation}
with adjoint action $\Ad_{U}(X):=UXU^\dagger.$ Then, we say that $\cL$ and $\widetilde{\cL}$ are \textit{dual Lindbladians}. 

The following properties are preserved under duality of Lindbladians (see \cref{sec:mixing_times} for a precise formulation of these properties). 

\begin{theorem}[\Cref{thm:preservation_mixing_times}, informal version]\label{thm:preservation_mixing_times_informal}
     Let  $\cL$, $U$ and $\widetilde{\cL}$ be defined as above.  Then, the following statements hold:
    
    \begin{enumerate}
    
    \item If $\sigma$ is the unique fixed point of $\cL$, $\widetilde{\sigma} = U \sigma U^\dagger$ is the unique fixed point of $\widetilde{\cL}$.
    
    \item
    The spectral gap, MLSI and mixing time of $\cL$ coincide with those of $\widetilde{\cL}$.

\end{enumerate}
\end{theorem}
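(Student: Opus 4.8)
The plan is to exploit that $\Ad_U$ is a unital $\ast$-automorphism of $\cB(\cH)$ which is simultaneously an isometry for every Schatten $p$-norm, preserves the trace, positivity and rank, and preserves the relative entropy, $D(U\rho U^\dagger\,\|\,U\sigma U^\dagger)=D(\rho\,\|\,\sigma)$. Conjugating $\cL$ by $\Ad_U$ is then merely a change of coordinates under which all the data entering the definitions of spectral gap, MLSI and mixing time are invariant, so every intrinsically defined quantity is preserved. Before anything else one checks the one-line computation that $\widetilde{\cL}$ is again of GKSL form, with Hamiltonian $UHU^\dagger$ and jump operators $\{UL_kU^\dagger\}$, so that the statement even makes sense.

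First I would record the elementary superoperator identity $e^{t\widetilde{\cL}}=\Ad_U\circ e^{t\cL}\circ\Ad_{U^\dagger}$ for all $t\ge 0$, which follows from $\Ad_{U^\dagger}=(\Ad_U)^{-1}$ by telescoping $\widetilde{\cL}^{\,n}=\Ad_U\circ\cL^n\circ\Ad_{U^\dagger}$ and summing the exponential series. In particular $\rho$ is a fixed point of $\cL$ iff $U\rho U^\dagger$ is a fixed point of $\widetilde{\cL}$, and since $\Ad_U$ is a bijection on states preserving rank, both uniqueness and full-rankness of the fixed point transfer verbatim; this proves item (1) with $\widetilde{\sigma}=U\sigma U^\dagger$, which is the prerequisite that makes the MLSI well-defined.

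For item (2): the spectral gap is a function of $\spec(\cL)$ alone, and $\widetilde{\cL}$ is similar to $\cL$ via the invertible superoperator $\Ad_U$, so $\spec(\widetilde{\cL})=\spec(\cL)$; if the reversible definition is used, $\Ad_U$ additionally intertwines the GNS/KMS inner products attached to $\sigma$ and $\widetilde\sigma$ because it is a $\ast$-isomorphism carrying $\sigma\mapsto\widetilde\sigma$, so the Dirichlet forms are unitarily equivalent and the gaps still agree. For the mixing time, the semigroup identity and the trace-norm isometry property give
\[
\bigl\|e^{t\widetilde{\cL}}(\rho)-\widetilde{\sigma}\bigr\|_1=\bigl\|e^{t\cL}\bigl(\Ad_{U^\dagger}\rho\bigr)-\sigma\bigr\|_1 ,
\]
and as $\rho$ ranges over all states so does $\Ad_{U^\dagger}\rho$, so the worst-case convergence profiles, hence the mixing times, coincide. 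For the MLSI, invariance of relative entropy together with the semigroup identity yields $D\bigl(e^{t\widetilde{\cL}}(\rho)\,\|\,\widetilde\sigma\bigr)=D\bigl(e^{t\cL}(\Ad_{U^\dagger}\rho)\,\|\,\sigma\bigr)$; differentiating at $t=0$ shows the entropy-production functional obeys $\mathrm{EP}_{\widetilde{\cL}}(U\rho U^\dagger)=\mathrm{EP}_{\cL}(\rho)$, so the optimal constant $\alpha$ in $\mathrm{EP}\ge 2\alpha\,D(\cdot\|\sigma)$ is unchanged.

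The main obstacle is not any single estimate but bookkeeping about definitions: one must ensure the notions of spectral gap and MLSI are taken in their unitarily covariant form — in particular that the fixed point (and the KMS inner product or Dirichlet form built from it) is itself conjugated — and one must invoke item (1), plus preservation of full-rankness, before the MLSI is even defined. Once the ambient maps are recognized as a $\ast$-automorphism that is simultaneously a Schatten and relative-entropy isometry carrying $\sigma\mapsto\widetilde\sigma$, every claimed equality reduces to a direct substitution.
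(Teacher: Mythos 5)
Your proposal is correct and follows essentially the same route as the paper's proof: conjugation by $\Ad_U$ preserves GKSL form, rank, spectrum, the trace norm, and the relative-entropy/entropy-production functionals, so fixed points, spectral gap, MLSI constant and mixing time all transfer by direct substitution $\rho\mapsto U^\dagger\rho U$. The explicit semigroup identity $e^{t\widetilde{\cL}}=\Ad_U\circ e^{t\cL}\circ\Ad_{U^\dagger}$ that you record is used implicitly in the paper; otherwise the arguments coincide.
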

The proof of this result is deferred to \cref{sec:preservation_mixing_proof}. Note that it can be immediately extended to families of Lindbladians in order to show that rapid and fast mixing are preserved under dualities: $\{\widetilde{\cL}_\Lambda\}_{\Lambda \Subset V}$ achieves fast/rapid mixing, or has MLSI/spectral gap if and only if $\{\cL_\Lambda\}_{\Lambda \Subset V}$ does (cf. \cref{cor:unitary_invariance_families}). This immediately gives us a plethora of examples of systems satisfying a positive MLSI/spectral gap or having rapid/fast mixing, as explicitly written in  \cref{cor:examples_MLSI_gap}. In particular, if \cref{conj:dualities} holds, we can find a Lindbladian that satisfies a positive MLSI at every positive temperature for all these models except for the 3D toric code, for which such a Lindbladian will satisfy a positive MLSI at high enough temperature (cf. \cref{thm:Gibbs_samplers_Pauli_models}).

While we note that the above theorem makes no further assumption on the unitary, in the context of this work we are most interested in the case where $U$ is a poly-depth circuit. 
In \cite{Bergamaschi_2024}, the same type of duality has been observed to relate two Davies generators of a classical non-interacting and a commuting low-locality Hamiltonian, respectively.
Consequently, the two corresponding dual Davies generators come with an efficient implementation due to their low locality.
We improve upon their result, Lemma~2.3, by showing that the MLSI does in fact not decay with the circuit depth when conjugating the Davies Lindbladian as well.

Furthermore, we can consider polynomial-sized circuits as the unitaries relating strictly local Lindbladians to their duals as in the rest of this paper.
While the dual Lindbladian is no longer local, we notice that standard simulation techniques like \cite[Theorem 1]{Li2023SimulatingMarkovianOQS} based on block encodings yield an efficient implementation of the dual Lindbladian: Simply conjugating each application of the block encoding of Lindblad operators and unitary terms by the circuit yields an efficient implementation of the dual Lindbladian \footnote{While this demonstrates a family of Lindbladians with efficiently implementable duals, note that an implementation of the original Lindbladian and post-processing of the sample by application of the circuit is in general more efficient.}.

Let us conclude by mentioning a straightforward application of \cref{lem:efficientgibbssampling} and \cref{thm:preservation_mixing_times_informal}. By conjugation with poly-depth unitaries, efficient sampling is preserved, as shown in \cref{lem:efficientgibbssampling}. Additionally, the same conjugation in the Lindbladian preserves the mixing time (\cref{thm:preservation_mixing_times_informal}), and thus, by restricting the depth of the unitaries, if we sample with known Lindbladians we can improve the efficiency of the corresponding dual samplers (cf. \cref{cor:examples_MLSI_gap}). 

\section{Outlook}

In this paper, we introduce a new notion of duality based on conjugation by poly-depth circuits, and we show that efficient sampling is preserved under such a duality. We prove analytically that the 2D toric code is dual to two decoupled classical 1D Ising models for any system size. In addition, we give computer-assisted proofs of multiple other dualities between Hamiltonians composed of commuting Pauli operators and classical Hamiltonians up to a finite system size. We conjecture that these dualities extend to arbitrary system sizes. We leave it as an open question to formally prove the dualities in \cref{tab:diagonalizationresults}, which we expect to be achievable by a proof analogous to the case of the toric code. Another direction to explore in future works is the extension of the results of the current paper to higher-dimensional Paulis, as well as more general models.

A very similar notion of duality to that explored in this paper was already studied in \cite{Verstraete2009} to disentangle Hamiltonians, i.e. to map them to non-interacting models. They show that the circuit to disentangle the XY model is of depth $\mathcal{O}(n \log n)$, and mention that something similar could be done for the honeycomb and stabilizer states.  As the disentangling of the XY model goes through free fermions, we expect that the results of the current manuscript can be extended to fermions, and more generally to coherent states \cite{Perelomov.1986}, but leave this for future work. 

A natural question from our duality relations is to identify further Hamiltonians that are dual to ``easy'' Hamiltonians. More specifically, given a Hamiltonian whose associated Gibbs state can be efficient sampled (e.g. a non-interacting one), which Hamiltonians can we reach from it with a poly-depth circuit?

Lastly, while scaling to large system sizes will require fault-tolerance as our circuits are not shallow-depth, testing our approach for Gibbs state preparation for small systems on noisy quantum computers in the near future appears within reach. Indeed, the sampling part of the computation can be performed entirely on a classical computer whereas the quantum circuit comes with a very moderate gate count, when restricting to small system sizes. In particular, for the 2D toric code, the number of gates used in the duality circuit associated to a $3\times3$ lattice is $84$, and the gate count remains below $1000$ for a $7\times7$ lattice, which should make the implementation possible given error rates in the order of $10^{-3}$ achieved by current hardware. A detailed discussion including depth estimates may depend on the connectivity and is left for future work.

\section*{Acknowledgments}

P.P.V. and A.C. are grateful to Yuhan Liu for fruitful discussions. The authors are grateful to Sofyan Iblisdir for the useful comments and suggestions. 
S.O.S. acknowledges support from the UK Engineering and Physical Sciences Research Council (EPSRC) under grant number EP/W524141/1. N.S. and A.C. acknowledge the support of the Deutsche Forschungsgemeinschaft (DFG, German Research Foundation) - Project-ID 470903074 - TRR 352. F.V. acknowledges support by the UKRI grant EP/Z003342/1. This project was funded within the QuantERA II Programme which has received funding from the EU’s H2020 research and innovation programme under the GA No 101017733. P.P.V acknowledges support of the Spanish Ministry of Science and Innovation MCIN/AEI/10.13039/501100011033 (CEX2023-001347-S, CEX2019-000904-S, CEX2019-000904-S-21-2, PID2020-113523GB-I00, PID2023-146758NB-I00), Comunidad de Madrid (TEC-2024/COM-84-QUITEMAD-CM), Universidad Complutense de Madrid (FEI-EU-22-06), and the CSIC Quantum Technologies Platform PTI-001. This work has been financially supported by the Ministry for Digital Transformation and the Civil Service of the Spanish Government through the QUANTUM ENIA project call – Quantum Spain project, and by the European Union through the Recovery, Transformation and Resilience Plan – NextGenerationEU within the framework of the Digital Spain 2026 Agenda.

\sloppy
\bibliography{references}

\clearpage

\appendix
\renewcommand{\thesubsection}{\thesection.\arabic{subsection}}

\section{Notation and basic results}
\label{sec:setting}

\subsection{Notation}

Let us consider a possibly infinite graph $G = (V, E) $  with vertices $V$ and edges $E$. This graph will be endowed with the metric $d: V \times V \rightarrow \mathbb{R}_+$ given by the shortest path in the graph, and for $X,Y \subset V$,
\begin{equation}
d(X,Y):= \underset{x \in X}{\inf} \underset{y \in Y}{\inf} d(x,y) \, .
\end{equation}
We will denote by $\Lambda \Subset V$ that $\Lambda$ is a finite subset of $V$. In this paper, we will generally take $V \equiv \mathbb{Z}^D$ for $D$ any dimension, although some of our results can be extended to more general graphs. 

In order to describe the Hilbert space associated to the quantum spin system, we set at each site $x\in V$ a local Hilbert space $\mathcal{H}_x \equiv \mathbb{C}^d$, and thus the global Hilbert space associated to $\Lambda \Subset V$ is $\mathcal{H}_\Lambda = \underset{x \in \Lambda}{\bigotimes} \mathcal{H}_x $ of dimension $d^{|\Lambda|}$. In some of the examples presented in this paper, we associate the spins to the edges, rather than the vertices of the graph, but the definition of the Hilbert space is totally analogous. Moreover, whenever we consider a square lattice and place the spins either in the vertices or the edges, we denote by $\Lambda_L$ the set of spins defined in a cubic lattice of size $L$. We denote the algebra of bounded operators in $\Lambda$ by $\mathcal{B}_\Lambda \equiv \mathcal{B}(\mathcal{H}_\Lambda)$, its subset of Hermitian operators by $\mathcal{A}_\Lambda \equiv \mathcal{A}(\mathcal{H}_\Lambda)$, and the set of density matrices by  $\mathcal{S}_\Lambda \equiv \mathcal{S}(\mathcal{H}_\Lambda)$. For any two finite subsets $\Lambda \subset \Lambda'$, we identify $\mathcal{B}_\Lambda \subset \mathcal{B}_{\Lambda'}$ via the canonical linear isometry between $\mathcal{B}_\Lambda$ and $\mathcal{B}_{\Lambda'}$ given by $X \mapsto X \otimes \identity_{\Lambda' \setminus \Lambda}$. From this we can define the algebra of local observables as $\mathcal{B}_{\text{loc}}:= \underset{\Lambda \Subset V}{\bigcup} \mathcal{B}_\Lambda$, and its closure, which is the algebra of quasi-local observables.

In the following, we adopt the setting of a finite-dimensional Hilbert space $\cH$ with the \textit{Hilbert-Schmidt} (HS) inner product, given by $\langle X, Y \rangle_{\text{HS}}:= \Tr[X^\dagger Y]$, where $X^\dagger$ represents the transpose conjugate of $X$. Dual operators with respect to the HS inner product will be denoted with $^*$. Additionally, given a full-rank state $\sigma \in \cS(\cH)$, we will make use of the \textit{Kubo-Martin-Schwinger} (KMS) inner product, given by
\begin{equation}
    \langle X, Y \rangle^{\text{KMS}}_\sigma:= \Tr[\sigma^{1/2}X^\dagger \sigma^{1/2} Y], 
\end{equation}
for every pair of matrices $X, Y \in \cB (\cH)$. We will furthermore consider Schatten $p$-norms in these spaces, given for $X \in \cB (\cH)$ and any $p \in [1, \infty)$ by $\norm{X}_p:= \Tr[ |X|^p]^{1/p}$, with extension to the operator norm, the $\infty$-norm, by $\norm{X}_\infty:= \text{lim}_{p \rightarrow \infty }\norm{X}_p$. Finally, we will frequently consider the coupling of systems with arbitrarily large ancillas. In such a case, we will employ the \textit{diamond}-norm, given by $\norm{\mathcal{L}_X}_{\diamond}=\|\mathcal{L}_X\|_{1\to1,\text{cb}}$, which is the completely bounded $1\to1$ norm, i.e. $$\|\mathcal{L}_X\|_{1\to1,\text{cb}} := \sup_{n\in\mathbb{N}}\sup_{\rho\in\mathcal{S}(\mathbb{C}^n\otimes\mathcal{H})}\|(\id_n\otimes\mathcal{L}_X)(\rho)\|_1 .$$

\subsection{Basic operations}
\label{sec:appendixbasicops}
Throughout the text, we will frequently use the Pauli matrices on $\mathbb{C}^{2\times 2} $, which we will denote by $\{ \identity , \sigma_x , \sigma_y , \sigma_z\}$. Moreover, we recall the definition and basic properties of the quantum gates that we will use in the circuits constructed in this paper. All quantum gates act on one or two qubits. We will use in the rest of the paper the following 1-qubit gates: 
\begin{equation}
    \begin{array}{ccc}
         \textbf{Phase $\frac{\pi}{2}$ gate} & \phantom{aaaa} & \textbf{Hadamard gate} \\[1mm]
         S = \begin{pmatrix}
            1 & 0\\[-1mm]
            0 & i
        \end{pmatrix} \; , & \phantom{aaaa} &H = \frac{1}{\sqrt{2}}
        \begin{pmatrix}
            1 & \; 1 \\[-1mm]
            1 & -1
        \end{pmatrix} \; .
    \end{array}
\end{equation}
In two qubits, we will mainly use $\CX$ gates and $\CZ$ gates. They are respectively given by the following expressions:
\begin{equation}
    \begin{array}{cc}
         \textbf{$\CX$ gate} & \textbf{$\CZ$ gate} \\[1mm]
        \CX = 
     \begin{pmatrix}
            1 & 0 & 0 & 0 \\[-1mm]
            0 & 1 & 0 & 0 \\[-1mm]
            0 & 0 & 0 & 1 \\[-1mm]
            0 & 0 & 1 & 0  
        \end{pmatrix} \; , &\hspace{0.5cm}\CZ = \begin{pmatrix}
            1 & 0 & 0 & 0 \\[-1mm]
            0 & 1 & 0 & 0 \\[-1mm]
            0 & 0 & 1 & 0 \\[-1mm]
            0 & 0 & 0 & -1  
        \end{pmatrix} \; .
    \end{array}
\end{equation}
Note that in the last two examples we define the first qubit as the control qubit and the second one as the target qubit. For general $\CX$ and $\CZ$ gates with control spin $i$ and target spin $j$, we write $\CX(i, j)$ (rep. $\CZ(i, j)$). 

Let us recall the following important relations between $\CX$ gates and the Pauli matrices, which will be used later in the text:
\begin{align}
\label{eq:effectCNOT}
\begin{split}
\CX (\sigma_x \otimes \mathds{1}) \CX^\dagger &=\sigma_x \otimes \sigma_x,\\
\CX (\sigma_x \otimes \sigma_x) \CX^\dagger &= \sigma_x \otimes \mathds{1},\\
\CX (\mathds{1} \otimes \sigma_x) \CX^\dagger &= \mathds{1} \otimes \sigma_x,\\
\CX (\sigma_z \otimes \mathds{1}) \CX^\dagger &= \sigma_z \otimes \mathds{1},\\
\CX (\sigma_z \otimes \sigma_z) \CX^\dagger &= \mathds{1} \otimes \sigma_z,\\
\CX (\mathds{1} \otimes \sigma_z) \CX^\dagger &= \sigma_z \otimes \sigma_z.
\end{split}
\end{align}

\section{From 2D toric code to classical Ising chains}

Before proceeding to the main proof of the paper, which will be shown in \cref{sec:2Dtoriccode_proof}, let us present a simple yet illustrative example. Using an explicit circuit, we conjugate an already classical Hamiltonian, namely a classical Ising chain with open boundary conditions and no magnetic field, by $\CX$ gates to obtain a non-interacting classical Hamiltonian where each term acts on a single spin.

\subsection{From classical Ising chain to non-interacting Hamiltonian}
\label{sec:Isingtononinter}

Consider a classical Ising chain of length $L$ with open boundary conditions and no magnetic field;
\[
H = -\sum_{i = 1}^{L-1} J_i \sigma_z^i \sigma_z^{i+1},
\]
where $J_i \in \RR$ for every $i \in \{1, \dotsc, L-1\}$, and $\sigma^i_z$ denotes a $\sigma_z$ Pauli matrix acting on the $i$-th spin of the lattice. 

\begin{figure}
    \centering
     \begin{tikzpicture}[scale=1]
    \node[Mark, Mark_disk] at (0.141, 4.657) {};
    \node[Mark, Mark_disk] at (1.27, 4.657) {};
    \node[Mark, Mark_disk] at (2.399, 4.657) {};
    \node[Mark, Mark_disk] at (3.528, 4.657) {};
    \node[Mark, Mark_disk] at (4.656, 4.657) {};
    \node[Mark, Mark_disk] at (0.141, 0.141) {};
    \node[Mark, Mark_disk] at (1.27, 0.141) {};
    \node[Mark, Mark_disk] at (2.399, 0.141) {};
    \node[Mark, Mark_disk] at (3.528, 0.141) {};
    \node[Mark, Mark_disk] at (4.656, 0.141) {};
    \filldraw[shift={(0.141, 4.798)}, rotate=-90, yscale=-1, fill=RoyalBlue!80, fill opacity=0.3] (0, 0) .. controls (0, 0.094) and (0.047, 0.141) .. (0.141, 0.141) .. controls (0.235, 0.141) and (0.282, 0.094) .. (0.282, 0) -- (0.282, -1.129) .. controls (0.282, -1.223) and (0.235, -1.27) .. (0.141, -1.27) .. controls (0.047, -1.27) and (0, -1.223) .. (0, -1.129) -- cycle;
    \filldraw[shift={(1.27, 4.798)}, rotate=-90, yscale=-1, fill=RoyalBlue!80, fill opacity=0.3] (0, 0) .. controls (0, 0.094) and (0.047, 0.141) .. (0.141, 0.141) .. controls (0.235, 0.141) and (0.282, 0.094) .. (0.282, 0) -- (0.282, -1.129) .. controls (0.282, -1.223) and (0.235, -1.27) .. (0.141, -1.27) .. controls (0.047, -1.27) and (0, -1.223) .. (0, -1.129) -- cycle;
    \filldraw[shift={(2.399, 4.798)}, rotate=-90, yscale=-1, fill=RoyalBlue!80, fill opacity=0.3] (0, 0) .. controls (0, 0.094) and (0.047, 0.141) .. (0.141, 0.141) .. controls (0.235, 0.141) and (0.282, 0.094) .. (0.282, 0) -- (0.282, -1.129) .. controls (0.282, -1.223) and (0.235, -1.27) .. (0.141, -1.27) .. controls (0.047, -1.27) and (0, -1.223) .. (0, -1.129) -- cycle;
    \filldraw[shift={(3.528, 4.798)}, rotate=-90, yscale=-1, fill=RoyalBlue!80, fill opacity=0.3] (0, 0) .. controls (0, 0.094) and (0.047, 0.141) .. (0.141, 0.141) .. controls (0.235, 0.141) and (0.282, 0.094) .. (0.282, 0) -- (0.282, -1.129) .. controls (0.282, -1.223) and (0.235, -1.27) .. (0.141, -1.27) .. controls (0.047, -1.27) and (0, -1.223) .. (0, -1.129) -- cycle;
    \filldraw[fill=RoyalBlue!80, fill opacity=0.3] 
    (1.27, 0.141) ellipse[x radius=0.141, y radius=-0.141];
    \filldraw[fill=RoyalBlue!80, fill opacity=0.3] 
    (2.399, 0.141) ellipse[x radius=0.141, y radius=-0.141];
    \filldraw[fill=RoyalBlue!80, fill opacity=0.3] 
    (3.528, 0.141) ellipse[x radius=0.141, y radius=-0.141];
    \filldraw[fill=RoyalBlue!80, fill opacity=0.3] 
    (4.656, 0.141) ellipse[x radius=0.141, y radius=-0.141];
    \node[Mark, Mark_disk] at (0.141, 1.27) {};
    \node[Mark, Mark_disk] at (1.27, 1.27) {};
    \node[Mark, Mark_disk] at (2.399, 1.27) {};
    \node[Mark, Mark_disk] at (3.528, 1.27) {};
    \node[Mark, Mark_disk] at (4.656, 1.27) {};
    \filldraw[fill=RoyalBlue!80, fill opacity=0.3] 
    (2.399, 1.27) ellipse[x radius=0.141, y radius=-0.141];
    \filldraw[fill=RoyalBlue!80, fill opacity=0.3] 
    (3.528, 1.27) ellipse[x radius=0.141, y radius=-0.141];
    \filldraw[fill=RoyalBlue!80, fill opacity=0.3] 
    (4.656, 1.27) ellipse[x radius=0.141, y radius=-0.141];
    \filldraw[shift={(0.141, 1.411)}, rotate=-90, yscale=-1, fill=RoyalBlue!80, fill opacity=0.3] (0, 0) .. controls (0, 0.094) and (0.047, 0.141) .. (0.141, 0.141) .. controls (0.235, 0.141) and (0.282, 0.094) .. (0.282, 0) -- (0.282, -1.129) .. controls (0.282, -1.223) and (0.235, -1.27) .. (0.141, -1.27) .. controls (0.047, -1.27) and (0, -1.223) .. (0, -1.129) -- cycle;
    \node[Mark, Mark_disk] at (0.141, 2.399) {};
    \node[Mark, Mark_disk] at (1.27, 2.399) {};
    \node[Mark, Mark_disk] at (2.399, 2.399) {};
    \node[Mark, Mark_disk] at (3.528, 2.399) {};
    \node[Mark, Mark_disk] at (4.656, 2.399) {};
    \filldraw[fill=RoyalBlue!80, fill opacity=0.3] 
    (3.528, 2.399) ellipse[x radius=0.141, y radius=-0.141];
    \filldraw[fill=RoyalBlue!80, fill opacity=0.3] 
    (4.656, 2.399) ellipse[x radius=0.141, y radius=-0.141];
    \filldraw[shift={(0.141, 2.54)}, rotate=-90, yscale=-1, fill=RoyalBlue!80, fill opacity=0.3] (0, 0) .. controls (0, 0.094) and (0.047, 0.141) .. (0.141, 0.141) .. controls (0.235, 0.141) and (0.282, 0.094) .. (0.282, 0) -- (0.282, -1.129) .. controls (0.282, -1.223) and (0.235, -1.27) .. (0.141, -1.27) .. controls (0.047, -1.27) and (0, -1.223) .. (0, -1.129) -- cycle;
    \filldraw[shift={(1.27, 2.54)}, rotate=-90, yscale=-1, fill=RoyalBlue!80, fill opacity=0.3] (0, 0) .. controls (0, 0.094) and (0.047, 0.141) .. (0.141, 0.141) .. controls (0.235, 0.141) and (0.282, 0.094) .. (0.282, 0) -- (0.282, -1.129) .. controls (0.282, -1.223) and (0.235, -1.27) .. (0.141, -1.27) .. controls (0.047, -1.27) and (0, -1.223) .. (0, -1.129) -- cycle;
    \node[Mark, Mark_disk] at (0.141, 3.528) {};
    \node[Mark, Mark_disk] at (1.27, 3.528) {};
    \node[Mark, Mark_disk] at (2.399, 3.528) {};
    \node[Mark, Mark_disk] at (3.528, 3.528) {};
    \node[Mark, Mark_disk] at (4.656, 3.528) {};
    \filldraw[fill=RoyalBlue!80, fill opacity=0.3] 
    (4.656, 3.528) ellipse[x radius=0.141, y radius=-0.141];
    \filldraw[shift={(0.141, 3.669)}, rotate=-90, yscale=-1, fill=RoyalBlue!80, fill opacity=0.3] (0, 0) .. controls (0, 0.094) and (0.047, 0.141) .. (0.141, 0.141) .. controls (0.235, 0.141) and (0.282, 0.094) .. (0.282, 0) -- (0.282, -1.129) .. controls (0.282, -1.223) and (0.235, -1.27) .. (0.141, -1.27) .. controls (0.047, -1.27) and (0, -1.223) .. (0, -1.129) -- cycle;
    \filldraw[shift={(1.27, 3.669)}, rotate=-90, yscale=-1, fill=RoyalBlue!80, fill opacity=0.3] (0, 0) .. controls (0, 0.094) and (0.047, 0.141) .. (0.141, 0.141) .. controls (0.235, 0.141) and (0.282, 0.094) .. (0.282, 0) -- (0.282, -1.129) .. controls (0.282, -1.223) and (0.235, -1.27) .. (0.141, -1.27) .. controls (0.047, -1.27) and (0, -1.223) .. (0, -1.129) -- cycle;
    \filldraw[shift={(2.399, 3.669)}, rotate=-90, yscale=-1, fill=RoyalBlue!80, fill opacity=0.3] (0, 0) .. controls (0, 0.094) and (0.047, 0.141) .. (0.141, 0.141) .. controls (0.235, 0.141) and (0.282, 0.094) .. (0.282, 0) -- (0.282, -1.129) .. controls (0.282, -1.223) and (0.235, -1.27) .. (0.141, -1.27) .. controls (0.047, -1.27) and (0, -1.223) .. (0, -1.129) -- cycle;
    \node[anchor=center] at (0.141, 5.045) {$1$};
    \node[anchor=center] at (1.27, 5.045) {$2$};
    \node[anchor=center] at (2.399, 5.045) {$3$};
    \node[anchor=center] at (3.528, 5.045) {$4$};
    \node[anchor=center] at (4.656, 5.045) {$5$};
    \draw[->] (2.399, 4.375) -- (2.399, 3.81);
    \draw[->] (2.399, 3.246) -- (2.399, 2.681);
    \draw[->] (2.399, 2.117) -- (2.399, 1.552);
    \draw[->] (2.399, 0.988) -- (2.399, 0.424);
    \node[anchor=center] at (3.245, 4.092) {$CX(4, 5)$};
    \node[anchor=center] at (3.245, 2.964) {$CX(3, 4)$};
    \node[anchor=center] at (3.245, 1.835) {$CX(2, 3)$};
    \node[anchor=center] at (3.245, 0.706) {$CX(1, 2)$};
\end{tikzpicture}
    \caption{Visual representation of the different steps in the transformation from an Ising model to a decoupled model in a chain with $5$ spins. Each bar represents a $\sigma_z \otimes \sigma_z$ interaction, and each circle denotes a single-site $\sigma_z$ term.}
    \label{fig:1DIsing}
\end{figure}

The idea is to conjugate $H$ by $\CX$ gates in order to obtain a simpler classical Hamiltonian. We choose the gates so that they have their control and target qubits situated in adjacent spins. In particular, we consider
\[
U := \CX(1, 2)\CX(2, 3)\cdots\CX(L-1, L).
\]
Note that for every $i \in \{1, \dotsc, L-1\}$, the gate $\CX(i, i+1)$ will only affect two interactions of the Hamiltonian obtained by conjugation with the previous gates. These interactions are $\sigma_z^{i}\sigma_z^{i+1}$ and $\sigma_z^{i-1}\sigma_z^{i}$. In fact, the latter will remain unchanged, whilst the former will get transformed to $\sigma_z^{i+1}$ (cf. \cref{eq:effectCNOT}). Thus, the resulting Hamiltonian after conjugation by $U$ is
\[
U H U^\dagger = -\sum_{i = 2}^L J_{i-1}\sigma_z^i.
\]
See \cref{fig:1DIsing} for a visual representation of the intermediate steps for an Ising chain with $5$ spins. In particular, it is clear that the depth of the circuit obtained is linear in the system size.
Note that the two-fold ground-state degeneracy of $H$ carries over to the degeneracy of the first spin, which does not lie in the support of $UHU^\dagger$.

Using the above circuit to sample from the Gibbs state of an Ising chain is in fact equivalent to simulating the corresponding Markov chain, where the non-interacting systems correspond to a source of randomness (with appropriate weight) and with the $\CX$ gates applying the conditional probabilities to the consecutive spins.

\subsection{Duality between 2D toric code and classical Ising chains; a formal proof}
\label{sec:2Dtoriccode_proof}

We recall the notation from the main text for completeness.

The two-dimensional toric code \cite{kitaev2003toric,nussinov2008toric} is defined on a square lattice $(V_L, \mathcal{E}_L)$ on the torus $\mathbb{S}_L \times \mathbb{S}_L$ with vertices on the integer coordinates. Let $\Lambda_L$ be the set of spins of the system, where one spin is located at the midpoint of every edge in $\mathcal{E}_L$ (see \cref{fig:2dtoriclattice}). 

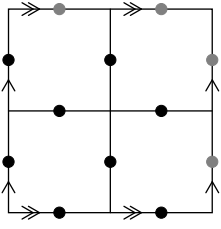
\begin{figure}[]
    \centering
    \begin{tikzpicture}[scale=1.2]
    \node[Mark, Mark_disk] at (1.199, 1.764) {};
    \node[Mark, Mark_disk] at (0.071, 1.764) {};
    \node[Mark, Mark_disk] at (0.635, 1.199) {};
    \node[Mark, Mark_disk] at (0.071, 0.635) {};
    \node[Mark, Mark_disk] at (0.635, 0.071) {};
    \node[Mark, Mark_disk] at (1.764, 0.071) {};
    \node[Mark, Mark_disk] at (1.199, 0.635) {};
    \node[Mark, Mark_disk] at (1.764, 1.199) {};
    \draw (0.071, 2.328) rectangle (2.328, 0.071);
    \draw (0.071, 1.199) -- (2.328, 1.199);
    \draw (1.199, 2.328) -- (1.199, 0.071);
    \draw[shift={(0.071, 1.543)}, rotate=60] (0, 0) -- (-0.141, 0);
    \draw[shift={(0.141, 1.421)}, rotate=-60, yscale=-1] (0, 0) -- (-0.141, 0);
    \draw[shift={(2.328, 1.543)}, rotate=60] (0, 0) -- (-0.141, 0);
    \draw[shift={(2.399, 1.421)}, rotate=-60, yscale=-1] (0, 0) -- (-0.141, 0);
    \draw[shift={(0.071, 0.414)}, rotate=60] (0, 0) -- (-0.141, 0);
    \draw[shift={(0.141, 0.292)}, rotate=-60, yscale=-1] (0, 0) -- (-0.141, 0);
    \draw[shift={(2.328, 0.414)}, rotate=60] (0, 0) -- (-0.141, 0);
    \draw[shift={(2.399, 0.292)}, rotate=-60, yscale=-1] (0, 0) -- (-0.141, 0);
    \draw[shift={(1.543, 0.071)}, rotate=-30] (0, 0) -- (-0.141, 0);
    \draw[shift={(1.421, 0)}, rotate=-150, yscale=-1] (0, 0) -- (-0.141, 0);
    \draw[shift={(1.472, 0.071)}, rotate=-30] (0, 0) -- (-0.141, 0);
    \draw[shift={(1.35, 0)}, rotate=-150, yscale=-1] (0, 0) -- (-0.141, 0);
    \draw[shift={(0.414, 0.071)}, rotate=-30] (0, 0) -- (-0.141, 0);
    \draw[shift={(0.292, 0)}, rotate=-150, yscale=-1] (0, 0) -- (-0.141, 0);
    \draw[shift={(0.343, 0.071)}, rotate=-30] (0, 0) -- (-0.141, 0);
    \draw[shift={(0.221, 0)}, rotate=-150, yscale=-1] (0, 0) -- (-0.141, 0);
    \draw[shift={(0.414, 2.328)}, rotate=-30] (0, 0) -- (-0.141, 0);
    \draw[shift={(0.292, 2.258)}, rotate=-150, yscale=-1] (0, 0) -- (-0.141, 0);
    \draw[shift={(0.343, 2.328)}, rotate=-30] (0, 0) -- (-0.141, 0);
    \draw[shift={(0.221, 2.258)}, rotate=-150, yscale=-1] (0, 0) -- (-0.141, 0);
    \draw[shift={(1.543, 2.328)}, rotate=-30] (0, 0) -- (-0.141, 0);
    \draw[shift={(1.421, 2.258)}, rotate=-150, yscale=-1] (0, 0) -- (-0.141, 0);
    \draw[shift={(1.472, 2.328)}, rotate=-30] (0, 0) -- (-0.141, 0);
    \draw[shift={(1.35, 2.258)}, rotate=-150, yscale=-1] (0, 0) -- (-0.141, 0);
    \node[Mark, Mark_disk, black!50] at (0.635, 2.328) {};
    \node[Mark, Mark_disk, black!50] at (1.764, 2.328) {};
    \node[Mark, Mark_disk, black!50] at (2.328, 1.764) {};
    \node[Mark, Mark_disk, black!50] at (2.328, 0.635) {};
\end{tikzpicture}
    \caption{Example of a $2 \times 2$ two-dimensional toric code lattice. We have shown the periodic boundary conditions using single and double arrows.}
    \label{fig:2dtoriclattice}
\end{figure}

For simplicity, we will label the spins of the lattice as $(i,j,k)\in \mathbb Z_L^2 \times \{h,v\}$ with $(i,j,v)$ and $(i,j,h)$ corresponding to the spins below and right of the $(i,j)$ vertex of the lattice (see \cref{fig:2DToricCoords}). Furthermore, $i$ and $j$ grow to the bottom and to the right, respectively.

\begin{figure}
    \centering
    \begin{tikzpicture}[scale=1]
    \draw (1.082, 0.007) -- (1.082, 1.129) -- (2.211, 1.136) -- (2.211, 0.007) -- cycle;
    \node[Mark, Mark_disk] at (1.646, 0) {};
    \node[Mark, Mark_disk] at (1.082, 0.564) {};
    \node[Mark, Mark_disk] at (1.646, 1.129) {};
    \node[Mark, Mark_disk] at (2.211, 0.564) {};
    \node[anchor=center, font=\small] at (0.623, 1.348) {$(i,j)$};
    \node[anchor=center, font=\small] at (1.646, 1.419) {$(i,j,h)$};
    \node[anchor=center] at (0.518, 0.572) {$(i,j,v)$};
    \node[anchor=center, font=\small] at (3.1, 0.572) {$(i,j+1,v)$};
\end{tikzpicture}
    \caption{Visual representation of the coordinate system that will be used throughout the section.}
    \label{fig:2DToricCoords}
\end{figure}
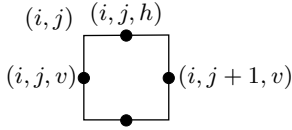

The 2D toric code Hamiltonian can be defined as 
\begin{equation}
\label{eq:HamiltonianTC}
H_{\mathit{TC}} = - \sum_{(i, j) \in \ZZ_L^2 } J_{(i, j)} A_{(i, j)} - \sum_{(i, j) \in \ZZ_L^2} \tilde{J}_{(i, j)} B_{(i, j)}, 
\end{equation}
$J_{(i, j)}, \tilde{J}_{(i, j)} \in \RR$ for every $(i, j) \in \ZZ_L^2$. $A_{(i, j)}$ and $B_{(i, j)}$ denote the star and plaquette operators, respectively, which can be written as
\begin{align*}
A_{(i,j)}&= {\sigma_x}^{(i,j,h)}{\sigma_x}^{(i,j,v)}{\sigma_x}^{(i-1,j,v)}{\sigma_x}^{(i,j-1,h)},\\
B_{(i,j)}&= \sigma_z^{(i,j,h)}\sigma_z^{(i,j,v)}\sigma_z^{(i+1,j,h)}\sigma_z^{(i,j+1,v)}, 
\end{align*}
where $\sigma^i_x$ (resp. $\sigma^i_z$) denotes a $\sigma_x$ (resp. $\sigma_z$) Pauli matrix acting on the $i$-th spin of the lattice, with $i \in \ZZ_L^2 \times \{h, v\}$. See \cref{fig:2dtoricinteractions} for a visual representation of both of them. 

See \cite{iblisdir2010} for a thorough analysis of the eigenstates, partition function, and eigenenergies of this model. While these insights could allow for the preparation of its Gibbs state, exploring this lies beyond the scope of the present work.

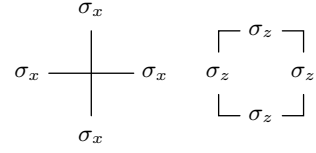
\begin{figure}[]
    \centering
    \begin{tikzpicture}[scale=1]
    \draw (0.464, 0.949) -- (1.592, 0.949);
    \draw (1.028, 1.513) -- (1.028, 0.384);
    \draw (2.721, 1.231) -- (2.721, 1.513) -- (3.004, 1.513);
    \draw (3.568, 1.513) -- (3.85, 1.513) -- (3.85, 1.231);
    \draw (2.721, 0.666) -- (2.721, 0.384) -- (3.004, 0.384);
    \draw (3.568, 0.384) -- (3.85, 0.384) -- (3.85, 0.666);
    \node[anchor=center] at (1.028, 1.795) {$\sigma_x$};
    \node[anchor=center] at (1.875, 0.949) {$\sigma_x$};
    \node[anchor=center] at (1.028, 0.102) {$\sigma_x$};
    \node[anchor=center] at (0.181, 0.949) {$\sigma_x$};
    \node[anchor=center] at (2.721, 0.949) {$\sigma_z$};
    \node[anchor=center] at (3.286, 1.513) {$\sigma_z$};
    \node[anchor=center] at (3.85, 0.949) {$\sigma_z$};
    \node[anchor=center] at (3.286, 0.384) {$\sigma_z$};
\end{tikzpicture}
    \caption{Visual representation of the star and plaquette interactions $A_{(i, j)}$ and $B_{(i, j)}$---left and right, respectively---of a 2D toric code.}
    \label{fig:2dtoricinteractions}
\end{figure}

In the proof, we will provide a bipartition $\Lambda_A\cup\Lambda_B=\Lambda_L$ of the lattice and an explicit quantum circuit $C$ with the following properties:
\begin{itemize}
    \item The circuit is of depth $\mathcal \cO(L^3)$ for every lattice of size $L \times L$, and only consists of $\CX$ and Hadamard gates.
    
    \item For every star operator $A_v$ and every plaquette operator $B_p$, $\supp(C A_v C^\dagger)\subset\Lambda_A$, and $\supp(C B_p C^\dagger)\subset\Lambda_B$, i.e, the sum of star and the sum of plaquette operators are mapped onto two classical Hamiltonians with  no interaction between them.

    \item The component $\Lambda_B$ includes two spins $s_1, s_2$ which are not contained in the support of the final Hamiltonian. 

    \item The final Hamiltonian associated to each component $\Lambda_A$ and $\Lambda_B \setminus \{s_1, s_2\}$ corresponds to an Ising chain with a parallel magnetic field at each end. 
\end{itemize}

In particular, we will prove the following statement: 

\begin{theorem}
\label{thm:result}
Let $H_{\mathit{TC}}$ be the Hamiltonian of an $L \times L$ two-dimensional toric code system. There exists a quantum circuit $C$ composed of $\cO(L^3)$ $\CX$ and $\cO(L^2)$ Hadamard gates such that $C H_{\mathit{TC}}\, C^\dagger$ is diagonal. 

Furthermore, let $H_{\mathit{TC}} = H_1 + H_2$, where $H_1$ (resp. $H_2$) denotes the star (resp. plaquette) operators of $H_{\mathit{TC}}$. Then $CH_1C^\dagger$ corresponds to a Hamiltonian with an all-to-all $\sigma_z$ interaction supported on $\Lambda_A$, and a parallel magnetic field on each site of $\Lambda_A$. Similarly, $C H_2 C^\dagger$ corresponds to an all-to-all $\sigma_z$ interaction on every spin in $\Lambda_B$ except for two spins, and a parallel magnetic field on each site in $\Lambda_B$ except for the aforementioned two spins. 
\end{theorem}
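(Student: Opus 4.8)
\emph{Strategy.} The plan is to exhibit the circuit $C$ explicitly as a Clifford circuit $C = V_2\,W\,V_1$, where $W$ is a layer of Hadamard gates on the spins of $\Lambda_A$ and $V_1,V_2$ are two networks of $\CX$ gates, and to verify the claimed action by tracking, layer by layer, how the $\mathbb{F}_2$-valued $\sigma_x$- and $\sigma_z$-supports of each star operator $A_v$ and plaquette operator $B_p$ transform under conjugation, using only the six identities in \eqref{eq:effectCNOT}. Since conjugation by a Clifford preserves the number of independent stabilizer generators, a rank count already pins down the sizes of the two parts: the images $CA_vC^\dagger$ must span a space of rank $L^2-1$ and lie inside $\Lambda_A$, forcing $|\Lambda_A|\ge L^2-1$; the images $CB_pC^\dagger$ together with the two free spins force $|\Lambda_B|\ge L^2+1$; and $|\Lambda_A|+|\Lambda_B|=2L^2$ then gives $|\Lambda_A|=L^2-1$, $|\Lambda_B|=L^2+1$, with exactly two spins $s_1,s_2\in\Lambda_B$ left outside the final support -- these will be a conjugate pair of non-contractible loop operators, accounting for the $2^2=4$-fold ground-space degeneracy.

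\emph{The circuit.} First I would fix the bipartition $\Lambda_A\sqcup\Lambda_B$ explicitly in the coordinates $(i,j,h),(i,j,v)$, matching Figure~\ref{fig:toricising}. Then I would define $V_1$ as a $\CX$ network, organised along the rows and columns of the torus, that simultaneously peels each $A_v$ down to a sub-support $(\partial v)'\subset\partial v\cap\Lambda_A$ and each $B_p$ down to $p'\subset p\cap\Lambda_B$, by repeatedly using the ``absorption'' moves $\CX(\sigma_x\otimes\sigma_x)\CX^\dagger=\sigma_x\otimes\mathds{1}$ and $\CX(\sigma_z\otimes\sigma_z)\CX^\dagger=\mathds{1}\otimes\sigma_z$; this is the stabilizer Gaussian-elimination alluded to in the main text, and $\cO(L)$ gates per row/column yield $\cO(L^3)$ in total. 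After $V_1$ the star sum and the plaquette sum are supported on the disjoint vertex sets $\Lambda_A$ and $\Lambda_B$ and their interaction graphs are forests. Next, $W$ Hadamards every spin of $\Lambda_A$, turning the (now purely $\sigma_x$) star terms on $\Lambda_A$ into $\sigma_z$ terms while leaving the $\sigma_z$ plaquette terms on $\Lambda_B$ untouched, at a cost of $|\Lambda_A|=\cO(L^2)$ Hadamards. Finally $V_2$ is a second $\CX$ network acting within $\Lambda_A$ and within $\Lambda_B\setminus\{s_1,s_2\}$ separately -- the same idea as in Section~\ref{sec:Isingtononinter}, adapted so as to produce the target pattern rather than a fully non-interacting model -- bringing each component into the normal form asserted in the theorem: an all-to-all $\sigma_z$ interaction together with a single-site $\sigma_z$ field on every site of the component; this again costs $\cO(L^3)$ gates.

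\emph{Verification.} With $C$ defined, I would verify the three claims directly. Diagonality of $CH_{\mathit{TC}}C^\dagger=CH_1C^\dagger+CH_2C^\dagger$ is immediate once both summands are $\sigma_z$-diagonal. For the structure of $CH_1C^\dagger$ and $CH_2C^\dagger$ I would compute the images $CA_vC^\dagger$ and $CB_pC^\dagger$ by composing the three layers' actions on supports; using the near-translation-invariance of the toric code this reduces to analysing a single row/column and then gluing the result along the remaining (periodic) direction, after which one reads off that the $\sigma_z\sigma_z$ terms form the claimed all-to-all pattern on $\Lambda_A$ (resp. $\Lambda_B\setminus\{s_1,s_2\}$) and that every site of that component carries a $\sigma_z$ field, while $s_1,s_2$ carry nothing. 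The gate counts are then just the sums recorded above.

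\emph{Main obstacle.} The crux is the periodic boundary. The relations $\prod_v A_v=\prod_p B_p=\mathds{1}$ and, more importantly, the two conjugate pairs of non-contractible loop operators are exactly what prevent each component from being reduced all the way to a non-interacting single-spin model -- contrast the open-boundary rotated surface code of Table~\ref{tab:diagonalizationresults} -- and instead leave the connected ``all-to-all'' $\sigma_z$ structure plus the two free spins $s_1,s_2$. Making the $\CX$ network $V_1$ consistent across the periodic directions, so that no star or plaquette ``leaks'' support across the wraparound, and then checking that $V_2$ isolates precisely $s_1$ and $s_2$, is where the real work lies; everything else is a finite, if lengthy, bookkeeping exercise of exactly the kind already carried out for small $L$ in the computer-assisted part of the paper.
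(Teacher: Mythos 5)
Your decomposition $C=V_2\,W\,V_1$ is exactly the paper's ($\tilde C$, a Hadamard layer on $\Lambda_A$, then $\hat C$), and your identification of the bipartition, the two free spins, and the role of the homology of the torus is correct. However, the proposal has a genuine gap: the entire mathematical content of the theorem is the explicit construction of $V_1$ and the verification that it decouples the two sublattices, and you defer precisely this step, conceding yourself that it is ``where the real work lies.'' It is not a routine bookkeeping exercise. The paper's $\tilde C=C_4C_3C_2C_1$ consists of four carefully designed families of mutually commuting $\CX$ gates (a horizontal line $C_1$ on the top row, vertical lines $C_2^i$ on each column, and the sets $C_3^{(i,j)}$, $C_4^i$ whose controls collect \emph{all} vertical spins in two adjacent columns above a given row plus one spin from the top row), and the decoupling rests on two nontrivial parity facts that must be checked: for the plaquettes, the controls of $C_3^{(i,j)}$ and $C_3^{(i+1,j)}$ differ in exactly the two sites needed so that their joint action reduces $B_{(i,j)}$ to $\sigma_z^{(i,j,h)}\sigma_z^{(i+1,j,h)}$; for the stars, for every target $w\in\Lambda_B$ the set of gates targeting $w$ meets the support of each $A_{(i,j)}\ni w$ in exactly one control and the support of each $A_{(i,j)}\not\ni w$ in exactly zero or two controls. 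Without these counts, nothing prevents the conjugated $\sigma_x$-strings from leaking support into $\Lambda_B$ (your absorption moves only shrink supports when both qubits of the gate already lie in the support; otherwise they grow it), and the boundary cases $i\in\{0,L-1\}$, $j=L-1$ each require a separate cancellation argument against $C_1$ and $C_2$.

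Two smaller inaccuracies also trace back to the construction not having been carried out. First, your $V_2$ is claimed to act within $\Lambda_B\setminus\{s_1,s_2\}$, but in the paper the gates that free $s_1=(L-1,0,v)$ and $s_2=(0,L-1,h)$ use those very spins as \emph{controls} (Eqs.~\eqref{eq:gatesplaquettes1}--\eqref{eq:gatesplaquettes2}); that is how the four-body and non-vertical two-body terms are stripped down and the two spins drop out of every support. Second, the theorem's target normal form after $V_2$ is an all-to-all $\sigma_z$ interaction plus single-site fields (which \cref{lem:equivalencelemma} then converts to an Ising chain), so $V_2$ must \emph{extend} the leftmost three-body term to the whole component while converting the tree of two-body terms into fields — the opposite of the ``peeling'' intuition you describe for $V_1$. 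Your rank-counting argument for $|\Lambda_A|=L^2-1$, $|\Lambda_B|=L^2+1$ is a nice consistency check but is not in the paper and does not substitute for the construction.
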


Note that the above statement does not mention the Ising chains we described previously, but rather describes a different system with an all-to-all interaction. Nevertheless, both systems are equivalent: 

\begin{lem}
\label{lem:equivalencelemma}
Let $\Lambda = \{s_i\}_{i= 1}^N$ be some set of spins and let $H_1$, $H_2$ be two Hamiltonians supported in $\Lambda$, defined as
\[
H_1 := -\sum_{i = 1}^N J_ i \sigma_z^i - \bigotimes_{s = 1}^N J_{N+1} \sigma_z^i,
\]
and 
\[
H_2 := -\sum_{i = 1}^{N-1} J_{i+1} \sigma_z^i \sigma_z^{i+1} - J_1 \sigma_z^1 - J_{N+1} \sigma_z^L,
\]
where $J_i \in \RR$, for every $i \in \{1, \dotsc, N+1\}$. Then, there exists some quantum circuit $U$ composed of $N-1$ $\CX$ gates such that
\[
U H_1 U^\dagger = H_2.
\]

Furthermore, the statement remains true up to any reordering of the spins in $\Lambda$.
\end{lem}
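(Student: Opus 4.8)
The plan is to write the circuit down explicitly. Take
\[
U := \CX(L-1,L)\,\CX(L-2,L-1)\cdots \CX(2,3)\,\CX(1,2),
\]
a product of exactly $L-1$ CX gates, where $\CX(k,k+1)$ has control on spin $k$ and target on spin $k+1$. Up to the bookkeeping of the extra $\bigotimes_i\sigma_z^i$ term and the two boundary fields, this is the reverse of the circuit appearing in \cref{sec:Isingtononinter}, so it is natural to expect it to realize the desired transformation. By the final sentence of the statement it then suffices to treat this ordering of the spins.

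First I would record the action of conjugation by these gates on $\sigma_z$-type Pauli strings. By \eqref{eq:effectCNOT}, conjugating by $\CX(k,k+1)$ leaves unchanged every $\sigma_z$-string whose support avoids spin $k+1$, and multiplies by $\sigma_z^k$ (i.e.\ toggles the $\sigma_z^k$ factor, since $(\sigma_z^k)^2=\mathds{1}$) every $\sigma_z$-string whose support contains $k+1$. Identifying each $\sigma_z$-string with its support vector over $\mathbb{F}_2$, this is precisely the elementary operation ``add coordinate $k+1$ to coordinate $k$''. Composing along $U$ --- where, under conjugation, the innermost and hence first gate to act is $\CX(1,2)$, then $\CX(2,3)$, and so on --- yields an $\mathbb{F}_2$-linear map $M$. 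Tracing a basis vector $e_j$ through the sequence: coordinate $j$ is left untouched by every gate until $\CX(j-1,j)$, which performs $e_j\mapsto e_{j-1}+e_j$, and no later gate touches coordinate $j$ again; hence $Me_1=e_1$ and $Me_j=e_{j-1}+e_j$ for $2\le j\le L$.

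Next I would compute the image of the only $\sigma_z$-string in $H_1$ other than the single-site ones, namely $\bigotimes_{i=1}^L\sigma_z^i$, whose support vector is $\mathbf 1=e_1+\dots+e_L$. By linearity and telescoping modulo $2$, $M\mathbf 1 = e_1+(e_1+e_2)+(e_2+e_3)+\dots+(e_{L-1}+e_L)=e_L$, so this string is sent to $\sigma_z^L$. Reading off $UH_1U^\dagger$ term by term then gives: $-J_1\sigma_z^1$ is fixed; each $-J_i\sigma_z^i$ with $2\le i\le L$ becomes $-J_i\sigma_z^{i-1}\sigma_z^i$; and $-J_{L+1}\bigotimes_i\sigma_z^i$ becomes $-J_{L+1}\sigma_z^L$. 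Summing these reproduces exactly $H_2$ (after reindexing $-\sum_{i=1}^{L-1}J_{i+1}\sigma_z^i\sigma_z^{i+1}$ as $-\sum_{k=2}^{L}J_k\sigma_z^{k-1}\sigma_z^k$).

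For the last claim, I would note that the argument uses nothing about the labels $1,\dots,L$ beyond their linear order along the chain: for any ordering $s_{\pi(1)},\dots,s_{\pi(L)}$ of the spins, the circuit $\CX(\pi(L-1),\pi(L))\cdots\CX(\pi(1),\pi(2))$ --- still $L-1$ CX gates, now with arbitrary (not necessarily lattice-adjacent) control and target spins --- realizes the same computation after relabeling, so $UH_1U^\dagger$ is the Ising chain along that ordering. I do not foresee a genuine obstacle; the only points that require care are getting the composition order of the gates right (which gate acts first under conjugation) and the mod-$2$ telescoping for $\mathbf 1$.
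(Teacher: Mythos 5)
Your proof is correct and follows essentially the same route as the paper's: the identical circuit $U=\CX(L-1,L)\cdots\CX(1,2)$ applied in the same order, with the terms of $H_1$ tracked through the gates via \eqref{eq:effectCNOT}. The only difference is cosmetic — you phrase the bookkeeping as an $\mathbb{F}_2$-linear map on support vectors and telescope $M\mathbf 1=e_L$, whereas the paper traces $\bigotimes_i\sigma_z^i$ through the gates one at a time; both yield the same term-by-term identification with $H_2$.
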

Note that $H_1$ corresponds to an all-to-all $\sigma_z$ interaction and a parallel magnetic field at each spin, whilst $H_2$ corresponds to an Ising chain with a parallel magnetic field at each end.

Although the mapping goes from a 2D to a 1D model, note that the conjugation procedure preserves the number of free parameters; each Hamiltonian term of the Ising chains is in one-to-one correspondence to a Hamiltonian term of the two-dimensional toric code. In particular, each Ising chain Hamiltonian has $L^2$ terms.

\begin{proof}
We make use of the relations \cref{eq:effectCNOT} and recall their action on the terms at present. Note that conjugation is a linear operation, thus the coefficients $J_i$ do not get modified.  

We start by conjugating $H_1$ by $\CX(1, 2)$. This gate will affect two of its interactions, namely
$\bigotimes_{s = 1}^N \sigma_z^i$ and $\sigma_z^2$, since $\CX(1, 2)\sigma_z^1\CX(1, 2)^\dagger = \sigma_z^1$. 

Following the rules presented in \cref{eq:effectCNOT}, we conclude that 
\[
\CX(1, 2)\Big(\bigotimes_{s = 1}^N \sigma_z^i\Big) \CX(1, 2)^\dagger = \bigotimes_{s = 2}^N \sigma_z^i,
\]
and
\[
\CX(1, 2)\sigma_z^2 \CX(1, 2) = \sigma_z^1 \sigma_z^2.
\]

Conjugating now the resulting Hamiltonian by $\CX(2, 3)$, we transform $\bigotimes_{s = 2}^N \sigma_z^i$ into $\bigotimes_{s = 3}^N \sigma_z^i$, and transform $\sigma_z^3$ into $\sigma_z^2 \sigma_z^3$. 

Repeating this process until we apply $\CX(N-1, N)$, we will transform 
\[
\bigotimes_{s = 1}^N \sigma_z^i
\]
into 
\[
\sigma_z^N.
\]
Furthermore, for every $i \in \{2, \dotsc, N\}$, the magnetic field $\sigma_z^i$ will be mapped to $\sigma_z^{i-1}\sigma_z^i$. Note that the only interaction that remains untouched is $\sigma_z^1$. 

This way, the final circuit is given as 
\[
U := \CX(N-1, N)\CX(N-2, N-1)\cdots \CX(1, 2),
\]
and $U H_1 U^\dagger = H_2$. 
\end{proof}

Equipped with the proof for the above auxiliary lemma, let us briefly discuss the notation and the main structure of the proof of \cref{thm:result}. 

Moreover, we include figures throughout the section which correspond to a $4 \times 4$ toric code lattice. The size has been chosen for readability. Nevertheless, the proof is general and does not depend on the specific lattice size, as long as it is an $L \times L$ square with periodic boundary conditions. 

We will study the star and plaquette interactions separately. In order to make the proof easier to follow, we decompose the circuit $C$ into three sub-parts: 
\begin{itemize}
    \item A circuit $\tilde C$ composed of commuting $\CX$ gates with their control qubits contained in $\Lambda_A$ and their target qubits contained in $\Lambda_B$. 
    \item A single layer of Hadamard gates acting on every spin of $\Lambda_A$. 
    \item A final circuit $\hat C$ composed of non-commuting $\CX$ gates, with their control and target qubits either  both in $\Lambda_A$ or both in $\Lambda_B$. 
\end{itemize}

Moreover, we will decompose $\tilde C$ into four sub-circuits. In order to define each sub-circuit, let us define the set of gates that compose them. Note that in general this is not the way to define a quantum circuit, as the order in which the gates are applied can change its outcome. Nevertheless, in this case, the gates considered are always mutually commuting---as the control qubits of the gates in the set never act as target qubits and vice-versa---so the sets uniquely identify the circuit. 

The circuit $\tilde C$ will allow us to decouple the initial Hamiltonian $H_{\mathit{TC}}$ into two non-interacting systems, whilst the circuit $\hat C$ will allow us to identify each non-interacting system as a classical Ising chain. 

\begin{proof}[Proof of \cref{thm:result}]
We begin by studying the effect of $\tilde C$ onto the initial Hamiltonian. Let us define its sub-circuits. First, consider a horizontal line of $\CX$ gates acting on the first row of the lattice; let $(0, L-1, h)$ be the last spin in the top row of the lattice. We will consider the gates  
\[
C_1 := \{\CX((0, i, h), (0, L-1, h)) : 0 \leq i < L-1\}. 
\]

We will also consider vertical lines of $\CX$ gates acting on every column of the lattice. Thus, for every spin in the last row of the lattice $(L-1, i, v)$, $0 \leq i \leq L-1$, we consider
\[
C_2^i := \{\CX((k, i, v), (L-1, i, v)) : 0 \leq k < L-1\}. 
\]

See \cref{fig:linegates} for a visual representation of the sets $C_1$ and $C^i_2$, $0 \leq i \leq L-1$. 

\begin{figure}
    \centering
    \input{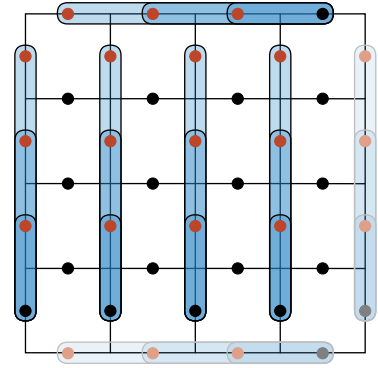}
    \caption{Visual representation of the $\CX$ gates from the sets $C_1$ (horizontal) and $C_2^i$ (vertical), $i \in \{0,\dotsc ,3\}$. The control qubits are marked in red. }
    \label{fig:linegates}
\end{figure}

Next, for every site $(i, j, h)$ with $1 \leq i \leq  L-1$, and $0 \leq j < L-1$ we consider the set 
\begin{align*}
C_3^{(i, j)} &:= \bigcup_{0 \leq k < i}\{\CX((k, j, v),(i, j, h))\} 
\\&\quad \bigcup_{0 \leq k < i}\{\CX((k, j+1, v),(i, j, h))\}
\\&\quad \cup \{\CX((0, j, h),(i, j, h))\}.
\end{align*}
See \cref{fig:gatesC3} for a visual representation of the sets $C^{(i, j)}_3$. 

\begin{figure*}
  \centering
  \begin{subfigure}[b]{0.29\textwidth}
    \centering
    \begin{tikzpicture}[scale=1]
    \node[Mark, Mark_disk] at (0.781, 2.674) {};
    \node[Mark, Mark_disk] at (1.91, 2.674) {};
    \node[Mark, Mark_disk] at (2.474, 3.238) {};
    \node[Mark, Mark_disk] at (1.345, 3.238) {};
    \node[Mark, Mark_disk] at (0.216, 3.238) {};
    \draw[shift={(0.216, 2.674)}, yscale=-1] (0, 0) rectangle (2.258, -2.258);
    \draw[shift={(0.216, 3.803)}, yscale=-1] (0, 0) -- (2.258, 0);
    \draw[shift={(1.345, 2.674)}, yscale=-1] (0, 0) -- (0, -2.258);
    \draw[shift={(0.216, 2.674)}, yscale=-1] (0, 0) -- (0, 1.129) -- (1.129, 1.129) -- (1.129, 0);
    \draw[shift={(1.345, 1.545)}, yscale=-1] (0, 0) -- (1.129, 0) -- (1.129, -1.129);
    \draw[shift={(2.474, 1.545)}, yscale=-1] (0, 0) -- (1.129, 0) -- (1.129, -1.129) -- (0, -1.129);
    \draw[shift={(3.603, 2.674)}, yscale=-1] (0, 0) -- (0, -1.129) -- (-1.129, -1.129);
    \draw[shift={(3.603, 3.803)}, yscale=-1] (0, 0) -- (0, -1.129) -- (-1.129, -1.129);
    \node[Mark, Mark_disk] at (0.216, 2.109) {};
    \node[Mark, Mark_disk] at (1.345, 2.109) {};
    \node[Mark, Mark_disk] at (2.474, 2.109) {};
    \node[Mark, Mark_disk] at (3.039, 2.674) {};
    \node[Mark, Mark_disk] at (0.781, 1.545) {};
    \node[Mark, Mark_disk] at (1.91, 1.545) {};
    \node[Mark, Mark_disk] at (3.039, 1.545) {};
    \node[Mark, Mark_disk] at (3.603, 2.109) {};
    \node[Mark, Mark_disk] at (3.603, 3.238) {};
    \draw (1.345, 1.552) -- (1.345, 0.423);
    \draw (2.474, 1.552) -- (2.474, 0.423);
    \draw (3.603, 1.552) -- (3.603, 0.423);
    \node[Mark, Mark_disk, black!50] at (4.732, 2.109) {};
    \node[Mark, Mark_disk, black!50] at (4.732, 3.238) {};
    \node[Mark, Mark_disk, BrickRed!40] at (4.732, 4.367) {};
    \node[Mark, Mark_disk, BrickRed!40] at (0.781, 0.416) {};
    \node[Mark, Mark_disk, BrickRed!40] at (1.91, 0.416) {};
    \node[Mark, Mark_disk, BrickRed!40] at (3.039, 0.416) {};
    \node[Mark, Mark_disk, black!50] at (4.168, 0.416) {};
    \node[Mark, Mark_disk, black!50] at (4.732, 0.98) {};
    \node[Mark, Mark_disk] at (4.168, 1.545) {};
    \node[Mark, Mark_disk] at (4.168, 2.674) {};
    \node[Mark, Mark_disk] at (4.168, 3.803) {};
    \node[Mark, Mark_disk] at (4.168, 4.931) {};
    \node[Mark, Mark_disk] at (0.216, 0.98) {};
    \node[Mark, Mark_disk] at (1.345, 0.98) {};
    \node[Mark, Mark_disk] at (2.474, 0.98) {};
    \node[Mark, Mark_disk] at (3.603, 0.98) {};
    \filldraw[fill=RoyalBlue!80, fill opacity=0.3] (0.64, 4.939) .. controls (0.64, 5.033) and (0.687, 5.08) .. (0.781, 5.08) .. controls (0.875, 5.08) and (0.922, 5.033) .. (0.922, 4.939) -- (0.922, 3.81) .. controls (0.922, 3.716) and (0.875, 3.669) .. (0.781, 3.669) .. controls (0.687, 3.669) and (0.64, 3.716) .. (0.64, 3.81) -- cycle;
    \filldraw[fill=RoyalBlue!80, fill opacity=0.3] (1.769, 4.939) .. controls (1.769, 5.033) and (1.816, 5.08) .. (1.91, 5.08) .. controls (2.004, 5.08) and (2.051, 5.033) .. (2.051, 4.939) -- (2.051, 3.81) .. controls (2.051, 3.716) and (2.004, 3.669) .. (1.91, 3.669) .. controls (1.816, 3.669) and (1.769, 3.716) .. (1.769, 3.81) -- cycle;
    \filldraw[fill=RoyalBlue!80, fill opacity=0.3] (2.898, 4.939) .. controls (2.898, 5.033) and (2.945, 5.08) .. (3.039, 5.08) .. controls (3.133, 5.08) and (3.18, 5.033) .. (3.18, 4.939) -- (3.18, 3.81) .. controls (3.18, 3.716) and (3.133, 3.669) .. (3.039, 3.669) .. controls (2.945, 3.669) and (2.898, 3.716) .. (2.898, 3.81) -- cycle;
    \filldraw[shift={(0.099, 4.292)}, rotate=45, fill=RoyalBlue!80, fill opacity=0.3] (0, 0) .. controls (0, 0.094) and (0.047, 0.141) .. (0.141, 0.141) .. controls (0.235, 0.141) and (0.282, 0.094) .. (0.282, 0) -- (0.282, -0.847) .. controls (0.282, -0.941) and (0.235, -0.988) .. (0.141, -0.988) .. controls (0.047, -0.988) and (0, -0.941) .. (0, -0.847) -- cycle;
    \filldraw[shift={(1.228, 4.292)}, rotate=45, fill=RoyalBlue!80, fill opacity=0.3] (0, 0) .. controls (0, 0.094) and (0.047, 0.141) .. (0.141, 0.141) .. controls (0.235, 0.141) and (0.282, 0.094) .. (0.282, 0) -- (0.282, -0.847) .. controls (0.282, -0.941) and (0.235, -0.988) .. (0.141, -0.988) .. controls (0.047, -0.988) and (0, -0.941) .. (0, -0.847) -- cycle;
    \filldraw[shift={(2.357, 4.292)}, rotate=45, fill=RoyalBlue!80, fill opacity=0.3] (0, 0) .. controls (0, 0.094) and (0.047, 0.141) .. (0.141, 0.141) .. controls (0.235, 0.141) and (0.282, 0.094) .. (0.282, 0) -- (0.282, -0.847) .. controls (0.282, -0.941) and (0.235, -0.988) .. (0.141, -0.988) .. controls (0.047, -0.988) and (0, -0.941) .. (0, -0.847) -- cycle;
    \filldraw[shift={(1.993, 3.693)}, rotate=135, fill=RoyalBlue!80, fill opacity=0.3] (0, 0) .. controls (0, 0.094) and (0.047, 0.141) .. (0.141, 0.141) .. controls (0.235, 0.141) and (0.282, 0.094) .. (0.282, 0) -- (0.282, -0.847) .. controls (0.282, -0.941) and (0.235, -0.988) .. (0.141, -0.988) .. controls (0.047, -0.988) and (0, -0.941) .. (0, -0.847) -- cycle;
    \filldraw[shift={(3.122, 3.693)}, rotate=135, fill=RoyalBlue!80, fill opacity=0.3] (0, 0) .. controls (0, 0.094) and (0.047, 0.141) .. (0.141, 0.141) .. controls (0.235, 0.141) and (0.282, 0.094) .. (0.282, 0) -- (0.282, -0.847) .. controls (0.282, -0.941) and (0.235, -0.988) .. (0.141, -0.988) .. controls (0.047, -0.988) and (0, -0.941) .. (0, -0.847) -- cycle;
    \filldraw[shift={(0.864, 3.693)}, rotate=135, fill=RoyalBlue!80, fill opacity=0.3] (0, 0) .. controls (0, 0.094) and (0.047, 0.141) .. (0.141, 0.141) .. controls (0.235, 0.141) and (0.282, 0.094) .. (0.282, 0) -- (0.282, -0.847) .. controls (0.282, -0.941) and (0.235, -0.988) .. (0.141, -0.988) .. controls (0.047, -0.988) and (0, -0.941) .. (0, -0.847) -- cycle;
    \node[Mark, Mark_disk] at (0.781, 3.803) {};
    \node[Mark, Mark_disk, BrickRed!90] at (0.216, 4.367) {};
    \node[Mark, Mark_disk, BrickRed!90] at (0.781, 4.931) {};
    \node[Mark, Mark_disk, BrickRed!90] at (1.91, 4.931) {};
    \node[Mark, Mark_disk, BrickRed!90] at (1.345, 4.367) {};
    \node[Mark, Mark_disk] at (1.91, 3.803) {};
    \node[Mark, Mark_disk, BrickRed!90] at (2.474, 4.367) {};
    \node[Mark, Mark_disk] at (3.039, 3.803) {};
    \node[Mark, Mark_disk, BrickRed!90] at (3.039, 4.931) {};
    \node[Mark, Mark_disk, BrickRed!90] at (3.603, 4.367) {};
    \draw (3.603, 4.931) -- (4.732, 4.932) -- (4.732, 0.423) -- (0.216, 0.423) -- (0.216, 1.552);
    \draw (3.603, 1.545) -- (4.732, 1.544);
    \draw (3.603, 2.674) -- (4.732, 2.672);
    \draw (3.603, 3.803) -- (4.732, 3.802);
    \filldraw[draw=black!25, fill=RoyalBlue!80, fill opacity=0.1] (0.64, 0) -- (0.64, 0.423) .. controls (0.64, 0.517) and (0.687, 0.564) .. (0.781, 0.564) .. controls (0.875, 0.564) and (0.922, 0.517) .. (0.922, 0.423) -- (0.922, 0);
    \filldraw[draw=black!25, fill=RoyalBlue!80, fill opacity=0.1] (4.92, 3.986) -- (4.621, 4.286) .. controls (4.555, 4.352) and (4.555, 4.419) .. (4.621, 4.485) .. controls (4.688, 4.552) and (4.754, 4.552) .. (4.821, 4.485) -- (5.12, 4.186);
    \filldraw[draw=black!25, fill=RoyalBlue!80, fill opacity=0.1] (1.769, 0) -- (1.769, 0.423) .. controls (1.769, 0.517) and (1.816, 0.564) .. (1.91, 0.564) .. controls (2.004, 0.564) and (2.051, 0.517) .. (2.051, 0.423) -- (2.051, 0);
    \filldraw[draw=black!25, fill=RoyalBlue!80, fill opacity=0.1] (2.898, 0) -- (2.898, 0.423) .. controls (2.898, 0.517) and (2.945, 0.564) .. (3.039, 0.564) .. controls (3.133, 0.564) and (3.18, 0.517) .. (3.18, 0.423) -- (3.18, 0);
\end{tikzpicture}
    \caption{Visual representation of the sets $C_3^{(1, j)}$}
    \label{fig:gates1}
  \end{subfigure}
  \hfill
  \begin{subfigure}[b]{0.29\textwidth}
    \centering
    \begin{tikzpicture}[scale=1]
    \draw[shift={(0.216, 2.674)}, yscale=-1] (0, 0) rectangle (2.258, -2.258);
    \draw[shift={(0.216, 3.803)}, yscale=-1] (0, 0) -- (2.258, 0);
    \draw[shift={(1.345, 2.674)}, yscale=-1] (0, 0) -- (0, -2.258);
    \draw[shift={(0.216, 2.674)}, yscale=-1] (0, 0) -- (0, 1.129) -- (1.129, 1.129) -- (1.129, 0);
    \draw[shift={(1.345, 1.545)}, yscale=-1] (0, 0) -- (1.129, 0) -- (1.129, -1.129);
    \draw[shift={(2.474, 1.545)}, yscale=-1] (0, 0) -- (1.129, 0) -- (1.129, -1.129) -- (0, -1.129);
    \draw[shift={(3.603, 2.674)}, yscale=-1] (0, 0) -- (0, -1.129) -- (-1.129, -1.129);
    \draw[shift={(3.603, 3.803)}, yscale=-1] (0, 0) -- (0, -1.129) -- (-1.129, -1.129);
    \node[Mark, Mark_disk] at (0.216, 2.109) {};
    \node[Mark, Mark_disk] at (1.345, 2.109) {};
    \node[Mark, Mark_disk] at (2.474, 2.109) {};
    \node[Mark, Mark_disk] at (0.781, 1.545) {};
    \node[Mark, Mark_disk] at (1.91, 1.545) {};
    \node[Mark, Mark_disk] at (3.039, 1.545) {};
    \node[Mark, Mark_disk] at (3.603, 2.109) {};
    \draw (1.345, 1.552) -- (1.345, 0.423);
    \draw (2.474, 1.552) -- (2.474, 0.423);
    \draw (3.603, 1.552) -- (3.603, 0.423);
    \node[Mark, Mark_disk, black!50] at (4.732, 2.109) {};
    \node[Mark, Mark_disk, black!50] at (4.168, 0.416) {};
    \node[Mark, Mark_disk, black!50] at (4.732, 0.98) {};
    \node[Mark, Mark_disk] at (4.168, 1.545) {};
    \node[Mark, Mark_disk] at (4.168, 2.674) {};
    \node[Mark, Mark_disk] at (4.168, 3.803) {};
    \node[Mark, Mark_disk] at (4.168, 4.931) {};
    \node[Mark, Mark_disk] at (0.216, 0.98) {};
    \node[Mark, Mark_disk] at (1.345, 0.98) {};
    \node[Mark, Mark_disk] at (2.474, 0.98) {};
    \node[Mark, Mark_disk] at (3.603, 0.98) {};
    \filldraw[fill=RoyalBlue!80, fill opacity=0.3] (0.64, 4.939) .. controls (0.64, 5.033) and (0.687, 5.08) .. (0.781, 5.08) .. controls (0.875, 5.08) and (0.922, 5.033) .. (0.922, 4.939) -- (0.922, 2.681) .. controls (0.922, 2.587) and (0.875, 2.54) .. (0.781, 2.54) .. controls (0.687, 2.54) and (0.64, 2.587) .. (0.64, 2.681) -- cycle;
    \filldraw[shift={(0.099, 3.163)}, rotate=45, fill=RoyalBlue!80, fill opacity=0.3] (0, 0) .. controls (0, 0.094) and (0.047, 0.141) .. (0.141, 0.141) .. controls (0.235, 0.141) and (0.282, 0.094) .. (0.282, 0) -- (0.282, -0.847) .. controls (0.282, -0.941) and (0.235, -0.988) .. (0.141, -0.988) .. controls (0.047, -0.988) and (0, -0.941) .. (0, -0.847) -- cycle;
    \filldraw[shift={(0.864, 2.564)}, rotate=135, fill=RoyalBlue!80, fill opacity=0.3] (0, 0) .. controls (0, 0.094) and (0.047, 0.141) .. (0.141, 0.141) .. controls (0.235, 0.141) and (0.282, 0.094) .. (0.282, 0) -- (0.282, -0.847) .. controls (0.282, -0.941) and (0.235, -0.988) .. (0.141, -0.988) .. controls (0.047, -0.988) and (0, -0.941) .. (0, -0.847) -- cycle;
    \filldraw[shift={(0.069, 4.334)}, rotate=18.936, fill=RoyalBlue!80, fill opacity=0.3] (0, 0) .. controls (0, 0.094) and (0.047, 0.141) .. (0.141, 0.141) .. controls (0.235, 0.141) and (0.282, 0.094) .. (0.282, 0) -- (0.282, -1.834) .. controls (0.282, -1.929) and (0.235, -1.976) .. (0.141, -1.976) .. controls (0.047, -1.976) and (0, -1.929) .. (0, -1.834) -- cycle;
    \filldraw[shift={(0.626, 2.704)}, rotate=-18.936, yscale=-1, fill=RoyalBlue!80, fill opacity=0.3] (0, 0) .. controls (0, 0.094) and (0.047, 0.141) .. (0.141, 0.141) .. controls (0.235, 0.141) and (0.282, 0.094) .. (0.282, 0) -- (0.282, -1.834) .. controls (0.282, -1.929) and (0.235, -1.976) .. (0.141, -1.976) .. controls (0.047, -1.976) and (0, -1.929) .. (0, -1.834) -- cycle;
    \filldraw[fill=RoyalBlue!80, fill opacity=0.3] (1.768, 4.952) .. controls (1.768, 5.046) and (1.815, 5.093) .. (1.909, 5.093) .. controls (2.003, 5.093) and (2.05, 5.046) .. (2.05, 4.952) -- (2.05, 2.694) .. controls (2.05, 2.6) and (2.003, 2.553) .. (1.909, 2.553) .. controls (1.815, 2.553) and (1.768, 2.6) .. (1.768, 2.694) -- cycle;
    \filldraw[shift={(1.227, 3.176)}, rotate=45, fill=RoyalBlue!80, fill opacity=0.3] (0, 0) .. controls (0, 0.094) and (0.047, 0.141) .. (0.141, 0.141) .. controls (0.235, 0.141) and (0.282, 0.094) .. (0.282, 0) -- (0.282, -0.847) .. controls (0.282, -0.941) and (0.235, -0.988) .. (0.141, -0.988) .. controls (0.047, -0.988) and (0, -0.941) .. (0, -0.847) -- cycle;
    \filldraw[shift={(1.992, 2.577)}, rotate=135, fill=RoyalBlue!80, fill opacity=0.3] (0, 0) .. controls (0, 0.094) and (0.047, 0.141) .. (0.141, 0.141) .. controls (0.235, 0.141) and (0.282, 0.094) .. (0.282, 0) -- (0.282, -0.847) .. controls (0.282, -0.941) and (0.235, -0.988) .. (0.141, -0.988) .. controls (0.047, -0.988) and (0, -0.941) .. (0, -0.847) -- cycle;
    \filldraw[shift={(1.197, 4.347)}, rotate=18.936, fill=RoyalBlue!80, fill opacity=0.3] (0, 0) .. controls (0, 0.094) and (0.047, 0.141) .. (0.141, 0.141) .. controls (0.235, 0.141) and (0.282, 0.094) .. (0.282, 0) -- (0.282, -1.834) .. controls (0.282, -1.929) and (0.235, -1.976) .. (0.141, -1.976) .. controls (0.047, -1.976) and (0, -1.929) .. (0, -1.834) -- cycle;
    \filldraw[shift={(1.754, 2.717)}, rotate=-18.936, yscale=-1, fill=RoyalBlue!80, fill opacity=0.3] (0, 0) .. controls (0, 0.094) and (0.047, 0.141) .. (0.141, 0.141) .. controls (0.235, 0.141) and (0.282, 0.094) .. (0.282, 0) -- (0.282, -1.834) .. controls (0.282, -1.929) and (0.235, -1.976) .. (0.141, -1.976) .. controls (0.047, -1.976) and (0, -1.929) .. (0, -1.834) -- cycle;
    \filldraw[fill=RoyalBlue!80, fill opacity=0.3] (2.892, 4.956) .. controls (2.892, 5.05) and (2.939, 5.097) .. (3.033, 5.097) .. controls (3.128, 5.097) and (3.174, 5.05) .. (3.174, 4.956) -- (3.174, 2.698) .. controls (3.174, 2.604) and (3.127, 2.557) .. (3.033, 2.557) .. controls (2.939, 2.557) and (2.892, 2.604) .. (2.892, 2.698) -- cycle;
    \filldraw[shift={(2.352, 3.18)}, rotate=45, fill=RoyalBlue!80, fill opacity=0.3] (0, 0) .. controls (0, 0.094) and (0.047, 0.141) .. (0.141, 0.141) .. controls (0.235, 0.141) and (0.282, 0.094) .. (0.282, 0) -- (0.282, -0.847) .. controls (0.282, -0.941) and (0.235, -0.988) .. (0.141, -0.988) .. controls (0.047, -0.988) and (0, -0.941) .. (0, -0.847) -- cycle;
    \filldraw[shift={(3.116, 2.581)}, rotate=135, fill=RoyalBlue!80, fill opacity=0.3] (0, 0) .. controls (0, 0.094) and (0.047, 0.141) .. (0.141, 0.141) .. controls (0.235, 0.141) and (0.282, 0.094) .. (0.282, 0) -- (0.282, -0.847) .. controls (0.282, -0.941) and (0.235, -0.988) .. (0.141, -0.988) .. controls (0.047, -0.988) and (0, -0.941) .. (0, -0.847) -- cycle;
    \filldraw[shift={(2.322, 4.351)}, rotate=18.936, fill=RoyalBlue!80, fill opacity=0.3] (0, 0) .. controls (0, 0.094) and (0.047, 0.141) .. (0.141, 0.141) .. controls (0.235, 0.141) and (0.282, 0.094) .. (0.282, 0) -- (0.282, -1.834) .. controls (0.282, -1.929) and (0.235, -1.976) .. (0.141, -1.976) .. controls (0.047, -1.976) and (0, -1.929) .. (0, -1.834) -- cycle;
    \filldraw[shift={(2.878, 2.721)}, rotate=-18.936, yscale=-1, fill=RoyalBlue!80, fill opacity=0.3] (0, 0) .. controls (0, 0.094) and (0.047, 0.141) .. (0.141, 0.141) .. controls (0.235, 0.141) and (0.282, 0.094) .. (0.282, 0) -- (0.282, -1.834) .. controls (0.282, -1.929) and (0.235, -1.976) .. (0.141, -1.976) .. controls (0.047, -1.976) and (0, -1.929) .. (0, -1.834) -- cycle;
    \node[Mark, Mark_disk] at (0.781, 2.674) {};
    \node[Mark, Mark_disk] at (1.91, 2.674) {};
    \node[Mark, Mark_disk, BrickRed!90] at (2.474, 3.238) {};
    \node[Mark, Mark_disk, BrickRed!90] at (1.345, 3.238) {};
    \node[Mark, Mark_disk, BrickRed!90] at (0.216, 3.238) {};
    \node[Mark, Mark_disk] at (0.781, 3.803) {};
    \node[Mark, Mark_disk, BrickRed!90] at (0.216, 4.367) {};
    \node[Mark, Mark_disk, BrickRed!90] at (0.781, 4.931) {};
    \node[Mark, Mark_disk, BrickRed!90] at (1.91, 4.931) {};
    \node[Mark, Mark_disk, BrickRed!90] at (1.345, 4.367) {};
    \node[Mark, Mark_disk] at (1.91, 3.803) {};
    \node[Mark, Mark_disk, BrickRed!90] at (2.474, 4.367) {};
    \node[Mark, Mark_disk] at (3.039, 3.803) {};
    \node[Mark, Mark_disk, BrickRed!90] at (3.039, 4.931) {};
    \node[Mark, Mark_disk] at (3.039, 2.674) {};
    \node[Mark, Mark_disk, BrickRed!90] at (3.603, 3.238) {};
    \node[Mark, Mark_disk, BrickRed!90] at (3.603, 4.367) {};
    \draw (3.603, 4.931) -- (4.732, 4.932) -- (4.732, 0.423) -- (0.216, 0.423) -- (0.216, 1.552);
    \draw (3.603, 1.545) -- (4.732, 1.544);
    \draw (3.603, 2.674) -- (4.732, 2.672);
    \draw (3.603, 3.803) -- (4.732, 3.802);
    \filldraw[draw=black!25, fill=RoyalBlue!80, fill opacity=0.1] (0.64, 0) -- (0.64, 0.423) .. controls (0.64, 0.517) and (0.687, 0.564) .. (0.781, 0.564) .. controls (0.875, 0.564) and (0.922, 0.517) .. (0.922, 0.423) -- (0.922, 0);
    \filldraw[draw=black!25, fill=RoyalBlue!80, fill opacity=0.1] (1.769, 0) -- (1.769, 0.423) .. controls (1.769, 0.517) and (1.816, 0.564) .. (1.91, 0.564) .. controls (2.004, 0.564) and (2.051, 0.517) .. (2.051, 0.423) -- (2.051, 0);
    \filldraw[draw=black!25, fill=RoyalBlue!80, fill opacity=0.1] (2.898, 0) -- (2.898, 0.423) .. controls (2.898, 0.517) and (2.945, 0.564) .. (3.039, 0.564) .. controls (3.133, 0.564) and (3.18, 0.517) .. (3.18, 0.423) -- (3.18, 0);
    \node[Mark, Mark_disk, BrickRed!40] at (0.781, 0.416) {};
    \node[Mark, Mark_disk, BrickRed!40] at (1.91, 0.416) {};
    \node[Mark, Mark_disk, BrickRed!40] at (3.039, 0.416) {};
    \filldraw[draw=black!25, fill=RoyalBlue!80, fill opacity=0.1] (4.876, 2.901) -- (4.615, 3.163) .. controls (4.549, 3.229) and (4.549, 3.296) .. (4.615, 3.362) .. controls (4.682, 3.429) and (4.748, 3.429) .. (4.815, 3.362) -- (5.073, 3.104);
    \filldraw[draw=black!25, fill=RoyalBlue!80, fill opacity=0.1] (4.769, 3.796) -- (4.585, 4.334) .. controls (4.554, 4.423) and (4.584, 4.483) .. (4.672, 4.513) .. controls (4.761, 4.544) and (4.821, 4.515) .. (4.852, 4.426) -- (5.036, 3.89);
    \node[Mark, Mark_disk, BrickRed!40] at (4.732, 4.367) {};
    \node[Mark, Mark_disk, BrickRed!40] at (4.732, 3.238) {};
\end{tikzpicture}
    \caption{Visual representation of the sets $C_3^{(2, j)}$}
    \label{fig:gates2}
  \end{subfigure}
  \hfill
  \begin{subfigure}[b]{0.29\textwidth}
    \centering
    \input{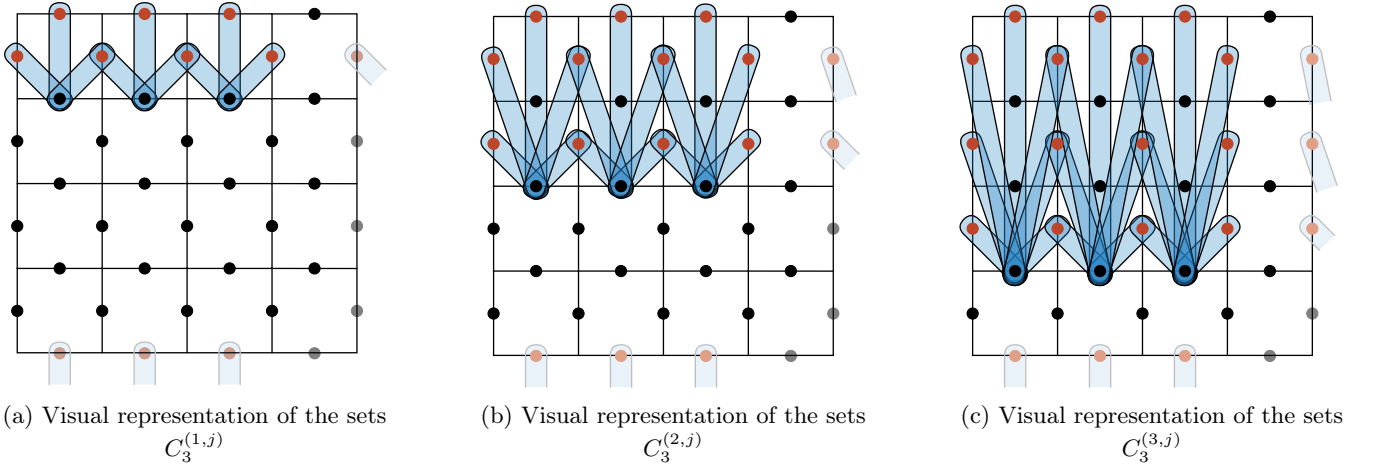}
    \caption{Visual representation of the sets $C_3^{(3, j)}$}
    \label{fig:gates3}
  \end{subfigure}

  \caption{Visual representation of the sets $C_3^{(i, j)}$ in a $4 \times 4$ lattice. Again, the control qubits have been marked in red. }
  \label{fig:gatesC3}
\end{figure*}

Lastly, for every site $(i, L-1, h)$, $1 \leq i \leq L-1$ we consider the set of gates
\begin{align*}
C_4^i &:= \bigcup_{0 \leq k < L-1}\{\CX((0, k, h), (i, L-1, h))\}\\
&\quad \bigcup_{0 \leq k <i} \{\CX((k, 0, v),(i, L-1, h))\}\\
&\quad \bigcup_{0 \leq k < i} \{\CX((k, L-1, v),(i, L-1, h))\}. 
\end{align*}
See \cref{fig:gatesC4} for a visual representation of the sets $C_4^i$. 

\begin{figure*}
  \centering
  \begin{subfigure}[b]{0.3\textwidth}
    \centering
    \begin{tikzpicture}[scale=1]
    \draw (3.606, 3.784) -- (4.735, 3.783);
    \node[Mark, Mark_disk] at (0.784, 2.655) {};
    \node[Mark, Mark_disk] at (1.913, 2.655) {};
    \node[Mark, Mark_disk] at (2.477, 3.22) {};
    \node[Mark, Mark_disk] at (1.348, 3.22) {};
    \node[Mark, Mark_disk] at (0.22, 3.22) {};
    \node[Mark, Mark_disk] at (0.784, 3.784) {};
    \node[Mark, Mark_disk] at (1.348, 4.349) {};
    \node[Mark, Mark_disk] at (1.913, 3.784) {};
    \draw[shift={(0.22, 2.655)}, yscale=-1] (0, 0) rectangle (2.258, -2.258);
    \draw[shift={(0.22, 3.784)}, yscale=-1] (0, 0) -- (2.258, 0);
    \draw[shift={(1.348, 2.655)}, yscale=-1] (0, 0) -- (0, -2.258);
    \draw[shift={(0.22, 2.655)}, yscale=-1] (0, 0) -- (0, 1.129) -- (1.129, 1.129) -- (1.129, 0);
    \draw[shift={(1.348, 1.527)}, yscale=-1] (0, 0) -- (1.129, 0) -- (1.129, -1.129);
    \draw[shift={(2.477, 1.527)}, yscale=-1] (0, 0) -- (1.129, 0) -- (1.129, -1.129) -- (0, -1.129);
    \draw[shift={(3.606, 2.655)}, yscale=-1] (0, 0) -- (0, -1.129) -- (-1.129, -1.129);
    \draw[shift={(3.606, 3.784)}, yscale=-1] (0, 0) -- (0, -1.129) -- (-1.129, -1.129);
    \node[Mark, Mark_disk] at (0.22, 2.091) {};
    \node[Mark, Mark_disk] at (1.348, 2.091) {};
    \node[Mark, Mark_disk] at (2.477, 2.091) {};
    \node[Mark, Mark_disk] at (3.042, 2.655) {};
    \node[Mark, Mark_disk] at (0.784, 1.527) {};
    \node[Mark, Mark_disk] at (1.913, 1.527) {};
    \node[Mark, Mark_disk] at (3.042, 1.527) {};
    \node[Mark, Mark_disk] at (3.606, 2.091) {};
    \node[Mark, Mark_disk] at (3.606, 3.22) {};
    \draw (1.348, 1.534) -- (1.348, 0.405);
    \draw (2.477, 1.534) -- (2.477, 0.405);
    \draw (3.606, 1.534) -- (3.606, 0.405);
    \node[Mark, Mark_disk, black!50] at (4.735, 2.091) {};
    \node[Mark, Mark_disk, black!50] at (4.735, 3.22) {};
    \node[Mark, Mark_disk, black!50] at (4.171, 0.398) {};
    \node[Mark, Mark_disk, black!50] at (4.735, 0.962) {};
    \node[Mark, Mark_disk] at (4.171, 1.527) {};
    \node[Mark, Mark_disk] at (4.171, 2.655) {};
    \node[Mark, Mark_disk] at (4.171, 4.913) {};
    \node[Mark, Mark_disk] at (0.22, 0.962) {};
    \node[Mark, Mark_disk] at (1.348, 0.962) {};
    \node[Mark, Mark_disk] at (2.477, 0.962) {};
    \node[Mark, Mark_disk] at (3.606, 0.962) {};
    \filldraw[shift={(3.489, 4.273)}, rotate=45, fill=RoyalBlue!80, fill opacity=0.3] (0, 0) .. controls (0, 0.094) and (0.047, 0.141) .. (0.141, 0.141) .. controls (0.235, 0.141) and (0.282, 0.094) .. (0.282, 0) -- (0.282, -0.847) .. controls (0.282, -0.941) and (0.235, -0.988) .. (0.141, -0.988) .. controls (0.047, -0.988) and (0, -0.941) .. (0, -0.847) -- cycle;
    \filldraw[shift={(0.692, 4.794)}, rotate=71.477, fill=RoyalBlue!80, fill opacity=0.3] (0, 0) .. controls (0, 0.094) and (0.047, 0.141) .. (0.141, 0.141) .. controls (0.235, 0.141) and (0.282, 0.094) .. (0.282, 0) -- (0.282, -3.669) .. controls (0.282, -3.763) and (0.235, -3.81) .. (0.141, -3.81) .. controls (0.047, -3.81) and (0, -3.763) .. (0, -3.669) -- cycle;
    \filldraw[shift={(0.14, 4.201)}, rotate=82.191, fill=RoyalBlue!80, fill opacity=0.3] (0, 0) .. controls (0, 0.094) and (0.047, 0.141) .. (0.141, 0.141) .. controls (0.235, 0.141) and (0.282, 0.094) .. (0.282, 0) -- (0.282, -4.092) .. controls (0.282, -4.186) and (0.235, -4.233) .. (0.141, -4.233) .. controls (0.047, -4.233) and (0, -4.186) .. (0, -4.092) -- cycle;
    \filldraw[shift={(1.837, 4.783)}, rotate=64.125, fill=RoyalBlue!80, fill opacity=0.3] (0, 0) .. controls (0, 0.094) and (0.047, 0.141) .. (0.141, 0.141) .. controls (0.235, 0.141) and (0.282, 0.094) .. (0.282, 0) -- (0.282, -2.54) .. controls (0.282, -2.634) and (0.235, -2.681) .. (0.141, -2.681) .. controls (0.047, -2.681) and (0, -2.634) .. (0, -2.54) -- cycle;
    \filldraw[shift={(2.898, 4.841)}, rotate=46.146, fill=RoyalBlue!80, fill opacity=0.3] (0, 0) .. controls (0, 0.094) and (0.047, 0.141) .. (0.141, 0.141) .. controls (0.235, 0.141) and (0.282, 0.094) .. (0.282, 0) -- (0.282, -1.693) .. controls (0.282, -1.787) and (0.235, -1.834) .. (0.141, -1.834) .. controls (0.047, -1.834) and (0, -1.787) .. (0, -1.693) -- cycle;
    \node[Mark, Mark_disk, BrickRed!90] at (0.22, 4.349) {};
    \node[Mark, Mark_disk, BrickRed!90] at (0.784, 4.913) {};
    \node[Mark, Mark_disk, BrickRed!90] at (1.913, 4.913) {};
    \node[Mark, Mark_disk] at (2.477, 4.349) {};
    \node[Mark, Mark_disk] at (3.042, 3.784) {};
    \node[Mark, Mark_disk, BrickRed!90] at (3.042, 4.913) {};
    \node[Mark, Mark_disk, BrickRed!90] at (3.606, 4.349) {};
    \node[Mark, Mark_disk] at (4.171, 3.784) {};
    \draw (3.606, 4.913) -- (4.735, 4.913) -- (4.735, 0.405) -- (0.22, 0.405) -- (0.22, 1.534);
    \draw (3.606, 1.527) -- (4.735, 1.526);
    \draw (3.606, 2.655) -- (4.735, 2.654);
    \filldraw[draw=black!25, fill=RoyalBlue!80, fill opacity=0.1] (1.513, 0.004) -- (0.692, 0.279) .. controls (0.603, 0.309) and (0.573, 0.368) .. (0.603, 0.457) .. controls (0.633, 0.547) and (0.692, 0.576) .. (0.781, 0.546) -- (1.61, 0.269);
    \filldraw[draw=black!25, fill=RoyalBlue!80, fill opacity=0.1] (2.39, 0) -- (1.837, 0.268) .. controls (1.753, 0.309) and (1.731, 0.372) .. (1.772, 0.456) .. controls (1.813, 0.541) and (1.876, 0.563) .. (1.961, 0.522) -- (2.519, 0.251);
    \filldraw[draw=black!25, fill=RoyalBlue!80, fill opacity=0.1] (3.222, 0.014) -- (2.898, 0.325) .. controls (2.83, 0.391) and (2.829, 0.457) .. (2.894, 0.525) .. controls (2.959, 0.593) and (3.026, 0.594) .. (3.093, 0.529) -- (3.423, 0.212);
    \node[Mark, Mark_disk, BrickRed!40] at (0.784, 0.398) {};
    \node[Mark, Mark_disk, BrickRed!40] at (1.913, 0.398) {};
    \node[Mark, Mark_disk, BrickRed!40] at (3.042, 0.398) {};
    \filldraw[draw=black!25, fill=RoyalBlue!80, fill opacity=0.1] (5.014, 4.132) -- (4.612, 4.188) .. controls (4.519, 4.2) and (4.478, 4.253) .. (4.491, 4.347) .. controls (4.504, 4.44) and (4.557, 4.48) .. (4.65, 4.467) -- (5.059, 4.411);
    \node[Mark, Mark_disk, BrickRed!40] at (4.735, 4.349) {};
\end{tikzpicture}
    \caption{Visual representation of the set $C^1_4$.}
    \label{fig:gates5}
  \end{subfigure}
  \hfill
  \begin{subfigure}[b]{0.3\textwidth}
    \centering
    \begin{tikzpicture}[scale=1]
    \draw (3.606, 2.682) -- (4.735, 2.681);
    \draw (3.606, 3.811) -- (4.735, 3.81);
    \node[Mark, Mark_disk] at (0.784, 2.682) {};
    \node[Mark, Mark_disk] at (1.913, 2.682) {};
    \node[Mark, Mark_disk] at (0.784, 3.811) {};
    \draw[shift={(0.22, 2.682)}, yscale=-1] (0, 0) rectangle (2.258, -2.258);
    \draw[shift={(0.22, 3.811)}, yscale=-1] (0, 0) -- (2.258, 0);
    \draw[shift={(1.348, 2.682)}, yscale=-1] (0, 0) -- (0, -2.258);
    \draw[shift={(0.22, 2.682)}, yscale=-1] (0, 0) -- (0, 1.129) -- (1.129, 1.129) -- (1.129, 0);
    \draw[shift={(1.348, 1.553)}, yscale=-1] (0, 0) -- (1.129, 0) -- (1.129, -1.129);
    \draw[shift={(2.477, 1.553)}, yscale=-1] (0, 0) -- (1.129, 0) -- (1.129, -1.129) -- (0, -1.129);
    \draw[shift={(3.606, 2.682)}, yscale=-1] (0, 0) -- (0, -1.129) -- (-1.129, -1.129);
    \draw[shift={(3.606, 3.811)}, yscale=-1] (0, 0) -- (0, -1.129) -- (-1.129, -1.129);
    \node[Mark, Mark_disk] at (0.22, 2.118) {};
    \node[Mark, Mark_disk] at (1.348, 2.118) {};
    \node[Mark, Mark_disk] at (2.477, 2.118) {};
    \node[Mark, Mark_disk] at (0.784, 1.553) {};
    \node[Mark, Mark_disk] at (1.913, 1.553) {};
    \node[Mark, Mark_disk] at (3.042, 1.553) {};
    \node[Mark, Mark_disk] at (3.606, 2.118) {};
    \draw (1.348, 1.561) -- (1.348, 0.432);
    \draw (2.477, 1.561) -- (2.477, 0.432);
    \draw (3.606, 1.561) -- (3.606, 0.432);
    \draw (3.606, 1.561) -- (4.735, 1.561);
    \node[Mark, Mark_disk, black!50] at (4.735, 2.118) {};
    \node[Mark, Mark_disk, black!50] at (4.171, 0.424) {};
    \node[Mark, Mark_disk, black!50] at (4.735, 0.989) {};
    \node[Mark, Mark_disk] at (4.171, 1.553) {};
    \node[Mark, Mark_disk] at (4.171, 3.811) {};
    \node[Mark, Mark_disk] at (4.171, 4.94) {};
    \node[Mark, Mark_disk] at (0.22, 0.989) {};
    \node[Mark, Mark_disk] at (1.348, 0.989) {};
    \node[Mark, Mark_disk] at (2.477, 0.989) {};
    \node[Mark, Mark_disk] at (3.606, 0.989) {};
    \filldraw[shift={(3.489, 3.171)}, rotate=45, fill=RoyalBlue!80, fill opacity=0.3] (0, 0) .. controls (0, 0.094) and (0.047, 0.141) .. (0.141, 0.141) .. controls (0.235, 0.141) and (0.282, 0.094) .. (0.282, 0) -- (0.282, -0.847) .. controls (0.282, -0.941) and (0.235, -0.988) .. (0.141, -0.988) .. controls (0.047, -0.988) and (0, -0.941) .. (0, -0.847) -- cycle;
    \filldraw[shift={(0.14, 3.098)}, rotate=82.191, fill=RoyalBlue!80, fill opacity=0.3] (0, 0) .. controls (0, 0.094) and (0.047, 0.141) .. (0.141, 0.141) .. controls (0.235, 0.141) and (0.282, 0.094) .. (0.282, 0) -- (0.282, -4.092) .. controls (0.282, -4.186) and (0.235, -4.233) .. (0.141, -4.233) .. controls (0.047, -4.233) and (0, -4.186) .. (0, -4.092) -- cycle;
    \filldraw[shift={(3.459, 4.359)}, rotate=18.786, fill=RoyalBlue!80, fill opacity=0.3] (0, 0) .. controls (0, 0.094) and (0.047, 0.141) .. (0.141, 0.141) .. controls (0.235, 0.141) and (0.282, 0.094) .. (0.282, 0) -- (0.282, -1.834) .. controls (0.282, -1.929) and (0.235, -1.976) .. (0.141, -1.976) .. controls (0.047, -1.976) and (0, -1.929) .. (0, -1.834) -- cycle;
    \filldraw[shift={(2.909, 4.889)}, rotate=26.72, fill=RoyalBlue!80, fill opacity=0.3] (0, 0) .. controls (0, 0.094) and (0.047, 0.141) .. (0.141, 0.141) .. controls (0.235, 0.141) and (0.282, 0.094) .. (0.282, 0) -- (0.282, -2.54) .. controls (0.282, -2.634) and (0.235, -2.681) .. (0.141, -2.681) .. controls (0.047, -2.681) and (0, -2.634) .. (0, -2.54) -- cycle;
    \filldraw[shift={(1.786, 4.895)}, rotate=44.596, fill=RoyalBlue!80, fill opacity=0.3] (0, 0) .. controls (0, 0.094) and (0.047, 0.141) .. (0.141, 0.141) .. controls (0.235, 0.141) and (0.282, 0.094) .. (0.282, 0) -- (0.282, -3.246) .. controls (0.282, -3.34) and (0.235, -3.387) .. (0.141, -3.387) .. controls (0.047, -3.387) and (0, -3.34) .. (0, -3.246) -- cycle;
    \filldraw[shift={(0.599, 4.895)}, rotate=56.41, fill=RoyalBlue!80, fill opacity=0.3] (0, 0) .. controls (0, 0.094) and (0.047, 0.141) .. (0.141, 0.141) .. controls (0.235, 0.141) and (0.282, 0.094) .. (0.282, 0) -- (0.282, -4.233) .. controls (0.282, -4.327) and (0.235, -4.374) .. (0.141, -4.374) .. controls (0.047, -4.374) and (0, -4.327) .. (0, -4.233) -- cycle;
    \filldraw[shift={(0.14, 4.26)}, rotate=66.503, fill=RoyalBlue!80, fill opacity=0.3] (0, 0) .. controls (0, 0.094) and (0.047, 0.141) .. (0.141, 0.141) .. controls (0.235, 0.141) and (0.282, 0.094) .. (0.282, 0) -- (0.282, -4.374) .. controls (0.282, -4.469) and (0.235, -4.516) .. (0.141, -4.516) .. controls (0.047, -4.516) and (0, -4.469) .. (0, -4.374) -- cycle;
    \node[Mark, Mark_disk] at (2.477, 3.247) {};
    \node[Mark, Mark_disk, BrickRed!90] at (0.22, 3.247) {};
    \node[Mark, Mark_disk, BrickRed!90] at (0.22, 4.375) {};
    \node[Mark, Mark_disk, BrickRed!90] at (0.784, 4.94) {};
    \node[Mark, Mark_disk, BrickRed!90] at (1.913, 4.94) {};
    \node[Mark, Mark_disk] at (1.913, 3.811) {};
    \node[Mark, Mark_disk] at (2.477, 4.375) {};
    \node[Mark, Mark_disk] at (3.042, 3.811) {};
    \node[Mark, Mark_disk, BrickRed!90] at (3.042, 4.94) {};
    \node[Mark, Mark_disk, BrickRed!90] at (3.606, 3.247) {};
    \node[Mark, Mark_disk, BrickRed!90] at (3.606, 4.375) {};
    \node[Mark, Mark_disk] at (4.171, 2.682) {};
    \node[Mark, Mark_disk] at (3.042, 2.682) {};
    \node[Mark, Mark_disk] at (1.348, 3.247) {};
    \node[Mark, Mark_disk] at (1.348, 4.375) {};
    \draw (3.606, 4.94) -- (4.735, 4.94) -- (4.735, 0.432) -- (0.22, 0.432) -- (0.22, 1.561);
    \draw (3.606, 1.553) -- (4.735, 1.553);
    \filldraw[draw=black!25, fill=RoyalBlue!80, fill opacity=0.1] (3.108, 0.015) -- (2.919, 0.389) .. controls (2.877, 0.473) and (2.898, 0.536) .. (2.982, 0.578) .. controls (3.066, 0.621) and (3.129, 0.6) .. (3.171, 0.516) -- (3.371, 0.119);
    \filldraw[draw=black!25, fill=RoyalBlue!80, fill opacity=0.1] (2.175, 0.011) -- (1.796, 0.395) .. controls (1.73, 0.462) and (1.731, 0.528) .. (1.798, 0.594) .. controls (1.865, 0.66) and (1.931, 0.66) .. (1.997, 0.593) -- (2.378, 0.206);
    \filldraw[draw=black!25, fill=RoyalBlue!80, fill opacity=0.1] (1.202, 0) -- (0.609, 0.394) .. controls (0.531, 0.446) and (0.518, 0.511) .. (0.57, 0.59) .. controls (0.622, 0.668) and (0.687, 0.681) .. (0.765, 0.629) -- (1.375, 0.224);
    \filldraw[draw=black!25, fill=RoyalBlue!80, fill opacity=0.1] (5.056, 3.079) -- (4.65, 3.134) .. controls (4.557, 3.147) and (4.516, 3.2) .. (4.529, 3.293) .. controls (4.542, 3.386) and (4.595, 3.427) .. (4.688, 3.414) -- (5.085, 3.359);
    \filldraw[draw=black!25, fill=RoyalBlue!80, fill opacity=0.1] (5.109, 4.061) -- (4.65, 4.261) .. controls (4.564, 4.298) and (4.539, 4.36) .. (4.577, 4.447) .. controls (4.614, 4.533) and (4.676, 4.557) .. (4.762, 4.52) -- (5.214, 4.323);
    \node[Mark, Mark_disk, BrickRed!40] at (0.784, 0.424) {};
    \node[Mark, Mark_disk, BrickRed!40] at (1.913, 0.424) {};
    \node[Mark, Mark_disk, BrickRed!40] at (3.042, 0.424) {};
    \node[Mark, Mark_disk, BrickRed!40] at (4.735, 3.247) {};
    \node[Mark, Mark_disk, BrickRed!40] at (4.735, 4.375) {};
\end{tikzpicture}
    \caption{Visual representation of the set $C^2_4$.}
    \label{fig:gates6}
  \end{subfigure}
  \hfill
  \begin{subfigure}[b]{0.3\textwidth}
    \centering
    \begin{tikzpicture}[scale=1]
    \draw (3.647, 3.803) -- (4.776, 3.802);
    \node[Mark, Mark_disk] at (0.824, 2.674) {};
    \node[Mark, Mark_disk] at (1.389, 3.239) {};
    \draw[shift={(0.26, 2.674)}, yscale=-1] (0, 0) rectangle (2.258, -2.258);
    \draw[shift={(0.26, 3.803)}, yscale=-1] (0, 0) -- (2.258, 0);
    \draw[shift={(1.389, 2.674)}, yscale=-1] (0, 0) -- (0, -2.258);
    \draw[shift={(0.26, 2.674)}, yscale=-1] (0, 0) -- (0, 1.129) -- (1.129, 1.129) -- (1.129, 0);
    \draw[shift={(1.389, 1.546)}, yscale=-1] (0, 0) -- (1.129, 0) -- (1.129, -1.129);
    \draw[shift={(2.518, 1.546)}, yscale=-1] (0, 0) -- (1.129, 0) -- (1.129, -1.129) -- (0, -1.129);
    \draw[shift={(3.647, 2.674)}, yscale=-1] (0, 0) -- (0, -1.129) -- (-1.129, -1.129);
    \draw[shift={(3.647, 3.803)}, yscale=-1] (0, 0) -- (0, -1.129) -- (-1.129, -1.129);
    \node[Mark, Mark_disk] at (1.389, 2.11) {};
    \node[Mark, Mark_disk] at (3.082, 3.803) {};
    \node[Mark, Mark_disk] at (0.824, 1.546) {};
    \node[Mark, Mark_disk] at (1.953, 1.546) {};
    \draw (3.647, 4.932) -- (4.776, 4.932) -- (4.776, 0.424) -- (0.26, 0.424) -- (0.26, 1.553);
    \draw (1.389, 1.553) -- (1.389, 0.424);
    \draw (2.518, 1.553) -- (2.518, 0.424);
    \draw (3.647, 1.553) -- (3.647, 0.424);
    \draw (3.647, 1.546) -- (4.776, 1.545);
    \draw (3.647, 2.674) -- (4.776, 2.673);
    \node[Mark, Mark_disk, black!50] at (4.211, 0.417) {};
    \node[Mark, Mark_disk, black!50] at (4.776, 0.981) {};
    \node[Mark, Mark_disk] at (4.211, 2.674) {};
    \node[Mark, Mark_disk] at (4.211, 3.803) {};
    \node[Mark, Mark_disk] at (4.211, 4.932) {};
    \node[Mark, Mark_disk] at (0.26, 0.981) {};
    \node[Mark, Mark_disk] at (1.389, 0.981) {};
    \node[Mark, Mark_disk] at (2.518, 0.981) {};
    \node[Mark, Mark_disk] at (3.647, 0.981) {};
    \filldraw[shift={(3.53, 2.035)}, rotate=45, fill=RoyalBlue!80, fill opacity=0.3] (0, 0) .. controls (0, 0.094) and (0.047, 0.141) .. (0.141, 0.141) .. controls (0.235, 0.141) and (0.282, 0.094) .. (0.282, 0) -- (0.282, -0.847) .. controls (0.282, -0.941) and (0.235, -0.988) .. (0.141, -0.988) .. controls (0.047, -0.988) and (0, -0.941) .. (0, -0.847) -- cycle;
    \filldraw[shift={(0.18, 1.962)}, rotate=82.191, fill=RoyalBlue!80, fill opacity=0.3] (0, 0) .. controls (0, 0.094) and (0.047, 0.141) .. (0.141, 0.141) .. controls (0.235, 0.141) and (0.282, 0.094) .. (0.282, 0) -- (0.282, -4.092) .. controls (0.282, -4.186) and (0.235, -4.233) .. (0.141, -4.233) .. controls (0.047, -4.233) and (0, -4.186) .. (0, -4.092) -- cycle;
    \filldraw[shift={(3.5, 3.223)}, rotate=18.786, fill=RoyalBlue!80, fill opacity=0.3] (0, 0) .. controls (0, 0.094) and (0.047, 0.141) .. (0.141, 0.141) .. controls (0.235, 0.141) and (0.282, 0.094) .. (0.282, 0) -- (0.282, -1.834) .. controls (0.282, -1.929) and (0.235, -1.976) .. (0.141, -1.976) .. controls (0.047, -1.976) and (0, -1.929) .. (0, -1.834) -- cycle;
    \filldraw[shift={(0.18, 3.124)}, rotate=66.503, fill=RoyalBlue!80, fill opacity=0.3] (0, 0) .. controls (0, 0.094) and (0.047, 0.141) .. (0.141, 0.141) .. controls (0.235, 0.141) and (0.282, 0.094) .. (0.282, 0) -- (0.282, -4.374) .. controls (0.282, -4.469) and (0.235, -4.516) .. (0.141, -4.516) .. controls (0.047, -4.516) and (0, -4.469) .. (0, -4.374) -- cycle;
    \filldraw[shift={(3.51, 4.394)}, rotate=11.446, fill=RoyalBlue!80, fill opacity=0.3] (0, 0) .. controls (0, 0.094) and (0.047, 0.141) .. (0.141, 0.141) .. controls (0.235, 0.141) and (0.282, 0.094) .. (0.282, 0) -- (0.282, -2.963) .. controls (0.282, -3.057) and (0.235, -3.104) .. (0.141, -3.104) .. controls (0.047, -3.104) and (0, -3.057) .. (0, -2.963) -- cycle;
    \filldraw[shift={(2.924, 4.938)}, rotate=18.73, fill=RoyalBlue!80, fill opacity=0.3] (0, 0) .. controls (0, 0.094) and (0.047, 0.141) .. (0.141, 0.141) .. controls (0.235, 0.141) and (0.282, 0.094) .. (0.282, 0) -- (0.282, -3.669) .. controls (0.282, -3.763) and (0.235, -3.81) .. (0.141, -3.81) .. controls (0.047, -3.81) and (0, -3.763) .. (0, -3.669) -- cycle;
    \filldraw[shift={(0.665, 4.889)}, rotate=45.146, fill=RoyalBlue!80, fill opacity=0.3] (0, 0) .. controls (0, 0.094) and (0.047, 0.141) .. (0.141, 0.141) .. controls (0.235, 0.141) and (0.282, 0.094) .. (0.282, 0) -- (0.28, -4.945) .. controls (0.28, -5.039) and (0.233, -5.086) .. (0.139, -5.086) .. controls (0.044, -5.086) and (-0.003, -5.039) .. (-0.003, -4.945) -- cycle;
    \filldraw[shift={(1.82, 4.897)}, rotate=33.288, fill=RoyalBlue!80, fill opacity=0.3] (0, 0) .. controls (0, 0.094) and (0.047, 0.141) .. (0.141, 0.141) .. controls (0.235, 0.141) and (0.282, 0.094) .. (0.282, 0) -- (0.282, -4.092) .. controls (0.282, -4.186) and (0.235, -4.233) .. (0.141, -4.233) .. controls (0.047, -4.233) and (0, -4.186) .. (0, -4.092) -- cycle;
    \filldraw[shift={(0.115, 4.296)}, rotate=54.763, fill=RoyalBlue!80, fill opacity=0.3] (0, 0) .. controls (0, 0.094) and (0.047, 0.141) .. (0.141, 0.141) .. controls (0.235, 0.141) and (0.282, 0.094) .. (0.282, 0) -- (0.282, -4.939) .. controls (0.282, -5.033) and (0.235, -5.08) .. (0.141, -5.08) .. controls (0.047, -5.08) and (0, -5.033) .. (0, -4.939) -- cycle;
    \node[Mark, Mark_disk] at (1.953, 2.674) {};
    \node[Mark, Mark_disk] at (2.518, 3.239) {};
    \node[Mark, Mark_disk, BrickRed!90] at (0.26, 3.239) {};
    \node[Mark, Mark_disk] at (0.824, 3.803) {};
    \node[Mark, Mark_disk, BrickRed!90] at (0.26, 4.368) {};
    \node[Mark, Mark_disk, BrickRed!90] at (0.824, 4.932) {};
    \node[Mark, Mark_disk, BrickRed!90] at (1.953, 4.932) {};
    \node[Mark, Mark_disk] at (1.389, 4.368) {};
    \node[Mark, Mark_disk] at (1.953, 3.803) {};
    \node[Mark, Mark_disk] at (2.518, 4.368) {};
    \node[Mark, Mark_disk, BrickRed!90] at (0.26, 2.11) {};
    \node[Mark, Mark_disk] at (2.518, 2.11) {};
    \node[Mark, Mark_disk, BrickRed!90] at (3.082, 4.932) {};
    \node[Mark, Mark_disk] at (3.082, 2.674) {};
    \node[Mark, Mark_disk] at (3.082, 1.546) {};
    \node[Mark, Mark_disk, BrickRed!90] at (3.647, 2.11) {};
    \node[Mark, Mark_disk, BrickRed!90] at (3.647, 3.239) {};
    \node[Mark, Mark_disk, BrickRed!90] at (3.647, 4.368) {};
    \node[Mark, Mark_disk] at (4.211, 1.546) {};
    \filldraw[draw=black!25, fill=RoyalBlue!80, fill opacity=0.1] (5.171, 1.897) -- (4.696, 1.962) .. controls (4.603, 1.975) and (4.562, 2.028) .. (4.575, 2.121) .. controls (4.588, 2.214) and (4.641, 2.254) .. (4.734, 2.241) -- (5.21, 2.176);
    \filldraw[draw=black!25, fill=RoyalBlue!80, fill opacity=0.1] (5.152, 2.925) -- (4.696, 3.124) .. controls (4.61, 3.161) and (4.585, 3.223) .. (4.623, 3.309) .. controls (4.66, 3.396) and (4.722, 3.42) .. (4.808, 3.383) -- (5.251, 3.19);
    \filldraw[draw=black!25, fill=RoyalBlue!80, fill opacity=0.1] (5.125, 3.947) -- (4.631, 4.296) .. controls (4.554, 4.35) and (4.543, 4.416) .. (4.597, 4.493) .. controls (4.651, 4.569) and (4.717, 4.581) .. (4.794, 4.526) -- (5.276, 4.186);
    \filldraw[draw=black!25, fill=RoyalBlue!80, fill opacity=0.1] (3.063, 0.013) -- (2.924, 0.423) .. controls (2.894, 0.512) and (2.923, 0.572) .. (3.012, 0.602) .. controls (3.101, 0.632) and (3.161, 0.602) .. (3.191, 0.513) -- (3.334, 0.091);
    \filldraw[draw=black!25, fill=RoyalBlue!80, fill opacity=0.1] (1.04, 0) -- (0.665, 0.374) .. controls (0.598, 0.44) and (0.598, 0.506) .. (0.664, 0.573) .. controls (0.731, 0.64) and (0.797, 0.64) .. (0.864, 0.574) -- (1.237, 0.202);
    \filldraw[draw=black!25, fill=RoyalBlue!80, fill opacity=0.1] (2.065, 0.009) -- (1.82, 0.382) .. controls (1.769, 0.46) and (1.782, 0.525) .. (1.861, 0.577) .. controls (1.939, 0.628) and (2.005, 0.615) .. (2.056, 0.536) -- (2.313, 0.144);
    \node[Mark, Mark_disk, BrickRed!40] at (4.776, 2.11) {};
    \node[Mark, Mark_disk, BrickRed!40] at (4.776, 3.239) {};
    \node[Mark, Mark_disk, BrickRed!40] at (4.776, 4.368) {};
    \node[Mark, Mark_disk, BrickRed!40] at (0.824, 0.417) {};
    \node[Mark, Mark_disk, BrickRed!40] at (1.953, 0.417) {};
    \node[Mark, Mark_disk, BrickRed!40] at (3.082, 0.417) {};
\end{tikzpicture}
    \caption{Visual representation of the set $C^3_4$.}
    \label{fig:gates7}
  \end{subfigure}

  \caption{Visual representation of the sets $C^i_4$ in a $4 \times 4$ lattice. The control qubits are marked in red.}
  \label{fig:gatesC4}
\end{figure*}
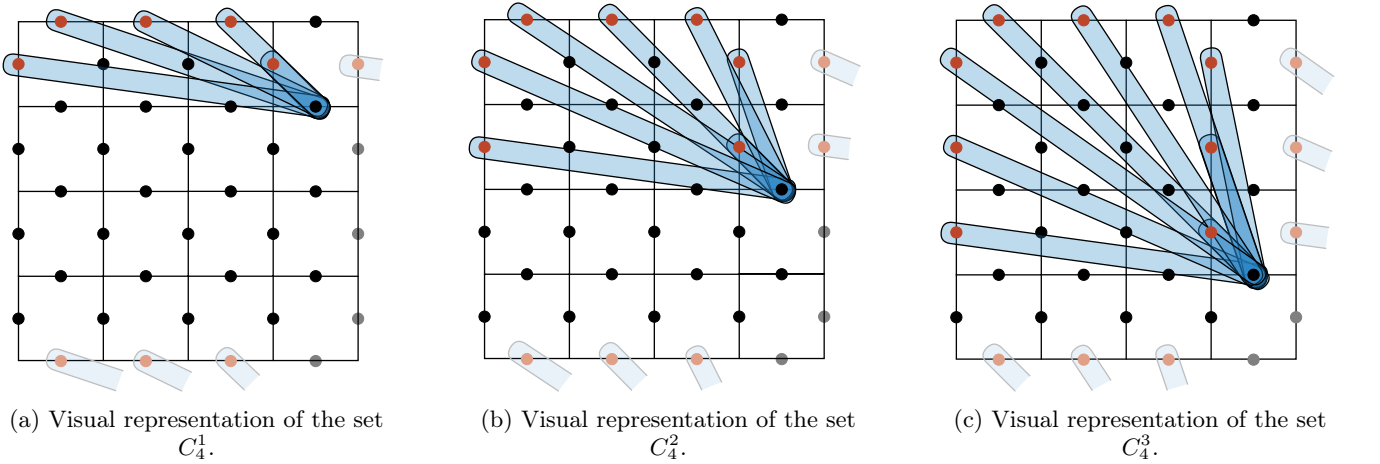

The above sets allow us to define the four sub-circuits, and the final circuit $\tilde C$. By a slight abuse of notation, we define
\begin{align*}
C_1 &:= \prod_{U \in C_1} U \, ,\\
C_2 &:= \prod_{0\leq i \leq L-1} \; \prod_{U \in C_2^i}U \, ,\\
C_3 &:= \prod_{\substack{1 \leq i \leq L-1\\0 \leq j < L-1}} \; \prod_{U \in C_3^{(i, j)}} U \, ,\\
C_4 &:= \prod_{1 \leq i \leq L-1} \;\prod_{U \in C_4^i} U \, ,\\
\tilde C &:= C_4 C_3 C_2 C_1 \, .
\end{align*}

As previously mentioned, the circuit $\tilde C$ allows us to partition the set of spins of the model, $\Lambda_L$, into two mutually non-interacting spin systems. We will denote the spins of each system as $\Lambda_A$ and $\Lambda_B$:
\begin{align*}
\Lambda_A&=\{(0,j,h) : \;0\le j< L-1\}
\\&\quad \cup\{(i,j,v) :  0 \le i<L-1, j \in \ZZ_L\},\\
\Lambda_B&=\Lambda_L \setminus \Lambda_A.
\end{align*}

We will prove that, after the conjugation by $\tilde C$, the star operators act on $\Lambda_A$ only while the plaquette operators act on $\Lambda_B$ only,  see \cref{fig:twocombs} for a visual representation of the two decoupled systems. 

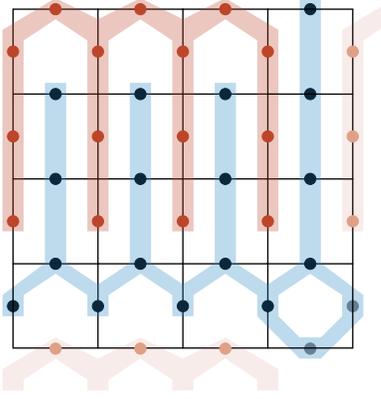
\begin{figure}
    \centering
    \begin{tikzpicture}[scale=1]
    \draw (3.528, 1.689) -- (4.657, 1.689);
    \draw (3.528, 2.818) -- (4.657, 2.817);
    \node[Mark, Mark_disk] at (0.706, 2.818) {};
    \draw[shift={(0.141, 2.818)}, yscale=-1] (0, 0) rectangle (2.258, -2.258);
    \draw[shift={(0.141, 3.947)}, yscale=-1] (0, 0) -- (2.258, 0);
    \draw[shift={(1.27, 2.818)}, yscale=-1] (0, 0) -- (0, -2.258);
    \draw[shift={(0.141, 2.818)}, yscale=-1] (0, 0) -- (0, 1.129) -- (1.129, 1.129) -- (1.129, 0);
    \draw[shift={(1.27, 1.689)}, yscale=-1] (0, 0) -- (1.129, 0) -- (1.129, -1.129);
    \draw[shift={(2.399, 1.689)}, yscale=-1] (0, 0) -- (1.129, 0) -- (1.129, -1.129) -- (0, -1.129);
    \draw[shift={(3.528, 2.818)}, yscale=-1] (0, 0) -- (0, -1.129) -- (-1.129, -1.129);
    \draw[shift={(3.528, 3.947)}, yscale=-1] (0, 0) -- (0, -1.129) -- (-1.129, -1.129);
    \node[Mark, Mark_disk, BrickRed!90] at (1.27, 2.254) {};
    \node[Mark, Mark_disk] at (2.963, 3.947) {};
    \node[Mark, Mark_disk] at (0.706, 1.689) {};
    \node[Mark, Mark_disk] at (1.834, 1.689) {};
    \draw (1.27, 1.697) -- (1.27, 0.568);
    \draw (2.399, 1.697) -- (2.399, 0.568);
    \draw (3.528, 1.697) -- (3.528, 0.568);
    \node[Mark, Mark_disk, black!50] at (4.092, 0.56) {};
    \node[Mark, Mark_disk, black!50] at (4.657, 1.125) {};
    \node[Mark, Mark_disk] at (4.092, 2.818) {};
    \node[Mark, Mark_disk] at (4.092, 3.947) {};
    \node[Mark, Mark_disk] at (4.092, 5.076) {};
    \node[Mark, Mark_disk] at (0.141, 1.125) {};
    \node[Mark, Mark_disk] at (1.27, 1.125) {};
    \node[Mark, Mark_disk] at (2.399, 1.125) {};
    \node[Mark, Mark_disk] at (3.528, 1.125) {};
    \node[Mark, Mark_disk] at (1.834, 2.818) {};
    \node[Mark, Mark_disk, BrickRed!90] at (2.399, 3.383) {};
    \node[Mark, Mark_disk, BrickRed!90] at (0.141, 3.383) {};
    \node[Mark, Mark_disk] at (0.706, 3.947) {};
    \node[Mark, Mark_disk, BrickRed!90] at (0.141, 4.512) {};
    \node[Mark, Mark_disk, BrickRed!90] at (0.706, 5.076) {};
    \node[Mark, Mark_disk, BrickRed!90] at (1.834, 5.076) {};
    \node[Mark, Mark_disk, BrickRed!90] at (1.27, 4.512) {};
    \node[Mark, Mark_disk] at (1.834, 3.947) {};
    \node[Mark, Mark_disk, BrickRed!90] at (2.399, 4.512) {};
    \node[Mark, Mark_disk, BrickRed!90] at (0.141, 2.254) {};
    \node[Mark, Mark_disk, BrickRed!90] at (2.399, 2.254) {};
    \node[Mark, Mark_disk, BrickRed!90] at (2.963, 5.076) {};
    \node[Mark, Mark_disk] at (2.963, 2.818) {};
    \node[Mark, Mark_disk] at (2.963, 1.689) {};
    \node[Mark, Mark_disk, BrickRed!90] at (3.528, 2.254) {};
    \node[Mark, Mark_disk, BrickRed!90] at (3.528, 3.383) {};
    \node[Mark, Mark_disk, BrickRed!90] at (3.528, 4.512) {};
    \node[Mark, Mark_disk] at (4.092, 1.689) {};
    \fill[RoyalBlue!80, opacity=0.3] (0, 0.991) -- (0.282, 0.991) -- (0.282, 1.273) -- (0.706, 1.556) -- (1.129, 1.273) -- (1.129, 0.991) -- (1.411, 0.991) -- (1.411, 1.273) -- (1.834, 1.556) -- (2.258, 1.273) -- (2.258, 0.991) -- (2.54, 0.991) -- (2.54, 1.273) -- (2.963, 1.556) -- (3.387, 1.273) -- (3.387, 0.991) -- (3.951, 0.427) -- (4.233, 0.427) -- (4.798, 0.991) -- (4.798, 1.273) -- (4.233, 1.697) -- (4.092, 1.556) -- (4.516, 1.273) -- (4.516, 0.991) -- (4.233, 0.709) -- (3.951, 0.709) -- (3.669, 0.991) -- (3.669, 1.273) -- (4.092, 1.556) -- (4.233, 1.697) -- (4.233, 5.224) -- (3.951, 5.224) -- (3.951, 1.697) -- (3.951, 1.697) -- (3.528, 1.414) -- (3.104, 1.697) -- (3.104, 4.096) -- (2.822, 4.096) -- (2.822, 1.697) -- (2.399, 1.414) -- (1.976, 1.697) -- (1.976, 4.096) -- (1.693, 4.096) -- (1.693, 1.697) -- (1.27, 1.414) -- (0.847, 1.697) -- (0.847, 4.096) -- (0.564, 4.096) -- (0.564, 1.697) -- (0, 1.273) -- (0, 0.991);
    \fill[BrickRed!90, opacity=0.3] (0, 2.12) -- (0.282, 2.12) -- (0.282, 4.66) -- (0.706, 4.942) -- (1.129, 4.66) -- (1.129, 2.12) -- (1.411, 2.12) -- (1.411, 4.66) -- (1.834, 4.942) -- (2.258, 4.66) -- (2.258, 2.12) -- (2.54, 2.12) -- (2.54, 4.66) -- (2.963, 4.942) -- (3.387, 4.66) -- (3.387, 2.12) -- (3.669, 2.12) -- (3.669, 4.66) -- (3.669, 4.801) -- (2.963, 5.224) -- (2.399, 4.942) -- (1.834, 5.224) -- (1.27, 4.942) -- (0.706, 5.224) -- (0, 4.801) -- (0, 4.519) -- (0, 2.12);
    \draw (3.528, 5.076) -- (4.657, 5.076) -- (4.657, 0.568) -- (0.141, 0.568) -- (0.141, 1.697);
    \draw (3.528, 3.947) -- (4.657, 3.946);
    \fill[BrickRed!90, opacity=0.1] (4.516, 2.12) -- (4.798, 2.12) -- (4.798, 4.66) -- (5.077, 4.846) -- (5.077, 5.138) -- (4.516, 4.801) -- (4.516, 4.519) -- (4.516, 2.12);
    \fill[BrickRed!90, opacity=0.1] (0.282, 0) -- (0.282, 0.144) -- (0.706, 0.427) -- (1.129, 0.144) -- (1.129, 0.001) -- (1.411, 0.004) -- (1.411, 0.144) -- (1.834, 0.427) -- (2.258, 0.144) -- (2.258, 0.003) -- (2.54, 0.008) -- (2.54, 0.144) -- (2.963, 0.427) -- (3.387, 0.144) -- (3.387, 0.005) -- (3.669, 0.005) -- (3.669, 0.144) -- (3.669, 0.286) -- (2.963, 0.709) -- (2.399, 0.427) -- (1.834, 0.709) -- (1.27, 0.427) -- (0.706, 0.709) -- (0, 0.286) -- (0, 0.003);
    \node[Mark, Mark_disk, BrickRed!40] at (4.657, 2.254) {};
    \node[Mark, Mark_disk, BrickRed!40] at (4.657, 3.383) {};
    \node[Mark, Mark_disk, BrickRed!40] at (4.657, 4.512) {};
    \node[Mark, Mark_disk, BrickRed!40] at (0.706, 0.56) {};
    \node[Mark, Mark_disk, BrickRed!40] at (1.834, 0.56) {};
    \node[Mark, Mark_disk, BrickRed!40] at (2.963, 0.56) {};
    \node[Mark, Mark_disk, BrickRed!90] at (1.27, 3.383) {};
\end{tikzpicture}
    \caption{Visual representation of the two decoupled systems in a $4 \times 4$ lattice. The two systems are highlighted in red and blue, respectively.}
    \label{fig:twocombs}
\end{figure}

Let us first observe that for every gate $\CX(v,w)$ in $\tilde C$, the control qubit $v$ lies in $\Lambda_A$, while the target qubit $w$ lies in $\Lambda_B$.

As we noted earlier, we will study the star and plaquette interactions separately. Let us first focus on the plaquette operators. We will consider the action of $\tilde C$ on the plaquette terms $B_{(i,j)}$, $(i, j) \in \ZZ_L \times \ZZ_L$.

Let $B$ be any $\sigma_z$-string operator, i.e., an operator that is a tensor product of $\sigma_z$ operators over all sites in its support.
Let $\CX(v, w)$ be a $\CX$ gate with control qubit $v$ and target spin $w$. Let us recall \cref{eq:effectCNOT}, which can be interpreted as:
\begin{itemize}
    \item If $v$ and $w$ lie in the support of $B$, then $\CX(v, w)B\CX(v, w)^\dagger$ is a $\sigma_z$-string operator with support $\supp(B) \setminus \{v\}$.
    \item If $w$ lies in the support of $B$ but $v$ does not, then $\CX(v, w) B \CX(v, w)^\dagger$ is a $\sigma_z$-string with support $\supp(B) \cup \{v\}$. 
    \item If $w$ does not lie in the support of $B$, the interaction remains unchanged after conjugation.
\end{itemize}

The above effect can be summarized as follows: Given a Pauli-$\sigma_z$ string, the effect of conjugation by the gate $\CX(v,w)$ is to multiply by $\sigma_z^v$ if a $\sigma_z$-operator is present at site $w$.

Our goal is to prove the following characterization of the terms $\tilde{C} B_{(i, j)} \tilde{C}^\dagger$:
\begin{enumerate}
    \item For every $(i, j) \in \ZZ_L \times \ZZ_L$, $\tilde{C} B_{(i, j)} \tilde{C}^\dagger$ is a $\sigma_z$-string operator.
    \item For every $(i, j) \in \ZZ_L \times \ZZ_L$, $\supp (\tilde{C} B_{(i, j)} \tilde{C}^\dagger) = \Lambda_B\cap\supp B_{(i, j)}$.
\end{enumerate}
The first property is immediate from the above discussion of the action of $CX$-gates on $\sigma_z$-strings.

We will make a number of case distinctions regarding the index $(i,j)$ of a plaquette interaction. 

\begin{itemize}
\item $0<i<L-1$: The gates in $C_1$ have no overlap with these terms nor do the target qubits of $C_2$, so they do not affect $B_{(i, j)}$. 
\begin{itemize}
    \item $j<L-1$:
    The target qubit of all gates in $C^{(i,j)}_3$ is $(i,j,h)$ and thereby they affect the operators $B_{(i,j)}$ and $B_{(i-1,j)}$ only.
In addition, since the control qubits of the gates in $C_3^{(i,j)}$ and $C_3^{(i+1,j)}$ are identical except for $(i,j,v)$ and $(i,j+1,v)$, and multiplication by $\sigma_z^2=\id$ has no effect, we find
\[
C_3B_{(i,j)}C_3^\dagger = \sigma_z^{(i,j,h)}\sigma_z^{(i+1,j,h)}\,.
\]
The target qubits of $C_4$ are not contained in the supports of the $B_{(i,j)}$ for $j\neq L -1$.
\item $j=L-1$:
The target qubits of $C_3$ are not  contained in the supports of the $B_{(i,L-1)}$.
The action of the $C_4^i$ circuits, analogously to before, only affects $B_{(i,L-1)}$ and $B_{(i-1,j)}$. Also analogously when considering the circuits acting on $B_{(i,L-1)}$, the control qubits of the gates in $C_4^i$ and $C_4^{i-1}$ are identical except for the sites $(i,L-1,v)$ and $(i,0,v)$ and as such we find
\[
C_4B_{(i,L-1)}C_4^\dagger = \sigma_z^{(i,L-1,h)}\sigma_z^{(i+1,L-1,h)}\,.
\]
\end{itemize}
Overall we confirmed that 
\[
\tilde CB_{(i,j)}\tilde C^\dagger = \sigma_z^{(i,j,h)}\sigma_z^{(i+1,j,h)}\,,
\]
for every $0 < i < L-1$ and every $0 \leq j \leq L-1$.

\item $i=0$: 
The target qubit of the gates in $C_2$ do not lie in the support of $B_{(0,j)}$.
\begin{itemize}
\item $0 \le j< L-1$:
The target qubit of $C_1$ does not lie in the support of $B_{(0,j)}$.
The circuits $C_3^{(1,j)}$ have their target qubits in the support of $B_{(0,j)}$, yielding
\[
C_3^{(1,j)}B_{(0,j)}C^{(1,j)\dagger}_3 = \sigma_z^{(1,j,h)}.
\]
The other gates in $C_3$ have no target qubits in $B_{(0,j)}$. The $C_4$ gates have no target spins in $B_{(0,j)}$.

\item $j=L-1$: 
The gates in $C_3$ have no target qubits in $B_{(0,L-1)}$. All gates in $C_1$ and $C_4^1$, however, do. Their action cancels on all spins except for $(0,0,v)$ and $(0,L-1,v)$. Therefore, 
\[
C_4C_1B_{(0,L-1)}C_1^\dagger C_4^\dagger = \sigma_z^{(0,L-1,h)}\sigma_z^{(1,L-1,h)}\,.
\]
\end{itemize}
\item $i=L-1$:
\begin{itemize}
\item $j\neq L$: The target qubits of the gates in $C_1$ and $C_4$ do not lie in the supports of $B_{(L-1,j)}$.
$B_{(L-1,j)}$ is only affected by $C_3^{(L-1,j)}$, $C_2^j$, and $C_2^{j+1}$, whose action cancels except for the multiplication by $\sigma_z^{(0,j,h)}$. Therefore
\[
C_3C_2 B_{(L-1,j)}C_2^\dagger C_3^\dagger=\sigma_z^{(L-1,j,h)}\sigma_z^{(L-1,j,v)}\sigma_z^{(L-1,j+1,v)}.
\]
\item $j=L-1$:
The target qubits of $C_3$ do not lie in the support of $B_{(L-1,L-1)}$.
The action of $C_4^{L-1}$ cancels exactly with the joint action of $C_1$, $C_2^0$ and $C_2^{L-1}$ leaving us with
\[
C_4C_2C_1 B_{(L-1,L-1)}C_1^\dagger C_2^\dagger C_3^\dagger= B_{(L-1,L-1)},
\]
as desired.
\end{itemize}
\end{itemize}

This way, we have proven that for every $(i, j) \in \ZZ_L \times \ZZ_L$, 
\[
\supp(\tilde C B_{(i, j)} \tilde C^\dagger) = \Lambda_B \cap \supp B_{(i, j)}. 
\]
See \cref{fig:finalplaquettes} for a visual representation of the final plaquette operators.

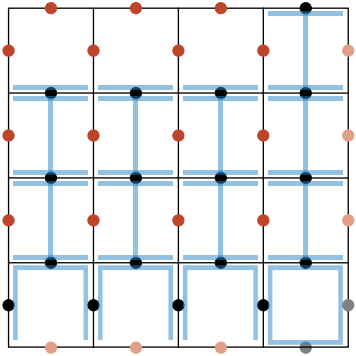
\begin{figure}
    \centering
    \begin{tikzpicture}[scale=1]
    \node[Mark, Mark_disk] at (0.564, 2.258) {};
    \draw[shift={(0, 2.258)}, yscale=-1] (0, 0) rectangle (2.258, -2.258);
    \draw[shift={(0, 3.387)}, yscale=-1] (0, 0) -- (2.258, 0);
    \draw[shift={(1.129, 2.258)}, yscale=-1] (0, 0) -- (0, -2.258);
    \draw[shift={(0, 2.258)}, yscale=-1] (0, 0) -- (0, 1.129) -- (1.129, 1.129) -- (1.129, 0);
    \draw[shift={(1.129, 1.129)}, yscale=-1] (0, 0) -- (1.129, 0) -- (1.129, -1.129);
    \draw[shift={(2.258, 1.129)}, yscale=-1] (0, 0) -- (1.129, 0) -- (1.129, -1.129) -- (0, -1.129);
    \draw[shift={(3.387, 2.258)}, yscale=-1] (0, 0) -- (0, -1.129) -- (-1.129, -1.129);
    \draw[shift={(3.387, 3.387)}, yscale=-1] (0, 0) -- (0, -1.129) -- (-1.129, -1.129);
    \node[Mark, Mark_disk, BrickRed!90] at (1.129, 1.693) {};
    \node[Mark, Mark_disk] at (2.822, 3.387) {};
    \node[Mark, Mark_disk] at (0.564, 1.129) {};
    \node[Mark, Mark_disk] at (1.693, 1.129) {};
    \draw (1.129, 1.136) -- (1.129, 0.007);
    \draw (2.258, 1.136) -- (2.258, 0.007);
    \draw (3.387, 1.136) -- (3.387, 0.007);
    \node[Mark, Mark_disk, black!50] at (3.951, 0) {};
    \node[Mark, Mark_disk] at (3.951, 2.258) {};
    \node[Mark, Mark_disk] at (3.951, 3.387) {};
    \node[Mark, Mark_disk] at (3.951, 4.516) {};
    \node[Mark, Mark_disk] at (0, 0.564) {};
    \node[Mark, Mark_disk] at (1.129, 0.564) {};
    \node[Mark, Mark_disk] at (2.258, 0.564) {};
    \node[Mark, Mark_disk] at (3.387, 0.564) {};
    \node[Mark, Mark_disk] at (1.693, 2.258) {};
    \node[Mark, Mark_disk, BrickRed!90] at (2.258, 2.822) {};
    \node[Mark, Mark_disk, BrickRed!90] at (0, 2.822) {};
    \node[Mark, Mark_disk] at (0.564, 3.387) {};
    \node[Mark, Mark_disk, BrickRed!90] at (0, 3.951) {};
    \node[Mark, Mark_disk, BrickRed!90] at (0.564, 4.516) {};
    \node[Mark, Mark_disk, BrickRed!90] at (1.693, 4.516) {};
    \node[Mark, Mark_disk, BrickRed!90] at (1.129, 3.951) {};
    \node[Mark, Mark_disk] at (1.693, 3.387) {};
    \node[Mark, Mark_disk, BrickRed!90] at (2.258, 3.951) {};
    \node[Mark, Mark_disk, BrickRed!90] at (0, 1.693) {};
    \node[Mark, Mark_disk, BrickRed!90] at (2.258, 1.693) {};
    \node[Mark, Mark_disk, BrickRed!90] at (2.822, 4.516) {};
    \node[Mark, Mark_disk] at (2.822, 2.258) {};
    \node[Mark, Mark_disk] at (2.822, 1.129) {};
    \node[Mark, Mark_disk, BrickRed!90] at (3.387, 1.693) {};
    \node[Mark, Mark_disk, BrickRed!90] at (3.387, 2.822) {};
    \node[Mark, Mark_disk, BrickRed!90] at (3.387, 3.951) {};
    \node[Mark, Mark_disk] at (3.951, 1.129) {};
    \fill[RoyalBlue!80, opacity=0.5] (0.363, 1.096) -- (0.363, 1.032) -- (0.997, 1.032) -- (0.997, 0.101) -- (0.121, 0.101) -- (0.121, 1.032) -- (0.363, 1.032) -- (0.363, 1.096) -- (0.06, 1.096) -- (0.06, 0.101) -- (1.057, 0.101) -- (1.057, 1.096) -- (0.363, 1.096);
    \fill[RoyalBlue!80, opacity=0.5] (0.06, 3.424) rectangle (1.057, 3.488);
    \fill[RoyalBlue!80, opacity=0.5] (1.189, 3.424) rectangle (2.186, 3.488);
    \fill[RoyalBlue!80, opacity=0.5] (2.318, 3.424) rectangle (3.315, 3.488);
    \fill[RoyalBlue!80, opacity=0.5] (3.447, 3.424) rectangle (4.444, 3.488);
    \fill[RoyalBlue!80, opacity=0.5] (3.447, 4.411) rectangle (4.444, 4.476);
    \fill[shift={(3.977, 3.451)}, rotate=90, RoyalBlue!80, opacity=0.5] (0, 0) rectangle (0.997, 0.064);
    \fill[RoyalBlue!80, opacity=0.5] (3.447, 2.295) rectangle (4.444, 2.359);
    \fill[RoyalBlue!80, opacity=0.5] (3.447, 3.283) rectangle (4.444, 3.347);
    \fill[shift={(3.978, 2.322)}, rotate=90, RoyalBlue!80, opacity=0.5] (0, 0) rectangle (0.997, 0.064);
    \fill[RoyalBlue!80, opacity=0.5] (2.318, 2.295) rectangle (3.315, 2.359);
    \fill[RoyalBlue!80, opacity=0.5] (2.318, 3.283) rectangle (3.315, 3.347);
    \fill[shift={(2.849, 2.322)}, rotate=90, RoyalBlue!80, opacity=0.5] (0, 0) rectangle (0.997, 0.064);
    \fill[RoyalBlue!80, opacity=0.5] (1.189, 2.295) rectangle (2.186, 2.359);
    \fill[RoyalBlue!80, opacity=0.5] (1.189, 3.283) rectangle (2.186, 3.347);
    \fill[shift={(1.72, 2.322)}, rotate=90, RoyalBlue!80, opacity=0.5] (0, 0) rectangle (0.997, 0.064);
    \fill[RoyalBlue!80, opacity=0.5] (0.06, 2.295) rectangle (1.057, 2.359);
    \fill[RoyalBlue!80, opacity=0.5] (0.06, 3.283) rectangle (1.057, 3.347);
    \fill[shift={(0.591, 2.322)}, rotate=90, RoyalBlue!80, opacity=0.5] (0, 0) rectangle (0.997, 0.064);
    \fill[RoyalBlue!80, opacity=0.5] (0.06, 1.166) rectangle (1.057, 1.23);
    \fill[RoyalBlue!80, opacity=0.5] (0.06, 2.154) rectangle (1.057, 2.218);
    \fill[shift={(0.591, 1.193)}, rotate=90, RoyalBlue!80, opacity=0.5] (0, 0) rectangle (0.997, 0.064);
    \fill[RoyalBlue!80, opacity=0.5] (1.189, 1.166) rectangle (2.186, 1.23);
    \fill[RoyalBlue!80, opacity=0.5] (1.189, 2.154) rectangle (2.186, 2.218);
    \fill[shift={(1.72, 1.193)}, rotate=90, RoyalBlue!80, opacity=0.5] (0, 0) rectangle (0.997, 0.064);
    \fill[RoyalBlue!80, opacity=0.5] (2.318, 1.166) rectangle (3.315, 1.23);
    \fill[RoyalBlue!80, opacity=0.5] (2.318, 2.154) rectangle (3.315, 2.218);
    \fill[shift={(2.849, 1.193)}, rotate=90, RoyalBlue!80, opacity=0.5] (0, 0) rectangle (0.997, 0.064);
    \fill[RoyalBlue!80, opacity=0.5] (3.447, 1.166) rectangle (4.444, 1.23);
    \fill[RoyalBlue!80, opacity=0.5] (3.447, 2.154) rectangle (4.444, 2.218);
    \fill[shift={(3.978, 1.193)}, rotate=90, RoyalBlue!80, opacity=0.5] (0, 0) rectangle (0.997, 0.064);
    \fill[RoyalBlue!80, opacity=0.5] (1.492, 1.096) -- (1.492, 1.032) -- (2.126, 1.032) -- (2.126, 0.101) -- (1.25, 0.101) -- (1.25, 1.032) -- (1.492, 1.032) -- (1.492, 1.096) -- (1.189, 1.096) -- (1.189, 0.101) -- (2.186, 0.101) -- (2.186, 1.096) -- (1.492, 1.096);
    \fill[RoyalBlue!80, opacity=0.5] (2.621, 1.096) -- (2.621, 1.032) -- (3.255, 1.032) -- (3.255, 0.101) -- (2.378, 0.101) -- (2.378, 1.032) -- (2.62, 1.032) -- (2.62, 1.096) -- (2.318, 1.096) -- (2.318, 0.101) -- (3.315, 0.101) -- (3.315, 1.096) -- (2.621, 1.096);
    \fill[shift={(3.75, 1.096)}, xscale=0.831, yscale=0.883, RoyalBlue!80, opacity=0.5] (0, 0) -- (0, -0.073) -- (0.763, -0.073) -- (0.763, -1.127) -- (-0.291, -1.127) -- (-0.291, -0.073) -- (0, -0.073) -- (0, 0) -- (-0.364, 0) -- (-0.364, -1.199) -- (0.835, -1.199) -- (0.835, 0) -- (0, 0);
    \draw (3.387, 4.516) -- (4.516, 4.516) -- (4.516, 0.007) -- (0, 0.007) -- (0, 1.136);
    \draw (3.387, 1.129) -- (4.516, 1.128);
    \draw (3.387, 2.258) -- (4.516, 2.256);
    \draw (3.387, 3.387) -- (4.516, 3.386);
    \begin{scope}[shift={(-1.693, -5.637)}]
        \node[Mark, Mark_disk, BrickRed!40] at (2.258, 5.637) {};
        \node[Mark, Mark_disk, BrickRed!40] at (3.387, 5.637) {};
        \node[Mark, Mark_disk, BrickRed!40] at (4.516, 5.637) {};
    \end{scope}
    \node[Mark, Mark_disk, BrickRed!40] at (4.516, 1.693) {};
    \node[Mark, Mark_disk, BrickRed!40] at (4.516, 2.822) {};
    \node[Mark, Mark_disk, BrickRed!40] at (4.516, 3.951) {};
    \node[Mark, Mark_disk, black!50] at (4.516, 0.564) {};
    \node[Mark, Mark_disk, BrickRed!90] at (1.129, 2.822) {};
\end{tikzpicture}
    \caption{Visual representation of the final plaquette interactions. The two decoupled systems are marked in red and black, respectively. }
    \label{fig:finalplaquettes}
\end{figure}

Let us now study the star terms. As we showed when studying the plaquette operators, let $A$ be any $\sigma_x$-string operator, let $\CX(v, w)$ be a $\CX$ gate with control qubit $v$ and target spin $w$. \cref{eq:effectCNOT} can be interpreted as:
\begin{itemize}
    \item If both spins lie in the support of $A$, then $\CX(v, w) A \CX(v, w)^\dagger$ is a $\sigma_x$-string operator with support $\supp(A)\setminus \{w\}$.
    \item If $v$ lies in the support of $A$ but $w$ does not, then $\CX(v, w) A \CX(v, w)^\dagger$ is a $\sigma_x$-string operator with support $\supp(A)\cup\{w\}$
    \item If $v$ does not lie in the support of $A$, the interaction remains unchanged after conjugation. 
\end{itemize}

From this, it is clear that, in the worst case, after conjugating the star operators with $\tilde C$, their support could be extended to some of the spins in $\Lambda_B$, but never to any of the spins in $\Lambda_A$. In the following paragraphs, we will prove that in fact
\[
\supp (\tilde{C} A_{(i, j)} \tilde{C}^\dagger) = \Lambda_A\cap\supp(A_{(i,j)}),
\]
for every $(i, j) \in \ZZ_L \times \ZZ_L$.

Let us fix some $w \in \Lambda_B$. We define $\mathcal{C}_w$ as the set of $\CX$ gates in $\tilde C$ that have $w$ as target qubit:
\[
\mathcal{C}_w := \{\CX(v,w) : \CX(v,w) \in \tilde C \}.
\]

Furthermore, let us consider the set of star operators containing $w$ in their support:
\[
\mathcal{A}_w := \{A_{(i, j)} : w \in \supp A_{(i, j)}\}. 
\]

By inspecting $C_1$ to $C_4$ we observe that for every $w \in \Lambda_B$ and every $A_{(i, j)} \in \mathcal{A}_w$, there exists exactly one gate in $\mathcal{C}_w$, which we denote by $\CX(v, w)$, with $v \in \supp A_{(i, j)}$. This implies that $w$ will no longer be in the support of $\tilde C A_{(i, j)} \tilde{C}^\dagger$. 

Also, for every $w \in \Lambda_B$ and every $A_{(i, j)} \in \mathcal{A}^c_w$, there exist exactly two gates in $\mathcal{C}_w$ that are supported in $A_{(i,j)}$, which we denote by $\CX(u_1, w)$ and $\CX(u_2, w)$. Again, this implies that $w$ will not be in the support of $\tilde C A_{(i, j)} \tilde C^\dagger$; indeed, $\CX(u_1, w) A_{(i, j)} \CX(u_1, w)^\dagger$ will include $w$ in its support, as we discussed previously, which implies that
\[
\CX(u_2, w) \CX(u_1, w) A_{(i, j)} \CX(u_1, w)^\dagger \CX(u_2, w)^\dagger
\]
will not include $w$ in its support.

In this case, we have also proven that for every $(i, j) \in \ZZ_L \times \ZZ_L$
\[
\supp(\tilde CA_{(i,j)}\tilde{C}^\dagger) = \Lambda_A \cap \supp A_{(i, j)}.
\]
See \cref{fig:finalstars} for a visual representation of the final star operators. 
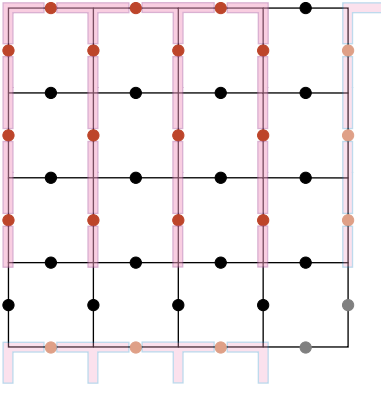
\begin{figure}
    \centering
    \begin{tikzpicture}[scale=1]
    \draw (3.457, 1.603) -- (4.586, 1.602);
    \draw (3.457, 2.732) -- (4.586, 2.73);
    \draw (3.457, 3.861) -- (4.586, 3.86);
    \draw (3.457, 4.99) -- (4.586, 4.99) -- (4.586, 0.481) -- (0.071, 0.481) -- (0.071, 1.61);
    \node[Mark, Mark_disk] at (0.635, 2.732) {};
    \draw[shift={(0.071, 2.732)}, yscale=-1] (0, 0) rectangle (2.258, -2.258);
    \draw[shift={(0.071, 3.861)}, yscale=-1] (0, 0) -- (2.258, 0);
    \draw[shift={(1.2, 2.732)}, yscale=-1] (0, 0) -- (0, -2.258);
    \draw[shift={(0.071, 2.732)}, yscale=-1] (0, 0) -- (0, 1.129) -- (1.129, 1.129) -- (1.129, 0);
    \draw[shift={(1.2, 1.603)}, yscale=-1] (0, 0) -- (1.129, 0) -- (1.129, -1.129);
    \draw[shift={(2.329, 1.603)}, yscale=-1] (0, 0) -- (1.129, 0) -- (1.129, -1.129) -- (0, -1.129);
    \draw[shift={(3.457, 2.732)}, yscale=-1] (0, 0) -- (0, -1.129) -- (-1.129, -1.129);
    \draw[shift={(3.457, 3.861)}, yscale=-1] (0, 0) -- (0, -1.129) -- (-1.129, -1.129);
    \node[Mark, Mark_disk, BrickRed!90] at (1.2, 2.167) {};
    \node[Mark, Mark_disk] at (2.893, 3.861) {};
    \node[Mark, Mark_disk] at (0.635, 1.603) {};
    \node[Mark, Mark_disk] at (1.764, 1.603) {};
    \draw (1.2, 1.61) -- (1.2, 0.481);
    \draw (2.329, 1.61) -- (2.329, 0.481);
    \draw (3.457, 1.61) -- (3.457, 0.481);
    \node[Mark, Mark_disk, BrickRed!40] at (4.586, 2.167) {};
    \node[Mark, Mark_disk, BrickRed!40] at (4.586, 3.296) {};
    \node[Mark, Mark_disk, BrickRed!40] at (4.586, 4.425) {};
    \node[Mark, Mark_disk, BrickRed!40] at (0.635, 0.474) {};
    \node[Mark, Mark_disk, BrickRed!40] at (1.764, 0.474) {};
    \node[Mark, Mark_disk, BrickRed!40] at (2.893, 0.474) {};
    \node[Mark, Mark_disk, black!50] at (4.022, 0.474) {};
    \node[Mark, Mark_disk, black!50] at (4.586, 1.038) {};
    \node[Mark, Mark_disk] at (4.022, 2.732) {};
    \node[Mark, Mark_disk] at (4.022, 3.861) {};
    \node[Mark, Mark_disk] at (4.022, 4.99) {};
    \node[Mark, Mark_disk] at (0.071, 1.038) {};
    \node[Mark, Mark_disk] at (1.2, 1.038) {};
    \node[Mark, Mark_disk] at (2.329, 1.038) {};
    \node[Mark, Mark_disk] at (3.457, 1.038) {};
    \node[Mark, Mark_disk] at (1.764, 2.732) {};
    \node[Mark, Mark_disk, BrickRed!90] at (2.329, 3.296) {};
    \node[Mark, Mark_disk, BrickRed!90] at (0.071, 3.296) {};
    \node[Mark, Mark_disk] at (0.635, 3.861) {};
    \node[Mark, Mark_disk, BrickRed!90] at (0.071, 4.425) {};
    \node[Mark, Mark_disk, BrickRed!90] at (0.635, 4.99) {};
    \node[Mark, Mark_disk, BrickRed!90] at (1.764, 4.99) {};
    \node[Mark, Mark_disk, BrickRed!90] at (1.2, 4.425) {};
    \node[Mark, Mark_disk] at (1.764, 3.861) {};
    \node[Mark, Mark_disk, BrickRed!90] at (2.329, 4.425) {};
    \node[Mark, Mark_disk, BrickRed!90] at (0.071, 2.167) {};
    \node[Mark, Mark_disk, BrickRed!90] at (2.329, 2.167) {};
    \node[Mark, Mark_disk, BrickRed!90] at (2.893, 4.99) {};
    \node[Mark, Mark_disk] at (2.893, 2.732) {};
    \node[Mark, Mark_disk] at (2.893, 1.603) {};
    \node[Mark, Mark_disk, BrickRed!90] at (3.457, 2.167) {};
    \node[Mark, Mark_disk, BrickRed!90] at (3.457, 3.296) {};
    \node[Mark, Mark_disk, BrickRed!90] at (3.457, 4.425) {};
    \node[Mark, Mark_disk] at (4.022, 1.603) {};
    \filldraw[draw=Mulberry!80, fill=Lavender, opacity=0.5] (0, 5.059) -- (0, 5.059) -- (0, 4.929) -- (0, 4.929) -- (0, 4.516) -- (0.13, 4.516) -- (0.13, 4.929) -- (0.543, 4.929) -- (0.543, 5.059) -- (0.13, 5.059) -- (0.13, 5.059) -- (0, 5.059) -- cycle;
    \filldraw[draw=Mulberry!80, fill=Lavender, opacity=0.5] (1.129, 5.059) -- (0.715, 5.059) -- (0.715, 4.929) -- (1.129, 4.929) -- (1.129, 4.516) -- (1.259, 4.516) -- (1.259, 4.929) -- (1.672, 4.929) -- (1.672, 5.059) -- (1.259, 5.059) -- (1.259, 5.059) -- (1.129, 5.059) -- cycle;
    \filldraw[draw=Mulberry!80, fill=Lavender, opacity=0.5] (1.129, 3.93) -- (1.129, 3.93) -- (1.129, 3.8) -- (1.129, 3.8) -- (1.129, 3.387) -- (1.259, 3.387) -- (1.259, 3.8) -- (1.259, 3.8) -- (1.259, 3.939) -- (1.259, 3.93) -- (1.259, 4.344) -- (1.129, 4.344) -- cycle;
    \filldraw[draw=Mulberry!80, fill=Lavender, opacity=0.5] (0, 1.672) -- (0, 1.543) -- (0.13, 1.543) -- (0.13, 1.672) -- (0.13, 2.086) -- (0, 2.086) -- cycle;
    \filldraw[shift={(3.523, 5.059)}, xscale=-1, draw=Mulberry!80, fill=Lavender, opacity=0.5] (0, 0) -- (0, 0) -- (0, -0.13) -- (0, -0.13) -- (0, -0.543) -- (0.13, -0.543) -- (0.13, -0.13) -- (0.543, -0.13) -- (0.543, 0) -- (0.13, 0) -- (0.13, 0) -- (0, 0) -- cycle;
    \filldraw[draw=Mulberry!80, fill=Lavender, opacity=0.5] (2.262, 5.061) -- (1.848, 5.061) -- (1.848, 4.932) -- (2.262, 4.932) -- (2.262, 4.518) -- (2.391, 4.518) -- (2.391, 4.932) -- (2.805, 4.932) -- (2.805, 5.061) -- (2.391, 5.061) -- (2.391, 5.061) -- (2.262, 5.061) -- cycle;
    \filldraw[draw=Mulberry!80, fill=Lavender, opacity=0.5] (0, 3.93) -- (0, 3.93) -- (0, 3.8) -- (0, 3.8) -- (0, 3.387) -- (0.13, 3.387) -- (0.13, 3.8) -- (0.13, 3.8) -- (0.13, 3.939) -- (0.13, 3.93) -- (0.13, 4.344) -- (0, 4.344) -- cycle;
    \filldraw[draw=Mulberry!80, fill=Lavender, opacity=0.5] (3.387, 3.93) -- (3.387, 3.93) -- (3.387, 3.8) -- (3.387, 3.8) -- (3.387, 3.387) -- (3.516, 3.387) -- (3.516, 3.8) -- (3.516, 3.8) -- (3.516, 3.939) -- (3.516, 3.93) -- (3.516, 4.344) -- (3.387, 4.344) -- cycle;
    \filldraw[draw=Mulberry!80, fill=Lavender, opacity=0.5] (2.258, 3.93) -- (2.258, 3.93) -- (2.258, 3.8) -- (2.258, 3.8) -- (2.258, 3.387) -- (2.387, 3.387) -- (2.387, 3.8) -- (2.387, 3.8) -- (2.387, 3.939) -- (2.387, 3.93) -- (2.387, 4.344) -- (2.258, 4.344) -- cycle;
    \filldraw[draw=Mulberry!80, fill=Lavender, opacity=0.5] (1.129, 2.801) -- (1.129, 2.801) -- (1.129, 2.671) -- (1.129, 2.671) -- (1.129, 2.258) -- (1.259, 2.258) -- (1.259, 2.671) -- (1.259, 2.671) -- (1.259, 2.81) -- (1.259, 2.801) -- (1.259, 3.215) -- (1.129, 3.215) -- cycle;
    \filldraw[draw=Mulberry!80, fill=Lavender, opacity=0.5] (0, 2.801) -- (0, 2.801) -- (0, 2.671) -- (0, 2.671) -- (0, 2.258) -- (0.13, 2.258) -- (0.13, 2.671) -- (0.13, 2.671) -- (0.13, 2.81) -- (0.13, 2.801) -- (0.13, 3.215) -- (0, 3.215) -- cycle;
    \filldraw[draw=Mulberry!80, fill=Lavender, opacity=0.5] (3.387, 2.801) -- (3.387, 2.801) -- (3.387, 2.671) -- (3.387, 2.671) -- (3.387, 2.258) -- (3.516, 2.258) -- (3.516, 2.671) -- (3.516, 2.671) -- (3.516, 2.81) -- (3.516, 2.801) -- (3.516, 3.215) -- (3.387, 3.215) -- cycle;
    \filldraw[draw=Mulberry!80, fill=Lavender, opacity=0.5] (2.258, 2.801) -- (2.258, 2.801) -- (2.258, 2.671) -- (2.258, 2.671) -- (2.258, 2.258) -- (2.387, 2.258) -- (2.387, 2.671) -- (2.387, 2.671) -- (2.387, 2.81) -- (2.387, 2.801) -- (2.387, 3.215) -- (2.258, 3.215) -- cycle;
    \filldraw[draw=Mulberry!80, fill=Lavender, opacity=0.5] (1.129, 1.672) -- (1.129, 1.543) -- (1.259, 1.543) -- (1.259, 1.672) -- (1.259, 2.086) -- (1.129, 2.086) -- cycle;
    \filldraw[draw=Mulberry!80, fill=Lavender, opacity=0.5] (2.258, 1.675) -- (2.258, 1.546) -- (2.387, 1.546) -- (2.387, 1.675) -- (2.387, 2.089) -- (2.258, 2.089) -- cycle;
    \filldraw[draw=Mulberry!80, fill=Lavender, opacity=0.5] (3.387, 1.675) -- (3.387, 1.546) -- (3.516, 1.546) -- (3.516, 1.675) -- (3.516, 2.089) -- (3.387, 2.089) -- cycle;
    \filldraw[draw=RoyalBlue!80, fill=Lavender, opacity=0.3] (4.516, 5.059) -- (4.516, 5.059) -- (4.516, 4.929) -- (4.516, 4.929) -- (4.516, 4.516) -- (4.645, 4.516) -- (4.645, 4.929) -- (5.059, 4.929) -- (5.059, 5.059) -- (4.645, 5.059) -- (4.645, 5.059) -- (4.516, 5.059) -- cycle;
    \filldraw[draw=RoyalBlue!80, fill=Lavender, opacity=0.3] (4.516, 3.93) -- (4.516, 3.93) -- (4.516, 3.8) -- (4.516, 3.8) -- (4.516, 3.387) -- (4.645, 3.387) -- (4.645, 3.8) -- (4.645, 3.8) -- (4.645, 3.939) -- (4.645, 3.93) -- (4.645, 4.344) -- (4.516, 4.344) -- cycle;
    \filldraw[draw=RoyalBlue!80, fill=Lavender, opacity=0.3] (4.516, 2.801) -- (4.516, 2.801) -- (4.516, 2.671) -- (4.516, 2.671) -- (4.516, 2.258) -- (4.645, 2.258) -- (4.645, 2.671) -- (4.645, 2.671) -- (4.645, 2.81) -- (4.645, 2.801) -- (4.645, 3.215) -- (4.516, 3.215) -- cycle;
    \filldraw[draw=RoyalBlue!80, fill=Lavender, opacity=0.3] (4.516, 1.672) -- (4.516, 1.543) -- (4.645, 1.543) -- (4.645, 1.672) -- (4.645, 2.086) -- (4.516, 2.086) -- cycle;
    \filldraw[draw=RoyalBlue!80, fill=Lavender, opacity=0.3] (0, 0.543) -- (0, 0.543) -- (0, 0.414) -- (0, 0.414) -- (0, 0) -- (0.13, 0) -- (0.13, 0.414) -- (0.543, 0.414) -- (0.543, 0.543) -- (0.13, 0.543) -- (0.13, 0.543) -- (0, 0.543) -- cycle;
    \filldraw[draw=RoyalBlue!80, fill=Lavender, opacity=0.3] (1.129, 0.543) -- (0.715, 0.543) -- (0.715, 0.414) -- (1.129, 0.414) -- (1.129, 0) -- (1.259, 0) -- (1.259, 0.414) -- (1.672, 0.414) -- (1.672, 0.543) -- (1.259, 0.543) -- (1.259, 0.543) -- (1.129, 0.543) -- cycle;
    \filldraw[shift={(3.523, 0.544)}, xscale=-1, draw=RoyalBlue!80, fill=Lavender, opacity=0.3] (0, 0) -- (0, 0) -- (0, -0.13) -- (0, -0.13) -- (0, -0.543) -- (0.13, -0.543) -- (0.13, -0.13) -- (0.543, -0.13) -- (0.543, 0) -- (0.13, 0) -- (0.13, 0) -- (0, 0) -- cycle;
    \filldraw[draw=RoyalBlue!80, fill=Lavender, opacity=0.3] (2.262, 0.546) -- (1.848, 0.546) -- (1.848, 0.416) -- (2.262, 0.416) -- (2.262, 0.003) -- (2.391, 0.003) -- (2.391, 0.416) -- (2.805, 0.416) -- (2.805, 0.546) -- (2.391, 0.546) -- (2.391, 0.546) -- (2.262, 0.546) -- cycle;
    \node[Mark, Mark_disk, BrickRed!90] at (1.2, 3.296) {};
\end{tikzpicture}
    \caption{Visual representation of the final star operators. The two decoupled systems are marked in red and black, respectively.}
    \label{fig:finalstars}
\end{figure}

The final Hamiltonian obtained after conjugating by $\tilde C$ corresponds to two decoupled ``combs". See \cref{fig:finalhamiltoniantoric2D} for a visual representation of the two decoupled combs and their interactions. 

\begin{figure}
    \centering
    \input{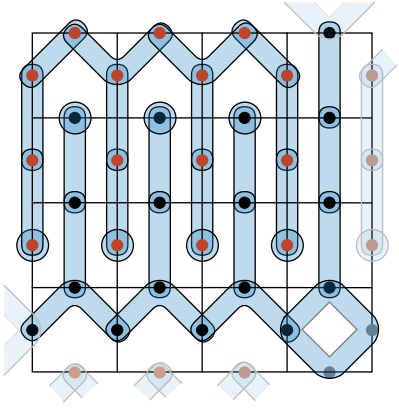}
    \caption{Visual representation of the two decoupled systems, where we have reshaped the final interactions shown originally in \cref{fig:finalstars,fig:finalplaquettes}.}
    \label{fig:finalhamiltoniantoric2D}
\end{figure}

Note that, whilst 
\begin{equation}
\label{eq:conjugatedplaquettes}
\tilde C \Big(\sum_{(i, j) \in \ZZ_L \times \ZZ_L} B_{(i, j)}\Big) \tilde C^\dagger
\end{equation}
is classical, 
\begin{equation}
\label{eq:conjugatedstars}
\tilde C \Big(\sum_{(i, j) \in \ZZ_L \times \ZZ_L} A_{(i, j)}\Big) \tilde C^\dagger
\end{equation}
only contains $\sigma_x$ interactions. Since our aim is to obtain two classical systems, we use the fact that the Hadamard gate acts on $\sigma_x$ as 
\[
H \sigma_x H^\dagger = \sigma_z,
\]
and conjugate the resulting Hamiltonian from \cref{eq:conjugatedstars} by a circuit defined as 
\begin{equation}
\label{eq:circuitHadamard}
\prod_{v \in \Lambda_A} H(a),    
\end{equation}
thus changing every $\sigma_x$ term to a $\sigma_z$, which ultimately leads the resulting Hamiltonian to be classical.

It only remains to study the effect of the final sub-circuit, $\hat C$, which will transform the current interactions in each system into one all-to-all interaction in each comb, and a single magnetic field in each term. From this setting, it follows from \cref{lem:equivalencelemma} that the Hamiltonian is equivalent to a classical Ising chain with magnetic fields at both ends.

Let us focus on the system that results after conjugating the Hamiltonian from \cref{eq:conjugatedstars} by the Hadamard gates shown in \cref{eq:circuitHadamard}. 

The resulting Hamiltonian consists of three different interactions: 
\begin{itemize}
    \item Three-body interactions:
    \[
    \sigma_z^{(0, j-1, h)}\sigma_z^{(0, j, h)}\sigma_z^{(0, j, v)},
    \]
    for every $1 \leq j < L-1$. 
    \item Two-body interactions: most of which are vertical;
    \[
    \sigma_z^{(i-1, j, v)}\sigma_z^{(i, j, v)},
    \]
    for every $j \in \ZZ_L$ and every $1 \leq i < L-1$. Furthermore, there are two extra two-body interactions:
    \[
    \sigma_z^{(0, 0, h)}\sigma_z^{(0, 0, v)},\quad \text{and}\quad \sigma_z^{(0, L-1, v)}\sigma_z^{(0, L-2, h)}.
    \]
    \item Magnetic fields:
    \[
    \sigma_z^{(L-2, j, v)},
    \]
    for every $j \in \ZZ_L$.
\end{itemize}

Note that in the leftmost three-body interaction 
\begin{equation}
\label{eq:leftmostthreebody}
\sigma_z^{(0, 0, h)}\sigma_z^{(0, 1, h)}\sigma_z^{(0, 1, v)},    
\end{equation}
there are two classical Ising chains starting from it; one starting from the spin with coordinates $(0, 0, h)$ and the other starting at $(0, 1, v)$. 

Let us consider the chain starting from $(0, 0, h)$, and let us apply 
\begin{align*}
&\CX((L-2, 0, v),(L-3, 0, v)) \cdots \CX((1, 0, v),(0, 0, v))
\\&\quad \times \CX((0, 0, v), (0, 0, h)). 
\end{align*}

This way, we extend the three body interaction from \cref{eq:leftmostthreebody} to act on every spin in $\Lambda_A$ of the form $(i, 0, v)$, whilst at the same time transforming every two-body interaction of the chain---$\sigma_z^{(0, 0, h)}\sigma_z^{(0, 0, v)}$ and $\sigma_z^{(i-1, 0, v)}\sigma_z^{(i, 0, v)}$, $1 \leq i < L-1$---into a magnetic field on each target spin of the gates used.

Similarly, for the other chain we apply 
\[
\CX((L-2, 0, v),(L-3, 0, v)) \cdots \CX((1, 1, v),(0, 1, v)),
\]
obtaining the same result; the three-body interaction from \cref{eq:leftmostthreebody} is now further extended to the second column of the lattice; i.e. to every spin in $\Lambda_A$ of the form $(i, 1, v)$, whilst transforming the two-body interactions into magnetic fields. 

If we now further apply $\CX((0, 1, h),(0, 2, v))$, we transform the next three-body interaction into a two-body interaction, and further extend the interaction from \cref{eq:leftmostthreebody} to an extra spin, thus creating a new Ising chain starting from it---in the spin situated at $(0, 2, v)$---which we tackle in the same way. See \cref{fig:intermediatestars} for a visual representation of the interactions after applying $\CX((0, 1, h),(0, 2, v))$.

\begin{figure}
    \centering
    \begin{tikzpicture}[scale=1]
    \filldraw[fill=RoyalBlue!80, fill opacity=0.3] 
    (0.776, 4.933) circle[radius=0.211];
    \filldraw[fill=RoyalBlue!80, fill opacity=0.3] 
    (1.34, 4.369) circle[radius=0.211];
    \filldraw[fill=RoyalBlue!80, fill opacity=0.3] 
    (1.34, 2.111) circle[radius=0.211];
    \filldraw[fill=RoyalBlue!80, fill opacity=0.3] 
    (0.211, 2.111) circle[radius=0.211];
    \filldraw[fill=RoyalBlue!80, fill opacity=0.3] 
    (0.211, 3.24) circle[radius=0.211];
    \filldraw[fill=RoyalBlue!80, fill opacity=0.3] 
    (0.211, 4.369) circle[radius=0.211];
    \filldraw[fill=RoyalBlue!80, fill opacity=0.3] (2.324, 4.378) .. controls (2.324, 4.472) and (2.371, 4.519) .. (2.465, 4.519) .. controls (2.559, 4.519) and (2.606, 4.472) .. (2.606, 4.378) -- (2.606, 3.25) .. controls (2.606, 3.155) and (2.559, 3.108) .. (2.465, 3.108) .. controls (2.371, 3.108) and (2.324, 3.155) .. (2.324, 3.25) -- cycle;
    \filldraw[fill=RoyalBlue!80, fill opacity=0.3] (2.324, 3.25) .. controls (2.324, 3.343) and (2.371, 3.39) .. (2.465, 3.39) .. controls (2.559, 3.39) and (2.606, 3.343) .. (2.606, 3.25) -- (2.606, 2.121) .. controls (2.606, 2.027) and (2.559, 1.98) .. (2.465, 1.98) .. controls (2.371, 1.98) and (2.324, 2.027) .. (2.324, 2.121) -- cycle;
    \filldraw[fill=RoyalBlue!80, fill opacity=0.3] (3.452, 4.378) .. controls (3.453, 4.472) and (3.5, 4.519) .. (3.594, 4.519) .. controls (3.688, 4.519) and (3.735, 4.472) .. (3.735, 4.378) -- (3.735, 3.25) .. controls (3.735, 3.155) and (3.688, 3.108) .. (3.594, 3.108) .. controls (3.499, 3.108) and (3.452, 3.155) .. (3.452, 3.25) -- cycle;
    \filldraw[fill=RoyalBlue!80, fill opacity=0.3] (3.452, 3.25) .. controls (3.453, 3.343) and (3.5, 3.39) .. (3.594, 3.39) .. controls (3.688, 3.39) and (3.735, 3.343) .. (3.735, 3.25) -- (3.735, 2.121) .. controls (3.735, 2.027) and (3.688, 1.98) .. (3.594, 1.98) .. controls (3.499, 1.98) and (3.452, 2.027) .. (3.452, 2.121) -- cycle;
    \filldraw[shift={(2.912, 4.86)}, rotate=45, fill=RoyalBlue!80, fill opacity=0.3] (0, 0) .. controls (0, 0.094) and (0.047, 0.141) .. (0.141, 0.141) .. controls (0.235, 0.141) and (0.282, 0.094) .. (0.282, 0) -- (0.282, -0.847) .. controls (0.282, -0.941) and (0.235, -0.988) .. (0.141, -0.988) .. controls (0.047, -0.988) and (0, -0.941) .. (0, -0.847) -- cycle;
    \filldraw[fill=RoyalBlue!80, fill opacity=0.3] 
    (2.469, 2.111) circle[radius=0.211];
    \filldraw[fill=RoyalBlue!80, fill opacity=0.3] 
    (3.598, 2.111) circle[radius=0.211];
    \filldraw[shift={(1.9, 4.802)}, rotate=90, fill=RoyalBlue!80, fill opacity=0.3] (0, 0) .. controls (0, 0.094) and (0.047, 0.141) .. (0.141, 0.141) .. controls (0.235, 0.141) and (0.282, 0.094) .. (0.282, 0) -- (0.282, -1.129) .. controls (0.282, -1.223) and (0.235, -1.27) .. (0.141, -1.27) .. controls (0.047, -1.27) and (0, -1.223) .. (0, -1.129) -- cycle;
    \filldraw[fill=RoyalBlue!80, fill opacity=0.3] (1.92, 4.723) -- (1.477, 4.281) -- (1.477, 2.121) .. controls (1.477, 2.027) and (1.43, 1.98) .. (1.336, 1.98) .. controls (1.242, 1.98) and (1.195, 2.027) .. (1.195, 2.121) -- (1.195, 4.28) -- (0.751, 4.723) -- (0.348, 4.32) -- (0.348, 2.121) .. controls (0.348, 2.027) and (0.301, 1.98) .. (0.207, 1.98) .. controls (0.113, 1.98) and (0.066, 2.027) .. (0.066, 2.121) -- (0.073, 4.442) -- (0.564, 4.935) .. controls (0.702, 5.034) and (0.827, 5.034) .. (0.938, 4.935) -- (1.252, 4.622) .. controls (1.306, 4.572) and (1.362, 4.571) .. (1.419, 4.621) -- (1.734, 4.935) .. controls (1.845, 5.034) and (1.969, 5.034) .. (2.107, 4.935) -- (2.582, 4.461) .. controls (2.648, 4.395) and (2.648, 4.328) .. (2.582, 4.262) .. controls (2.515, 4.195) and (2.449, 4.195) .. (2.382, 4.262) -- cycle;
    \filldraw[draw=black!25, fill=RoyalBlue!80, fill opacity=0.1] 
    (0.776, 0.418) circle[radius=0.211];
    \filldraw[draw=black!25, fill=RoyalBlue!80, fill opacity=0.1] (3.257, 0) -- (2.912, 0.345) .. controls (2.846, 0.411) and (2.846, 0.478) .. (2.912, 0.544) .. controls (2.979, 0.611) and (3.045, 0.611) .. (3.112, 0.544) -- (3.479, 0.177);
    \filldraw[shift={(1.9, 0.286)}, rotate=90, draw=black!25, fill=RoyalBlue!80, fill opacity=0.1] (0, 0) .. controls (0, 0.094) and (0.047, 0.141) .. (0.141, 0.141) .. controls (0.235, 0.141) and (0.282, 0.094) .. (0.282, 0) -- (0.282, -1.129) .. controls (0.282, -1.223) and (0.235, -1.27) .. (0.141, -1.27) .. controls (0.047, -1.27) and (0, -1.223) .. (0, -1.129) -- cycle;
    \filldraw[draw=black!25, fill=RoyalBlue!80, fill opacity=0.1] (2.123, 0.005) -- (1.92, 0.208) -- (1.717, 0.005)(0.955, 0.004) -- (0.751, 0.208) -- (0.552, 0.008)(0.152, 0.006) -- (0.564, 0.42) .. controls (0.702, 0.519) and (0.827, 0.519) .. (0.938, 0.42) -- (1.252, 0.106) .. controls (1.306, 0.056) and (1.362, 0.056) .. (1.419, 0.105) -- (1.734, 0.42) .. controls (1.845, 0.519) and (1.969, 0.519) .. (2.107, 0.42) -- (2.518, 0.004);
    \filldraw[draw=black!25, fill=RoyalBlue!80, fill opacity=0.1] 
    (4.727, 2.111) circle[radius=0.211];
    \filldraw[draw=black!25, fill=RoyalBlue!80, fill opacity=0.1] 
    (4.727, 3.24) circle[radius=0.211];
    \filldraw[draw=black!25, fill=RoyalBlue!80, fill opacity=0.1] 
    (4.727, 4.369) circle[radius=0.211];
    \filldraw[draw=black!25, fill=RoyalBlue!80, fill opacity=0.1] (5.267, 4.723) -- (4.864, 4.32) -- (4.864, 2.121) .. controls (4.864, 2.027) and (4.817, 1.98) .. (4.722, 1.98) .. controls (4.628, 1.98) and (4.581, 2.027) .. (4.581, 2.121) -- (4.588, 4.442) -- (5.08, 4.935);
    \filldraw[fill=RoyalBlue!80, fill opacity=0.3] 
    (1.34, 3.24) circle[radius=0.211];
    \node[Mark, Mark_disk] at (0.771, 2.678) {};
    \draw[shift={(0.207, 2.678)}, yscale=-1] (0, 0) rectangle (2.258, -2.258);
    \draw[shift={(0.207, 3.807)}, yscale=-1] (0, 0) -- (2.258, 0);
    \draw[shift={(1.336, 2.678)}, yscale=-1] (0, 0) -- (0, -2.258);
    \draw[shift={(0.207, 2.678)}, yscale=-1] (0, 0) -- (0, 1.129) -- (1.129, 1.129) -- (1.129, 0);
    \draw[shift={(1.336, 1.549)}, yscale=-1] (0, 0) -- (1.129, 0) -- (1.129, -1.129);
    \draw[shift={(2.465, 1.549)}, yscale=-1] (0, 0) -- (1.129, 0) -- (1.129, -1.129) -- (0, -1.129);
    \draw[shift={(3.594, 2.678)}, yscale=-1] (0, 0) -- (0, -1.129) -- (-1.129, -1.129);
    \draw[shift={(3.594, 3.807)}, yscale=-1] (0, 0) -- (0, -1.129) -- (-1.129, -1.129);
    \node[Mark, Mark_disk, BrickRed!90] at (1.336, 2.113) {};
    \node[Mark, Mark_disk] at (3.029, 3.807) {};
    \node[Mark, Mark_disk] at (0.771, 1.549) {};
    \node[Mark, Mark_disk] at (1.9, 1.549) {};
    \draw (1.336, 1.556) -- (1.336, 0.427);
    \draw (2.465, 1.556) -- (2.465, 0.427);
    \draw (3.594, 1.556) -- (3.594, 0.427);
    \node[Mark, Mark_disk, black!50] at (4.158, 0.42) {};
    \node[Mark, Mark_disk] at (4.158, 2.678) {};
    \node[Mark, Mark_disk] at (4.158, 3.807) {};
    \node[Mark, Mark_disk] at (4.158, 4.935) {};
    \node[Mark, Mark_disk] at (0.207, 0.984) {};
    \node[Mark, Mark_disk] at (1.336, 0.984) {};
    \node[Mark, Mark_disk] at (2.465, 0.984) {};
    \node[Mark, Mark_disk] at (3.594, 0.984) {};
    \node[Mark, Mark_disk] at (1.9, 2.678) {};
    \node[Mark, Mark_disk, BrickRed!90] at (2.465, 3.242) {};
    \node[Mark, Mark_disk, BrickRed!90] at (0.207, 3.242) {};
    \node[Mark, Mark_disk] at (0.771, 3.807) {};
    \node[Mark, Mark_disk, BrickRed!90] at (0.207, 4.371) {};
    \node[Mark, Mark_disk, BrickRed!90] at (0.771, 4.935) {};
    \node[Mark, Mark_disk, BrickRed!90] at (1.9, 4.935) {};
    \node[Mark, Mark_disk, BrickRed!90] at (1.336, 4.371) {};
    \node[Mark, Mark_disk] at (1.9, 3.807) {};
    \node[Mark, Mark_disk, BrickRed!90] at (2.465, 4.371) {};
    \node[Mark, Mark_disk, BrickRed!90] at (0.207, 2.113) {};
    \node[Mark, Mark_disk, BrickRed!90] at (2.465, 2.113) {};
    \node[Mark, Mark_disk, BrickRed!90] at (3.029, 4.935) {};
    \node[Mark, Mark_disk] at (3.029, 2.678) {};
    \node[Mark, Mark_disk] at (3.029, 1.549) {};
    \node[Mark, Mark_disk, BrickRed!90] at (3.594, 2.113) {};
    \node[Mark, Mark_disk, BrickRed!90] at (3.594, 3.242) {};
    \node[Mark, Mark_disk, BrickRed!90] at (3.594, 4.371) {};
    \node[Mark, Mark_disk] at (4.158, 1.549) {};
    \draw (3.594, 4.935) -- (4.722, 4.936) -- (4.722, 0.427) -- (0.207, 0.427) -- (0.207, 1.556);
    \draw (3.594, 1.549) -- (4.722, 1.548);
    \draw (3.594, 2.678) -- (4.722, 2.676);
    \draw (3.594, 3.807) -- (4.722, 3.806);
    \begin{scope}[shift={(-1.486, -5.217)}]
        \node[Mark, Mark_disk, BrickRed!40] at (2.258, 5.637) {};
        \node[Mark, Mark_disk, BrickRed!40] at (3.387, 5.637) {};
        \node[Mark, Mark_disk, BrickRed!40] at (4.516, 5.637) {};
    \end{scope}
    \node[Mark, Mark_disk, BrickRed!40] at (4.722, 2.113) {};
    \node[Mark, Mark_disk, BrickRed!40] at (4.722, 3.242) {};
    \node[Mark, Mark_disk, BrickRed!40] at (4.722, 4.371) {};
    \node[Mark, Mark_disk, black!50] at (4.722, 0.984) {};
    \node[Mark, Mark_disk, BrickRed!90] at (1.336, 3.242) {};
\end{tikzpicture}
    \caption{Visual representation of the resulting interactions after applying $\CX$ gates on the two Ising chains starting from the leftmost three-body interaction of the system in $\Lambda_A$, and the gate $\CX((0, 1, h),(0, 2, v))$.}
    \label{fig:intermediatestars}
\end{figure}
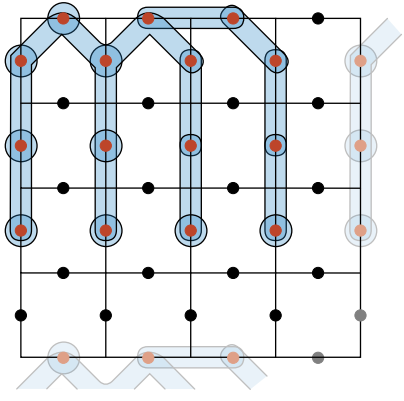

Repeating this process until we reach the last three-body interaction and its corresponding Ising chains will lead to obtaining one interaction term supported on $\Lambda_A$, and a parallel magnetic field on each spin of $\Lambda_A$, which, as we proved in \cref{lem:equivalencelemma}, is equivalent to a classical Ising chain with a magnetic field at each end.

Lastly, for the plaquette part, note that the interactions obtained are the following: 
\begin{itemize}
    \item One four-body interaction: 
    \begin{equation}
    \label{eq:fourbodyinteraction}
    \sigma_z^{(L-1, L-1, h)}\sigma_z^{(L-1, L-1, v)}\sigma_z^{(L-1, 0, v)}\sigma_z^{(0, L-1, h)}.
    \end{equation}
    \item Three-body interactions:
    \[
    \sigma_z^{(L-1, j-1, v)}\sigma_z^{(L-1, j-1, h)}\sigma_z^{(L-1, j, v)},
    \]
    for every $1 \leq j \leq L-1$. 
    \item Two-body interactions: in this case all of which are vertical
    \[
    \sigma_z^{(i-1, j, h)}\sigma_z^{(i, j, h)},
    \]
    for every $j \in \ZZ_L$ and every $2 \leq i \leq L-1$, and 
    \begin{equation}
    \label{eq:twobodyinteraction}
    \sigma_z^{(0, L-1, h)}\sigma_z^{(1, L-1, h)}. 
    \end{equation}
    \item Magnetic fields:
    \[
    \sigma_z^{(1, j, h)},
    \]
    for every $0 \leq j < L-1$.
\end{itemize}

The above interactions are almost the same as those obtained for the star operators, after mirroring them (cf. \cref{fig:finalhamiltoniantoric2D}), with the exception that one magnetic field has been substituted by a two-body interaction, and the two non-vertical two-body interactions have been replaced by a three-body and a four-body interaction, respectively. 

We start by conjugating the resulting Hamiltonian shown in \cref{eq:conjugatedplaquettes} by 
\begin{equation}
\label{eq:gatesplaquettes1}
\prod_{1 \leq j \leq L-1} \CX((L-1, 0, v), (L-1, j, v))    
\end{equation}
and 
\begin{equation}
\label{eq:gatesplaquettes2}
\prod_{1 \leq j \leq L-1} \CX((0, L-1, h), (j, L-1, v)).    
\end{equation}

The above gates will map the four body interaction from \cref{eq:fourbodyinteraction} to
\[
\sigma_z^{(L-1, L-1, h)}\sigma_z^{(L-1, L-1, v)}.
\]
Furthermore, the two-body interaction from \cref{eq:twobodyinteraction} will get mapped to a magnetic field on 
$(1, L-1, h)$, and the three-body interaction 
\[
\sigma_z^{(L-1, 0, v)}\sigma_z^{(L-1, 0, h)}\sigma_z^{(L-1, 1, v)},
\]
to
\[
\sigma_z^{(L-1, 0, h)}\sigma_z^{(L-1, 1, v)}.
\]

Now, the resulting interactions are analogous to those obtained for the stars (see \cref{fig:intermediateplaquettes}), so we can proceed in the same manner. 

\begin{figure}
    \centering
    \begin{tikzpicture}[scale=1]
    \filldraw[fill=RoyalBlue!80, fill opacity=0.3] (0.423, 3.394) .. controls (0.423, 3.488) and (0.471, 3.535) .. (0.565, 3.535) .. controls (0.659, 3.535) and (0.706, 3.488) .. (0.706, 3.394) -- (0.706, 2.265) .. controls (0.706, 2.171) and (0.659, 2.124) .. (0.564, 2.124) .. controls (0.47, 2.124) and (0.423, 2.171) .. (0.423, 2.265) -- cycle;
    \filldraw[fill=RoyalBlue!80, fill opacity=0.3] (0.423, 2.265) .. controls (0.423, 2.359) and (0.471, 2.406) .. (0.565, 2.406) .. controls (0.659, 2.406) and (0.706, 2.359) .. (0.706, 2.265) -- (0.706, 1.136) .. controls (0.706, 1.042) and (0.659, 0.995) .. (0.564, 0.995) .. controls (0.47, 0.995) and (0.423, 1.042) .. (0.423, 1.136) -- cycle;
    \filldraw[fill=RoyalBlue!80, fill opacity=0.3] (1.552, 3.394) .. controls (1.552, 3.488) and (1.599, 3.535) .. (1.694, 3.535) .. controls (1.788, 3.535) and (1.835, 3.488) .. (1.834, 3.394) -- (1.834, 2.265) .. controls (1.834, 2.171) and (1.787, 2.124) .. (1.693, 2.124) .. controls (1.599, 2.124) and (1.552, 2.171) .. (1.552, 2.265) -- cycle;
    \filldraw[fill=RoyalBlue!80, fill opacity=0.3] (1.552, 2.265) .. controls (1.552, 2.359) and (1.599, 2.406) .. (1.694, 2.406) .. controls (1.788, 2.406) and (1.835, 2.359) .. (1.834, 2.265) -- (1.834, 1.136) .. controls (1.834, 1.042) and (1.787, 0.995) .. (1.693, 0.995) .. controls (1.599, 0.995) and (1.552, 1.042) .. (1.552, 1.136) -- cycle;
    \filldraw[fill=RoyalBlue!80, fill opacity=0.3] (2.681, 3.394) .. controls (2.681, 3.488) and (2.728, 3.535) .. (2.822, 3.535) .. controls (2.917, 3.535) and (2.963, 3.488) .. (2.963, 3.394) -- (2.963, 2.265) .. controls (2.963, 2.171) and (2.916, 2.124) .. (2.822, 2.124) .. controls (2.728, 2.124) and (2.681, 2.171) .. (2.681, 2.265) -- cycle;
    \filldraw[fill=RoyalBlue!80, fill opacity=0.3] (2.681, 2.265) .. controls (2.681, 2.359) and (2.728, 2.406) .. (2.822, 2.406) .. controls (2.917, 2.406) and (2.963, 2.359) .. (2.963, 2.265) -- (2.963, 1.136) .. controls (2.963, 1.042) and (2.916, 0.995) .. (2.822, 0.995) .. controls (2.728, 0.995) and (2.681, 1.042) .. (2.681, 1.136) -- cycle;
    \filldraw[fill=RoyalBlue!80, fill opacity=0.3] (3.81, 3.394) .. controls (3.81, 3.488) and (3.857, 3.535) .. (3.951, 3.535) .. controls (4.045, 3.535) and (4.092, 3.488) .. (4.092, 3.394) -- (4.092, 2.265) .. controls (4.092, 2.171) and (4.045, 2.124) .. (3.951, 2.124) .. controls (3.857, 2.124) and (3.81, 2.171) .. (3.81, 2.265) -- cycle;
    \filldraw[fill=RoyalBlue!80, fill opacity=0.3] (3.81, 2.265) .. controls (3.81, 2.359) and (3.857, 2.406) .. (3.951, 2.406) .. controls (4.045, 2.406) and (4.092, 2.359) .. (4.092, 2.265) -- (4.092, 1.136) .. controls (4.092, 1.042) and (4.045, 0.995) .. (3.951, 0.995) .. controls (3.857, 0.995) and (3.81, 1.042) .. (3.81, 1.136) -- cycle;
    \filldraw[fill=RoyalBlue!80, fill opacity=0.3] (2.375, 0.655) .. controls (2.441, 0.588) and (2.441, 0.521) .. (2.375, 0.455) .. controls (2.308, 0.388) and (2.242, 0.388) .. (2.175, 0.455) -- (1.776, 0.854) .. controls (1.719, 0.903) and (1.664, 0.903) .. (1.609, 0.853) -- (1.228, 0.472) .. controls (1.162, 0.405) and (1.095, 0.405) .. (1.029, 0.472) .. controls (0.962, 0.538) and (0.962, 0.605) .. (1.029, 0.671) -- (1.528, 1.171) .. controls (1.638, 1.242) and (1.748, 1.242) .. (1.859, 1.171) -- cycle;
    \filldraw[fill=RoyalBlue!80, fill opacity=0.3] (3.504, 0.655) .. controls (3.57, 0.588) and (3.57, 0.521) .. (3.504, 0.455) .. controls (3.437, 0.388) and (3.371, 0.388) .. (3.304, 0.455) -- (2.905, 0.854) .. controls (2.848, 0.903) and (2.793, 0.903) .. (2.738, 0.853) -- (2.357, 0.472) .. controls (2.291, 0.405) and (2.224, 0.405) .. (2.158, 0.472) .. controls (2.091, 0.538) and (2.091, 0.605) .. (2.158, 0.671) -- (2.657, 1.171) .. controls (2.767, 1.242) and (2.877, 1.242) .. (2.988, 1.171) -- cycle;
    \node[Mark, Mark_disk] at (0.564, 2.258) {};
    \draw[shift={(0, 2.258)}, yscale=-1] (0, 0) rectangle (2.258, -2.258);
    \draw[shift={(0, 3.387)}, yscale=-1] (0, 0) -- (2.258, 0);
    \draw[shift={(1.129, 2.258)}, yscale=-1] (0, 0) -- (0, -2.258);
    \draw[shift={(0, 2.258)}, yscale=-1] (0, 0) -- (0, 1.129) -- (1.129, 1.129) -- (1.129, 0);
    \draw[shift={(1.129, 1.129)}, yscale=-1] (0, 0) -- (1.129, 0) -- (1.129, -1.129);
    \draw[shift={(2.258, 1.129)}, yscale=-1] (0, 0) -- (1.129, 0) -- (1.129, -1.129) -- (0, -1.129);
    \draw[shift={(3.387, 2.258)}, yscale=-1] (0, 0) -- (0, -1.129) -- (-1.129, -1.129);
    \draw[shift={(3.387, 3.387)}, yscale=-1] (0, 0) -- (0, -1.129) -- (-1.129, -1.129);
    \node[Mark, Mark_disk, BrickRed!90] at (1.129, 1.693) {};
    \node[Mark, Mark_disk] at (2.822, 3.387) {};
    \node[Mark, Mark_disk] at (0.564, 1.129) {};
    \node[Mark, Mark_disk] at (1.693, 1.129) {};
    \draw (1.129, 1.136) -- (1.129, 0.007);
    \draw (2.258, 1.136) -- (2.258, 0.007);
    \draw (3.387, 1.136) -- (3.387, 0.007);
    \node[Mark, Mark_disk, black!50] at (3.951, 0) {};
    \node[Mark, Mark_disk] at (3.951, 2.258) {};
    \node[Mark, Mark_disk] at (3.951, 3.387) {};
    \node[Mark, Mark_disk] at (3.951, 4.516) {};
    \node[Mark, Mark_disk] at (0, 0.564) {};
    \node[Mark, Mark_disk] at (1.129, 0.564) {};
    \node[Mark, Mark_disk] at (2.258, 0.564) {};
    \node[Mark, Mark_disk] at (3.387, 0.564) {};
    \node[Mark, Mark_disk] at (1.693, 2.258) {};
    \node[Mark, Mark_disk, BrickRed!90] at (2.258, 2.822) {};
    \node[Mark, Mark_disk, BrickRed!90] at (0, 2.822) {};
    \node[Mark, Mark_disk] at (0.564, 3.387) {};
    \node[Mark, Mark_disk, BrickRed!90] at (0, 3.951) {};
    \node[Mark, Mark_disk, BrickRed!90] at (0.564, 4.516) {};
    \node[Mark, Mark_disk, BrickRed!90] at (1.693, 4.516) {};
    \node[Mark, Mark_disk, BrickRed!90] at (1.129, 3.951) {};
    \node[Mark, Mark_disk] at (1.693, 3.387) {};
    \node[Mark, Mark_disk, BrickRed!90] at (2.258, 3.951) {};
    \node[Mark, Mark_disk, BrickRed!90] at (0, 1.693) {};
    \node[Mark, Mark_disk, BrickRed!90] at (2.258, 1.693) {};
    \node[Mark, Mark_disk, BrickRed!90] at (2.822, 4.516) {};
    \node[Mark, Mark_disk] at (2.822, 2.258) {};
    \node[Mark, Mark_disk] at (2.822, 1.129) {};
    \node[Mark, Mark_disk, BrickRed!90] at (3.387, 1.693) {};
    \node[Mark, Mark_disk, BrickRed!90] at (3.387, 2.822) {};
    \node[Mark, Mark_disk, BrickRed!90] at (3.387, 3.951) {};
    \node[Mark, Mark_disk] at (3.951, 1.129) {};
    \draw (3.387, 4.516) -- (4.516, 4.516) -- (4.516, 0.007) -- (0, 0.007) -- (0, 1.136);
    \draw (3.387, 1.129) -- (4.516, 1.128);
    \draw (3.387, 2.258) -- (4.516, 2.256);
    \draw (3.387, 3.387) -- (4.516, 3.386);
    \begin{scope}[shift={(-1.693, -5.637)}]
        \node[Mark, Mark_disk, BrickRed!40] at (2.258, 5.637) {};
        \node[Mark, Mark_disk, BrickRed!40] at (3.387, 5.637) {};
        \node[Mark, Mark_disk, BrickRed!40] at (4.516, 5.637) {};
    \end{scope}
    \node[Mark, Mark_disk, BrickRed!40] at (4.516, 1.693) {};
    \node[Mark, Mark_disk, BrickRed!40] at (4.516, 2.822) {};
    \node[Mark, Mark_disk, BrickRed!40] at (4.516, 3.951) {};
    \node[Mark, Mark_disk, black!50] at (4.516, 0.564) {};
    \node[Mark, Mark_disk, BrickRed!90] at (1.129, 2.822) {};
    \filldraw[fill=RoyalBlue!80, fill opacity=0.3] 
    (0.569, 3.385) circle[radius=0.211];
    \filldraw[fill=RoyalBlue!80, fill opacity=0.3] 
    (1.698, 3.385) circle[radius=0.211];
    \filldraw[fill=RoyalBlue!80, fill opacity=0.3] 
    (2.827, 3.385) circle[radius=0.211];
    \filldraw[shift={(0.448, 1.054)}, rotate=45, fill=RoyalBlue!80, fill opacity=0.3] (0, 0) .. controls (0, 0.094) and (0.047, 0.141) .. (0.141, 0.141) .. controls (0.235, 0.141) and (0.282, 0.094) .. (0.282, 0) -- (0.282, -0.847) .. controls (0.282, -0.941) and (0.235, -0.988) .. (0.141, -0.988) .. controls (0.047, -0.988) and (0, -0.941) .. (0, -0.847) -- cycle;
    \filldraw[shift={(3.27, 0.655)}, rotate=-45, yscale=-1, fill=RoyalBlue!80, fill opacity=0.3] (0, 0) .. controls (0, 0.094) and (0.047, 0.141) .. (0.141, 0.141) .. controls (0.235, 0.141) and (0.282, 0.094) .. (0.282, 0) -- (0.282, -0.847) .. controls (0.282, -0.941) and (0.235, -0.988) .. (0.141, -0.988) .. controls (0.047, -0.988) and (0, -0.941) .. (0, -0.847) -- cycle;
    \filldraw[fill=RoyalBlue!80, fill opacity=0.3] 
    (3.956, 3.385) circle[radius=0.211];
\end{tikzpicture}
    \caption{Visual representation of the interactions obtained after conjugating the final Hamiltonian associated to $\Lambda_B$ (cf. \cref{eq:conjugatedplaquettes}) with the gates shown in \cref{eq:gatesplaquettes1,eq:gatesplaquettes2}}
    \label{fig:intermediateplaquettes}
\end{figure}
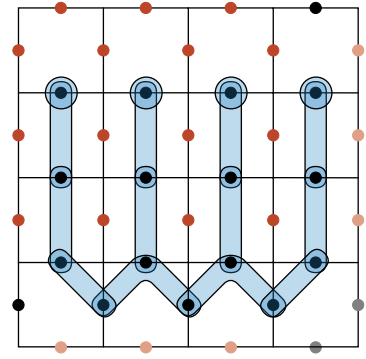

Note that, by applying the gates shown in \cref{eq:gatesplaquettes1,eq:gatesplaquettes2} we are obtaining two free spins---$(0, L-1, h)$ and $(L-1, 0, v)$---in the sense that they are not in the support of either of the two final Hamiltonians.
\end{proof}

\begin{rem}
\label{rem:groundstates}
Note that conjugating by a unitary matrix does not affect the coefficients of the toric code $J_v$ and $J_p$---see \cref{eq:HamiltonianTC}. Therefore, when $J_v = J_p = 1$ for every $v \in V_L$ and every $p \subset \mathcal{E}_L$, the final two Ising chains are ferromagnetic, with a ferromagnetic magnetic field at each end, thus having a unique ground state. In this case, the four possible values for the two free spins are in direct correspondence with the ground states of the toric code.
\end{rem}

Next, let us provide the details of the proof of \cref{thm:efficient_tc} based on the previous diagonalization
\begin{proof}[Proof of \cref{thm:efficient_tc}]
Due to \cref{lem:efficientgibbssampling} and the preceding \cref{thm:result} we are left to provide an efficient Gibbs sampling algorithm for a 1D Ising chain for which we employ a standard iterative procedure.
Denote the spins of an Ising chain by $x_i\in\{-1,+1\}$, $i=1,\ldots,N$ and the Hamiltonian $H=\sum_{i=1}^{N-1} J_i x_i x_{i+1}+
\sum_{i=1}^N h_i x_i$
We can simply make use of the conditional independence of the Gibbs distribution $p(x_1,\ldots,x_n)\propto \exp(-\beta H)$
\[
p(x_i|x_1,\ldots,x_{i-1})=p(x_i|x_{i-1})\,.
\]
We sample the spins consecutively from the distribution $q(x_1):=\exp(-\beta h_1)$, $q(x_{i+1}|x_i):=\exp(-\beta(J_i x_{i+1}x_i+h_{i+1}))$, evidently resulting in the desired distribution. This requires simply to take $N$ samples of binary distributions with the respective weights given by the coefficients and comes without any further sampling error.

Finally, note that this procedure extends to the ground state by deterministically defining consecutive spins by the respective limit of conditional probabilities.
\end{proof}

\section{Diagonalizing Hamiltonians}
\label{sec:appendixtableaus}

Our implementation of the diagonalization algorithm for Hamiltonians composed of commuting scalar multiples of Pauli operators is based on the method originally presented by Aaronson and Gottesman \cite{aaronson2004simulation}, as shown in \cite{van2020circuit}. The code used can be found in \cite{github2025}. 

In the following we provide a short summary of the main ideas of this approach. Wherever we mention the algorithm from \cite{van2020circuit} in this work, we refer to the sequential application of algorithms 1 and 2 as described in the same paper. For an in-depth study, we refer to the aforementioned references.

Let $H$ be a Hamiltonian associated to an $n$-spin system. Assume that $H$ is the sum of $m$ scalar multiples of commuting Pauli operators, as in \cref{eq:Hamiltonian}. In order to diagonalize it, we will construct a \textit{tableau} associated to it. 

The tableau consists of two $m \times n$ matrices, which we denote by $X := \{x_{ij}\}$ and $Z := \{z_{ij}\}$, and an $m \times 1$ column matrix, which we denote by $s := \{s_i\}$. This way, the tableau is an $m \times (2n + 1)$ matrix of the form---in block notation---
\[
\begin{pmatrix}
X & \rvline & Z & \rvline & s
\end{pmatrix}. 
\]

Each row of the tableau represents a term of $H$ in the following way; let $i \in \{1, \dotsc, m\}$ and consider the $i$-th term of $H$, which we denote by $\alpha_i H_i$ (cf. \cref{eq:Hamiltonian}). We set $s_i = 0$ if $\alpha_i > 0$ and $s_i = 1$ otherwise. The values of the $i$-th row of $X$ and $Z$---$\{x_{ij}\}_{j = 1}^n$ and $\{z_{ij}\}_{j =1}^n$, respectively---are chosen according to \cref{tab:operatortableaucorresp}, depending on the $j$-th Pauli matrix in $H_i$. 

As an illustrative example, consider a Hamiltonian $H$ acting on three sites with two terms,
\[
\sigma_z \otimes \sigma_y \otimes \mathds{1} -\, \sigma_x \otimes \mathds{1} \otimes \sigma_y.
\]
Its associated tableau is
\[
\begin{pmatrix}
0 & 1 & 0 & \rvline & 1 & 1 & 0 & \rvline & 0 \\
1 & 0 & 1 & \rvline & 0 & 0 & 1 & \rvline & 1
\end{pmatrix}.
\]

\begin{table}
    \centering
    \begin{tabular}{|c|c|c|}
        \hline
         \textbf{Operator} & $x_{ij}$ & $z_{ij}$\\
         \hline
         $\sigma_x$ & $1$ & $0$\\
         \hline
         $\sigma_z$ & $0$ & $1$\\
         \hline
         $\sigma_y$& $1$ & $1$\\
         \hline
         $\mathds{1}$ & $0$ & $0$ \\
         \hline
    \end{tabular}
    \caption{Representation of each Pauli matrix in the tableau. The above notation assumes that the operator corresponds to the $j$-th Pauli term in the $i$-th operator of $H$. }
    \label{tab:operatortableaucorresp}
\end{table}

Once the tableau associated to the original Hamiltonian $H$ is constructed, the algorithm only regards the tableau in order to diagonalize $H$. The main idea is to encode the effect of conjugating $H$ by a unitary into an operation on the tableau. Note that, while the size of $H$ scales exponentially with respect to the number of spins of the system, the size of its associated tableau only grows linearly in the number of spins and terms of $H$. 

Since the algorithm presented in \cite{van2020circuit} diagonalizes $H$ using only $\CX$, $\CZ$, Hadamard and phase gates, it is sufficient to encode the effect of conjugating $H$ by such gates onto the tableau. See \cref{tab:operationssummary} for more details. For simplicity, we have not included the effect of the $\CZ$ gate in the table, as it can be written as
\[
\CZ = (\mathds{1} \otimes H) \CX (\mathds{1} \otimes H). 
\]
We will use the same notation as in \cite{van2020circuit}. Thus, we denote a Hadamard (resp. phase) gate acting on the $i$-th qubit by $H(i)$ (resp. $S(i)$). And a $\CX$ (resp. $\CZ$) gate acting on the $i$-th and $j$-th qubits---with the $i$-th acting as control---as $\CX(i,j)$ (resp. $\CZ(i, j)$).

Furthermore, as we mentioned earlier, note that the algorithm uses only a quadratic number of gates---in the system size---in order to diagonalize $H$. 

\begin{table*}
    \centering
    \begin{tabular}{|c|c|c|}
        \hline
         \textbf{Gate} & \textbf{Sign update} & \textbf{Column update} \\
         \hline
         $H(a)$ & $s \leftarrow s \oplus (x_a \otimes z_a)$ & swap $x_a$ and $z_a$\\
         \hline
         $S(a)$ & $s \leftarrow s \oplus (x_a \otimes z_a)$ & $z_a \leftarrow z_a \oplus x_a$\\
         \hline
         $\CX(a,b)$ & $s \leftarrow s \oplus (x_a \otimes z_b \otimes (x_b \oplus z_a \oplus 1))$ & $x_b \leftarrow x_b \oplus x_a$, $z_a \leftarrow z_a \oplus z_b$\\
         \hline
    \end{tabular}
    \caption{Summary of the tableau update after conjugating its associated Hamiltonian by the different gates used in the algorithm. We have used $x_i$ (resp. $z_i$) to denote the $i$-th column of the $X$ matrix (resp. $Z$ matrix) in the tableau. The sign update is performed before the column update.}
    \label{tab:operationssummary}
\end{table*}

When the diagonalization algorithm terminates, the $X$ matrix of the tableau is always identically zero, meaning that there are no $\sigma_x$ or $\sigma_y$ interactions in any of the terms of the final Hamiltonian. 

It is important to note that the algorithm shown in \cite{van2020circuit} first performs a series of operations on the tableau which do not correspond to actual conjugation by gates, but rather correspond to spin reordering, term reordering, and multiplication of terms. This way, the algorithm is performed on a modified tableau, which corresponds to a different Hamiltonian than the one given as input. Nevertheless, note that the same circuit will diagonalize both the original and the modified Hamiltonian. This way, by maintaining a record of the actual gates applied to the modified Hamiltonian, one can later apply the same gates to the original one to obtain its diagonalized---thus classical---version (see \cite[Section 4.2]{van2020circuit}). Throughout this section, we will therefore use ``final tableau'' to refer to the one which corresponds to the diagonalized version of the original Hamiltonian, not the one obtained originally from the algorithm. 

Reconstructing the Hamiltonian from the final tableau, we obtain a classical Hamiltonian $\tilde H$ with the same number of terms, such that 
\[
\tilde H = U H U^\dagger,
\]
where $U$ is the circuit explicitly constructed by the algorithm. Although this is enough to find an explicit correspondence between quantum and classical Hamiltonians, it is often not so easy to recognize a clear structure in $\tilde H$---such as translation invariance, dimension of the model or locality. For this reason, we implement a final stage in the algorithm, based on the idea of column-wise Gaussian elimination, which we call \textit{pseudo-Gaussian elimination}. It consists of $\CX$ gates in order to create as many rows with a single nonzero entry as possible, which correspond to single-$\sigma_z$ terms:
\begin{algorithm}[H]
  \caption{Pseudo-Gaussian elimination of \(Z\)}
  \label{alg:gaussian}
  \begin{algorithmic}
    \State $v \gets \{1, \dotsc, n\}$ \Comment{Array with the unused columns}
    \For{$i \in \{1, \dotsc, m\}$}
    \For{$j \in v$}
    \If{$z_{ij} = 1$} 
    \State remove $j$ from $v$
    \For{$k \in \{1, \dotsc, n\} \setminus \{j\}$}
    \If{$z_{ik} = 1$}
    \State apply $\CX(k, j)$
    \EndIf
    \EndFor
    \EndIf
    \EndFor
    \EndFor
  \end{algorithmic}
\end{algorithm}

As mentioned in the main text, this algorithm has a classical complexity of $\cO(m^2n^2)$ and the output is a quantum circuit of depth at most $\cO(mn^2)$. 

By running the above algorithm on the final tableau, we were able to recognize classical Ising chain Hamiltonians arising from the models considered (cf. \cref{tab:diagonalizationresults}). It is important to note that, although \cref{alg:gaussian} is useful for identifying Ising chains, it is less straightforward to recognize 2D or 3D Ising models via this method. Thus, we believe that there should be other algorithms that could be applied at the final stage in order to identify more intricate classical models.

In the following sections, we give a short introduction on every model considered in this work, and discuss how the correspondence shown in \cref{tab:diagonalizationresults} arises by inspecting the final tableau obtained after performing the pseudo-Gaussian elimination. Note that the results that we will discuss correspond to empirical observation of the final tableau associated to the models considered, for sizes up to $90 \times 90$ in 2D models, and $20 \times 20 \times 20$ in 3D models, which correspond to systems with over $10^4$ spins. Also note that the final tableau obtained for a given Hamiltonian depends on the order considered for both its terms and the spins. For this reason, we will always emphasize that the structures that we will present here are always up to column and row reordering. 

Note that the dualities obtained are explicit in the sense that we have been able to obtain---for almost every model considered---a correspondence between the different kinds of interactions in the original Hamiltonian and the different decoupled systems; similarly to the 2D toric code case where the star and the plaquette operators were each dual to a decoupled Ising chain.

Finally, since the tableau associated to a given Hamiltonian $H$ does not contain the scalar prefactor of each term---only their sign---we consider each term of the Hamiltonian to be multiplied by its respective prefactor. Furthermore, as we mentioned previously, after running the diagonalization algorithm, the $X$ matrix in the tableau will always be identically zero. In the same manner, for all considered models, the algorithm does not change any of the values in the $s$ column. Therefore, we will often omit them in the coming sections.

For simplicity, we only consider regular lattices---in the sense that they are always of size $L \times L$ or $L \times L \times L$. Nevertheless, the algorithm used can be easily adapted to study any other non-regular lattices such as $L \times L'$ with $L \neq L'$.
\subsection{2D toric code}
\label{sec:2Dtoriccode}

Let us first discuss the two-dimensional toric code. Although we have given a formal proof of the duality between the 2D toric code and the two decoupled classical Ising chains in \cref{sec:2Dtoriccode_proof}, we include the discussion in this context as well for illustration. 

Recall that the toric code is defined on a square lattice with periodic boundary conditions, where a spin is considered at the midpoint of each edge. This way, the 2D toric code model defined on $\mathbb{S}_L \times \mathbb{S}_L$ contains $2 L^2$ spins. 

After performing the pseudo-Gaussian elimination on the final tableau, the rows and columns of the $Z$ matrix can be rearranged so that it is of the form
\[
\begin{pmatrix}
  \bigI & \rvline & \bigzero & \rvline &  \hspace{3pt} \begin{matrix} 0 \, 0 \\ \vdots \end{matrix}\\
  \cline{1-3}
  1 \cdots 1 & \rvline & 0 \cdots 0 & \rvline & \vdots \\
  \cline{1-3}
  \bigzero & \rvline & \bigI & \rvline &  \vdots \\
  \cline{1-3}
  0 \cdots 0 & \rvline & 1 \cdots 1 & \rvline & 0 \, 0\\
\end{pmatrix},
\]
where the identity matrices are of size $(L^2 - 1) \times (L^2 - 1)$. Furthermore, under this arrangement, the first $L^2$ rows of $Z$ correspond to the star operators and the last $L^2$ correspond to the plaquette operators. 

From the structure of $Z$, the two decoupled models can be distinguished easily; note that there is no row in the $Z$ matrix containing ones in any of the first $L^2-1$ columns and any of the next $L^2-1$ columns at the same time, meaning that both spin subsets do not interact with each other.

Let us now inspect each subsystem in detail. Without loss of generality, we consider the first $L^2 - 1$ columns and the first $L^2$ rows of $Z$, obtaining the matrix
\begin{equation}
\label{eq:identitywithones}
\begin{pmatrix}
I\\
\hline
1 \cdots 1
\end{pmatrix},
\end{equation}
which corresponds to the terms as described in \cref{thm:result}. Equivalently to \cref{lem:equivalencelemma}, let $\tilde H$ be the Hamiltonian associated to the $Z$ matrix from \cref{eq:identitywithones}---taking the $X$ matrix as identically zero. Thus, conjugating $\tilde H$ by
\[
U := \CX(L-2,L-1)\CX(L-3,L-2)\dotsm\CX(1, 2)
\]
we obtain the following $Z$ matrix:
\begin{equation}
\label{eq:matrixising}
\begin{pmatrix}
1 & 0 & 0 & 0 & \cdots & 0 & 0\\
1 & 1 & 0 & 0 & \cdots & 0 & 0\\
0 & 1 & 1 & 0 & \cdots & 0 & 0\\
0 & 0 & 0 & 0 & \cdots & 1 & 1\\
0 & 0 & 0 & 0 & \cdots & 0 & 1\\
\end{pmatrix},
\end{equation}
which corresponds to an Ising chain with a magnetic field on both ends.

This way, we can see that the star interactions get mapped to a classical Ising chain and the plaquette interactions get mapped to another classical Ising chain decoupled from the first one. 

As the sign vector $s$ associated to the original Hamiltonian $H$ remains intact after the algorithm, there are no sign changes in any of its terms. In particular, as discussed in \cref{rem:groundstates}, when $J_v = J_p = 1$ for every $v \in V_L$ and every $p \subset \mathcal{E}_L$, the final two Ising chains are ferromagnetic, thus having a unique ground state. Therefore, the last two all-zero columns of $Z$ show that the ground state degeneracy of the final Hamiltonian is $4$, the same as that of the original toric code \cite{kitaev2003toric,dennis2002quantummemory,Tasaki2020QuantumSystems}. 

\subsection{3D toric code}

We also consider 3D toric code model \cite{nussinov2008toric,castelnovo2008threetoric}, which is defined on a regular cubic lattice $(V_L, \mathcal{E}_L)$ on the three-torus $\mathbb{S}_L \times \mathbb{S}_L \times \mathbb{S}_L$ where the vertices have integer coordinates. Again, one spin is placed at the midpoint of every edge of the lattice (see \cref{fig:3Dtoriclattice}). Thus, each 3D toric code system defined on $\mathbb{S}_L \times \mathbb{S}_L \times \mathbb{S}_L$ contains $3 L^3$ spins.

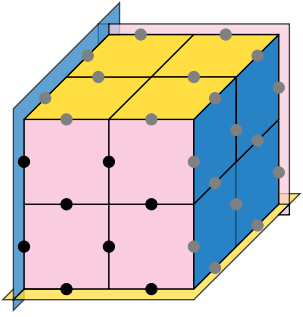
\begin{figure}
    \centering
    \begin{tikzpicture}[scale=1]
    \filldraw[fill=VioletRed!30, opacity=0.75] (1.411, 3.81) -- (1.27, 3.81) -- (1.27, 3.528) -- (1.411, 3.669) -- (1.411, 3.81);
    \filldraw[fill=RoyalBlue!80, opacity=0.75] (0.141, 0.141) -- (0.141, 0) -- (0.282, 0.141) -- (0.282, 0.282) -- (0.141, 0.141);
    \filldraw[fill=VioletRed!30] (0.282, 1.411) rectangle (1.411, 0.282);
    \filldraw[fill=VioletRed!30] (1.411, 1.411) rectangle (2.54, 0.282);
    \filldraw[fill=VioletRed!30] (0.282, 2.54) rectangle (1.411, 1.411);
    \filldraw[fill=VioletRed!30] (1.411, 2.54) rectangle (2.54, 1.411);
    \draw (0.847, 3.104) -- (1.411, 3.669) -- (3.669, 3.669) -- (3.104, 3.104);
    \filldraw[fill=RoyalBlue!80] (2.54, 1.411) -- (3.104, 1.976) -- (3.104, 3.104) -- (2.54, 2.54) -- (2.54, 1.411);
    \filldraw[fill=RoyalBlue!80] (2.54, 0.282) -- (3.104, 0.847) -- (3.104, 1.976) -- (2.54, 1.411) -- (2.54, 0.282);
    \filldraw[fill=RoyalBlue!80] (3.104, 0.847) -- (3.669, 1.411) -- (3.669, 2.54) -- (3.104, 1.976);
    \filldraw[fill=RoyalBlue!80] (3.669, 2.54) -- (3.669, 3.669) -- (3.104, 3.104) -- (3.104, 1.976) -- (3.669, 2.54);
    \filldraw[fill=Goldenrod, opacity=0.75] (2.54, 0.141) -- (3.951, 1.552) -- (3.669, 1.552) -- (3.669, 1.411) -- (2.54, 0.282) -- (0.282, 0.282) -- (0.141, 0.141) -- (2.54, 0.141);
    \filldraw[fill=Goldenrod, opacity=0.75] (0.141, 0.141) -- (0, 0.141) -- (0.141, 0.282);
    \filldraw[fill=Goldenrod] (0.847, 3.104) -- (1.976, 3.104) -- (1.411, 2.54) -- (0.282, 2.54) -- (0.847, 3.104);
    \filldraw[fill=Goldenrod] (1.411, 3.669) -- (2.54, 3.669) -- (1.976, 3.104) -- (0.847, 3.104) -- (1.411, 3.669);
    \filldraw[fill=Goldenrod] (2.54, 3.669) -- (3.669, 3.669) -- (3.104, 3.104) -- (1.976, 3.104) -- (2.54, 3.669);
    \filldraw[fill=Goldenrod] (1.976, 3.104) -- (3.104, 3.104) -- (2.54, 2.54) -- (1.411, 2.54) -- (1.976, 3.104);
    \node[Mark, Mark_disk, black!50] at (0.847, 2.54) {};
    \node[Mark, Mark_disk, black!50] at (1.976, 2.54) {};
    \node[Mark, Mark_disk] at (0.847, 1.411) {};
    \node[Mark, Mark_disk] at (1.411, 1.976) {};
    \node[Mark, Mark_disk] at (1.976, 1.411) {};
    \node[Mark, Mark_disk] at (0.847, 0.282) {};
    \node[Mark, Mark_disk] at (1.976, 0.282) {};
    \node[Mark, Mark_disk, black!50] at (2.54, 1.976) {};
    \node[Mark, Mark_disk, black!50] at (3.104, 2.399) {};
    \node[Mark, Mark_disk, black!50] at (1.27, 3.104) {};
    \node[Mark, Mark_disk, black!50] at (2.54, 3.104) {};
    \node[Mark, Mark_disk] at (1.411, 0.847) {};
    \node[Mark, Mark_disk, black!50] at (2.54, 0.847) {};
    \node[Mark, Mark_disk, black!50] at (3.104, 1.411) {};
    \node[Mark, Mark_disk, black!50] at (3.387, 3.387) {};
    \node[Mark, Mark_disk, black!50] at (2.822, 2.822) {};
    \node[Mark, Mark_disk, black!50] at (3.387, 2.258) {};
    \node[Mark, Mark_disk, black!50] at (2.822, 1.693) {};
    \node[Mark, Mark_disk, black!50] at (3.387, 1.129) {};
    \node[Mark, Mark_disk, black!50] at (2.822, 0.564) {};
    \filldraw[fill=VioletRed!30] (3.81, 1.411) -- (3.81, 1.27) -- (3.669, 1.27) -- (3.81, 1.411);
    \filldraw[fill=RoyalBlue!80, opacity=0.75] (0.282, 0.282) -- (0.141, 0.141) -- (0.141, 2.681) -- (1.552, 4.092) -- (1.552, 3.669) -- (1.411, 3.669) -- (0.282, 2.54) -- (0.282, 0.282) -- (0.847, 0.282);
    \node[Mark, Mark_disk] at (0.282, 0.847) {};
    \node[Mark, Mark_disk] at (0.282, 1.976) {};
    \node[Mark, Mark_disk, black!50] at (0.564, 2.822) {};
    \node[Mark, Mark_disk, black!50] at (1.129, 3.387) {};
    \filldraw[fill=VioletRed!30, opacity=0.75] (1.411, 3.81) -- (3.81, 3.81) -- (3.81, 1.411) -- (3.669, 1.411) -- (3.669, 3.669) -- (1.411, 3.669) -- (1.411, 3.81);
    \node[Mark, Mark_disk, black!50] at (1.834, 3.669) {};
    \node[Mark, Mark_disk, black!50] at (2.963, 3.669) {};
    \node[Mark, Mark_disk, black!50] at (3.669, 2.963) {};
    \node[Mark, Mark_disk, black!50] at (3.669, 1.834) {};
\end{tikzpicture}
    \caption{Visual representation of a $2 \times 2 \times 2$ toric code lattice. The colors illustrate the periodic boundary conditions.}
    \label{fig:enter-label}
\end{figure}

In the same spirit as for the 2D toric code, for every $v \in V_L$, we denote by $\partial v$ the set of the six spins which lie on the edges adjacent to $v$. We also denote by $p \subset \mathcal{E}_L$ any set of four spins whose corresponding edges form a square in the lattice---note that this time the squares can lie in the $\mathit{xy}$, $\mathit{yz}$ or $\mathit{xz}$ plane. 

The 3D toric code Hamiltonian is given by
\[
H = - \sum_{v \in V_L} J_v A_v - \sum_{p \subset \mathcal{E}_L} J_p B_p
\]
where $J_v, J_p \in \RR$ for every $v \in V_L$ and every $p \subset \mathcal{E}_L$, and
\[
A_v := \bigotimes_{i \in \partial v } \sigma_x^i,\quad B_p := \bigotimes_{i \in p} \sigma_z^i,
\]
are the star and plaquette operators (see \cref{fig:3Dtoricinter}). Thus, the 3D toric code Hamiltonian has $4L^3$ terms. 

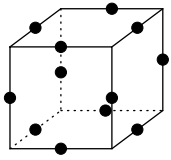
\begin{figure}[]
    \centering
    \begin{tikzpicture}[scale=1.2]
    \draw (0, 1.129) -- (0, 0) -- (1.129, 0) -- (1.129, 1.129) -- (0, 1.129);
    \draw (0, 1.129) -- (0.564, 1.552);
    \draw (1.129, 0) -- (1.693, 0.423);
    \draw (0.564, 1.552) -- (1.693, 1.552);
    \draw (1.693, 1.552) -- (1.693, 0.423);
    \draw[dotted] (0, 0) -- (0.564, 0.423);
    \draw[dotted] (0.564, 0.423) -- (1.693, 0.423);
    \draw[dotted] (0.564, 1.552) -- (0.564, 0.423);
    \node[Mark, Mark_disk] at (0.564, 0) {};
    \node[Mark, Mark_disk] at (0, 0.564) {};
    \node[Mark, Mark_disk] at (1.129, 1.552) {};
    \node[Mark, Mark_disk] at (1.693, 0.988) {};
    \node[Mark, Mark_disk] at (1.129, 0.564) {};
    \node[Mark, Mark_disk] at (1.058, 0.423) {};
    \node[Mark, Mark_disk] at (0.564, 0.847) {};
    \node[Mark, Mark_disk] at (0.282, 1.341) {};
    \draw (1.693, 1.552) -- (1.129, 1.129);
    \node[Mark, Mark_disk] at (1.411, 1.341) {};
    \node[Mark, Mark_disk] at (0.564, 1.129) {};
    \node[Mark, Mark_disk] at (1.411, 0.212) {};
    \node[Mark, Mark_disk] at (0.282, 0.212) {};
\end{tikzpicture}
    \caption{Visual representation of a single block in a 3D toric code lattice.}
    \label{fig:3Dtoriclattice}
\end{figure}

\begin{figure}[]
    \centering
    \begin{tikzpicture}[scale=1]
    \node[anchor=center] at (1.147, 2.096) {$\sigma_x$};
    \draw (0.724, 0.685) -- (1.57, 1.531);
    \draw (1.147, 1.955) -- (1.147, 0.261);
    \draw (0.3, 1.108) -- (1.994, 1.108);
    \node[anchor=center] at (0.159, 1.108) {$\sigma_x$};
    \node[anchor=center] at (1.712, 1.672) {$\sigma_x$};
    \node[anchor=center] at (2.276, 1.108) {$\sigma_x$};
    \node[anchor=center] at (0.583, 0.543) {$\sigma_x$};
    \node[anchor=center] at (1.112, 0.12) {$\sigma_x$};
    \node[anchor=center] at (3.405, 1.108) {$\sigma_z$};
    \node[anchor=center] at (3.969, 1.672) {$\sigma_z$};
    \node[anchor=center] at (4.534, 1.108) {$\sigma_z$};
    \node[anchor=center] at (3.969, 0.543) {$\sigma_z$};
    \draw (3.405, 1.39) -- (3.405, 1.672) -- (3.687, 1.672);
    \draw (4.252, 1.672) -- (4.534, 1.672) -- (4.534, 1.39);
    \draw (3.405, 0.826) -- (3.405, 0.543) -- (3.687, 0.543);
    \draw (4.252, 0.543) -- (4.534, 0.543) -- (4.534, 0.826);
    \draw (5.804, 1.249) -- (5.945, 1.39) -- (6.227, 1.39);
    \draw (5.522, 0.967) -- (5.38, 0.826) -- (5.663, 0.826);
    \draw (6.651, 1.39) -- (7.074, 1.39) -- (6.933, 1.249);
    \draw (6.651, 0.967) -- (6.509, 0.826) -- (6.086, 0.826) -- (6.086, 0.826);
    \node[anchor=center] at (5.663, 1.108) {$\sigma_z$};
    \node[anchor=center] at (6.439, 1.39) {$\sigma_z$};
    \node[anchor=center] at (5.874, 0.826) {$\sigma_z$};
    \node[anchor=center] at (6.792, 1.108) {$\sigma_z$};
    \draw (7.779, 0.826) -- (7.779, 0.402) -- (7.921, 0.543);
    \draw (8.203, 0.826) -- (8.344, 0.967) -- (8.344, 1.249);
    \draw (7.779, 1.249) -- (7.779, 1.531) -- (7.921, 1.672);
    \draw (8.203, 1.955) -- (8.344, 2.096) -- (8.344, 1.672);
    \node[anchor=center] at (7.779, 1.037) {$\sigma_z$};
    \node[anchor=center] at (8.062, 1.813) {$\sigma_z$};
    \node[anchor=center] at (8.344, 1.461) {$\sigma_z$};
    \node[anchor=center] at (8.062, 0.685) {$\sigma_z$};
\end{tikzpicture}
    \caption{Visual representation of the star operator $A_v$ and the three plaquette operators $B_p$---contained in the $yz$, $xy$ and $xz$ plane, respectively---of a 3D toric code.}
    \label{fig:3Dtoricinter}
\end{figure}
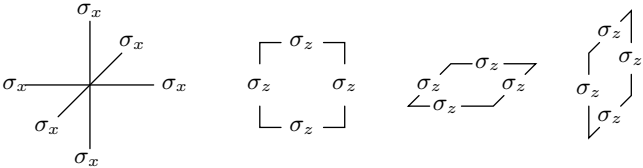

After the pseudo-Gaussian elimination, the rows and columns of $Z$ can be rearranged so that it is of the form 
\[
\begin{pmatrix}
  \bigI & \rvline & \bigzero & \rvline &  \hspace{3pt} \begin{matrix} 0 \, 0 \, 0 \\ \vdots \end{matrix}\\
  \cline{1-3}
  1 \cdots 1 & \rvline & 0 \cdots 0 & \rvline & \vdots \\
  \cline{1-3}
  \bigzero & \rvline & Z' & \rvline & 0 \, 0 \, 0\\
\end{pmatrix},
\]
where the identity matrix is of size $(L^3 -1) \times (L^3-1)$, and the first $L^3$ rows correspond to the star operators.

This structure allows us to map the star operators to an Ising spin chain with $L^3-1$ spins and a magnetic field on each end. 

Although we are able to identify this subsystem, we have systematic description of the terms on the remaining, despite believing that the remaining columns should represent a 3D Ising model, in the spirit of \cite[Table 1]{weinstein2019universality}. Nevertheless, we confirm for all instances up to the sizes we address that, prior to the Gaussian elimination, the resulting classical Hamiltonian for this second subsystem is such that each interaction acts on at most four spins, and for each spin there are at most four interactions acting on it. Thus, this can be efficiently sampled at a high enough temperature \cite{caputo2015approximatetensorizationentropyhigh}. 

As in the 2D toric code case, recall that the sign vector $s$ remains unchanged. Furthermore, there is always exactly three all-zero columns in the final $Z$ matrix, which, as in the 2D toric code case, should be related to the $8$ ground states of the original model when $J_v = J_p = 1$ for every $v \in V_L$ and $p \subset \mathcal{E}_L$ \cite{castelnovo2008threetoric}. 

\subsection{Color code on a honeycomb lattice}
Let us now show our findings for the color code \cite{nussinov2012color,bombin2006color}, which can be defined on any trivalent graph. For the purpose of this work, we consider the color code model on a two-dimensional regular honeycomb lattice $(V_L, \mathcal{E}_L)$ defined on the two-dimensional torus $\mathbb{S}_L \times \mathbb{S}_L$. This time, one spin is considered at each vertex of the lattice (see \cref{fig:colorcodelattice}).

We consider models associated to lattices having $2(L+1) \times 2(L+1)$ honeycombs, for $L \geq 0$, which consist of $8(L+1)^2$ spins. 

Let $\varhexagon \subset \Lambda_L$ denote the set of six spins around a honeycomb plaquette of the lattice. The Hamiltonian of the color code is defined as 
\[
H = - \sum_{\varhexagon \subset \Lambda_L} J_{\varhexagon} X_{\varhexagon} - \sum_{\varhexagon \subset \Lambda_L} \tilde{J}_{\varhexagon} Z_{\varhexagon},
\]
where $J_{\varhexagon}, \tilde J_{\varhexagon} \in \RR$ for every $\varhexagon \subset \Lambda_L$, and
\[
X_{\varhexagon} := \bigotimes_{i \in \varhexagon} \sigma_x^i, \quad Z_{\varhexagon} := \bigotimes_{i \in \varhexagon} \sigma_z^i,
\]
are both plaquette interactions (see \cref{fig:colorcodeinter}).

\begin{figure}[]
    \centering
    \begin{tikzpicture}[scale=1.2]
    \draw[shift={(0.804, 0.282)}, rotate=90] (0, 0) -- (0.423, 0) -- (0.635, 0.367) -- (0.423, 0.733) -- (0, 0.733) -- (-0.212, 0.367) -- (0, 0);
    \node[Mark, Mark_disk] at (0.071, 0.282) {};
    \node[Mark, Mark_disk] at (0.437, 0.07) {};
    \node[Mark, Mark_disk] at (0.804, 0.282) {};
    \node[Mark, Mark_disk] at (0.804, 0.705) {};
    \node[Mark, Mark_disk] at (0.437, 0.917) {};
    \node[Mark, Mark_disk] at (0.071, 0.705) {};
    \draw (1.537, 0.282) -- (1.537, 0.705) -- (1.17, 0.917) -- (0.804, 0.705)(0.804, 0.282) -- (1.17, 0.07) -- (1.537, 0.282);
    \node[Mark, Mark_disk] at (1.17, 0.07) {};
    \node[Mark, Mark_disk] at (1.537, 0.282) {};
    \node[Mark, Mark_disk] at (1.537, 0.705) {};
    \node[Mark, Mark_disk] at (1.17, 0.917) {};
    \draw (1.17, 0.917) -- (1.17, 1.34) -- (0.804, 1.552) -- (0.437, 1.34) -- (0.437, 0.917)(1.17, 0.917) -- (1.17, 0.917);
    \node[Mark, Mark_disk] at (1.17, 1.34) {};
    \node[Mark, Mark_disk] at (0.804, 1.552) {};
    \node[Mark, Mark_disk] at (0.437, 1.34) {};
    \draw (0.437, 1.341) -- (0.071, 1.552);
    \node[Mark, Mark_disk] at (0.071, 1.552) {};
    \node[Mark, Mark_disk] at (1.537, 0.282) {};
    \node[Mark, Mark_disk] at (1.537, 0.705) {};
    \draw (2.27, 0.282) -- (2.27, 0.705) -- (1.904, 0.917) -- (1.537, 0.705)(1.537, 0.282) -- (1.904, 0.07) -- (2.27, 0.282);
    \node[Mark, Mark_disk] at (1.904, 0.07) {};
    \node[Mark, Mark_disk] at (2.27, 0.282) {};
    \node[Mark, Mark_disk] at (2.27, 0.705) {};
    \node[Mark, Mark_disk] at (1.904, 0.917) {};
    \node[Mark, Mark_disk] at (2.27, 0.282) {};
    \node[Mark, Mark_disk] at (2.27, 0.705) {};
    \draw (3.004, 0.282) -- (3.004, 0.705) -- (2.637, 0.917) -- (2.27, 0.705)(2.27, 0.282) -- (2.637, 0.07) -- (3.004, 0.282);
    \node[Mark, Mark_disk] at (2.637, 0.07) {};
    \node[Mark, Mark_disk] at (2.637, 0.917) {};
    \node[Mark, Mark_disk] at (1.904, 0.917) {};
    \node[Mark, Mark_disk] at (2.637, 0.917) {};
    \draw (2.637, 0.917) -- (2.637, 1.34) -- (2.27, 1.552) -- (1.904, 1.34) -- (1.904, 0.917)(2.637, 0.917) -- (2.637, 0.917);
    \node[Mark, Mark_disk] at (2.637, 1.34) {};
    \node[Mark, Mark_disk] at (2.27, 1.552) {};
    \node[Mark, Mark_disk] at (1.904, 1.34) {};
    \draw (1.904, 1.34) -- (1.537, 1.552);
    \draw[shift={(1.537, 1.552)}, yscale=-1] (0, 0) -- (-0.367, 0.212);
    \draw[shift={(3.004, 1.552)}, yscale=-1] (0, 0) -- (-0.367, 0.212);
    \node[Mark, Mark_disk] at (0.071, 1.552) {};
    \node[Mark, Mark_disk] at (0.804, 1.552) {};
    \draw (0.804, 1.552) -- (0.804, 1.976) -- (0.438, 2.187) -- (0.071, 1.976) -- (0.071, 1.552)(0.804, 1.552) -- (0.804, 1.552);
    \node[Mark, Mark_disk] at (0.804, 1.976) {};
    \node[Mark, Mark_disk] at (0.438, 2.187) {};
    \node[Mark, Mark_disk] at (0.071, 1.976) {};
    \node[Mark, Mark_disk] at (2.27, 1.552) {};
    \draw (3.004, 1.552) -- (3.004, 1.975) -- (2.637, 2.187) -- (2.27, 1.975) -- (2.27, 1.552)(3.004, 1.552) -- (3.004, 1.552);
    \node[Mark, Mark_disk] at (2.27, 1.975) {};
    \draw (2.27, 1.975) -- (1.904, 2.187);
    \draw[shift={(1.171, 2.187)}, yscale=-1] (0, 0) -- (-0.367, 0.212);
    \draw[shift={(1.904, 2.187)}, yscale=-1] (0, 0) -- (-0.367, 0.212);
    \draw (1.537, 1.975) -- (1.17, 2.187);
    \draw (1.536, 1.976) -- (1.537, 1.552) -- (1.537, 1.552);
    \node[Mark, Mark_disk] at (1.536, 1.976) {};
    \node[Mark, Mark_disk] at (1.537, 1.552) {};
    \draw (2.637, 2.187) -- (2.637, 2.61)(1.904, 2.61) -- (1.904, 2.187)(2.637, 2.187) -- (2.637, 2.187);
    \draw (1.171, 2.187) -- (1.171, 2.61)(0.438, 2.61) -- (0.438, 2.187)(1.171, 2.187) -- (1.171, 2.187);
    \draw[shift={(0.071, 0.547)}, rotate=60] (0, 0) -- (-0.141, 0);
    \draw[shift={(0.141, 0.425)}, rotate=-60, yscale=-1] (0, 0) -- (-0.141, 0);
    \draw[shift={(0.071, 1.822)}, rotate=60] (0, 0) -- (-0.141, 0);
    \draw[shift={(0.141, 1.7)}, rotate=-60, yscale=-1] (0, 0) -- (-0.141, 0);
    \draw[shift={(3.004, 1.82)}, rotate=60] (0, 0) -- (-0.141, 0);
    \draw[shift={(3.074, 1.697)}, rotate=-60, yscale=-1] (0, 0) -- (-0.141, 0);
    \draw[shift={(3.004, 0.547)}, rotate=60] (0, 0) -- (-0.141, 0);
    \draw[shift={(3.074, 0.425)}, rotate=-60, yscale=-1] (0, 0) -- (-0.141, 0);
    \draw[dotted] (0.438, 2.61) -- (2.637, 2.61);
    \draw[dotted] (0.437, 0.07) -- (2.637, 0.071);
    \draw[shift={(0.895, 0.071)}, rotate=-30] (0, 0) -- (-0.141, 0);
    \draw[shift={(0.773, 0)}, rotate=-150, yscale=-1] (0, 0) -- (-0.141, 0);
    \draw[shift={(0.825, 0.071)}, rotate=-30] (0, 0) -- (-0.141, 0);
    \draw[shift={(0.702, 0)}, rotate=-150, yscale=-1] (0, 0) -- (-0.141, 0);
    \draw[shift={(2.307, 0.071)}, rotate=-30] (0, 0) -- (-0.141, 0);
    \draw[shift={(2.185, 0)}, rotate=-150, yscale=-1] (0, 0) -- (-0.141, 0);
    \draw[shift={(2.236, 0.071)}, rotate=-30] (0, 0) -- (-0.141, 0);
    \draw[shift={(2.114, 0)}, rotate=-150, yscale=-1] (0, 0) -- (-0.141, 0);
    \draw[shift={(2.305, 2.611)}, rotate=-30] (0, 0) -- (-0.141, 0);
    \draw[shift={(2.183, 2.54)}, rotate=-150, yscale=-1] (0, 0) -- (-0.141, 0);
    \draw[shift={(2.235, 2.611)}, rotate=-30] (0, 0) -- (-0.141, 0);
    \draw[shift={(2.113, 2.54)}, rotate=-150, yscale=-1] (0, 0) -- (-0.141, 0);
    \draw[shift={(0.895, 2.61)}, rotate=-30] (0, 0) -- (-0.141, 0);
    \draw[shift={(0.773, 2.54)}, rotate=-150, yscale=-1] (0, 0) -- (-0.141, 0);
    \draw[shift={(0.824, 2.61)}, rotate=-30] (0, 0) -- (-0.141, 0);
    \draw[shift={(0.702, 2.54)}, rotate=-150, yscale=-1] (0, 0) -- (-0.141, 0);
    \node[Mark, Mark_disk, black!50] at (3.004, 0.282) {};
    \node[Mark, Mark_disk, black!50] at (3.004, 0.705) {};
    \node[Mark, Mark_disk, black!50] at (3.004, 1.552) {};
    \node[Mark, Mark_disk, black!50] at (3.004, 1.552) {};
    \node[Mark, Mark_disk, black!50] at (3.004, 1.975) {};
    \node[Mark, Mark_disk, black!50] at (0.438, 2.61) {};
    \node[Mark, Mark_disk, black!50] at (1.171, 2.61) {};
    \node[Mark, Mark_disk, black!50] at (1.904, 2.61) {};
    \node[Mark, Mark_disk, black!50] at (2.637, 2.61) {};
\end{tikzpicture}
    \caption{Visual representation of a $4 \times 4$ honeycomb lattice. We use single and double arrows to show the periodic boundary conditions. Note that the dotted lines are auxiliary and do not correspond to the edges of the graph.}
    \label{fig:colorcodelattice}
\end{figure}

 \begin{figure}[]
    \centering
    \begin{tikzpicture}[scale=1]
    \draw (4.557, 1.412) -- (4.557, 0.791);
    \draw (2.983, 1.413) -- (2.983, 0.792);
    \draw[shift={(3.895, 1.939)}, rotate=60] (0, 0) -- (0, -0.621);
    \draw[shift={(3.107, 0.577)}, rotate=60] (0, 0) -- (0, -0.621);
    \draw[shift={(3.108, 1.629)}, rotate=120] (0, 0) -- (0, -0.621);
    \draw[shift={(3.894, 0.265)}, rotate=120] (0, 0) -- (0, -0.621);
    \node[anchor=center] at (2.94, 1.611) {$\sigma_z$};
    \node[anchor=center] at (4.621, 1.611) {$\sigma_z$};
    \node[anchor=center] at (3.765, 2.087) {$\sigma_z$};
    \node[anchor=center] at (2.94, 0.583) {$\sigma_z$};
    \node[anchor=center] at (4.621, 0.583) {$\sigma_z$};
    \node[anchor=center] at (3.765, 0.102) {$\sigma_z$};
    \draw (1.806, 1.412) -- (1.806, 0.791);
    \draw (0.232, 1.413) -- (0.232, 0.792);
    \draw[shift={(1.145, 1.938)}, rotate=60] (0, 0) -- (0, -0.621);
    \draw[shift={(0.356, 0.577)}, rotate=60] (0, 0) -- (0, -0.621);
    \draw[shift={(0.357, 1.629)}, rotate=120] (0, 0) -- (0, -0.621);
    \draw[shift={(1.143, 0.265)}, rotate=120] (0, 0) -- (0, -0.621);
    \node[anchor=center] at (0.189, 1.611) {$\sigma_x$};
    \node[anchor=center] at (1.87, 1.611) {$\sigma_x$};
    \node[anchor=center] at (1.014, 2.087) {$\sigma_x$};
    \node[anchor=center] at (0.189, 0.583) {$\sigma_x$};
    \node[anchor=center] at (1.87, 0.583) {$\sigma_x$};
    \node[anchor=center] at (1.014, 0.102) {$\sigma_x$};
\end{tikzpicture}
    \caption{Visual representation of the $X_{\varhexagon}$ and $Z_{\varhexagon}$ interactions of the color code on a 2D honeycomb lattice.}
    \label{fig:colorcodeinter}
\end{figure}

In this case, the final $Z$ matrix obtained after the pseudo-Gaussian elimination depends on the size of the system. Indeed, if $L \textup{ mod } 3 = 0$ or $1$ then the final $Z$ matrix contains exactly one non-zero element in every row and every column, which implies that the final Hamiltonian is non-interacting and each of its terms acts non-trivially on exactly one spin. Note that this result is even stronger than the duality shown in \cite{weinstein2019universality}. 

Lastly, when $L \textup{ mod } 3 = 2$, the rows and columns of the $Z$ matrix can be rearranged after performing the pseudo-Gaussian elimination so that $Z$ is of the form
\[
\begin{pmatrix}
  \bigI & \rvline & \bigzero & \rvline & 0000\\
  \cline{1-3}
  1 \cdots 1 1 \cdots 1 0 \cdots 0 & \rvline & 0 \quad \ \quad \cdots \quad  \quad \ 0 & \rvline & \vdots \\
  0 \cdots 0 1 \cdots 1 1 \cdots 1 & \rvline & 0 \quad \ \quad \cdots \quad  \quad \ 0 & \rvline & \vdots \\
  \cline{1-3}
  \bigzero & \rvline & \bigI & \rvline & \vdots\\
  \cline{1-3}
  0 \quad \ \quad \cdots \quad  \quad \ 0 & \rvline & 1 \cdots 1 1 \cdots 1 0 \cdots 0 & \rvline & \vdots \\
  0 \quad \ \quad \cdots \quad  \quad \ 0 & \rvline & 0 \cdots 0 1 \cdots 1 1 \cdots 1 & \rvline & 0000\\
\end{pmatrix},
\]
where we note that the first $4(L+1)$ rows correspond to the $X_{\varhexagon}$ operators and the last $4(L+1)$ rows correspond to the $Z_{\varhexagon}$ operators. In this case, there are always exactly four all-zero columns. 

This structure allows us to immediately identify two decoupled models, each corresponding to a different type of interaction. 

\begin{figure}[]
    \centering
    \begin{tikzpicture}[scale=1.5]
    \node[Mark, Mark_disk] at (0.195, 0.218) {};
    \node[Mark, Mark_disk] at (0.76, 0.218) {};
    \node[Mark, Mark_disk] at (1.324, 0.218) {};
    \node[Mark, Mark_disk] at (1.889, 0.218) {};
    \node[Mark, Mark_disk] at (2.112, 0.714) {};
    \node[Mark, Mark_disk] at (1.163, 0.706) {};
    \draw 
    (0.478, 0.218) ellipse[x radius=0.478, y radius=0.148];
    \draw 
    (1.046, 0.219) ellipse[x radius=0.478, y radius=0.148];
    \draw 
    (1.602, 0.222) ellipse[x radius=0.478, y radius=0.148];
    \draw 
    (1.992, 0.445) ellipse[x radius=0.478, y radius=0.148, rotate=67.388];
    \draw 
    (1.891, 0.92) ellipse[x radius=0.478, y radius=0.148, rotate=137.569];
    \draw 
    (0.96, 0.457) ellipse[x radius=0.478, y radius=0.148, rotate=-131.936];
    \draw 
    (1.398, 0.896) ellipse[x radius=0.478, y radius=0.148, rotate=-141.298];
    \node[Mark, Mark_disk] at (1.657, 1.111) {};
\end{tikzpicture}
    \caption{Visual representation of a $7$-spin lasso Ising chain. The ellipses denote $\sigma_z \otimes \sigma_z$ interactions. We omit any possible magnetic fields for readability.}
    \label{fig:lassoising}
\end{figure}

In fact, each model is a \textit{lasso Ising chain}, i.e. an Ising chain where the last spin of the chain has an extra interaction with a spin situated in the bulk of the chain (see \cref{fig:lassoising}). Indeed, let us consider the columns and the rows of one of the two decoupled systems:
\[
\begin{pmatrix}
 & & & & I & & & & \\
\hline
1 & \cdots &  1 &  1 & \cdots & 1 & 0 & \cdots & 0 \\
0 & \cdots &  0 &  1 & \cdots & 1 & 1 & \cdots & 1 \\
\end{pmatrix}.
\]

Once again, by applying $\CX$ gates as discussed in the 2D toric code case, we arrive at
\[
\begin{pmatrix}
& & & & A & & & & & \\
 \hline
0 & \cdots &  0 &  0 & \cdots & 1 & 0 & \cdots & 0  & 0 \\
0 & \cdots &  1 &  0 & \cdots & 0 & 0 & \cdots & 0 & 1 \\
\end{pmatrix},
\]
where
\[
A := 
\begin{pmatrix}
1 & 0 & 0 & 0 & & & \cdots & & & 0\\
 1 & 1 & 0 & 0 & & & \cdots & & & 0\\
 0 & 1 & 1 & 0 & & & \cdots & & & 0\\
 \vdots & & & & & & \ddots& & & \vdots\\
 0 & 0 & 0 & 0 & & & \cdots & & 1& 1    
\end{pmatrix}.
\]
This implies that the final Hamiltonian corresponds to an Ising chain with a magnetic field in one spin, and an extra 2-body interaction, which creates the lasso shape of the model. The Davies generator associated to this Hamiltonian is shown to have a positive MLSI at any positive temperature in \cref{sec:lasso}, thus yielding efficient sampling. 

In this case, the number of ground states varies depending on the system size; when $L \text{ mod } 3 \neq 2$ our findings show that the number of ground states is $2^a$, where $a$ is the number of zero couplings $J_{\varhexagon}, \tilde{J}_{\varhexagon}$. In the other case the four all-zero columns should be related to the existence of $2^4$ ground states \cite{mehdicolorcode}. 

\subsection{Rotated surface code}

Next, we consider the rotated surface code, which is a slight generalization of the defected toric code studied in \cite{hwang2024gibbsstatepreparationcommuting}. 

This model is defined on a regular square lattice $(V_L, \mathcal{E}_L)$ on $[0, L]^2$, which corresponds to considering open boundary conditions. One spin is situated at each vertex $v \in V_L$. Furthermore, the lattice is ``colored'' in a chessboard manner, using blue and red (see \cref{fig:rotatedsurfacelattice}). The interactions of the model will depend on the color of each square. 

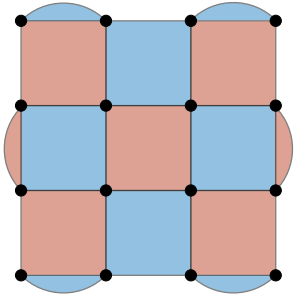
\begin{figure}[]
    \centering
    \begin{tikzpicture}[scale=1]
    \filldraw[fill=RoyalBlue!80, opacity=0.5] (1.352, 3.62) rectangle (2.481, 2.492);
    \filldraw[fill=RoyalBlue!80, opacity=0.5] (0.223, 0.234) arc[start angle=-135, end angle=-45, radius=0.798];
    \filldraw[fill=RoyalBlue!80, opacity=0.5] (2.481, 0.234) arc[start angle=-135, end angle=-45, radius=0.798];
    \filldraw[fill=RoyalBlue!80, opacity=0.5] (0.223, 3.622) arc[start angle=-135, end angle=-45, x radius=0.798, y radius=-0.798];
    \filldraw[fill=RoyalBlue!80, opacity=0.5] (2.483, 3.629) arc[start angle=-135, end angle=-45, x radius=0.798, y radius=-0.798];
    \filldraw[fill=BrickRed!90, opacity=0.5] (0.234, 2.481) arc[start angle=-225, end angle=-135, radius=0.798];
    \filldraw[fill=BrickRed!90, opacity=0.5] (3.598, 1.352) arc[start angle=-45, end angle=45, radius=0.798];
    \filldraw[fill=BrickRed!90, opacity=0.5] (0.223, 1.363) rectangle (1.352, 0.234);
    \node[Mark, Mark_disk] at (0.223, 0.234) {};
    \filldraw[fill=BrickRed!90, opacity=0.5] (2.481, 3.62) rectangle (3.609, 2.492);
    \filldraw[fill=RoyalBlue!80, opacity=0.5] (1.352, 1.363) rectangle (2.481, 0.234);
    \filldraw[fill=BrickRed!90, opacity=0.5] (2.481, 1.363) rectangle (3.609, 0.234);
    \filldraw[fill=BrickRed!90, opacity=0.5] (1.352, 2.492) rectangle (2.481, 1.363);
    \filldraw[fill=RoyalBlue!80, opacity=0.5] (2.481, 2.492) rectangle (3.609, 1.363);
    \node[Mark, Mark_disk] at (2.481, 1.363) {};
    \node[Mark, Mark_disk] at (2.481, 0.234) {};
    \node[Mark, Mark_disk] at (1.352, 0.234) {};
    \node[Mark, Mark_disk] at (2.481, 2.492) {};
    \filldraw[fill=BrickRed!90, opacity=0.5] (0.223, 3.62) rectangle (1.352, 2.492);
    \node[Mark, Mark_disk] at (1.352, 3.62) {};
    \node[Mark, Mark_disk] at (2.481, 3.62) {};
    \node[Mark, Mark_disk] at (3.609, 2.492) {};
    \node[Mark, Mark_disk] at (3.609, 3.62) {};
    \node[Mark, Mark_disk] at (3.609, 0.234) {};
    \node[Mark, Mark_disk] at (3.609, 1.363) {};
    \node[Mark, Mark_disk] at (0.223, 3.62) {};
    \filldraw[fill=RoyalBlue!80, opacity=0.5] (0.223, 2.492) rectangle (1.352, 1.363);
    \node[Mark, Mark_disk] at (0.223, 2.492) {};
    \node[Mark, Mark_disk] at (0.223, 1.363) {};
    \node[Mark, Mark_disk] at (1.352, 1.363) {};
    \node[Mark, Mark_disk] at (1.352, 2.492) {};
\end{tikzpicture}
    \caption{Visual representation of a $3 \times 3$ rotated surface code lattice. Some circular segments are added to represent the interactions considered at the boundary of the lattice.}
    \label{fig:rotatedsurfacelattice}
\end{figure}

In order to define the Hamiltonian associated to the model, let us denote by $p \subset \Lambda_L$ any four-spin set corresponding to the corners of a square in the lattice. Let us also denote by $s_h$ (resp. $s_v$) any set of two adjacent spins situated at the top or bottom (resp. left or right) boundaries of the lattice. We say that $p$ is red (resp. blue) if its corresponding square is red (resp. blue). Similarly, we say that $s_v$ or $s_h$ is red (resp. blue) if the edge connecting them belongs to a red (resp. blue) square.

We define $H$ as
\begin{align*}
H &= -\sum_{\substack{p \subset \Lambda_L\\p \text{ red}}} J_p A_p - \sum_{\substack{p \subset \Lambda_L\\p \text{ blue}}} \tilde{J}_p B_p 
\\&\quad - \sum_{\substack{s_v \subset \Lambda_L\\s_v \text{ red}}} J_{s_v} C_{s_v} - \sum_{\substack{s_h \subset \Lambda_L\\s_v \text{ blue}}} \tilde{J}_{s_h} D_{s_h},
\end{align*}
where $J_p, \tilde J_p, J_{s_v}, \tilde J_{s_h} \in \RR$ and 
\begin{align*}
A_p := \bigotimes_{i \in p} \sigma^i_x, \quad B_p := \bigotimes_{i \in p} \sigma^i_z\\
C_{s_v} := \bigotimes_{i \in s_v} \sigma^i_x,\quad D_{s_h} := \bigotimes_{i \in s_v} \sigma^i_z.
\end{align*}
See \cref{fig:rotatedsurfaceinter} for a visual representation of the interactions. 

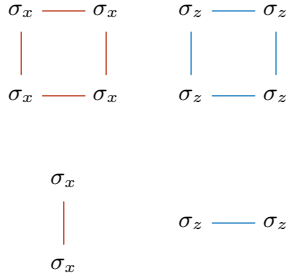
\begin{figure}[]
    \centering
    \begin{tikzpicture}[scale=1]
    \draw[RoyalBlue!80] (2.417, 2.66) -- (2.417, 3.225);
    \draw[RoyalBlue!80] (2.699, 3.507) -- (3.264, 3.507);
    \draw[RoyalBlue!80] (3.546, 3.225) -- (3.546, 2.66);
    \draw[RoyalBlue!80] (2.699, 2.378) -- (3.264, 2.378);
    \draw[BrickRed!90] (0.442, 3.507) -- (1.006, 3.507);
    \draw[BrickRed!90] (1.288, 3.225) -- (1.288, 2.66);
    \draw[BrickRed!90] (1.006, 2.378) -- (0.442, 2.378);
    \draw[BrickRed!90] (0.159, 2.66) -- (0.159, 3.225);
    \node[anchor=center] at (2.417, 2.378) {$\sigma_z$};
    \node[anchor=center] at (2.417, 3.507) {$\sigma_z$};
    \node[anchor=center] at (3.546, 3.507) {$\sigma_z$};
    \node[anchor=center] at (3.546, 2.378) {$\sigma_z$};
    \node[anchor=center] at (0.159, 2.378) {$\sigma_x$};
    \node[anchor=center] at (0.159, 3.507) {$\sigma_x$};
    \node[anchor=center] at (1.288, 3.507) {$\sigma_x$};
    \node[anchor=center] at (1.288, 2.378) {$\sigma_x$};
    \draw[BrickRed!90] (0.724, 0.402) -- (0.724, 0.967);
    \node[anchor=center] at (0.724, 0.12) {$\sigma_x$};
    \node[anchor=center] at (0.724, 1.249) {$\sigma_x$};
    \draw[RoyalBlue!80] (2.699, 0.685) -- (3.264, 0.685);
    \node[anchor=center] at (2.417, 0.685) {$\sigma_z$};
    \node[anchor=center] at (3.546, 0.685) {$\sigma_z$};
\end{tikzpicture}
    \caption{Visual representation of the $A_p$, $B_p$ (top row), $C_{s_v}$ and $D_{s_h}$ (bottom row) operators.}
    \label{fig:rotatedsurfaceinter}
\end{figure}

After performing the Gaussian elimination on the final $Z$ matrix of the tableau, we encounter an analogous situation to the one reached when considering the color code with $L \text{ mod } 3 \neq 2$. In particular, the final $Z$ matrix contains exactly one all-zero column, whilst the rest of the columns contains exactly one non-zero element. Furthermore, there is exactly one non-zero element in every row of $Z$, which in particular implies that the final Hamiltonian is non-interacting and single-site. 

Moreover, the above structure implies that the number of ground states of the model is $2^{1 + a}$, where $a$ is the number of zero couplings $J_p, \tilde J_p, J_{s_v}, \tilde J_{s_h}$. 

\subsection{Haah's code}

Next, we study Haah's code \cite{haah2011localstab,haah2013quantumself}. This model is defined on the same cubic lattice as the 3D toric code. This time, we consider two spins in each vertex $v \in V_L$ of the lattice (see \cref{fig:haahlattice}). Since we consider a lattice in $\mathbb{S}_L \times \mathbb{S}_L \times \mathbb{S}_L$, the system will contain $2 L^3$ spins. 

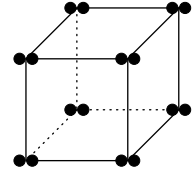
\begin{figure}
    \centering
    \begin{tikzpicture}[scale=1.2]
    \draw (0.494, 1.129) rectangle (0.494, 1.129);
    \draw (0.494, 0.564) rectangle (0.494, 0.564);
    \draw (0.071, 1.129) rectangle (1.199, 0);
    \draw (0.071, 1.129) -- (0.635, 1.693);
    \draw (1.199, 1.129) -- (1.764, 1.693);
    \draw (1.764, 0.564) -- (1.199, 0);
    \draw[dotted] (0.635, 0.564) -- (0.071, 0);
    \node[Mark, Mark_disk] at (0, 1.129) {};
    \node[Mark, Mark_disk] at (0.141, 1.129) {};
    \node[Mark, Mark_disk] at (0.564, 1.693) {};
    \node[Mark, Mark_disk] at (0.706, 1.693) {};
    \node[Mark, Mark_disk] at (1.693, 1.693) {};
    \node[Mark, Mark_disk] at (1.834, 1.693) {};
    \node[Mark, Mark_disk] at (0.564, 0.564) {};
    \node[Mark, Mark_disk] at (0.706, 0.564) {};
    \node[Mark, Mark_disk] at (0, 0) {};
    \node[Mark, Mark_disk] at (0.141, 0) {};
    \node[Mark, Mark_disk] at (1.129, 0) {};
    \node[Mark, Mark_disk] at (1.27, 0) {};
    \node[Mark, Mark_disk] at (1.129, 1.129) {};
    \node[Mark, Mark_disk] at (1.27, 1.129) {};
    \node[Mark, Mark_disk] at (1.693, 0.564) {};
    \node[Mark, Mark_disk] at (1.834, 0.564) {};
    \draw (0.635, 1.693) -- (1.764, 1.693);
    \draw[dotted] (0.635, 1.693) -- (0.635, 0.564);
    \draw[dotted] (0.635, 0.564) -- (1.764, 0.564);
    \draw (1.764, 0.564) -- (1.764, 1.693);
\end{tikzpicture}
    \caption{Visual representation of a single cube in Haah's code model.}
    \label{fig:haahlattice}
\end{figure}

The Hamiltonian of Haah's code is defined as
\[
H = - \sum_{v \in V_L} J_v A_v - \sum_{v \in V_L} \tilde J_v B_v,
\]
where $J_v, \tilde J_v \in \RR$ and both $A_v$ and $B_v$ act on every spin of a single block of the lattice, whose bottom down leftmost vertex is $v$ (see \cref{fig:haahinter}).

As done in \cite{weinstein2019universality}, we only consider those values of $L$ for which the ground state degeneracy of the Hamiltonian is $4$ when $J_v = \tilde J_v = 1$ for every $v \in V_L$, which correspond to odd values of $L$ such that $L \text{ mod } 4^p - 1 \neq 0$ for every $p \geq 2$. 

\begin{figure}
    \centering
    \begin{tikzpicture}[scale=1]
    \node[anchor=center] at (1.013, 0.711) {$\mathds{1}\mathds{1}$};
    \node[anchor=center] at (0.978, 1.84) {$\mathds{1}\sigma_z$};
    \node[anchor=center] at (0.272, 1.275) {$\sigma_z\mathds{1}$};
    \node[anchor=center] at (1.719, 1.24) {$\sigma_z\sigma_z$};
    \draw (0.449, 1.134) -- (0.449, 0.288);
    \draw (0.59, 0.146) -- (1.295, 0.146);
    \draw (0.59, 1.275) -- (1.295, 1.275);
    \draw (1.578, 1.134) -- (1.578, 0.288);
    \draw (0.59, 1.416) -- (0.872, 1.699);
    \draw (1.719, 1.416) -- (2.001, 1.699);
    \draw (1.719, 0.288) -- (2.001, 0.57);
    \draw (2.142, 1.699) -- (2.142, 0.852);
    \draw (1.154, 1.84) -- (2.001, 1.84);
    \draw[dotted] (0.59, 0.288) -- (0.872, 0.57);
    \draw[dotted] (1.013, 1.699) -- (1.013, 0.852);
    \draw[dotted] (1.154, 0.711) -- (2.001, 0.711);
    \node[anchor=center] at (3.835, 0.711) {$\sigma_x\sigma_x$};
    \node[anchor=center] at (3.13, 1.275) {$\sigma_x\mathds{1}$};
    \node[anchor=center] at (4.4, 1.275) {$\mathds{1}\mathds{1}$};
    \node[anchor=center] at (5.105, 0.711) {$\mathds{1}\sigma_x$};
    \draw (3.412, 0.146) -- (4.118, 0.146);
    \draw (3.412, 1.275) -- (4.118, 1.275);
    \draw (4.4, 1.134) -- (4.4, 0.288);
    \draw (3.412, 1.416) -- (3.694, 1.699);
    \draw (4.541, 1.416) -- (4.823, 1.699);
    \draw (4.541, 0.288) -- (4.823, 0.57);
    \draw (3.977, 1.84) -- (4.823, 1.84);
    \draw[dotted] (3.412, 0.288) -- (3.694, 0.57);
    \draw[dotted] (3.835, 1.699) -- (3.835, 0.852);
    \draw[dotted] (3.977, 0.711) -- (4.823, 0.711);
    \draw (3.271, 1.134) -- (3.271, 0.288);
    \draw (4.964, 1.699) -- (4.964, 0.852);
    \node[anchor=center] at (2.248, 0.711) {$\mathds{1}\sigma_z$};
    \node[anchor=center] at (2.354, 1.84) {$\sigma_z\mathds{1}$};
    \node[anchor=center] at (0.414, 0.146) {$\mathds{1}\sigma_z$};
    \node[anchor=center] at (1.684, 0.146) {$\sigma_z\mathds{1}$};
    \node[anchor=center] at (5.105, 1.84) {$\sigma_x\mathds{1}$};
    \node[anchor=center] at (4.47, 0.146) {$\sigma_x\mathds{1}$};
    \node[anchor=center] at (3.694, 1.84) {$\mathds{1}\sigma_x$};
    \node[anchor=center] at (3.13, 0.146) {$\mathds{1}\sigma_x$};
\end{tikzpicture}
    \caption{Visual representation of the $A_v$ and $B_v$ cube interactions of Haah's code}
    \label{fig:haahinter}
\end{figure}
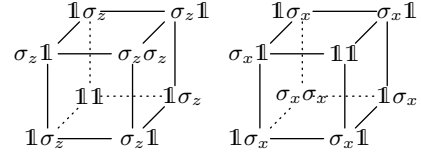

After performing the pseudo-Gaussian elimination, the rows and columns of the $Z$ matrix can be arranged so that it is of the same form as in \cref{eq:matrixising}, where the first $L^3$ rows of the tableau correspond to $B_v$ operators, and the last $L^3$ rows correspond to $A_v$ operators. 

This allows us to conclude that the $B_v$ interactions get mapped to one Ising chain chain and the $A_v$ interactions get mapped to the other. 

Once again, since the sign vector remains unchanged, we know that when $J_v = \tilde J_v = 1$ for every $v$, this corresponds to two decoupled ferromagnetic Ising chains with a ferromagnetic magnetic field at both ends. Thus, since both chains have a unique ground state, the final two all-zero columns of $Z$ correspond to the four ground states of the model. 

\subsection{X-cube}

Another model considered is the X-cube \cite{weinstein2020Xcube,vijay2016fracton}. This time, motivated by \cite{weinstein2019universality}, we consider cylindrical boundary conditions, which correspond to defining the model on a cubic lattice $(V_L, \mathcal{E}_L)$ in $\mathbb{S}_L \times \mathbb{S}_L \times [0, L]$, where the vertices have integer coordinates. As in the 3D toric code, one spin is placed at the midpoint of every edge in $\mathcal{E}_L$. This time, due to the different boundary conditions, the system contains $3L^3 + 2L^2$ spins. 

Let us denote by $b \subset \mathcal{E}_L ( \equiv \Lambda_L)$ the set of $12$ spins in any given cube of the lattice. Furthermore, given any vertex of the lattice $v \in V_L$, let us denote by $\partial^{\mathit{xy}} v$ the set four spins situated in the edges adjacent to $v$ that lie in the $\mathit{xy}$ plane. Similarly, we define $\partial^{\mathit{yz}} v$ and $\partial^{\mathit{xz}} v$. We define the X-cube Hamiltonian as
\[
H = - \sum_{b \subset \mathcal{E}_L} J_b A_b - \sum_{\substack{v \in V_L\\v \in \mathbb{S}_L \times \mathbb{S}_L \times (0,L)}} \big( J_v B_v + \tilde{J}_v C_v + \hat{J}_v D_v \big),
\]
where $J_b, J_b, \tilde J_v, \hat J_v \in \RR$, and
\begin{align*}
A_b &:= \bigotimes_{i \in b} \sigma_x^i,\quad & B_v &:= \bigotimes_{i \in \partial^{\mathit{yz}} v} \sigma^i_z,\\
C_v &:= \bigotimes_{i \in \partial^{\mathit{xy}} v} \sigma^i_z, \quad & D_v &:= \bigotimes_{i \in \partial^{\mathit{xz}} v} \sigma^i_z.
\end{align*}
Note that $\mathbb{S}_L \times \mathbb{S}_L \times (0,L)$ contains the whole lattice except for the vertices at the boundary in the direction in which we consider open boundary conditions. See \cref{fig:xcubeinter} for a visual representation of the operators. 

\begin{figure}
    \centering
    \begin{tikzpicture}[scale=1]
    \draw (0.159, 0.967) -- (0.159, 1.249) -- (0.724, 1.249);
    \draw (1.006, 1.249) -- (1.288, 1.249) -- (1.288, 0.967);
    \draw (1.006, 0.12) -- (1.288, 0.12) -- (1.288, 0.402);
    \draw (1.429, 0.261) -- (1.288, 0.12) -- (0.865, 0.12) -- (0.865, 0.12);
    \draw (0.159, 0.967) -- (0.159, 1.249) -- (0.3, 1.39);
    \draw (1.57, 1.813) -- (1.853, 1.813) -- (1.853, 1.531);
    \draw (1.288, 0.826) -- (1.288, 1.249) -- (1.429, 1.39);
    \draw (1.712, 1.672) -- (1.853, 1.813) -- (1.853, 1.39);
    \node[anchor=center] at (0.159, 0.755) {$\sigma_x$};
    \node[anchor=center] at (0.442, 1.531) {$\sigma_x$};
    \node[anchor=center] at (1.288, 1.813) {$\sigma_x$};
    \node[anchor=center] at (1.535, 1.531) {$\sigma_x$};
    \node[anchor=center] at (0.9, 1.249) {$\sigma_x$};
    \node[anchor=center] at (0.724, 1.108) {$\sigma_x$};
    \node[anchor=center] at (0.653, 0.12) {$\sigma_x$};
    \node[anchor=center] at (0.406, 0.402) {$\sigma_x$};
    \node[anchor=center] at (1.218, 0.685) {$\sigma_x$};
    \node[anchor=center] at (1.606, 0.402) {$\sigma_x$};
    \node[anchor=center] at (1.853, 1.178) {$\sigma_x$};
    \node[anchor=center] at (1.288, 0.543) {$\sigma_x$};
    \draw (5.522, 0.967) -- (5.239, 0.685) -- (5.239, 0.685);
    \draw (5.098, 0.826) -- (5.663, 0.826);
    \draw (7.215, 1.108) -- (7.215, 0.543);
    \draw (7.074, 0.685) -- (7.356, 0.967);
    \draw (3.405, 1.108) -- (3.405, 0.543);
    \draw (3.123, 0.826) -- (3.687, 0.826);
    \draw (1.006, 1.813) -- (0.724, 1.813);
    \draw (0.583, 1.672) -- (0.724, 1.813);
    \draw[densely dotted] (0.724, 1.813) -- (0.724, 1.39);
    \draw[densely dotted] (0.583, 0.543) -- (0.724, 0.685);
    \draw[densely dotted] (0.724, 0.685) -- (1.006, 0.685);
    \draw[densely dotted] (0.724, 0.685) -- (0.724, 0.967);
    \draw (0.159, 0.543) -- (0.159, 0.12) -- (0.442, 0.12);
    \draw[densely dotted] (0.159, 0.12) -- (0.3, 0.261);
    \draw (1.853, 0.967) -- (1.853, 0.685) -- (1.712, 0.543);
    \draw[densely dotted] (1.853, 0.685) -- (1.429, 0.685);
    \node[anchor=center] at (3.934, 0.826) {$\sigma_z$};
    \node[anchor=center] at (3.405, 1.249) {$\sigma_z$};
    \node[anchor=center] at (3.405, 0.402) {$\sigma_z$};
    \node[anchor=center] at (2.911, 0.826) {$\sigma_z$};
    \node[anchor=center] at (4.922, 0.826) {$\sigma_z$};
    \node[anchor=center] at (5.874, 0.826) {$\sigma_z$};
    \node[anchor=center] at (5.592, 1.108) {$\sigma_z$};
    \node[anchor=center] at (5.204, 0.543) {$\sigma_z$};
    \node[anchor=center] at (6.933, 0.579) {$\sigma_z$};
    \node[anchor=center] at (7.215, 1.249) {$\sigma_z$};
    \node[anchor=center] at (7.215, 0.332) {$\sigma_z$};
    \node[anchor=center] at (7.532, 1.002) {$\sigma_z$};
\end{tikzpicture}
    \caption{Visual representation of the cube operator $A_b$---left---and the three cross operators---$B_v$, $C_v$ and $D_v$, respectively---of the X-cube model. }
    \label{fig:xcubeinter}
\end{figure}
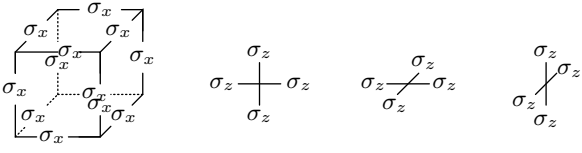

After finishing the pseudo-Gaussian elimination we obtain several all-zero columns. In fact, the number of all-zero columns grows with the dimension of the lattice as $4L^2 + 2L - 1$, which is consistent with the fact that the X-cube has a number of ground states that grows superpolynomially with the system size \cite[Appendix A]{weinstein2020Xcube}.

After rearranging the rows and columns of the $Z$ matrix associated to the final tableau, we find the following block-diagonal structure,
\[
\left(
  \begin{array}{c|c|c|c|c|c}
    \mathcal{D}_1 &   &   &  &  & \\
    \hline
      & \ddots &   &   & & \\
    \hline
      &   & \mathcal{D}_L &   & & \\
    \hline
      &   &   & \mathcal{D}'_1 & &\\
      \hline
      &   &   & & \ddots &\\
      \hline
      &   &   & &  & \mathcal{D}'_{L-1}
  \end{array}
  \right),
\]
where the rows in which each $\mathcal{D}_i$ block is situated correspond to those operators $A_b$ for which $b$ is such that its corner vertices have coordinates $(x,y,i-1)$ and $(x,y,i)$ for some $x, y \in \{0, \dotsc, L-1\}$. Every matrix $\mathcal{D}_i$,  $i \in \{1, \dotsc, L\}$ has $L^2$ rows and is of the same form as in \cref{eq:identitywithones}. This allows us to conclude that there are $L$ decoupled Ising chains, each corresponding to a block $\mathcal{D}_{i}$, which in turn corresponds to a ``slice" parallel to the $xy$ plane of the cube operators.

Let us now study the remaining $L-1$ decoupled systems, which correspond to the blocks $\mathcal{D}'_i$, which again correspond to a ``slice" of the cross operators. 

First of all, the rows in which each $\mathcal{D}'_i$ block is situated correspond to those cross operators $B_v, C_v, D_v$ for which $v$ has coordinates $(x,y, i)$ for some $x, y \in \{0, \dotsc, L-1\}$. Every matrix $\mathcal{D}'_i$, $i \in \{1, \dotsc, L-1\}$ has $3L^2$ rows, $2(L^2-1)+1$ columns and is of the form:
\begin{equation}
\label{eq:expressionZXblock}
\mathcal{D}'_i :=  \begin{pmatrix}
\mathfrak{D} & \rvline &  & \rvline &  & \rvline &  & \rvline & 0\\
\cline{1-7}
 & \rvline & \mathfrak{D} & \rvline & & \rvline &  & \rvline & \vdots \\
\cline{1-7}
 & \rvline &  & \rvline & \ddots & \rvline &  & \rvline & \vdots\\
\cline{1-7}
 & \rvline &  & \rvline &  & \rvline & \mathfrak{D}  & \rvline & 0 \\
\hline
\begin{matrix}0 & 0 \\0 & 1 \\0 & 1 \end{matrix} & \rvline & \begin{matrix}0 & 0 \\0 & 1 \\0 & 1 \end{matrix} & \rvline & \cdots & \rvline & \begin{matrix}0 & 0 \\0 & 1 \\0 & 1 \end{matrix} & \rvline & \begin{matrix}1\\0\\1\end{matrix}
\end{pmatrix},
\end{equation}
where
\[
\mathfrak{D} := \begin{pmatrix}1 & 0 \\0 & 1 \\1 & 1 \end{pmatrix}.
\]

Relabeling the columns of \cref{eq:expressionZXblock} as $1, 2, \dotsc, 2(L^2-1)+1$, for simplicity and conjugating its associated Hamiltonian by 
\begin{align*}
U &:= \CX(2(L^2-1)-2, 2(L^2-1))
\\&\quad \times \CX(2(L^2-1)-3, 2(L^2-1)-1)\dotsm\CX(1, 3)
\\&\quad \times \CX(2, 4)
\end{align*}
we obtain 
\[
\begin{pmatrix}
\mathfrak{D} & \rvline &  & \rvline &  & \rvline &  & \rvline & 0\\
\cline{1-7}
\mathfrak{D} & \rvline & \mathfrak{D} & \rvline & & \rvline &  & \rvline & \vdots \\
\cline{1-7}
 & \rvline &  & \rvline & \ddots & \rvline &  & \rvline & \vdots\\
\cline{1-7}
 & \rvline &  & \rvline & \mathfrak{D} & \rvline & \mathfrak{D}  & \rvline & 0\\
\hline 
0 0 & & \cdots & & 0 0 & & 0 0 & \rvline & 1\\
0 0 & & \cdots & & 0 0 & & 0 1 & \rvline & 0\\
0 0 & & \cdots & & 0 0 & & 0 1 & \rvline & 1
\end{pmatrix},
\]
which, after blocking each spin situated in an odd column to the spin situated in the next column, results in a nearest-neighbor 1D system. 

\subsection{Stabilizer subsystem toric code}

The final two models that we consider correspond to simplified versions of the subsystem toric code model \cite{kubica2022,kubica2023}, which are again defined on a cubic lattice $(V_L, \mathcal{E}_L)$ on the three-dimensional torus $\mathbb{S}_L \times \mathbb{S}_L \times \mathbb{S}_L$. In these models, the lattice is colored with two alternating colors in a chessboard manner (see \cref{fig:subsystemtoriclattice}). This way, the interactions defined in each cube will depend on its color.

As in the 3D toric code and the X-cube models, one spin is situated at the midpoint of each edge (see \cref{fig:3Dtoriclattice}). Thus, the system contains $3L^3$ spins. We only consider system sizes where $L$ is even; otherwise, the checkerboard pattern cannot be accomplished, due to the periodic boundary conditions. 

\begin{figure}
    \centering
    \begin{tikzpicture}[scale=1.2]
    \draw (0, 0) -- (0, 1.129) -- (1.129, 1.129) -- (1.129, 0) -- (0, 0);
    \draw (0, 1.129) -- (0.282, 1.411) -- (1.411, 1.411) -- (1.129, 1.129);
    \draw (1.411, 1.411) -- (1.411, 0.282) -- (1.129, 0);
    \draw (1.411, 1.411) -- (1.693, 1.693) -- (1.693, 0.564) -- (1.411, 0.282);
    \draw (0.282, 1.411) -- (0.564, 1.693) -- (1.693, 1.693);
    \draw (0.564, 1.129) -- (1.129, 1.693);
    \draw (0, 0.564) -- (1.129, 0.564);
    \draw (0.564, 1.129) -- (0.564, 0);
    \draw (1.129, 0.564) -- (1.693, 1.129);
    \draw[dotted] (0.564, 1.693) -- (0.564, 1.129);
    \draw[dotted] (0.282, 1.411) -- (0.282, 0.282);
    \draw[dotted] (0.282, 0.847) -- (1.411, 0.847);
    \draw[dotted] (1.129, 1.693) -- (1.129, 1.129);
    \draw[dotted] (1.693, 1.129) -- (1.129, 1.129);
    \draw[dotted] (0, 0.564) -- (0.564, 1.129);
    \draw[dotted] (0.564, 0.564) -- (1.129, 1.129);
    \draw[dotted] (0, 0) -- (0.564, 0.564);
    \draw[dotted] (1.693, 0.564) -- (1.129, 0.564);
    \draw[dotted] (0.282, 0.282) -- (1.411, 0.282);
    \draw[dotted] (0.564, 0) -- (1.129, 0.564);
    \draw[dotted] (0.847, 1.411) -- (0.847, 0.282);
    \fill[RoyalBlue!80, opacity=0.2] (0.282, 0.282) -- (0.564, 0.564) -- (1.129, 0.564) -- (0.847, 0.282) -- (0.282, 0.282);
    \fill[RoyalBlue!80, opacity=0.2] (0.282, 0.282) -- (0.282, 0.847) -- (0.564, 1.129) -- (0.564, 0.564) -- (0.282, 0.282);
    \fill[RoyalBlue!80, opacity=0.2] (0.564, 1.129) -- (1.129, 1.129) -- (1.129, 0.564) -- (0.564, 0.564) -- (0.564, 1.129);
    \fill[RoyalBlue!80, opacity=0.2] (1.129, 1.129) -- (0.847, 0.847) -- (0.847, 0.282) -- (1.129, 0.564) -- (1.129, 1.129);
    \fill[RoyalBlue!80, opacity=0.2] (0.847, 0.847) -- (0.282, 0.847) -- (0.282, 0.282) -- (0.847, 0.282) -- (0.847, 0.847);
    \fill[RoyalBlue!80, opacity=0.2] (0.282, 0.847) -- (0.564, 1.129) -- (1.129, 1.129) -- (0.847, 0.847) -- (0.282, 0.847);
    \fill[BrickRed!90, opacity=0.2] (0.847, 0.282) -- (1.129, 0.564) -- (1.693, 0.564) -- (1.411, 0.282) -- (0.847, 0.282);
    \fill[BrickRed!90, opacity=0.2] (0.847, 0.282) -- (0.847, 0.847) -- (1.129, 1.129) -- (1.129, 0.564) -- (0.847, 0.282);
    \fill[BrickRed!90, opacity=0.2] (1.129, 1.129) -- (1.693, 1.129) -- (1.693, 0.564) -- (1.129, 0.564) -- (1.129, 1.129);
    \fill[BrickRed!90, opacity=0.2] (1.693, 1.129) -- (1.411, 0.847) -- (1.411, 0.282) -- (1.693, 0.564) -- (1.693, 1.129);
    \fill[BrickRed!90, opacity=0.2] (1.411, 0.847) -- (0.847, 0.847) -- (0.847, 0.282) -- (1.411, 0.282) -- (1.411, 0.847);
    \fill[BrickRed!90, opacity=0.2] (0.847, 0.847) -- (1.129, 1.129) -- (1.693, 1.129) -- (1.411, 0.847) -- (0.847, 0.847);
    \fill[BrickRed!90, opacity=0.2] (0.282, 0.847) -- (0.564, 1.129) -- (1.129, 1.129) -- (0.847, 0.847) -- (0.282, 0.847);
    \fill[BrickRed!90, opacity=0.2] (0.282, 0.847) -- (0.282, 1.411) -- (0.564, 1.693) -- (0.564, 1.129) -- (0.282, 0.847);
    \fill[BrickRed!90, opacity=0.2] (0.564, 1.693) -- (1.129, 1.693) -- (1.129, 1.129) -- (0.564, 1.129) -- (0.564, 1.693);
    \fill[BrickRed!90, opacity=0.2] (1.129, 1.693) -- (0.847, 1.411) -- (0.847, 0.847) -- (1.129, 1.129) -- (1.129, 1.693);
    \fill[BrickRed!90, opacity=0.2] (0.847, 1.411) -- (0.282, 1.411) -- (0.282, 0.847) -- (0.847, 0.847) -- (0.847, 1.411);
    \fill[RoyalBlue!80, opacity=0.2] (0.847, 1.411) -- (0.282, 1.411) -- (0.564, 1.693) -- (1.129, 1.693) -- (0.847, 1.411);
    \fill[RoyalBlue!80, opacity=0.2] (0.847, 0.847) -- (1.129, 1.129) -- (1.693, 1.129) -- (1.411, 0.847) -- (0.847, 0.847);
    \fill[RoyalBlue!80, opacity=0.2] (0.847, 0.847) -- (0.847, 1.411) -- (1.129, 1.693) -- (1.129, 1.129) -- (0.847, 0.847);
    \fill[RoyalBlue!80, opacity=0.2] (1.129, 1.693) -- (1.693, 1.693) -- (1.693, 1.129) -- (1.129, 1.129) -- (1.129, 1.693);
    \fill[RoyalBlue!80, opacity=0.2] (1.693, 1.693) -- (1.411, 1.411) -- (1.411, 0.847) -- (1.693, 1.129) -- (1.693, 1.693);
    \fill[RoyalBlue!80, opacity=0.2] (1.411, 1.411) -- (0.847, 1.411) -- (0.847, 0.847) -- (1.411, 0.847) -- (1.411, 1.411);
    \fill[RoyalBlue!80, opacity=0.2] (0.847, 1.411) -- (1.129, 1.693) -- (1.693, 1.693) -- (1.411, 1.411) -- (0.847, 1.411);
    \fill[BrickRed!90, opacity=0.2] (0.282, 0.282) -- (0, 0) -- (0.564, 0) -- (0.847, 0.282) -- (0.282, 0.282);
    \fill[BrickRed!90, opacity=0.2] (0, 0) -- (0, 0.564) -- (0.282, 0.847) -- (0.282, 0.282) -- (0, 0);
    \fill[BrickRed!90, opacity=0.2] (0.282, 0.847) -- (0.847, 0.847) -- (0.847, 0.282) -- (0.282, 0.282) -- (0.282, 0.847);
    \fill[BrickRed!90, opacity=0.2] (0.847, 0.847) -- (0.564, 0.564) -- (0.564, 0) -- (0.847, 0.282) -- (0.847, 0.847);
    \fill[BrickRed!90, opacity=0.2] (0.564, 0.564) -- (0, 0.564) -- (0, 0) -- (0.564, 0) -- (0.564, 0.564);
    \fill[BrickRed!90, opacity=0.2] (0.564, 0.564) -- (0.847, 0.847) -- (0.282, 0.847) -- (0, 0.564) -- (0.564, 0.564);
    \fill[RoyalBlue!80, opacity=0.2] (0.564, 0) -- (0.847, 0.282) -- (1.411, 0.282) -- (1.129, 0) -- (0.564, 0);
    \fill[RoyalBlue!80, opacity=0.2] (0.564, 0) -- (0.564, 0.564) -- (0.847, 0.847) -- (0.847, 0.282) -- (0.564, 0);
    \fill[RoyalBlue!80, opacity=0.2] (0.847, 0.847) -- (1.411, 0.847) -- (1.411, 0.282) -- (0.847, 0.282) -- (0.847, 0.847);
    \fill[RoyalBlue!80, opacity=0.2] (1.411, 0.847) -- (1.129, 0.564) -- (1.129, 0) -- (1.411, 0.282) -- (1.411, 0.847);
    \fill[RoyalBlue!80, opacity=0.2] (1.129, 0.564) -- (0.564, 0.564) -- (0.564, 0) -- (1.129, 0) -- (1.129, 0.564);
    \fill[RoyalBlue!80, opacity=0.2] (0.564, 0.564) -- (0.847, 0.847) -- (1.411, 0.847) -- (1.129, 0.564) -- (0.564, 0.564);
    \fill[RoyalBlue!80, opacity=0.2] (0, 0.564) -- (0.282, 0.847) -- (0.847, 0.847) -- (0.564, 0.564) -- (0, 0.564);
    \fill[RoyalBlue!80, opacity=0.2] (0, 0.564) -- (0, 1.129) -- (0.282, 1.411) -- (0.282, 0.847) -- (0, 0.564);
    \fill[RoyalBlue!80, opacity=0.2] (0.282, 1.411) -- (0.847, 1.411) -- (0.847, 0.847) -- (0.282, 0.847) -- (0.282, 1.411);
    \fill[RoyalBlue!80, opacity=0.2] (0.564, 0.564) -- (0.564, 1.129) -- (0.847, 1.411) -- (0.847, 0.847) -- (0.564, 0.564);
    \fill[RoyalBlue!80, opacity=0.2] (0.564, 0.564) -- (0.564, 1.129) -- (0, 1.129) -- (0, 0.564) -- (0.564, 0.564);
    \fill[RoyalBlue!80, opacity=0.2] (0, 1.129) -- (0.282, 1.411) -- (0.847, 1.411) -- (0.564, 1.129) -- (0, 1.129);
    \fill[BrickRed!90, opacity=0.2] (0.564, 0.564) -- (0.847, 0.847) -- (1.411, 0.847) -- (1.129, 0.564) -- (0.564, 0.564);
    \fill[BrickRed!90, opacity=0.2] (0.564, 0.564) -- (0.564, 1.129) -- (0.847, 1.411) -- (0.847, 0.847) -- (0.564, 0.564);
    \fill[BrickRed!90, opacity=0.2] (0.847, 0.847) -- (0.847, 1.411) -- (1.411, 1.411) -- (1.411, 0.847) -- (0.847, 0.847);
    \fill[BrickRed!90, opacity=0.2] (1.129, 0.564) -- (1.129, 1.129) -- (1.411, 1.411) -- (1.411, 0.847) -- (1.129, 0.564);
    \fill[BrickRed!90, opacity=0.2] (1.129, 0.564) -- (0.564, 0.564) -- (0.564, 1.129) -- (1.129, 1.129) -- (1.129, 0.564);
    \fill[BrickRed!90, opacity=0.2] (0.564, 1.129) -- (0.847, 1.411) -- (1.411, 1.411) -- (1.129, 1.129) -- (0.564, 1.129);
\end{tikzpicture}
    \caption{Visual representation of a $2 \times 2 \times 2$ subsystem toric lattice model. We have omitted the spins for simplicity. }
    \label{fig:subsystemtoriclattice}
\end{figure}
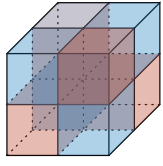

The first simplification of the subsystem toric code arises from only considering the stabilizer operators, i.e, defining $b \subset \mathcal{E}_L$ as in the X-cube model, consider the set of $8$ vertices of the cube defined by $b$. We say that $b$ is red if the sum of the coordinates of its bottom down leftmost vertex is even, and we say that it is blue otherwise.

The Hamiltonian is defined as
\[
H = - \sum_{\substack{b \subset \mathcal{E}_L \\ b \text{ red}}} J_b A_b - \sum_{\substack{b \subset \mathcal{E}_L \\ b \text{ blue}}} \tilde J_b B_b,
\]
where $J_b, \tilde J_b \in \RR$ and $A_b$ and $B_b$ are defined as 
\[
A_b = \bigotimes_{i \in b} \sigma_x^i,\quad  B_b = \bigotimes_{i \in b} \sigma^i_z.
\]
(see \cref{fig:stabsubsysteminter}). 

\begin{figure}
    \centering
    \begin{tikzpicture}[scale=1.2]
    \draw[BrickRed!90] (0.159, 0.967) -- (0.159, 1.249) -- (0.724, 1.249);
    \draw[BrickRed!90] (1.006, 1.249) -- (1.288, 1.249) -- (1.288, 0.967);
    \draw[BrickRed!90] (1.006, 0.12) -- (1.288, 0.12) -- (1.288, 0.402);
    \draw[BrickRed!90] (1.429, 0.261) -- (1.288, 0.12) -- (0.865, 0.12) -- (0.865, 0.12);
    \draw[BrickRed!90] (0.159, 0.967) -- (0.159, 1.249) -- (0.3, 1.39);
    \draw[BrickRed!90] (1.57, 1.813) -- (1.853, 1.813) -- (1.853, 1.531);
    \draw[BrickRed!90] (1.288, 0.826) -- (1.288, 1.249) -- (1.429, 1.39);
    \draw[BrickRed!90] (1.712, 1.672) -- (1.853, 1.813) -- (1.853, 1.39);
    \node[anchor=center] at (0.159, 0.755) {$\sigma_x$};
    \node[anchor=center] at (0.442, 1.531) {$\sigma_x$};
    \node[anchor=center] at (1.288, 1.813) {$\sigma_x$};
    \node[anchor=center] at (1.535, 1.531) {$\sigma_x$};
    \node[anchor=center] at (0.9, 1.249) {$\sigma_x$};
    \node[anchor=center] at (0.724, 1.178) {$\sigma_x$};
    \node[anchor=center] at (0.653, 0.12) {$\sigma_x$};
    \node[anchor=center] at (0.406, 0.402) {$\sigma_x$};
    \node[anchor=center] at (1.182, 0.685) {$\sigma_x$};
    \node[anchor=center] at (1.606, 0.402) {$\sigma_x$};
    \node[anchor=center] at (1.853, 1.178) {$\sigma_x$};
    \node[anchor=center] at (1.288, 0.543) {$\sigma_x$};
    \draw[RoyalBlue!80] (2.982, 0.967) -- (2.982, 1.249) -- (3.546, 1.249);
    \draw[RoyalBlue!80] (3.828, 1.249) -- (4.11, 1.249) -- (4.11, 0.967);
    \draw[RoyalBlue!80] (3.828, 0.12) -- (4.11, 0.12) -- (4.11, 0.402);
    \draw[RoyalBlue!80] (4.252, 0.261) -- (4.11, 0.12) -- (3.687, 0.12) -- (3.687, 0.12);
    \draw[RoyalBlue!80] (2.982, 0.967) -- (2.982, 1.249) -- (3.123, 1.39);
    \draw[RoyalBlue!80] (4.393, 1.813) -- (4.675, 1.813) -- (4.675, 1.531);
    \draw[RoyalBlue!80] (4.11, 0.826) -- (4.11, 1.249) -- (4.252, 1.39);
    \draw[RoyalBlue!80] (4.534, 1.672) -- (4.675, 1.813) -- (4.675, 1.39);
    \node[anchor=center] at (2.982, 0.755) {$\sigma_z$};
    \node[anchor=center] at (3.264, 1.531) {$\sigma_z$};
    \node[anchor=center] at (4.11, 1.813) {$\sigma_z$};
    \node[anchor=center] at (4.357, 1.531) {$\sigma_z$};
    \node[anchor=center] at (3.722, 1.249) {$\sigma_z$};
    \node[anchor=center] at (3.546, 1.178) {$\sigma_z$};
    \node[anchor=center] at (3.475, 0.12) {$\sigma_z$};
    \node[anchor=center] at (3.229, 0.402) {$\sigma_z$};
    \node[anchor=center] at (4.005, 0.685) {$\sigma_z$};
    \node[anchor=center] at (4.428, 0.402) {$\sigma_z$};
    \node[anchor=center] at (4.675, 1.178) {$\sigma_z$};
    \node[anchor=center] at (4.11, 0.543) {$\sigma_z$};
    \draw[BrickRed!90] (1.006, 1.813) -- (0.724, 1.813);
    \draw[BrickRed!90] (0.583, 1.672) -- (0.724, 1.813);
    \draw[BrickRed!90, densely dotted] (0.724, 1.813) -- (0.724, 1.39);
    \draw[BrickRed!90, densely dotted] (0.583, 0.543) -- (0.724, 0.685);
    \draw[BrickRed!90, densely dotted] (0.724, 0.685) -- (1.006, 0.685);
    \draw[BrickRed!90, densely dotted] (0.724, 0.685) -- (0.724, 0.967);
    \draw[BrickRed!90] (0.159, 0.543) -- (0.159, 0.12) -- (0.442, 0.12);
    \draw[BrickRed!90, densely dotted] (0.159, 0.12) -- (0.3, 0.261);
    \draw[BrickRed!90] (1.853, 0.967) -- (1.853, 0.685) -- (1.712, 0.543);
    \draw[BrickRed!90, densely dotted] (1.853, 0.685) -- (1.429, 0.685);
    \draw[RoyalBlue!80] (3.828, 1.813) -- (3.546, 1.813);
    \draw[RoyalBlue!80] (3.405, 1.672) -- (3.546, 1.813);
    \draw[RoyalBlue!80, densely dotted] (3.546, 1.813) -- (3.546, 1.39);
    \draw[RoyalBlue!80, densely dotted] (3.405, 0.543) -- (3.546, 0.685);
    \draw[RoyalBlue!80, densely dotted] (3.546, 0.685) -- (3.828, 0.685);
    \draw[RoyalBlue!80, densely dotted] (3.546, 0.685) -- (3.546, 0.967);
    \draw[RoyalBlue!80] (2.982, 0.543) -- (2.982, 0.12) -- (3.264, 0.12);
    \draw[RoyalBlue!80, densely dotted] (2.982, 0.12) -- (3.123, 0.261);
    \draw[RoyalBlue!80] (4.675, 0.967) -- (4.675, 0.685) -- (4.534, 0.543);
    \draw[RoyalBlue!80, densely dotted] (4.675, 0.685) -- (4.252, 0.685);
\end{tikzpicture}
    \caption{Visual representation of the stabilizer operators $A_b$ and $B_b$, respectively.}
    \label{fig:stabsubsysteminter}
\end{figure}
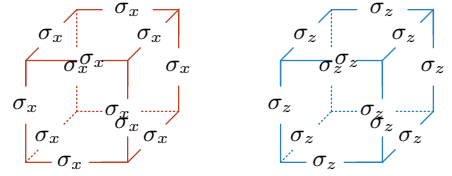

In this case, the final $Z$ matrix, after performing the pseudo-Gaussian elimination, contains $2(L^3+1)$ all-zero columns, which again is related to the existence of an exponential number of ground states. 

After rearranging the rows and columns of $Z$, it is of the same form as the one obtained when studying the two-dimensional toric code. Therefore, this model is again dual to two decoupled Ising chains. 

Similarly to the previous models, the first $L^3/2$ rows of $Z$---those corresponding to the $A_b$ operators---correspond to one of the chains, whilst the last $L^3/2$ rows of $Z$, corresponding to $B_b$ operators, describe the second chain. 

\subsection{Commuting checks subsystem toric code}
\label{sec:subsystemtoric}

The last model considered corresponds to the second simplification of the subsystem toric code model.

In this case, the simplification only considers some of the check operators defined originally in \cite{kubica2022,kubica2023}. Each check operator acts on three spins situated in the edges adjacent to a given vertex $v \in V_L$ (see \cref{fig:trianglesubsystoricinter}).

In order to formally describe the Hamiltonian associated to this model, let $b \subset \mathcal{E}_L$ be a $12$-spin subset as defined in the X-cube model. As we mentioned previously, there are eight vertices that lie in the cube associated to $b$. Let us denote by $V_1^b, V_2^b$ the two four-vertex subsets in $b$ such that if $v \in V_1^b$ (resp. $v \in V_2^b$), then $v' \notin V_1^b$ (resp. $v' \notin V_2^b$) for every $v'$ connected to $v$ by an edge of $b$ (recall that $b$ can be regarded both as a subset of $\Lambda_L$ or $\mathcal{E}_L$). We denote by $V_1^b$ the subset containing the bottom down leftmost vertex of the cube defined by $b$. 

Let $b \subset \mathcal{E}_L$. For every $v \in V_i^b$, $i \in \{1, 2\}$, let $\partial v$ be the set of three spins in $b$ such that they are situated in the edges adjacent to $v$. 

With the above notation in mind, we define $H$ as
\[
H = - \sum_{\substack{b \subset \mathcal{E}_L\\ b \text{ red}}} \sum_{v \in V_2^b} \tilde J_v A_v - \sum_{\substack{b \subset \mathcal{E}_L\\ b \text{ blue}}} \sum_{v \in V_1^b} J_v B_v,
\]
where $J_v, \tilde J_v \in \RR$ and 
\[
A_v := \bigotimes_{i \in \partial v} \sigma_x^i, \quad B_v := \bigotimes_{i \in \partial v} \sigma_z^i,
\]
(see \cref{fig:trianglesubsystoricinter}). In this case, the total number of interactions is $4L^3$.

\begin{figure}
    \centering
    \begin{tikzpicture}[scale=1]
    \draw[BrickRed!90] (0, 0.875) -- (0, 1.158) -- (0.282, 1.158);
    \draw[BrickRed!90] (0.847, 1.158) -- (1.129, 1.158) -- (1.129, 0.875);
    \draw (0, 0.311) -- (0, 0.029) -- (0.282, 0.029);
    \draw[BrickRed!90] (0.423, 0.452) -- (0.564, 0.593) -- (0.847, 0.593);
    \draw[BrickRed!90] (0.141, 0.17) -- (0, 0.029) -- (0.282, 0.029);
    \draw[BrickRed!90] (1.27, 0.593) -- (1.693, 0.593) -- (1.552, 0.452);
    \draw[BrickRed!90] (1.27, 0.17) -- (1.129, 0.029) -- (0.706, 0.029) -- (0.706, 0.029);
    \draw[BrickRed!90] (0, 0.452) -- (0, 0.029) -- (0.141, 0.17);
    \draw[BrickRed!90] (0.423, 0.452) -- (0.564, 0.593) -- (0.564, 0.875);
    \draw[BrickRed!90] (0, 0.875) -- (0, 1.158) -- (0.141, 1.299);
    \draw[BrickRed!90] (0.423, 1.581) -- (0.564, 1.722) -- (0.564, 1.299);
    \draw[BrickRed!90] (0.564, 1.44) -- (0.564, 1.722) -- (0.847, 1.722);
    \draw[BrickRed!90] (1.411, 1.722) -- (1.693, 1.722) -- (1.693, 1.44);
    \draw[BrickRed!90] (1.411, 0.593) -- (1.693, 0.593) -- (1.693, 0.875);
    \draw[BrickRed!90] (1.129, 0.875) -- (1.129, 1.158) -- (1.27, 1.299);
    \draw[BrickRed!90] (1.552, 1.581) -- (1.693, 1.722) -- (1.693, 1.299);
    \draw[RoyalBlue!80] (4.516, 0.875) -- (4.516, 1.158) -- (4.798, 1.158);
    \draw (4.516, 0.311) -- (4.516, 0.029) -- (4.798, 0.029);
    \draw[RoyalBlue!80] (5.362, 0.029) -- (5.644, 0.029) -- (5.644, 0.311);
    \draw[RoyalBlue!80] (4.939, 0.452) -- (5.08, 0.593) -- (5.362, 0.593);
    \draw[RoyalBlue!80] (4.657, 0.17) -- (4.516, 0.029) -- (4.798, 0.029);
    \draw[RoyalBlue!80] (5.786, 0.593) -- (6.209, 0.593) -- (6.068, 0.452);
    \draw[RoyalBlue!80] (5.786, 0.17) -- (5.644, 0.029) -- (5.221, 0.029) -- (5.221, 0.029);
    \draw[RoyalBlue!80] (4.516, 0.452) -- (4.516, 0.029) -- (4.657, 0.17);
    \draw[RoyalBlue!80] (4.939, 0.452) -- (5.08, 0.593) -- (5.08, 0.875);
    \draw[RoyalBlue!80] (4.516, 0.875) -- (4.516, 1.158) -- (4.657, 1.299);
    \draw[RoyalBlue!80] (4.939, 1.581) -- (5.08, 1.722) -- (5.08, 1.299);
    \draw[RoyalBlue!80] (5.08, 1.44) -- (5.08, 1.722) -- (5.362, 1.722);
    \draw[RoyalBlue!80] (5.927, 1.722) -- (6.209, 1.722) -- (6.209, 1.44);
    \draw[RoyalBlue!80] (5.927, 0.593) -- (6.209, 0.593) -- (6.209, 0.875);
    \draw[RoyalBlue!80] (6.068, 1.581) -- (6.209, 1.722) -- (6.209, 1.299);
    \filldraw[fill=BrickRed!90, opacity=0.5] (1.129, 1.722) -- (1.411, 1.44) -- (1.693, 1.016) -- (1.129, 1.722);
    \filldraw[fill=BrickRed!90, opacity=0.5] (0.282, 1.44) -- (0.564, 1.158) -- (0, 0.593) -- (0.282, 1.44);
    \filldraw[fill=BrickRed!90, opacity=0.5] (0.282, 0.311) -- (1.129, 0.593) -- (0.564, 1.016) -- (0.282, 0.311);
    \filldraw[fill=BrickRed!90, opacity=0.5] (1.135, 0.795) -- (0.523, 0.035) -- (1.416, 0.311) -- (1.141, 0.785);
    \draw[BrickRed!90] (0.847, 0.029) -- (1.129, 0.029) -- (1.129, 0.311);
    \filldraw[fill=RoyalBlue!80, opacity=0.5] (4.798, 1.44) -- (5.644, 1.722) -- (5.08, 1.016) -- (4.798, 1.44);
    \filldraw[fill=RoyalBlue!80, opacity=0.5] (5.08, 1.158) -- (5.927, 1.44) -- (5.644, 0.734) -- (5.08, 1.158);
    \draw[RoyalBlue!80] (5.644, 0.875) -- (5.644, 1.158) -- (5.786, 1.299);
    \draw[RoyalBlue!80] (5.362, 1.158) -- (5.644, 1.158) -- (5.644, 0.875);
    \filldraw[fill=RoyalBlue!80, opacity=0.5] (5.503, 0.593) -- (5.927, 0.311) -- (6.209, 1.158) -- (5.503, 0.593);
    \filldraw[fill=RoyalBlue!80, opacity=0.5] (4.851, 0.359) -- (5.015, 0.022) -- (4.512, 0.66) -- (4.851, 0.353);
    \draw (3.123, 1.315) -- (3.722, 0.278);
    \draw[shift={(2.422, 0.135)}, rotate=60] (0, 0) -- (0.599, -1.038);
    \draw[shift={(2.327, 0.278)}, rotate=120] (0, 0) -- (0.599, -1.038);
    \node[anchor=center] at (2.258, 0.12) {$\sigma_x$};
    \node[anchor=center] at (3.046, 1.457) {$\sigma_x$};
    \node[anchor=center] at (3.77, 0.12) {$\sigma_x$};
    \draw (7.819, 1.315) -- (8.418, 0.278);
    \draw[shift={(7.118, 0.135)}, rotate=60] (0, 0) -- (0.599, -1.038);
    \draw[shift={(7.023, 0.278)}, rotate=120] (0, 0) -- (0.599, -1.038);
    \node[anchor=center] at (6.954, 0.12) {$\sigma_z$};
    \node[anchor=center] at (7.742, 1.457) {$\sigma_z$};
    \node[anchor=center] at (8.465, 0.12) {$\sigma_z$};
\end{tikzpicture}
    \caption{Visual representation of the four check operators considered in each cube, depending on the color. Each three-spin operator in the cubes is of the form shown on their right-hand side, which correspond to $A_v$ and $B_v$, respectively.}
    \label{fig:trianglesubsystoricinter}
\end{figure}
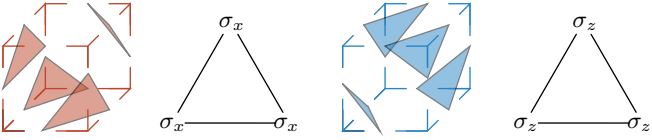

After performing the pseudo-Gaussian elimination on $Z$, there are no all-zero columns and each row will either have one or three non-zero elements. Furthermore, let $i$ be a row $i \in \{1, \dotsc, 4L^3\}$ containing three non-zero elements, which we denote by $z_{ij}$, $z_{ik}$ and $z_{il}$. Then, the $j$-th, $k$-th and $l$-th column of $Z$ will contain exactly two non-zero elements. One of which will be $z_{ij}$, $z_{ik}$ or $z_{il}$, respectively, and the other will be in certain different rows $p, q, s$. There will not be any more non-zero elements in either of those rows. 

The above structure implies that the final Hamiltonian corresponds to $L^3$ decoupled classical Ising chains with exactly three spins, with a magnetic field on both ends. This accounts for all the $4L^3$ original terms. In this case, we are not able to identify which interactions get mapped to which Ising spin chains, apart from noting that every Ising chain is associated to $A_v$-only rows, or $B_v$-only rows. 

\section{Lindbladians}

\subsection{Uniform family of Hamiltonians}

Let us present now the kind of Hamiltonians considered in this work. Let us define a \textit{potential} $\Phi$ as  a family $\Phi = \{ \Phi_\Lambda \}_{\Lambda \Subset V},$  where each $\Phi_\Lambda \in \mathcal{A}_\Lambda$.  We can then consider families of many-body Hamiltonians $\{H_\Lambda\}_{\Lambda\Subset V}$ such that
\begin{align} \label{def:UnfiformHamiltonian}
    H_\Lambda=\sum_{X\subset \Lambda}\Phi_X.
\end{align} 
By the definition of the potential, note that for each $X\subset\Lambda$, $\Phi_X$ is a self-adjoint operator acting only non-trivially on the sub-region $X$. 

Let us fix hereafter $\Lambda \Subset V$. The potential is called \textit{commuting} if for each $X,Y\subset\Lambda$, $[\Phi_X,\Phi_Y]=0$.
We denote by  $J:=\max_{X\subset\Lambda}\{\|\Phi_X\|\}$ the \textit{strength} of the interaction, and by $r:=\max\{\text{diam}(X) \, | \, X\subset\Lambda, \Phi_X\neq 0\}$ the \textit{range} or \textit{locality} of the interaction, where $\text{diam}(X)$ stands for the diameter of region $X$ with respect to the graph distance introduced above. We say that a potential has range $r$ if, and only if, it is $r$-local, and when $r=2$ we say that the potential has \textit{nearest-neighbor} interactions.

Then, the family $\{H_\Lambda \}_{\Lambda\Subset V}$  is called the  associated {\textit{family of Hamiltonians} to $(V,\Phi)$. It is called \textit{uniformly bounded} if the strength $J$ of all Hamiltonians is uniformly bounded. In this work, we will only consider uniformly bounded families of $k$-local Hamiltonians, which will frequently be commuting as well.

For any $\Lambda \Subset V$, the associated Gibbs state of the local Hamiltonian at inverse temperature $0 \leq \beta < \infty$ is denoted by
\begin{align}
    \sigma^\Lambda(\beta):=\frac{e^{-\beta H_\Lambda}}{\Tr[e^{-\beta H_\Lambda}]}\in\mathcal{S}(\mathcal{H}_\Lambda) \, ,
\end{align} and any marginal or reduced state onto any subregion $X\subset\Lambda$ is given by
\begin{align}
    \sigma_X:=\Tr_{\Lambda\setminus X}[\sigma^\Lambda]\in\mathcal{S}(\mathcal{H}_X) \, .
\end{align} 
Similarly to the description of the associated family of Hamiltonians to the potential $(V, \Phi)$ and a $\beta$, we call the family of their corresponding Gibbs states $\{\sigma^\Lambda\}_{\Lambda\Subset V}$ the \textit{family of Gibbs states} associated to $(V,\Phi,\beta)$.

\subsection{Uniform family of Lindbladians}

The dynamics of an open quantum system weakly coupled to a thermal bath can be described by a quantum Markov semigroup. Given a finite-dimensional Hilbert space $\mathcal{H}$, a \textit{quantum Markov semigroup} (QMS) is a 1-parameter continuous semigroup of quantum channels $\{ \mathcal{T}_t \}_{t\geq 0} $, with $\mathcal{T}_t : \mathcal{S} (\mathcal{H}) \rightarrow \mathcal{S} (\mathcal{H})$ for every $t\geq 0$ satisfying:
\begin{itemize}
    \item $\mathcal{T}_0=\operatorname{id}$. 
    \item $\mathcal{T}_t \circ \mathcal{T}_s=\mathcal{T}_{t+s} \; \; \forall t,s\geq 0$. 
\end{itemize}
Associated to any QMS, there is an infinitesimal generator $\mathcal{L}: \mathcal{S} (\mathcal{H}) \rightarrow \mathcal{S} (\mathcal{H})$ such that $\mathcal{T}_t = e^{t \mathcal{L}}$ for every $t \geq 0$. This also leads to the following differential equation:
\begin{equation}
    \frac{d}{dt} \mathcal{T}_t = \mathcal{L} \circ \mathcal{T}_t \, , \quad \forall t \geq 0 \, .
\end{equation}
This is frequently known as \textit{master equation} or \textit{Liouville's equation}. The conditions above precisely dictate the structure of the generator, which is shown to satisfy the GKLS form  \cite{Gorini1976CompletelyPositiveDynamicalSemigroups,Lindblad1976GeneratorsOfQuantumDynamicalSemigroups}:
\begin{equation} \label{eq:GKSL_Lindblad_form}
    \cL(\rho) = -i[H, \rho] + \sum_{k=1}^{(d^{|\Lambda|})^2-1} \gamma_k \left[L_k\rho L_k^\dagger - \frac{1}{2}\{L_k^\dagger L_k, \rho\} \right] \, ,
\end{equation}
    for $ \rho \in \mathcal{S} (\mathcal{H})$, with nonnegative constants $\gamma_k \geq 0$ termed \textit{relaxation rates}, a \textit{Hamiltonian} $H=H^\dagger$, and bounded operators $\{L_k\} \subseteq \cB(\cH_S)$ known as \textit{Lindblad} or \textit{jump operators}.

A state $\eta \in \mathcal{S} (\mathcal{H})$ is said to be \textit{invariant}, \textit{stationary} or simply a \textit{fixed point} of $\mathcal{L}$ if $\mathcal{L}(\eta)=0$. We say that $\mathcal{L}$ and its associated QMS are \textit{faithful} if there is a full-rank invariant state. Moreover,  $\mathcal{L}$ and the QMS are said to be \textit{primitive} if this full-rank invariant state is the unique fixed point, which we denote by $\sigma$. Primitivity thus ensures convergence for any initial state to $\sigma$, namely
\begin{equation}
e^{t \mathcal{L}}(\rho) \overset{t \rightarrow \infty }{\longrightarrow} \sigma \, .
\end{equation} 
This is crucial in the context of Gibbs sampling.

Given a graph $(V,E)$ and a finite subset $\Lambda \Subset V$, we consider a family of Lindbladians $\mathcal{L}_V=\{\mathcal{L}_\Lambda\}_{\Lambda\Subset V}$, such that
\begin{align}\label{eq:locallind}
    \mathcal{L}_\Lambda = \sum_{k\in \Lambda}\mathcal{L}_k \, ,
\end{align}
where $\{ \mathcal{L}_k\}_{k\in \Lambda}$ are local and finite-range Lindbladians. We assume that the \textit{strength} of the Lindbladians is uniformly bounded, namely:
\begin{equation}
    \overline{J}:= \underset{k \in V}{\text{sup}} \norm{\mathcal{L}_k}_{\diamond} < \infty \, .
\end{equation}
The set $\mathcal{L}_V=\{\mathcal{L}_\Lambda\}_{\Lambda\Subset V}$ thus constitutes a \textit{uniform family of Lindbladians} which are local, bounded and of finite range. 

We say that the family is: 
\begin{itemize}
\item \textit{Locally primitive} if $\mathcal{L}_\Lambda$ is primitive for every $\Lambda \Subset V$. 
    \item \textit{Locally reversible} if each $\mathcal{L}_\Lambda$ is locally reversible, or KMS-detailed balance, namely
    \begin{equation*}
        \langle \rho, \mathcal{L}_\Lambda(\eta) \rangle^{\text{KMS}}_{\sigma^\Gamma} = \langle  \mathcal{L}_\Lambda(\rho), \eta \rangle^{\text{KMS}}_{\sigma^\Gamma}
    \end{equation*}
    for any $\rho,\eta \in \mathcal{S}_{\Lambda}$ and any finite $\Gamma \supset \Lambda$. 
    \item \textit{Frustration free} if for any two $\Lambda \subset \Gamma \Subset V$, $\mathcal{L}_\Gamma (\rho) = 0$ implies   $\mathcal{L}_\Lambda (\rho) = 0$. 
\end{itemize}
The uniform families of Lindbladians considered in this work will be locally primitive, locally reversible and frustration-free. Note that the properties of local primitivity and frustration freeness are compatible because the former concerns for any $\mathcal{L}_\Lambda$ the existence of a unique full-rank invariant state $\sigma_\Lambda \in \mathcal{S}(\mathcal{H}_\Lambda)$, whereas the latter refers to multiple invariant states for $\mathcal{L}_\Lambda$ supported in larger spaces, i.e. in $\mathcal{S}_\Gamma$ for any finite $\Gamma \supset \Lambda$.

\subsection{Mixing times and functional inequalities}\label{sec:mixing_times}

Hereafter we assume that $\mathcal{L}_V= \{ \mathcal{L}_\Lambda \}_{\Lambda \Subset V}$ is a uniform family of local, bounded, finite-range, locally primitive, locally reversible and frustration-free Lindbladians.

The canonical notion employed to study the convergence time of quantum Markov semigroups to their fixed points is the \textit{mixing time}. Given a Lindbladian $\mathcal{L}$ with associated QMS $\{ e^{t \mathcal{L}}\}_{t \geq 0}$, and unique full-rank fixed point $\sigma$, for $\varepsilon>0 $ the mixing time of the QMS is given by 
\begin{equation}\label{eq:mixing_time}
\tau_{\text{mix}}(\varepsilon) := \text{inf} \left\{ t> 0 \, : \, \underset{\rho \in \mathcal{S}(\mathcal{H})}{\text{sup}} \left\| e^{t \mathcal{L}}(\rho) - \sigma \right\|_1 < \varepsilon \right\} \, .
\end{equation}
In this work, we are interested in the cases in which the mixing time is short enough. We say that the family of Lindbladians $ \{ \mathcal{L}_\Lambda \}_{\Lambda \Subset V}$ satisfies:
\begin{itemize}
\item \textit{Rapid mixing} if 
\begin{equation}
\tau_{\text{mix}} = \mathcal{O}(\text{poly} \text{log} |\Lambda|) \, ,
\end{equation}
\item \textit{Fast mixing}, or \textit{poly-time mixing} if 
\begin{equation}
\tau_{\text{mix}} = \mathcal{O}(\text{poly} |\Lambda|) \, ,
\end{equation}
\end{itemize} 
for every $\Lambda \Subset V$. Note that, from Eq. \eqref{eq:mixing_time}, rapid mixing can be thus equivalently written as 
\begin{equation}\label{eq:bound_mixingtime_rapidmixing}
\left\| e^{t \mathcal{L}}(\rho) - \sigma \right\|_1 \leq \text{poly}(|\Lambda|) e^{-\gamma t} \, ,
\end{equation}
and fast mixing as
\begin{equation}\label{eq:bound_mixingtime_fastmixing}
\left\| e^{t \mathcal{L}}(\rho) - \sigma \right\|_1 \leq \text{exp}(|\Lambda|) e^{-\kappa t} \, .
\end{equation}
To facilitate the practical estimation of mixing times, there are in general two central quantities which impose strict upper limits for $\tau_{\text{mix}}$: The spectral gap $\lambda$ and the modified logarithmic Sobolev constant $\alpha$ of a Lindbladian $\mathcal{L}$.

Given any primitive Lindbladian $\mathcal{L}$ with fixed point $\sigma$, its \textit{spectral gap} $\lambda(\mathcal{L})$ is defined as
    \begin{equation}\label{eq:def_spectral_gap}
        \lambda(\mathcal{L}) := \underset{\mu \in \spec(\cL)\backslash\{0\}}{\text{min}} \{|\text{Re}(\mu)|\} \; ,
    \end{equation}
    where the spectrum of $\cL$ given as
\begin{equation}
     \spec(\cL) = \{\nu \in \C \, | \; \exists \; \rho  \in \cS(\cH): \cL(\rho) = \nu \rho \} \;.
\end{equation}
When the Lindbladian is additionally locally reversible, its spectral gap is the optimal constant appearing in the \textit{Poincaré inequality}, given by 
\begin{equation}
    \lambda (\mathcal{L}) \text{Var}_\sigma ( X) \leq - \frac{d}{dt} \Big|_{t=0} \text{Var}_\sigma ( X_t) = - \langle X, \mathcal{L}^*(X) \rangle_\sigma^{\text{KMS}} \, , 
\end{equation}
for any $X \in \cB (\cH)$, where $X_t=\sigma^{-1/2} e^{t \mathcal{L}}(\rho) \sigma^{-1/2}$ for every $t\geq 0$, and  $\text{Var}_\sigma ( X)= \langle X - \Tr[\sigma X] \identity ,  X - \Tr[\sigma X] \identity \rangle_\sigma^\text{KMS} $. The existence of a positive spectral gap automatically yields an exponential decay of the variance with time by Grönwall's lemma, namely
\begin{equation}
    \text{Var}_\sigma ( X_t) \leq \text{Var}_\sigma ( X) e^{- \lambda (\cL) t} \, . 
\end{equation}
A stronger condition can be defined from the exponential decay rate of a stronger measure of distinguishability between states, namely the relative entropy. Given a primitive Lindbladian $\mathcal{L}$ with fixed point $\sigma$, it satisfies a \textit{modified logarithmic Sobolev inequality} (MLSI) if there exists a constant $\alpha$ such that
\begin{align}
    2\alpha D(\rho \| \sigma) &\leq  -\frac{d}{dt}\Big|_{t=0} D(e^{t\mathcal{L}}(\rho)\| \sigma)  \\
    & = - \Tr[ \cL(\rho) (\log \rho - \log \sigma)] \, \label{eq:entropy_production},
\end{align}
for any $\rho \in \cS(\cH)$, where the relative entropy is given by $D(\rho \| \sigma) = \Tr[\rho(\log \rho - \log \sigma)]$ and the term in \eqref{eq:entropy_production} is called \textit{entropy production}. The optimal constant for which this inequality holds is called \textit{modified logarithmic Sobolev constant}, and it is given by
\begin{equation}
    \alpha (\cL) := \underset{\rho \in \cS(\cH)}{\text{inf}} \frac{- \Tr[ \cL(\rho) (\log \rho - \log \sigma)] }{2 D(\rho \| \sigma)} \, .
\end{equation}
Similarly to the case of the spectral gap, the previous inequality is equivalent to 
\begin{equation}
    D(e^{t \cL}(\rho) \| \sigma) \leq D(\rho \| \sigma) e^{-2 \alpha(\cL) t} \, .
\end{equation}
From both the spectral gap and the MLSI of a Lindbladian, we can derive the following estimates on the mixing time of the evolution, respectively:
\begin{itemize}
\item From the spectral gap,
        \begin{equation}\label{eq:bound_mixingtime_gap}
            \|e^{t \cL}(\rho)-\sigma \|_1 \leq \sqrt{1/\sigma_{\text{min}}} \; e^{-\lambda (\cL) t} \; ,
        \end{equation}

        \item and from the MLSI,
        
        \begin{equation}\label{eq:bound_mixingtime_MLSI}
            \|e^{t \cL}(\rho)-\sigma\|_1 \leq \sqrt{2\log(1/\sigma_{\text{min}})} \; e^{-\alpha (\cL) t}\; ,
        \end{equation}
    \end{itemize}
    where $\sigma_{\text{min}}$ represents the smallest eigenvalue of $\sigma$ (note that $\sigma$ has full-rank and therefore $\sigma_{\text{min}} > 0$). Since $\sigma_{\text{min}} = \Omega (1/\exp(|\Lambda|))$ due to $\sigma$ being a Gibbs state of a local Hamiltonian, then \eqref{eq:bound_mixingtime_gap} implies \eqref{eq:bound_mixingtime_fastmixing} with an $\sqrt{\exp(|\Lambda|)}$ prefactor, whereas \eqref{eq:bound_mixingtime_MLSI} implies \eqref{eq:bound_mixingtime_rapidmixing} with a $|\Lambda|$ prefactor.

\subsection{Quantum Gibbs samplers}

In this section, given a Hamiltonian $H$ in a finite-dimensional Hilbert space, we are concerned with the sampling of the quantum Gibbs state 
\begin{equation}
  \sigma_\beta = \frac{e^{-\beta H}}{Z}  \; \text{ with } \; Z=\operatorname{Tr}[e^{-\beta H}] \;.
\end{equation}
There are different approaches to prepare Gibbs states such as quantum Metropolis algorithms, which try to mirror their classical counterpart \cite{Temme2010Metropolis}, quantum imaginary time evolutions (QITE) \cite{Motta2020DeterminingEigenstates} and, most relevant here, dissipative simulations based on Lindbladian dynamics. The discussion below concentrates on this last path. A dissipative sampler succeeds only if the evolution drives every initial state towards the desired Gibbs state. Consequently, the most important requirement for a Lindbladian is that $\sigma$ is its unique fixed point, and thus primitivity and detailed balance are crucial in our setting. 

Let us consider a finite lattice $\Lambda \Subset V$, a Hamiltonian $H_\Lambda$ on it, and an inverse temperature $0\leq \beta < \infty$. In the remainder of this section, we will introduce several Lindbladians with the desired properties that guarantee that $\sigma^\Lambda(\beta) = e^{-\beta H_\Lambda}/\Tr[e^{-\beta H_\Lambda}]$ is its unique fixed point.

\subsubsection{Davies generators}

Given a local and commuting Hamiltonian $H_\Lambda$, the \textit{Davies generator} constitutes the canonical model to describe quantum spin systems weakly coupled to a thermal bath and is given by
\begin{equation}
    \cL_\Lambda^{\text{D}}(\rho) := -i[H_\Lambda, \rho] + \sum_{k\in\Lambda} \cL_k^{\text{D}}(\rho) \;,
\end{equation}
where the local dissipators are defined as

\begin{align}
    \cL_k^{\text{D}}(\rho):=\sum_{\alpha(k), \omega} &\chi_{\alpha(k)}(\omega)\left(S_{\alpha(k)}^{\dagger}(\omega)  \rho S_{\alpha(k)}(\omega)\right.\\
    &\left.-\frac{1}{2}\left\{S_{\alpha(k)}^{\dagger}(\omega) S_{\alpha(k)}(\omega), \rho\right\}\right) \;.
\end{align}
In the upper expression, $\omega$ are the Bohr-frequencies, defined by all possible differences between the energies in $\spec(H_\Lambda)$, and the operators $S_{\alpha(k)}(\omega)$ can be understood as the Fourier components of $S_{\alpha(k)}$, decomposed into the energy basis
\begin{equation}
    S_{\alpha(k)}(\omega) = \sum_{E_i-E_j = \omega} \Pi_{E_i} S_{\alpha(k)} \Pi_{E_j} \;,
\end{equation}
where $\Pi_{E_i}$ is the projection onto the eigenspace of $H_\Lambda$ with energy $E_i$
\begin{equation}
    H_\Lambda = \sum_i E_i \; \Pi_{E_i} \;.
\end{equation}
The $S_{\alpha(k)}(\omega)$ are local, self-adjoint operators, and the $\chi_{\alpha(k)}(\omega)$  depend on $\beta$, characterize the influence of the bath onto the system, and fulfill the so-called KMS-condition $\chi_{\alpha(k)}(-\omega) = e^{-\beta \omega}\chi_{\alpha(k)}(\omega)$. 

Given a family of systems $\{\Lambda\}_{\Lambda \Subset V}$, a family of Davies Lindbladians $\{ \cL_\Lambda^D\}_{\Lambda \Subset V}$ with associated Hamiltonians $\{ H_\Lambda\}_{\Lambda \Subset V}$ is a family of locally primitive, locally reversible and frustration-free Lindbladians \cite{Kastoryano2016GibbsSamplersCommuting}, for which the locality of the Lindbladians is contained between once and twice that of the associated Hamiltonians \cite{Capel2024RapidThermalizationDissipative}.

For any $A \subseteq \Lambda$, we can consider
\begin{equation}
    \cL^{\text{D}}_{\Lambda,A}(\rho) := -i[H_A, \rho] + \sum_{k\in A} \cL_k^D(\rho) \, ,
\end{equation}
and define the \textit{Davies conditional expectation} on $A$ by
\begin{equation}\label{eq:cond_exp_Davies}
    {E}^{\text{D}}_A (\rho ) := \underset{t \rightarrow \infty}{\text{lim}} e^{t  \cL^{\text{D}}_{\Lambda,A} } (\rho) \, .
\end{equation}

Other Lindbladians that are frequently employed in the literature in the context of commuting Hamiltonians are the \textit{heat-bath generator} \cite{Kastoryano2016GibbsSamplersCommuting,BardetCapelLuciaPerezGarciaRouze-HeatBath1DMLSI-2019} and the \textit{Schmidt generator} \cite{art:2localPaper,art:bravyi2004commutative}, which will be mentioned in more detail later in this text.

\subsubsection{Chen-Kastoryano-Gilyén (CKG) generator}
The first exactly detailed‐balanced Lindbladian for arbitrary non-commuting Hamiltonians was constructed by Chen, Kastoryano and Gilyén in \cite{Chen2023EfficientExact}. We provide here a brief overview of its construction, and refer the interested reader to \cite{Chen2023EfficientExact} and \cite{Chen2023StatePreparation}. 

In contrast to the Davies generator, which derives jump operators and rates from a physical bath in the weak‐coupling limit, the CKG sampler is a purely algorithmic construction. It is designed for an arbitrary finite-dimensional Hamiltonian $H$ and---as the general GKSL form---decomposes into a coherent and a dissipative part 
\begin{equation}
     \cL^{\text{CKG}}(\rho) = -\,\frac{i}{\hbar}\bigl[B,\rho\bigr] + T(\rho) \;,
\end{equation}
where the latter is given by
\begin{align}
    &T(\rho) \\
    &\hspace{-0.08cm} =\hspace{-0.08cm} \sum_a\int_{-\infty}^\infty \hspace{-0.12cm}\gamma(\omega)\Bigl[\widehat A_a(\omega)\,\rho\,\widehat A_a(\omega)^\dagger
      \hspace{-0.1cm}-\hspace{-0.08cm}\tfrac12\{\widehat A_a(\omega)^\dagger\widehat A_a(\omega),\rho\}\Bigr] d\omega \,.
\end{align}
A central innovation of the sampler lies now in the definition of the frequency-dependent jump operators as
\begin{equation}
  \widehat A_a(\omega)\;=\;\frac1{\sqrt{2\pi}}\int_{-\infty}^\infty e^{iHt}\,A_a\,e^{-iHt}\,e^{-i\omega t}f(t)\,dt,
\end{equation}
with
\begin{equation}
  f(t)=\Bigl(\tfrac{\sigma_E}{\sqrt{2\pi}}\Bigr)^{\!1/2}e^{-\sigma_E^2t^2/2} \, ,
\end{equation}
derived from the Hamiltonian $H$ and an initial set of self-adjoint jump proposals $\{A_a\}$ that can be chosen relatively freely. The applied Gaussian energy filter blurs every transition in energy by a gentle Gaussian window $\sigma_E$ instead of demanding exact Bohr frequencies $\omega$. This leads to an advantage over other Lindbladians such as the Davies generator, which for an accurate implementation requires the precise resolution of the Bohr frequencies, which is often exponentially expensive \cite{Ding2024EfficientQuantumGibbs}. The soft blurring incurs a tiny violation of detailed‑balance, which is then canceled by a suitable chosen coherent part $B$.

As a consequence, the CKG construction yields a Lindbladian which satisfies exact detailed balance with respect to the Gibbs state of $H$ and this state is its unique steady state. Finally, shortly after the appearance of \cite{Chen2023EfficientExact}, another construction with similar properties appeared in \cite{Ding2024EfficientQuantumGibbs}, with the advantage that it uses a finite set of jump operators instead of a continuously parameterized set as in the latter work. These new Gibbs samplers have a comparable quantum simulation cost, but present greater
design flexibility, a simpler implementation and error analysis, and generalize the CKG Lindbladian.

\section{Results on Gibbs sampling}

Given a graph $(V,E)$, a uniform family of Hamiltonians on it $\{ H_\Lambda \}_{\Lambda \Subset V}$ and a fixed inverse temperature $0 \leq \beta < \infty$, consider the associated family of Gibbs states 
\begin{equation}
    \left\{ \sigma^\Lambda(\beta) = \frac{e^{-\beta H_\Lambda}}{\Tr[e^{-\beta H_\Lambda}]}  \right\}_{\Lambda \Subset V} \, .
\end{equation}
A family of quantum circuits $\{\cC_{\Lambda,\varepsilon}\}$ is called an \emph{efficient Gibbs sampler} (for $\{ \sigma^\Lambda(\beta) \}_{\Lambda \Subset V} $) if for every finite $\Lambda$ and every precision $\varepsilon>0$ the circuit $\cC_{\Lambda,\varepsilon}$ produces an output state 
$\widetilde\rho_\Lambda$ satisfying
\[
  \bigl\|\widetilde\rho_\Lambda-\sigma^\Lambda(\beta)\bigr\|_1 \;\le\;\varepsilon,
\]
with total depth bounded by
\[
  \text{depth}(\cC_{\Lambda,\varepsilon})
  = \cO\left(\operatorname{poly}\!\bigl(|\Lambda|,\;\log\tfrac{1}{\varepsilon}\bigr)\right) \;.
\]
In this manuscript, we focus on the construction of quantum Gibbs samplers with Lindbladians, i.e. the circuits above describe the implementation of certain Lindbladians with fixed points the corresponding Gibbs states. In this context, two ingredients are crucial to derive efficient Gibbs samplers:
\begin{itemize}
    \item \textit{Efficient implementation of Lindbladians.} Each channel $e^{t\cL_\Lambda}$ can be approximated to error $\varepsilon$ by a circuit of size 
  $\operatorname{poly}(|\Lambda|,t,\log\tfrac{1}{\varepsilon})$. 
    \item \textit{Quick mixing of the Lindbladians towards their fixed points.} The semigroup $e^{t\cL_\Lambda}$ converges in $\varepsilon$-trace‐distance to $\sigma^\Lambda(\beta)$ in time 
  $\tau_{\operatorname{mix}}(\varepsilon)=\operatorname{poly}(|\Lambda|,\log\tfrac{1}{\varepsilon})$. 
\end{itemize}
The latter is studied through the estimation of mixing times via spectral gaps and MLSIs as described above. We present an overview on prior results regarding fast/rapid mixing of Lindbladians in \Cref{sec:mixing_times_Lindbladians}. For the former, we need to study the \textit{implementation time} or \textit{implementation complexity}, which represents the quantum computational cost required to approximate a given Lindblad evolution to within a specified error tolerance. This metric depends sensitively on the particular simulation algorithm employed. This is explored in the next sections. 

\subsection{Prior results on Gibbs sampling}\label{sec:previous_Gibbs_sampling}

Sampling from Gibbs distributions lies at the heart of many problems in statistical physics, machine learning and probabilistic inference. Equilibrium states of physical systems are represented by Gibbs distributions, and they provide a framework for modeling complex probability spaces of high dimensions. In the context of Gibbs sampling, multiple classical algorithms such as Markov Chain Monte Carlo (MCMC) have been widely used in the literature \cite{Levin.2008}. These methods are typically efficient at high temperatures \cite{Martinelli.1999}, though more generally believed to be efficient in practice \cite{Brooks.2011}. The purpose of this subsection is to review the extension of this and other classical works such as \cite{Alaoui.2022,Chen.2024a,Anari.2021,Anari.2021a} to the study of efficiency of the quantum version of this problem.

The most natural extensions of the aforementioned works to prepare quantum Gibbs states are those based on quantum algorithms inspired by the classical Monte Carlo \cite{Temme2010Metropolis,Rall.2023,Wocjan.2023,Jiang.2024,Gilyen.2024}, but they generally lack provable guarantees unless further theoretical assumptions are made. Multiple other algorithms for quantum Gibbs sampling are based on dissipation \cite{Kastoryano2016GibbsSamplersCommuting,Ding2024EfficientQuantumGibbs,Gilyen.2024,Bardet.2023}, i.e. the existence of a Lindbladian that drives any initial state to the desired Gibbs state. The efficiency of these algorithms is based on a good implementation of the Lindbladians and quick convergence of the dissipative process governed by the Lindbladian. For general non-commuting Hamiltonians, an algorithm based on dissipation was proposed in \cite{Chen2023StatePreparation,Chen2023EfficientExact}
 and subsequently shown to be efficient in \cite{Rouze.2024b,Bakshi.2024,Rouze.2024c}. 

 In the case of local, commuting Hamiltonians, we can restrict to quantum Gibbs samplers employing a Davies generator. \cite{Kastoryano2016GibbsSamplersCommuting} showed an equivalence between the spectral gap of the Lindbladian and a form of decay of correlations in the Gibbs state, yielding thus fast mixing for the Gibbs sampler. An exponentially stronger result follows in the presence of an MLSI, which was shown to exist (depending logarithmically on the system size) for translation-invariant 1D systems in \cite{Bardet.2023,Bardet.2024} at any positive temperature, to be subsequently improved to constant MLSI in \cite{Capel2024RapidThermalizationDissipative}, where this was also extended at high temperature to any dimensional square lattices and to binary trees. Other works regarding MLSIs for commuting Hamiltonians for various systems and under different constraints involve, among others, \cite{BardetCapelLuciaPerezGarciaRouze-HeatBath1DMLSI-2019,art:2localPaper,Capel2024Gibbs,hwang2024gibbsstatepreparationcommuting}. At low temperature and in high dimensions, efficient Gibbs samplers based on positive spectral gaps are only known to exist for Kitaev’s quantum double models in 2D \cite{Alicki_2009,ding2025polynomialtimepreparationlowtemperaturegibbs,Lucia_Perez-Garcia_Perez-Hernandez_2023,Lucia_Pérez-García_Pérez-Hernández_2025}.

Many other algorithms for Gibbs sampling based on other methods than dissipation are \cite{Poulin.2009,Chowdhury.2017} based on Grover approaches, or \cite{Motta.2019} based on quantum imaginary time evolution, among others. Other works in this direction worth mentioning are \cite{Fang.2024,Harrow.2020,Scalet.2025, Fawzi.2024}, and a recent review on the subject can be found in \cite{Lin.2025}.

\subsection{Implementation of Lindbladians}\label{sec:efficient_implementation}

The CKG approach is unusual in the sense that the framework is already defined with circuit‐depth and gate‐count bounds for implementing the entire semigroup. Other Lindbladian samplers as the Davies generator require additional work to implement their dynamics on a quantum device. In this section, we give a short overview of the methods available to implement the Davies dynamics, and thereafter briefly refer back to the CKG implementation protocol.

In general, there are many procedures in the literature to simulate a given Lindblad evolution on a quantum computer; three  particularly influential approaches are the following:

\begin{enumerate} 
  \item \textbf{Trotter–Suzuki Decomposition:}  
    In this method the semigroup \(e^{t(\cL_H+\cL_D)}\) is approximated by alternating small Trotter steps of the coherent \(\exp(\Delta t\,\cL_H)\) and the dissipator part \(\exp(\Delta t\,\cL_D)\), which has to be implemented with small ancilla registers. The gate depth scales linearly in the number of segments and the Lindblad‐operator complexity \cite{Kliesch2011Dissipative}.

  \item \textbf{Block-Encoding \& Linear Combination of Unitaries (LCU):}  
    With exponential improvement of the gate count at same precision in respect to the Trotter method \cite{Lubetzki2025EfficientQuantumGibbsSampling}, here each local part of the Lindbladian is embedded into a larger unitary (block-encoding). Then amplitude amplification is used to implement their weighted sum \cite{CleveWang2016Efficient}.

  \item \textbf{Collision-Model Simulation}  
    In this case, the dissipative dynamics are implemented by repeating a fixed, low-depth gate layer that couples the system to fresh environment ancillas (reset after each interaction). In the stroboscopic limit, these repeated gate layers approximate the target dissipative dynamics with minimal circuit overhead \cite{Ciccarello2022QuantumCollision}.
\end{enumerate}

\subsubsection{Implementation of the Davies Lindbladian}\label{sec:implementation_davies}

An efficient way to implement the Davies generator for local, commuting Hamiltonians was recently proposed in \cite{Lubetzki2025EfficientQuantumGibbsSampling}, based on a framework introduced in \cite{Li2023SimulatingMarkovianOQS}. In the constructed circuit, several tools including a series expansion of the semigroup and block encodings are employed, yielding a Gibbs sampler with circuit depth scaling nearly linearly in the system size.

\begin{theorem}[Efficient Gibbs sampler for a local, commuting Hamiltonian, \cite{Lubetzki2025EfficientQuantumGibbsSampling, Li2023SimulatingMarkovianOQS}] \label{thm efficient gibbs sampler}
    Let $\{H_\Lambda\}_{\Lambda \Subset \Z^d}$ be a $k$-local, commuting family of bounded Hamiltonians with strength $J$ on finite subsets $\Lambda \Subset \Z^d$. Let $\{\cL_\Lambda\}_{\Lambda \Subset \Z^d}$ be the uniform family of Davies Lindbladians associated to $\{H_\Lambda\}_{\Lambda \Subset \Z^d}$. If they they satisfy MLSIs with respective MLSI constants $\alpha_\Lambda$, then, one can construct a quantum circuit that prepares the Gibbs states $\sigma^\Lambda(\beta) = \frac{e^{-\beta H_\Lambda}}{\Tr[e^{-\beta H_\Lambda}]}$ up to precision $\varepsilon$ with
    \begin{equation*}
    \begin{aligned}
        \cO\left(\frac{|\Lambda|^2}{\alpha_\Lambda} \operatorname{polylog}\left(\frac{|\Lambda|}{\alpha_\Lambda\varepsilon}\right)\right) &\quad \text{1- and 2- qubit gates} \\
        \cO\left(\frac{|\Lambda|}{\alpha_\Lambda} \operatorname{polylog}\left(\frac{|\Lambda|}{\alpha_\Lambda\varepsilon}\right)\right) &\quad \text{circuit depth} \;,
    \end{aligned}
    \end{equation*}
    where the dependencies on $k$, $J$, and the inverse temperature $\beta$ are absorbed in the Hamiltonian.

    If the Lindbladians $\Lambda \Subset \Z^d$ are gapped with spectral gap $\lambda_{\Lambda}$, then, the circuit that prepares the Gibbs states up to precision $\varepsilon$ has 
    \begin{equation*}
    \begin{aligned}
        \cO\left(\frac{|\Lambda|^3}{\lambda_\Lambda} \operatorname{polylog}\left(\frac{|\Lambda|}{\lambda_\Lambda\varepsilon}\right)\right) &\quad \text{1- and 2- qubit gates} \\
        \cO\left(\frac{|\Lambda|^2}{\lambda_\Lambda} \operatorname{polylog}\left(\frac{|\Lambda|}{\lambda_\Lambda\varepsilon}\right)\right) &\quad \text{circuit depth} \;,
    \end{aligned}
    \end{equation*}
    where the dependencies on $k$, $J$, and the inverse temperature $\beta$ are absorbed in the Hamiltonian.
    
\end{theorem}

Our main result for the 2D toric code achieves a circuit complexity, i.e. number of $1$- and $2$- qubit gates, of $\mathcal{O}(N^{3/2})$ for $N$ qubits, at any positive temperature, independently of $\beta$. Furthermore, as mentioned in the main text, by the Gottesman-Knill theorem, the circuit can be simulated classically in $\mathcal{O}(N^2 N^{3/2}) = \mathcal{O}(N^{7/2})$ time. This can be compared to other results that have previously appeared in the literature. 

In \cite{hwang2024gibbsstatepreparationcommuting}, the authors derive a quantum circuit of depth $\widetilde{\mathcal{O}}(N^2)$ to implement the Gibbs state of the defected toric code at any positive temperature, independently of $\beta$. 

In contrast, there are some results in dissipative quantum Gibbs sampling of the 2D toric code based on the efficient implementation of the Davies generator and a short mixing time of the dynamics. In \cite{Alicki_2009}, the spectral gap of the Davies generator was proven to be constant with the system size at any positive temperature, with an exponential scaling in $\beta$. By \cref{thm efficient gibbs sampler}, this yields an algorithm to implement the Gibbs state of the 2D toric code with circuit depth $\widetilde{\mathcal{O}}(N^{2}\exp(\beta))$ and circuit complexity of $\widetilde{\mathcal{O}}(N^{3}\exp(\beta) )$, at any positive temperature. The result of \cite{Ding2024EfficientQuantumGibbs} proves that the gap of the Davies generator can be also considered to be quadratic with respect to the system size, thus yielding an algorithm for the same object with circuit depth $\widetilde{\mathcal{O}}(N^{4}\beta)$ and circuit complexity of $\widetilde{\mathcal{O}}(N^{5}\beta )$, at any positive temperature. The recent \cite{stengele2025modifiedlogarithmicsobolevinequalities}, which shows an MLSI for the Davies generator associated to the 2D toric code at any positive temperature, improves upon \cite{Alicki_2009} and yields by \cref{thm efficient gibbs sampler} a circuit depth of $\widetilde{\mathcal{O}}(N\exp(\beta))$ and circuit complexity of $\widetilde{\mathcal{O}}(N^{2}\exp(\beta) )$.

Lastly, the recent work presented in \cite{schmidhuber2025hamiltoniandecodedquantuminterferometry} provides a classical algorithm running in $\mathcal{O}(N^4)$ time which prepares the Gibbs state of the 2D toric code.

\subsubsection{Implementation of the CKG Lindbladian}

As noted earlier, the CKG sampler comes with an explicit implementation protocol. The construction first dilates the target Lindbladian into a block‑encodable Hamiltonian (leveraging the block-encoding and LCU framework mentioned above) and then applies qubitization to simulate its continuous-time evolution. The resulting quantum circuit has depth \cite{Chen2023EfficientExact}
\begin{equation}
    \widetilde\cO(t \cdot \beta)\;,
\end{equation}
where the $\widetilde\cO$ notation absorbs several logarithmic dependencies.
To build a full Gibbs sampler from this framework, one needs to set $t=\tau_{\operatorname{mix}}(\varepsilon)$, which can cause the circuit depth to vary greatly. A key advantage is that this depth bound remains valid even for non-commuting local Hamiltonians.

\subsection{Efficient sampling under unitary conjugation}\label{sec:appendix_technical}

Here we show the proof of \cref{lem:efficientgibbssampling}. For any $\Lambda\Subset V$, the Gibbs state of $\widetilde{H}_\Lambda$ is given by 
    \begin{align*}
        \widetilde{\sigma}^\Lambda(\beta) & := \frac{e^{-\beta \,U_\Lambda H_\Lambda U^\dagger_\Lambda}}{\Tr[e^{-\beta \,U_\Lambda H_\Lambda U^\dagger_\Lambda}]} \\
        &= \frac{U_\Lambda \, e^{-\beta \,H_\Lambda}\,U^\dagger_\Lambda}{\Tr[U_\Lambda \, e^{-\beta \,H_\Lambda}\,U^\dagger_\Lambda]} = U_\Lambda \, \sigma^\Lambda (\beta) \, U^\dagger_\Lambda \;.
    \end{align*}
    Assume that there exist efficient Gibbs samplers $\{\cC_{\Lambda, \varepsilon}\}_{\Lambda \Subset V}$ for $\{H_\Lambda\}_{\Lambda \Subset V}$ (up to $\varepsilon$-trace-distance). Then one can construct efficient Gibbs samplers $\{\widetilde{\cC}_{\Lambda, \varepsilon}\}_{\Lambda \Subset V}$ for $\{\widetilde{H}_\Lambda\}_{\Lambda \Subset V}$ as follows: To prepare any $\widetilde\sigma^\Lambda(\beta)$ within $\varepsilon$-trace-distance, one has to
    \begin{enumerate}
        \item Prepare a state $\rho$ within $\varepsilon$ distance of $\sigma^\Lambda(\beta)$ with the corresponding circuit $\cC_{\Lambda, \varepsilon}$, which exists by assumption.
        \item Apply the circuit $U_\Lambda$ to the prepared state $\rho$ to reach the state $\widetilde{\rho}=U_\Lambda\,\rho \, U_\Lambda^\dagger $.
    \end{enumerate}
    The final state $\widetilde{\rho}$ is then $\varepsilon$-close to $\widetilde{\sigma}_\beta^\Lambda$ since
    \begin{equation}
        \|\widetilde{\rho}-\widetilde{\sigma}^\Lambda(\beta)\|_1 = \|U_\Lambda(\rho-{\sigma}^\Lambda(\beta))U_\Lambda^\dagger \|_1 = \|\rho-{\sigma}^\Lambda(\beta) \|_1 \leq \varepsilon \;,
    \end{equation}
    and the circuit depth is given by
    \begin{equation}
    \begin{aligned}
        \text{depth}(\widetilde{\cC}_{\Lambda, \varepsilon}) &= \text{depth}(U_\Lambda) + \text{depth}(\cC_{\Lambda, \varepsilon})\\
        &= \cO\left(\operatorname{poly}(|\Lambda|)\right) + \cO\left(\operatorname{poly}(|\Lambda|, \log\tfrac1\varepsilon)\right) \\
        &=\cO\left(\operatorname{poly}(|\Lambda|, \log\tfrac1\varepsilon)\right)\;,
    \end{aligned}
    \end{equation}
    proving that $\{\widetilde{\cC}_{\Lambda, \varepsilon}\}$ is an efficient Gibbs sampler and one may simply set
    \begin{equation}
        \widetilde{\cC}_{\Lambda, \varepsilon} := U_\Lambda \circ \cC_{\Lambda, \varepsilon}  \;\;\; \forall \Lambda \Subset V\;.
    \end{equation}

\subsection{Mixing times of Lindbladians}\label{sec:mixing_times_Lindbladians}

This section provides a general overview of the current state of the art regarding estimates on the mixing times of some Lindbladians, such as the Davies and CKG.

Let us first consider the simplest possible case. Given a finite lattice $\Lambda \Subset V$, a $1$-local (i.e. non-interacting) Hamiltonian on it, $H_\Lambda$, and given its Gibbs state at a fixed temperature $\beta >0$, $\sigma \equiv \sigma^\Lambda(\beta)$, for any $A \subseteq \Lambda$ we define the \textit{heat-bath generator} \cite{Kastoryano2016GibbsSamplersCommuting,BardetCapelLuciaPerezGarciaRouze-HeatBath1DMLSI-2019} as 
\begin{equation}\label{eq:heat_bath_generator}
    \cL^{\text{HB}}_\Lambda (\rho_\Lambda) := \underset{x \in \Lambda}{\sum} \left( \rho_{x^c} \otimes \sigma_x - \rho_\Lambda \right) \, ,
\end{equation}
for any $\rho_\Lambda \in \cS_\Lambda$. The semigroup generated by this Lindbladian reduces to the generalized depolarizing semigroup \cite{art:QuantumConditionalEntropyCapel_2018}.

\begin{theorem}[See \cite{muller2016relative,muller2016entropy} for $\sigma=\identity/d$ and \cite{art:QuantumConditionalEntropyCapel_2018,BeigiDattaRouze-ReverseHypercontractivity-2018} for general product $\sigma$] 
    Consider $\cL^{\text{HB}}_\Lambda$ defined as above, over a finite-dimensional Hilbert space in a finite graph $\Lambda$, with unique fixed point $\sigma= \underset{x\in \Lambda}{\otimes} \sigma_x$. Then, $\cL^{\text{HB}}_\Lambda$ has a positive MLSI, with optimal constant given by $1/2$.  
\end{theorem}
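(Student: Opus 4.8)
The plan is to exploit the product structure $\sigma=\bigotimes_{x\in\Lambda}\sigma_x$ to reduce the global MLSI to a single-site identity plus an \emph{approximate tensorization} of the relative entropy with sharp constant one. First I would write $\cL^{\text{HB}}_\Lambda=\sum_{x\in\Lambda}\mathcal{L}_x$ with $\mathcal{L}_x(\rho)=E_x(\rho)-\rho$, where $E_x(\rho):=\Tr_x[\rho]\otimes\sigma_x$ is the $\sigma$-preserving conditional expectation onto $\mathcal{B}_{\Lambda\setminus\{x\}}$. Because $\sigma$ is a product state, the $E_x$ pairwise commute, each fixes $\sigma$, and $\log E_x(\rho)-\log\sigma$ is supported on $\Lambda\setminus\{x\}$. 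Using this together with the Pythagorean identity for conditional expectations, $D(\rho\|\sigma)=D(\rho\|E_x\rho)+D(E_x\rho\|\sigma)$, a short computation of the local entropy production gives the exact formula
\begin{equation*}
\mathrm{EP}_x(\rho):=-\Tr\!\big[\mathcal{L}_x(\rho)(\log\rho-\log\sigma)\big]=D(\rho\|E_x\rho)+D(E_x\rho\|\rho),
\end{equation*}
so that, summing over sites, $\mathrm{EP}(\rho)=\sum_{x\in\Lambda}\big(D(\rho\|E_x\rho)+D(E_x\rho\|\rho)\big)$.

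For the lower bound $\alpha\ge 1/2$ I would simply discard the nonnegative ``backward'' terms $D(E_x\rho\|\rho)\ge 0$, obtaining $\mathrm{EP}(\rho)\ge\sum_{x}D(\rho\|E_x\rho)$, and then invoke approximate tensorization with constant one for product reference states, namely $\sum_{x\in\Lambda}D(\rho\|E_x\rho)\ge D(\rho\|\sigma)$. I would prove the latter from the identity $D(\rho\|E_x\rho)=D(\rho\|\sigma)-D(\rho_{\Lambda\setminus x}\|\sigma_{\Lambda\setminus x})$ (valid precisely because $\sigma$ is a product), which reduces the claim to $\sum_{x}S(\rho_{\Lambda\setminus x})\ge(|\Lambda|-1)\,S(\rho)$ --- this is Han's inequality for the von Neumann entropy, itself obtained by summing strong subadditivity over sites. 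Combining the two steps yields $\mathrm{EP}(\rho)\ge D(\rho\|\sigma)$ for every state $\rho$, i.e. the MLSI with $2\alpha=1$, uniformly in $|\Lambda|$ and in the single-site states $\sigma_x$.

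For optimality of the constant $1/2$ I would restrict attention to product states $\rho=\bigotimes_x\tau_x$, on which the heat-bath flow factorizes (each factor evolving under the single-site depolarizing generator toward $\sigma_x$) and $\mathrm{EP}(\rho)/D(\rho\|\sigma)=\big(\sum_x[D(\tau_x\|\sigma_x)+D(\sigma_x\|\tau_x)]\big)/\sum_x D(\tau_x\|\sigma_x)$. It then suffices to exhibit single-site pairs $(\sigma_x,\tau_x)$ for which $D(\sigma_x\|\tau_x)$ is arbitrarily small relative to $D(\tau_x\|\sigma_x)$; this is achieved in the regime where $\sigma_x$ approaches a pure state while $\tau_x$ stays a fixed mixed state, driving the ratio of entropy production to relative entropy down to $1$, hence $\alpha$ down to $1/2$. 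The case $\sigma=\identity/d$ and the sharp analysis for general product $\sigma$ are exactly the contents of the cited references, which I would quote for the final quantitative step.

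I expect the genuine obstacle to be obtaining the tensorization step with the \emph{sharp} constant one rather than some $|\Lambda|$-dependent or merely $\mathcal{O}(1)$ constant: the clean reduction to Han's inequality leans essentially on $\sigma$ being a product, and relaxing that hypothesis forces a quantitatively weaker, geometry-dependent argument. By contrast, once the near-extremal family of nearly-pure $\sigma_x$ is identified, the optimality half is comparatively routine.
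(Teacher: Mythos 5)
The paper does not actually prove this statement—it is quoted from the cited references—so there is no internal proof to compare against; I assess your argument on its own terms. Your lower bound $\alpha\ge 1/2$ is correct and complete: the entropy-production identity $\mathrm{EP}_x(\rho)=D(\rho\|E_x\rho)+D(E_x\rho\|\rho)$ checks out (both reductions use only that $E_x\rho$ and $\rho$ share the marginal on $x^c$ and that $\log\sigma$ splits as a sum of single-site terms), the identity $D(\rho\|E_x\rho)=D(\rho\|\sigma)-D(\rho_{\Lambda\setminus x}\|\sigma_{\Lambda\setminus x})$ is exactly right for product $\sigma$, and after the potential terms cancel the tensorization inequality does reduce to quantum Han's inequality $\sum_x S(\rho_{\Lambda\setminus x})\ge(|\Lambda|-1)S(\rho)$, which follows from strong subadditivity via the chain rule. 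This is a legitimate self-contained route to the constant $1/2$, essentially reproducing the mechanism of the cited works.

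The gap is in the optimality half. The state $\sigma=\bigotimes_x\sigma_x$ is part of the data defining $\cL^{\text{HB}}_\Lambda$; you are not free to "let $\sigma_x$ approach a pure state," because that changes the generator. What your product-state computation actually shows is that no constant larger than $1/2$ holds \emph{uniformly over the class} of product reference states—which is a reasonable reading of the theorem, but you should say so explicitly, since for a \emph{fixed} generator the argument fails: with $\sigma_x$ fixed and full rank, $D(\sigma_x\|\tau_x)\to 0$ forces $\tau_x\to\sigma_x$, where the two divergences agree to second order, so $\inf_{\tau}D(\sigma_x\|\tau)/D(\tau\|\sigma_x)$ is strictly positive and product test states give a ratio bounded away from $1$. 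In particular, for $\sigma=\identity/d^{\otimes n}$ with $d$ fixed (the case of \cite{muller2016relative,muller2016entropy}), showing that the constant degrades to exactly $1/2$ as $|\Lambda|\to\infty$ requires non-product test states (e.g., mixtures of a product pure state with the maximally mixed state on the whole of $\Lambda$), and controlling $\sum_x D(E_x\rho\|\rho)$ for such states is precisely the nontrivial part delegated to the references. Either restrict your optimality claim to "best uniform constant over product $\sigma$," or supply the non-product extremal family.
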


This generator is not unrelated to the Davies Lindbladian. Each summand of \eqref{eq:heat_bath_generator} can be written in terms of a so-called conditional expectation onto a subgraph, which when applied infinitely many times coincides with the corresponding conditional expectation associated to the Davies generator \cite{art:ApproxTensorizationBardet_2021}. 

Considering now more general commuting Hamiltonians and their associated Davies Lindbladians, we can summarize the current knowledge about positive MLSIs as follows.

\begin{theorem}[\cite{Bardet.2023,Bardet.2024,Capel2024RapidThermalizationDissipative}] \label{thm:MLSI_commuting}
    Let $V$ be a graph and consider a family of finite subgraphs $ \{ \Lambda\}_{\Lambda \Subset V} $. Consider $\{\cL^{\text{D}}_\Lambda\}_{\Lambda \Subset V}$ defined as above, with corresponding unique fixed points $\{\sigma_\Lambda\}_{\Lambda \Subset V}$ for $\{ H_\Lambda\}_{\Lambda \Subset V}$ and inverse temperature $\beta < \infty$. Then, we have the following results:
    \begin{itemize}
        \item If $V=\mathbb{Z}$ and $\{ H_\Lambda\}_{\Lambda \Subset V}$ are commuting and finite-range Hamiltonians, then $\{\cL^{\text{D}}_\Lambda\}_{\Lambda \Subset V}$ have a positive MLSI constant, independent of the system size, at any $0 < \beta < \infty$. 
        \item If $V=\mathbb{Z}^D$ and $\{ H_\Lambda\}_{\Lambda \Subset V}$ are commuting and nearest-neighbor Hamiltonians,  then $\{\cL^{\text{D}}_\Lambda\}_{\Lambda \Subset V}$ have a positive MLSI constant, independent of the system size, for $0 < \beta < \beta_*$.
        \item If $V=\mathbb{T}_b$ is a binary tree and $\{ H_\Lambda\}_{\Lambda \Subset V}$ are commuting and nearest-neighbor Hamiltonians,  then $\{\cL^{\text{D}}_\Lambda\}_{\Lambda \Subset V}$ have a log-decreasing MLSI constant for $0 < \beta < \beta_\#$.
    \end{itemize}
\end{theorem}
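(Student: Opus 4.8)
The plan is to derive a positive (resp.\ log-decreasing) modified logarithmic Sobolev constant for $\{\cL^{\text{D}}_\Lambda\}_{\Lambda\Subset V}$ in each geometry from two ingredients: (i) a uniform lower bound on the MLSI constants of the \emph{local} Davies generators $\cL^{\text{D}}_{\Lambda,A}$ on regions $A$ of bounded diameter, and (ii) an \emph{approximate tensorization} of the relative entropy,
\[
D\bigl(\rho \,\|\, E^{\text{D}}_\Lambda(\rho)\bigr)\;\le\;C(\Lambda)\sum_{A\in\mathcal{C}_\Lambda}D\bigl(\rho \,\|\, E^{\text{D}}_A(\rho)\bigr),
\]
where $\mathcal{C}_\Lambda$ is a cover of $\Lambda$ by bounded overlapping regions and $E^{\text{D}}_A$ is the Davies conditional expectation of \cref{eq:cond_exp_Davies}. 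Once (i) and (ii) hold, summing the entropy-production inequality over the pieces and using frustration-freeness of the Davies family yields $\alpha(\cL^{\text{D}}_\Lambda)\gtrsim 1/C(\Lambda)$, so the whole problem reduces to bounding $C(\Lambda)$.

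First I would dispose of ingredient (i). Each $\cL^{\text{D}}_{\Lambda,A}$ acts on a space determined only by $\mathrm{diam}(A)\le\ell$, and primitivity \cite{Kastoryano2016GibbsSamplersCommuting} gives a strictly positive MLSI constant there; by continuity of $\alpha$ in the finitely many parameters of a bounded-range commuting interaction at fixed $\beta$, these are bounded below by a positive number depending only on $\ell,J,\beta$. The subtle point is \emph{local indistinguishability}: to glue the local pieces one needs $E^{\text{D}}_A$ (computed with the full Gibbs state) to agree, up to an exponentially small error, with the conditional expectation of the Gibbs state of the Hamiltonian restricted near $A$; for commuting finite-range Hamiltonians this follows from the fact that the range of $E^{\text{D}}_A$ is at most twice that of $H$ \cite{Capel2024RapidThermalizationDissipative}.

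The heart of the argument is ingredient (ii), and this is where the three cases diverge. The constant $C(\Lambda)$ is controlled by a \emph{conditional clustering of correlations} for the Gibbs state: bisecting $\Lambda$ into overlapping halves $A,B$ with overlap $\partial$, one iterates a two-region quasi-factorization whose error term is a relative-entropy-type correlation measure between $\sigma_\Lambda$ and $\sigma_A\otimes_\partial\sigma_B$ that must be shown to decay with the separation of $A\setminus\partial$ and $B\setminus\partial$. For $V=\mathbb{Z}$ with commuting finite-range $H$, this clustering is exponential at \emph{every} temperature: $\sigma_\Lambda$ is a finitely correlated state whose transfer operator has a spectral gap, so the recursion over $\mathcal{O}(\log|\Lambda|)$ bisection levels produces a geometric series and $C(\Lambda)=\mathcal{O}(1)$. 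For $V=\mathbb{Z}^D$ with nearest-neighbor commuting $H$, exponential clustering is available only for $\beta<\beta_*$, via a Dobrushin-type uniqueness condition / cluster expansion for commuting Hamiltonians, after which the same recursion gives a size-independent $C(\Lambda)$. For the binary tree $V=\mathbb{T}_b$, the branching forces the bisection recursion to have depth $\Theta(\log|\Lambda|)$ with a multiplicative factor per level, so even with exponential clustering (valid for $\beta<\beta_\#$) one obtains only $C(\Lambda)=\mathcal{O}(\mathrm{polylog}|\Lambda|)$, hence the log-decreasing MLSI.

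The main obstacle will be establishing the quantitative conditional-clustering estimates for the quantum Gibbs state in each geometry, i.e.\ bounding the error term of the two-region quasi-factorization by a decay-of-correlations quantity and then proving that decay. The commuting assumption is what makes this tractable: $\sigma_\Lambda^{1/2}$ factorizes along the interaction graph up to boundary terms, reducing the quantum estimate to a largely classical transfer-operator / cluster-expansion computation. The technically delicate parts—handled in \cite{Bardet.2023,Bardet.2024,Capel2024RapidThermalizationDissipative}—are that the various conditional expectations $E^{\text{D}}_A$ do not mutually commute, so one cannot naively iterate the factorization, and that one must carefully track the boundary corrections between $\sigma_{AB}$ and $\sigma_A\otimes_\partial\sigma_B$ when converting clustering into an entropic inequality.
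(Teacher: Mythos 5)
This theorem is not proved in the paper: it is stated as a summary of known results and attributed to \cite{Bardet.2023,Bardet.2024,Capel2024RapidThermalizationDissipative}, so there is no in-paper proof to compare against. That said, your outline is an accurate account of the strategy actually used in those references, and it coincides with the strategy the paper itself deploys for the closely analogous lasso 1D Ising model in \cref{sec:lasso}: chain rule / quasi-factorization of the relative entropy, exact tensorization over a colouring via (Schmidt/Davies) conditional expectations, an approximate-tensorization recursion whose multiplicative error is a clustering quantity (the q$\mathbb{L}_1\!\to\!\mathbb{L}_\infty$ constant, decaying exponentially in 1D at all $\beta$, in $\mathbb{Z}^D$ and on trees only at high temperature), and uniformly positive local (c)MLSI constants from \cite{art:CompleteEntropicInequalities_GaoRouze_2022} on bounded-diameter regions. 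Two caveats: your justification of ingredient (i) by ``continuity of $\alpha$ in the parameters'' is weaker than what is actually needed — the references use \emph{complete} MLSI constants for the local conditional expectations, which is essential because the pieces are glued with ancillary systems attached; and as written your text is a roadmap that explicitly defers the delicate steps (non-commuting conditional expectations, converting clustering into the entropic error bound) to the cited literature, so it should be regarded as a correct sketch of the known proofs rather than a self-contained argument.
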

For the specific value on each critical temperature, we refer the reader to \cite{Capel2024RapidThermalizationDissipative}. The extension from nearest-neighbor to $k$-local Hamiltonians is unfortunately highly non-trivial and we can only guarantee that an MLSI follows for a $k$-local commuting Hamiltonian from the existence of a positive spectral gap in the Lindbladian and an exponential decay of a condition named \textit{Matrix-valued Conditional Mutual Information}, which in particular implies the decay of the conditional mutual information and the mutual information \cite{Capel2024Gibbs}.  

For specific models such as the so-called \textit{Kitaev's quantum double models} in 2D, a positive spectral gap for the associated Davies generator is known to hold at every positive temperature. The canonical representative of this family of models is the 2D toric code. 

\begin{theorem}[\cite{Alicki_2009,ding2025polynomialtimepreparationlowtemperaturegibbs,Komar.2016,Lucia_Perez-Garcia_Perez-Hernandez_2023}]\label{thm:double_models}
Let $\{ H_\Lambda\}$ be a family of Hamiltonians associated to a possibly non-Abelian quantum double model in 2D, $0 < \beta < \infty$ an inverse temperature, and $\{\cL^{\text{D}}_\Lambda\}$ a family of Davies Lindbladians with unique fixed point the Gibbs states $\{ \sigma^\Lambda(\beta)\}$. Then, $\{\cL^{\text{D}}_\Lambda\}$ has a positive spectral gap independently of the system size.
\end{theorem}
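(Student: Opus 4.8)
The plan is to first establish the result for the Abelian representative---the 2D toric code---and then indicate how the argument extends to a general quantum double $D(G)$, reconstructing the strategy of the cited works. Since $H_\Lambda = -\sum_v A_v - \sum_p B_p$ is an affine combination of mutually commuting projectors, and the Davies jump operators $S_{\alpha(k)}$ can be taken to be single-site Paulis, the first step is to observe that $\cL^{\text{D}}_\Lambda$ preserves the joint eigenspaces of the stabilizers, i.e.\ it is block-diagonal with respect to the syndrome sectors labelled by the configurations of violated stars and plaquettes. Restricted to each such sector, $\cL^{\text{D}}_\Lambda$ acts as the generator of a classical continuous-time Markov chain whose state records the positions of the anyonic defects: the dissipator at site $k$ either hops a neighbouring defect across $k$ or creates/annihilates a pair of defects on the two stabilizers adjacent to $k$, with rates fixed by the bath correlation functions $\chi_{\alpha(k)}(\omega)$ and hence by $\beta$.

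The second step is to lower-bound the spectral gap of each of these classical chains uniformly in $|\Lambda|$. The key structural fact, special to two dimensions, is that a single defect diffuses freely across the lattice at $O(1)$ rate and that defects are always created in pairs, so there is no macroscopic energy barrier between syndrome configurations: any two of them are joined by a sequence of moves along which the energy fluctuates by only $O(1)$. One then runs a canonical-paths / comparison argument (in the spirit of Diaconis, Stroock and Sinclair, as carried out in \cite{Alicki_2009}), comparing the defect chain to a product of independent single-defect random walks, for which the gap is explicitly $\Omega(1)$ with only $\beta$-dependent constants; the congestion of the chosen paths contributes at most a polynomial in $|\Lambda|$ to the comparison constant, which is harmless because the reference gap does not decay. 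A cleaner route, following \cite{Lucia_Pérez-García_Pérez-Hernández_2023,Lucia_Pérez-García_Pérez-Hernández_2025}, is to establish a size-independent Poincaré inequality directly via a recursive covering of $\Lambda$ together with a quasi-factorization (approximate tensorization) of the variance, exploiting that the relevant stationary measure is the Gibbs measure of an essentially non-interacting defect gas at every $\beta$.

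The third step handles the non-Abelian case. Here the $\pm 1$-valued syndrome is replaced by the superselection sectors of $D(G)$---pairs consisting of a conjugacy class together with an irreducible representation of its centralizer---and the single-site jump operators by the local ribbon-type operators that hop and fuse these anyons. The Davies generator is still block-diagonal over the total-charge sectors, and within a sector one again obtains a classical-type chain on defect configurations, now carrying an internal degree of freedom living in the fusion Hilbert space. The gap estimate proceeds as before---free diffusion of an isolated anyon plus pair creation---but one must now control the representation-theoretic multiplicities arising when two defects fuse; this bookkeeping is the main technical obstacle, and it is precisely what the analyses of \cite{Komar.2016} and of \cite{ding2025polynomialtimepreparationlowtemperaturegibbs} (the latter via a modified, $\operatorname{poly}(\beta)$-rate variant of the Davies generator) are designed to handle. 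Once the analogue of the single-particle gap and of the comparison and congestion bounds is in place in this enriched setting, the uniform lower bound on $\lambda(\cL^{\text{D}}_\Lambda)$ follows exactly as in the Abelian case.
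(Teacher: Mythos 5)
This theorem is not proved in the paper at all: it is imported verbatim from the cited literature (\cite{Alicki_2009,Komar.2016,Lucia_Pérez-García_Pérez-Hernández_2023,Lucia_Pérez-García_Pérez-Hernández_2025}) and used as a black box in \cref{sec:mixing_times_Lindbladians}. So the relevant question is whether your reconstruction of those external arguments is sound. Your structural outline is faithful to the Abelian case: the Davies generator for a commuting stabilizer Hamiltonian with local couplings is indeed block-diagonal over syndrome sectors, restricts to a classical birth--death--hopping process on anyon configurations, and the absence of an energy barrier in 2D (defects diffuse at $O(1)$ energy cost and are created/annihilated in pairs) is the correct physical mechanism behind the constant gap.

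The quantitative step, however, fails as written. In a Diaconis--Stroock/Sinclair comparison, the gap of the target chain is bounded below by the gap of the reference chain \emph{divided by} the comparison constant $A$ built from the congestion and lengths of the canonical paths. If, as you assert, $A$ is only controlled by a polynomial in $|\Lambda|$, the resulting bound is $\lambda(\cL^{\mathrm{D}}_\Lambda)=\Omega(1/\mathrm{poly}(|\Lambda|))$, which is exactly what the theorem forbids: the claim is a gap \emph{independent} of the system size. Calling a polynomial congestion ``harmless because the reference gap does not decay'' inverts the logic of the comparison inequality; to get a size-independent gap one must exhibit paths whose congestion is $O(1)$ (with only $\beta$-dependent constants), or avoid global comparison altogether, which is what the quasi-factorization/recursive-covering route of \cite{Lucia_Pérez-García_Pérez-Hernández_2023,Lucia_Pérez-García_Pérez-Hernández_2025} accomplishes. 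Separately, your non-Abelian paragraph is a description of what \cite{Komar.2016} and the later works do rather than an argument: the control of the fusion multiplicities and of the ribbon-operator jump terms is precisely the hard content there, and deferring it wholesale means that part of the theorem remains unproved in your write-up.
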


We conclude this section with the more general case of non-commuting Hamiltonians, even possibly with long-range interactions. In that setting, rapid mixing is known to hold for the CKG generators at high temperature.

\begin{theorem}[\cite{Rouze.2024b,Bakshi.2024,Rouze.2024c}]\label{thm:non_commuting}
Let $\{ H_\Lambda\}_{\Lambda \Subset \mathbb{Z}^D}$ be a family of non-commuting Hamiltonians and $\{\cL^{\text{CKG}}_\Lambda\}_{\Lambda \Subset \mathbb{Z}^D}$ a family of CKG Lindbladians with unique fixed point the Gibbs states $\{ \sigma^\Lambda(\beta)\}_{\Lambda \Subset \mathbb{Z}^D}$ at inverse temperature $0 < \beta < \infty$. Then, there exists $\beta_c >0$ such that for all $\beta < \beta_c $, $\{\cL^{\text{CKG}}_\Lambda\}_{\Lambda \Subset \mathbb{Z}^D}$ have rapid mixing.
\end{theorem}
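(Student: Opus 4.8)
\textbf{Proof plan for \Cref{thm:non_commuting}.}
The statement asserts that a uniform family of CKG Lindbladians $\{\cL^{\text{CKG}}_\Lambda\}_{\Lambda \Subset \mathbb{Z}^D}$ with fixed points the Gibbs states of a (possibly long-range) non-commuting Hamiltonian family has rapid mixing (i.e. $\tau_{\mathrm{mix}} = \mathcal{O}(\mathrm{polylog}\,|\Lambda|)$) for all inverse temperatures $\beta$ below some critical $\beta_c > 0$. Since this theorem is a compilation of results from the literature \cite{Rouze.2024b,Bakshi.2024,Rouze.2024c}, the appropriate ``proof'' here is to recall the structure of the argument rather than to reprove it from scratch. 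The plan is to obtain rapid mixing by establishing a \emph{uniform positive modified logarithmic Sobolev constant} $\alpha(\cL^{\text{CKG}}_\Lambda) \geq \alpha_* > 0$ for all $\Lambda \Subset \mathbb{Z}^D$ at high temperature, and then invoke the implication \eqref{eq:bound_mixingtime_MLSI}, which together with $\sigma_{\min} = \Omega(1/\exp(|\Lambda|))$ for a Gibbs state of a local Hamiltonian yields $\tau_{\mathrm{mix}}(\varepsilon) = \mathcal{O}(|\Lambda| + \log(1/\varepsilon))$, hence the rapid-mixing bound \eqref{eq:bound_mixingtime_rapidmixing}.

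First I would recall that the CKG generator is, by construction \cite{Chen2023EfficientExact}, exactly KMS-detailed balanced with respect to $\sigma^\Lambda(\beta)$ and primitive, so the entropy-production formalism of the MLSI applies, and $\alpha(\cL^{\text{CKG}}_\Lambda)$ is well defined. The core of the high-temperature argument proceeds in two stages. The first is \emph{approximate tensorization (or a quasi-factorization) of the relative entropy}: one shows that for $\beta$ small, the global relative entropy $D(\rho \,\|\, \sigma^\Lambda)$ is controlled, up to a constant factor independent of $|\Lambda|$, by a sum of local relative-entropy terms associated to a cover of $\Lambda$ by overlapping bounded-size blocks. This is the step where high temperature enters through a cluster-expansion / decay-of-correlations input: at $\beta < \beta_c$ the Gibbs state is in a unique-phase regime with exponentially decaying correlations (e.g. via the interaction-picture or polymer expansions used in \cite{Bakshi.2024,Rouze.2024c}), which guarantees the approximate-tensorization constant does not degrade with system size. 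The second stage is a \emph{uniform local MLSI / gap lower bound}: on each bounded block the restricted CKG Lindbladian acts on an $\mathcal{O}(1)$-dimensional space, so it has a strictly positive MLSI constant bounded below by a purely local quantity; combining this with the approximate tensorization and a telescoping/summing argument yields the global bound $\alpha(\cL^{\text{CKG}}_\Lambda) \geq \alpha_* > 0$. I would then conclude by plugging into \eqref{eq:bound_mixingtime_MLSI}.

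The main obstacle — and the reason this is a genuinely nontrivial theorem rather than a routine corollary — is the first stage: establishing the approximate tensorization of relative entropy \emph{for the non-commuting CKG setting}. For commuting Davies generators this machinery is by now well developed (\Cref{thm:MLSI_commuting} and \cite{Capel2024RapidThermalizationDissipative}), but for non-commuting Hamiltonians the Gibbs state does not factorize, the conditional expectations onto subregions are not projections, and the Gaussian energy filter defining the CKG jump operators $\widehat{A}_a(\omega)$ makes them quasi-local rather than strictly local, so one must track the tails of this quasi-locality carefully. The technical heart is therefore a perturbative/cluster-expansion control of (i) the locality of the CKG dissipators and (ii) the decay of correlations in $\sigma^\Lambda(\beta)$, both uniform in $\Lambda$ at high temperature, which is precisely what \cite{Bakshi.2024,Rouze.2024b,Rouze.2024c} supply. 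A secondary subtlety is extracting an explicit $\beta_c$ with the right dependence on the interaction strength $J$, range, and dimension $D$; I would treat $\beta_c$ as whatever threshold makes the cluster expansion converge and not attempt to optimize it here. Once those uniform estimates are in hand, the passage from a size-independent MLSI to rapid mixing is the standard argument recorded in \Cref{sec:setting} above.
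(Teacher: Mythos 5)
The paper gives no proof of this statement: it is recorded purely as a citation of \cite{Rouze.2024b,Bakshi.2024,Rouze.2024c} in the background section on mixing times, so there is no internal argument to compare yours against, and your decision to treat the ``proof'' as a structured recollection of the literature is the right call. Your high-level route (uniform MLSI at high temperature $\Rightarrow$ rapid mixing via \eqref{eq:bound_mixingtime_MLSI}) is consistent with how the paper uses the result elsewhere. One caveat on fidelity to the cited works: the ``approximate tensorization of relative entropy over overlapping blocks'' machinery you describe is really the commuting-Davies technology of \Cref{thm:MLSI_commuting} and \cite{Capel2024RapidThermalizationDissipative}; the non-commuting high-temperature proofs proceed somewhat differently, essentially by a cluster/perturbative expansion around the infinite-temperature point, where the Gibbs state is (close to) a product state and the CKG generator is a quasi-local perturbation of a product generator with known MLSI, with stability of the functional inequality under such perturbations doing the work (and \cite{Bakshi.2024} in particular argues via near-product structure of the high-temperature Gibbs state rather than via a Lindbladian functional inequality at all). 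You correctly flag the quasi-locality of the filtered jump operators $\widehat{A}_a(\omega)$ and the non-projective conditional expectations as the obstacles, so the gap is only one of attribution of technique, not of substance; as a summary of why the theorem holds and where $\beta_c$ comes from, your account is acceptable.
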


\section{Preservation of mixing times under conjugation by unitary}
\label{sec:preservation_mixing_proof}

The following constitutes a formal version of \Cref{thm:preservation_mixing_times_informal}. 

\begin{theorem}\label{thm:preservation_mixing_times}
     Let $\cH$ be a finite-dimensional Hilbert space, $\cL: \; \cB(\cH) \to \cB(\cH)$ a primitive Lindbladian with unique fixed point $\sigma$, $U \in \cU(\cH)$ a unitary and $\widetilde{\cL}$ defined as in \eqref{eq rotated Lindbladian}. Then the following statements hold:
    
    \begin{enumerate}
    
    \item \emph{Primitive Lindbladian.}
    $\widetilde{\cL}$ is a primitive Lindbladian and its unique fixed point is $\widetilde{\sigma} = U \sigma U^\dagger$.
    
    \item \emph{Spectral gap.}
    The spectral gaps of $\cL$ and $\widetilde{\cL}$ coincide: $\lambda = \widetilde{\lambda}$.
    
    \item \emph{MLSI constant.}
    If $\cL$ satisfies an MLSI with constant $\alpha > 0$, then so does $\widetilde{\cL}$ with the same constant $\alpha$.
    
    \item \emph{Mixing time.}
    Given the mixing time $\tau_{\operatorname{mix}}(\varepsilon)$ of $\cL$ towards its fixed state $\sigma$, $\widetilde{\cL}$ has an identical mixing time $\widetilde{\tau}_{\operatorname{mix}}(\varepsilon)$ towards its respective fixed state $\widetilde{\sigma}$, i.e. $\forall\, \eps > 0: \;\;\widetilde{\tau}_{\operatorname{mix}}(\varepsilon) = \tau_{\operatorname{mix}}(\varepsilon)$.
\end{enumerate}
\end{theorem}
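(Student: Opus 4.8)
The plan is to prove all four claims by systematically exploiting the single structural fact that conjugation by the superoperator $\Ad_U$ is a $*$-isomorphism of $\cB(\cH)$ which is additionally a unitary with respect to the Hilbert-Schmidt inner product and intertwines the semigroups, namely $e^{t\widetilde{\cL}} = \Ad_U \circ e^{t\cL} \circ \Ad_{U^\dagger}$. I would first record this intertwining relation (immediate from $\widetilde{\cL} = \Ad_U \circ \cL \circ \Ad_{U^\dagger}$ and the fact that $\Ad_{U^\dagger} = \Ad_U^{-1}$, so that $\widetilde{\cL}^k = \Ad_U \circ \cL^k \circ \Ad_{U^\dagger}$ and hence the exponential series transfers term by term). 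I would also note that $\Ad_U$ sends density matrices to density matrices bijectively, preserves full rank, trace, and the GKLS structure (if $\cL(\rho) = -i[H,\rho] + \sum_k \gamma_k(L_k\rho L_k^\dagger - \tfrac12\{L_k^\dagger L_k,\rho\})$ then $\widetilde{\cL}$ has the same form with $H \mapsto UHU^\dagger$ and $L_k \mapsto UL_kU^\dagger$, since $\Ad_U$ is multiplicative and unital). This already gives claim (1): $\widetilde{\cL}$ is a legitimate Lindbladian, $\widetilde{\cL}(U\sigma U^\dagger) = \Ad_U(\cL(\sigma)) = 0$, and uniqueness transfers because $\widetilde{\cL}(\eta) = 0 \iff \cL(\Ad_{U^\dagger}\eta) = 0 \iff \Ad_{U^\dagger}\eta = \sigma \iff \eta = U\sigma U^\dagger$.

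For claim (2), I would argue at the level of spectra: $\cL$ and $\widetilde{\cL} = \Ad_U \circ \cL \circ \Ad_U^{-1}$ are similar as linear operators on the finite-dimensional space $\cB(\cH)$, hence have identical spectra, and the eigenvalue $0$ corresponds under the similarity; therefore the minimum of $|\Re(\mu)|$ over nonzero eigenvalues agrees, i.e. $\widetilde\lambda = \lambda$. For claim (3), I would verify the MLSI inequality directly by change of variables. Given $\rho \in \cS(\cH)$, write $\rho = U\rho' U^\dagger$ with $\rho' := \Ad_{U^\dagger}(\rho) \in \cS(\cH)$; since $\Ad_U$ is a bijection of $\cS(\cH)$, ranging over all $\rho$ is the same as ranging over all $\rho'$. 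The key observations are that relative entropy is unitarily invariant, $D(U\rho' U^\dagger \,\|\, U\sigma U^\dagger) = D(\rho' \,\|\, \sigma)$, and that the entropy production transforms correctly:
\begin{align}
-\Tr\!\big[\widetilde{\cL}(\rho)(\log\rho - \log\widetilde\sigma)\big]
&= -\Tr\!\big[U\cL(\rho')U^\dagger\,(U\log\rho' U^\dagger - U\log\sigma U^\dagger)\big]\nonumber\\
&= -\Tr\!\big[\cL(\rho')(\log\rho' - \log\sigma)\big],\nonumber
\end{align}
using $\log(U X U^\dagger) = U(\log X)U^\dagger$ and cyclicity of the trace. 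Dividing by $2D(\rho'\|\sigma) = 2D(\rho\|\widetilde\sigma)$ and taking the infimum shows $\alpha(\widetilde{\cL}) = \alpha(\cL)$, so in particular an MLSI with constant $\alpha$ for $\cL$ gives one with the same $\alpha$ for $\widetilde{\cL}$.

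Finally, claim (4) follows from the intertwining relation together with unitary invariance of the trace norm: for any $\rho \in \cS(\cH)$, writing again $\rho = U\rho'U^\dagger$,
\begin{equation}
\big\| e^{t\widetilde{\cL}}(\rho) - \widetilde\sigma \big\|_1 = \big\| U\big(e^{t\cL}(\rho') - \sigma\big)U^\dagger \big\|_1 = \big\| e^{t\cL}(\rho') - \sigma \big\|_1,
\end{equation}
and since $\rho \mapsto \rho'$ is a bijection of $\cS(\cH)$, the supremum over $\rho$ equals the supremum over $\rho'$; hence the defining sets in \eqref{eq:mixing_time} coincide for every $t$, giving $\widetilde\tau_{\operatorname{mix}}(\varepsilon) = \tau_{\operatorname{mix}}(\varepsilon)$ for all $\varepsilon > 0$. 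I do not expect a serious obstacle here; the only points requiring a little care are checking that $\log$ commutes with conjugation by a unitary (true because $\Ad_U$ maps the eigenprojections of $X$ to those of $UXU^\dagger$ with the same eigenvalues) and confirming that the intertwining $e^{t\widetilde{\cL}} = \Ad_U \circ e^{t\cL}\circ\Ad_{U^\dagger}$ is legitimate, which is immediate since $\Ad_U$ is invertible and bounded. If anything is mildly delicate it is simply being careful that every invariance statement (relative entropy, trace norm, spectrum, GKLS form) is quoted with the correct hypotheses, but each is standard.
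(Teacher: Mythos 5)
Your proposal is correct and follows essentially the same route as the paper's proof: conjugation by $\Ad_U$ preserves the GKLS form and full-rankness (giving primitivity and the fixed point), is a similarity transformation preserving the spectrum (giving the gap), and a change of variables $\rho\mapsto U^\dagger\rho U$ together with unitary invariance of the relative entropy, entropy production, and trace norm gives the MLSI constant and mixing time. Your explicit statement of the semigroup intertwining $e^{t\widetilde{\cL}}=\Ad_U\circ e^{t\cL}\circ\Ad_{U^\dagger}$ is a small but welcome addition the paper leaves implicit.
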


\begin{proof}

\textbf{(i) Primitive Lindbladian.}
Firstly, since unitary transformations conserve the rank, $\widetilde{\sigma} = U\sigma U^\dagger$ has full-rank. By definition of $\widetilde{\cL}$, it satisfies the form of \eqref{eq GKSL Lindblad form (rep in unitary invariance)} replacing $H$ with $\widetilde{H} = UHU^\dagger$ and $L_k$ by $\widetilde{L}_k = UL_kU^\dagger$ for every $k$. Since
\begin{equation}
    \widetilde{\cL}(\widetilde{\sigma}) = (\Ad_{U}\circ\cL\circ\Ad_{U^\dagger})(\widetilde{\sigma}) =(\Ad_{U}\circ\cL) (\sigma) = 0 \; ,
\end{equation}
$\widetilde{\sigma}$ is a fixed state of $\widetilde{\cL}$. This fixed point is unique: If there exists another fixed state $\widetilde{\sigma}' \neq \widetilde{\sigma}$ of $\widetilde{\cL}$, then 
\begin{align}
    \cL(U^\dagger \widetilde{\sigma}' U)& = (\Ad_{U^\dagger}\circ\,\widetilde{\cL}\circ\Ad_{U})(U^\dagger \widetilde{\sigma}' U) \\
    &= (\Ad_{U^\dagger}\circ \, \widetilde{\cL})(\widetilde{\sigma}')=0 \;,
\end{align}
so $U^\dagger\widetilde{\sigma}'U\neq\sigma$ would be a second fixed point of $\cL$, which contradicts its primitivity of $\cL$.

\textbf{(ii) Spectral gap.}
By the definition of the spectral gap (see \eqref{eq:def_spectral_gap}),
then for any eigenstate $\rho$ with $\cL(\rho) = \nu \rho$, the state $U\rho U^\dagger$ is an eigenstate of $\widetilde{\cL}$ with the same eigenvalue $\nu$. On the other hand, any eigenstate $\varrho$ of $\widetilde{\cL}$ automatically corresponds to the eigenstate $U^\dagger \varrho U$ of $\cL$, again both with the same eigenvalue. This ultimately leads to the conclusion of identical spectra for $\cL$ and $\widetilde{\cL}$, and thus according to its definition to an identical spectral gap $\lambda=\widetilde{\lambda}$.

\textbf{(iii) MLSI constant.}
Assuming that $\cL$ satisfies an MLSI with constant $\alpha \geq 0$, we have
\begin{equation} \label{eq theorem MLSI equation}
     \alpha=  \underset{\rho \in \cS(\cH)}{\operatorname{inf}}\frac{- \Tr[\cL(\rho)(\log\rho-\log\sigma)]}{2D(\rho\|\sigma)}   \;.
\end{equation}
Note that 
\begin{equation}
\begin{aligned}
    &-\Tr[\widetilde{\cL}(\rho)(\log(\rho)-\log(\widetilde{\sigma}))] \\
    &= -\Tr[\cL(U^\dagger\rho U)(\text{log}(U^\dagger\rho U)-\log(\sigma))] \;
\end{aligned}
\end{equation}
by unitary invariance of the trace. Since for any $U \in \cU(\cH)$, we have $\{ \rho \in \cS(\cH)\}= \{U^\dagger\rho U: \rho \in \cS(\cH) \}$, then
\begin{equation} \label{eq:equiv_MLSI}
\begin{aligned}
        \alpha&=  \underset{U^\dagger\rho U \in \cS(\cH)}{\operatorname{inf}}\frac{- \Tr[\cL(U^\dagger\rho U)(\log(U^\dagger\rho U)-\log\sigma)]}{2D(U^\dagger\rho U\|\sigma)}  \\
        & =  \underset{\rho \in \cS(\cH)}{\operatorname{inf}}\frac{- \Tr[\widetilde{\cL}(\rho)(\log(\rho)-\log\widetilde{\sigma})]}{2D(\rho \|\widetilde{\sigma})} \, ,
\end{aligned}
\end{equation}
and hence, $\widetilde{\cL}$ satisfies an MLSI with the same constant $\alpha$.

\textbf{(iv) Mixing time.} It follows trivially from the definition of mixing time, the unitary invariance of the 1-norm, and the fact that $\{ \rho \in \cS(\cH)\}= \{U^\dagger\rho U:\rho \in \cS(\cH) \}$.

\end{proof}

We can immediately extend the previous result to families of Lindbladians associated to families of Hamiltonians. 

\begin{cor}
\label{cor:unitary_invariance_families}
Let $\{\cL_\Lambda\}:= \{\cL_\Lambda\}_{\Lambda \Subset V}$ be a primitive uniform family of Lindbladians with corresponding fixed states $\{\sigma^\Lambda\}_{\Lambda \Subset V}$. For a family of unitaries $\{U_\Lambda\}_{\Lambda \Subset V}$ with $U_\Lambda \in \cU(\cH_\Lambda)$, the transformed family $\{\widetilde{\cL}_\Lambda\}$ given by 
\begin{equation}
    \{\widetilde{\cL}_\Lambda\} :=  \left\{\Ad_{U_\Lambda} \circ \cL_\Lambda \circ \Ad_{U^\dagger_\Lambda} \right\}_{\Lambda \Subset V}\;,
\end{equation}
satisfies the same mixing bounds as $\{\cL_\Lambda\}$. In particular $\{\widetilde{\cL}_\Lambda\}$ achieves fast/rapid mixing if, and only if, $\{\cL_\Lambda\}$ does so.
\end{cor}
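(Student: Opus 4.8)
The plan is to reduce the corollary to a pointwise (per-$\Lambda$) application of \Cref{thm:preservation_mixing_times}, which has just been established. First I would fix an arbitrary $\Lambda \Subset V$. Since $\{\cL_\Lambda\}_{\Lambda\Subset V}$ is a primitive uniform family, the single Lindbladian $\cL_\Lambda$ is primitive on the finite-dimensional Hilbert space $\cH_\Lambda$ with unique full-rank fixed point $\sigma^\Lambda$; moreover $U_\Lambda \in \cU(\cH_\Lambda)$ and $\widetilde{\cL}_\Lambda = \Ad_{U_\Lambda}\circ\cL_\Lambda\circ\Ad_{U_\Lambda^\dagger}$ is precisely the rotated Lindbladian of \cref{eq rotated Lindbladian}. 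Applying \Cref{thm:preservation_mixing_times} to $\cL_\Lambda$ with the unitary $U_\Lambda$ therefore yields, for this $\Lambda$: (a) $\widetilde{\cL}_\Lambda$ is primitive with unique fixed point $\widetilde{\sigma}^\Lambda = U_\Lambda \sigma^\Lambda U_\Lambda^\dagger$; (b) $\lambda(\widetilde{\cL}_\Lambda) = \lambda(\cL_\Lambda)$; (c) if $\cL_\Lambda$ has a positive MLSI constant $\alpha_\Lambda$, then so does $\widetilde{\cL}_\Lambda$ with the same constant; and (d) $\widetilde{\tau}^\Lambda_{\operatorname{mix}}(\varepsilon) = \tau^\Lambda_{\operatorname{mix}}(\varepsilon)$ for every $\varepsilon > 0$.

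The second step is to observe that every notion of ``mixing bound'' appearing in the paper is expressed entirely through these per-$\Lambda$ quantities: the raw mixing time $\tau_{\operatorname{mix}}$, the functional-inequality constants $\lambda$ and $\alpha$, and the derived trace-norm decay estimates of \cref{eq:bound_mixingtime_rapidmixing,eq:bound_mixingtime_fastmixing,eq:bound_mixingtime_gap,eq:bound_mixingtime_MLSI}, the latter additionally involving only $\sigma^\Lambda_{\min}$, which is unchanged under conjugation by a unitary since the spectrum is preserved. Consequently every bound satisfied by $\{\cL_\Lambda\}_{\Lambda\Subset V}$ holds verbatim, with the same constants and the same dependence on $|\Lambda|$, for $\{\widetilde{\cL}_\Lambda\}_{\Lambda\Subset V}$. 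In particular, since rapid mixing is the assertion $\tau^\Lambda_{\operatorname{mix}} = \cO(\operatorname{poly}\log|\Lambda|)$ and fast mixing the assertion $\tau^\Lambda_{\operatorname{mix}} = \cO(\operatorname{poly}|\Lambda|)$, and since $\widetilde{\tau}^\Lambda_{\operatorname{mix}} = \tau^\Lambda_{\operatorname{mix}}$ for each $\Lambda$, the stated equivalence follows immediately in both directions.

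There is essentially no analytic obstacle: the entire content is already packaged in \Cref{thm:preservation_mixing_times}, and this corollary is a bookkeeping statement that the quantifier over $\Lambda$ commutes with the conclusion. The one point I would be careful to flag in the write-up is what is \emph{not} being claimed: the unitaries $\{U_\Lambda\}$ here are arbitrary, so $\{\widetilde{\cL}_\Lambda\}$ need not be local or finite-range, and hence the corollary concerns the preservation of the mixing \emph{estimates} only, not the preservation of the structural property of being a uniform family of Lindbladians in the sense of \cref{sec:setting}; retaining that structure would additionally require control on the locality or depth of the $U_\Lambda$, as discussed in the main text for poly-depth circuits.
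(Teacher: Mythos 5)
Your proof is correct and matches the paper's intended argument: the corollary is simply a pointwise application of \Cref{thm:preservation_mixing_times} to each $\cL_\Lambda$ with the unitary $U_\Lambda$, after which the quantifier over $\Lambda$ commutes with the conclusion because the mixing criteria (mixing time, spectral gap, MLSI, $\sigma^\Lambda_{\min}$) are all defined per-$\Lambda$ and each is invariant under the conjugation. Your closing caveat---that the dual family need not remain local or finite-range for arbitrary $U_\Lambda$, so only the mixing estimates (not the uniform-family structure) are preserved---is a correct and worthwhile clarification, consistent with the paper's later restriction to poly-depth circuits when locality matters.
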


In the next result, we explore the preservation of efficient sampling under conjugation by circuits of polylog-depth, provided that an initial family of Lindbladians satisfies an MLSI.

\begin{cor}\label{cor:examples_MLSI_gap}
Consider $\{ H_\Lambda \}_{\Lambda\Subset V}$ a family of Hamiltonians with corresponding Gibbs states at $\beta \geq 0$ $\{\sigma^\Lambda(\beta)\}_{\Lambda\Subset V}$ and $\{ \cL_\Lambda \}_{\Lambda\Subset V}$ a family of Lindbladians that will be taken to be the Davies or the CKG. Given any family of unitaries $\{ U_\Lambda \}_{\Lambda\Subset V}$, consider the dual families $\{ \widetilde{H}_\Lambda \}_{\Lambda\Subset V}$ and $\{ \widetilde{\cL}_\Lambda \}_{\Lambda\Subset V}$ as above. Then, we have:
\begin{itemize}
    \item For $\{ \cL_\Lambda \}_{\Lambda\Subset V}$ Davies generators and $\{ U_\Lambda \}_{\Lambda\Subset V}$ of polylog-depth, $\widetilde{\sigma}$ can be sampled with a circuit of depth $\mathcal{O}(|\Lambda| \operatorname{polylog}(|\Lambda|))$ in the following cases:
    \begin{itemize}
        \item If $\{ H_\Lambda \}_{\Lambda\Subset V}$ are non-interacting, $\forall \beta < \infty$ .
        \item If $V=\mathbb{Z}$ and $\{ H_\Lambda\}_{\Lambda \Subset V}$ are commuting and finite-range Hamiltonians at any $0 < \beta < \infty$.
         \item If $V=\mathbb{Z}^D$ and $\{ H_\Lambda\}_{\Lambda \Subset V}$ are commuting and nearest-neighbor Hamiltonians for $0 < \beta < \beta_*$.
        \item If $V=\mathbb{T}_b$ is a binary tree and $\{ H_\Lambda\}_{\Lambda \Subset V}$ are commuting and nearest-neighbor Hamiltonians for $0 < \beta < \beta_\#$.
    \end{itemize}
    \item For $\{ \cL_\Lambda \}_{\Lambda\Subset V}$ CKG generators and $\{ U_\Lambda \}_{\Lambda\Subset V}$ of polylog-depth,  $\widetilde{\sigma}$ can be sampled with a circuit of depth $\mathcal{O}(|\Lambda| \operatorname{polylog}(|\Lambda|))$  if $V=\mathbb{Z}^D$ and $\{ H_\Lambda\}_{\Lambda \Subset V}$ are non-commuting Hamiltonians for $0 < \beta < \beta_c$.
\end{itemize}

\end{cor}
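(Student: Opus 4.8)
The plan is to prove \cref{cor:examples_MLSI_gap} as a compilation: in each listed regime one first secures a mixing‑time/MLSI input for the \emph{original} Lindbladian family $\{\cL_\Lambda\}_{\Lambda\Subset V}$ together with an efficient implementation of the associated Gibbs sampler, and then transports the conclusion to the dual family $\{\widetilde\cL_\Lambda\}_{\Lambda\Subset V}$ either by post‑processing through \cref{lem:efficientgibbssampling} or, equivalently, by implementing the dual Lindbladian directly using \cref{thm:preservation_mixing_times} and \cref{cor:unitary_invariance_families}.

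\emph{Davies regimes.} First I would record that in each of the four bullets the Davies family satisfies an MLSI with constant $\alpha_\Lambda=\Omega\!\big(1/\operatorname{polylog}|\Lambda|\big)$: for non‑interacting $\{H_\Lambda\}$ the Davies generator factorizes over sites, so $\alpha_\Lambda=\Omega(1)$ by tensorization of the (bounded‑below) single‑site MLSI; the $\mathbb{Z}$, $\mathbb{Z}^D$ and binary‑tree cases at the stated temperature ranges are precisely the three bullets of \cref{thm:MLSI_commuting}, giving a size‑independent constant in the first two and a log‑decreasing one on $\mathbb{T}_b$. Next I would feed $\alpha_\Lambda$ into \cref{thm efficient gibbs sampler} (whose block‑encoding pipeline extends beyond $\mathbb{Z}^d$ to bounded‑degree graphs such as $\mathbb{T}_b$), obtaining an efficient Gibbs sampler $\cC_\Lambda$ for $\{H_\Lambda\}$ of depth $\cO\!\big(\tfrac{|\Lambda|}{\alpha_\Lambda}\operatorname{polylog}(\tfrac{|\Lambda|}{\alpha_\Lambda\varepsilon})\big)=\cO\!\big(|\Lambda|\operatorname{polylog}(|\Lambda|/\varepsilon)\big)$. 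Finally, since $\{U_\Lambda\}$ is of polylog depth, $\{H_\Lambda\}$ and $\{\widetilde H_\Lambda\}$ are poly‑depth dual in the sense of \cref{def:dual}, so \cref{lem:efficientgibbssampling} yields $\widetilde\cC_\Lambda:=U_\Lambda\circ\cC_\Lambda$ of depth $\operatorname{depth}(U_\Lambda)+\operatorname{depth}(\cC_\Lambda)=\operatorname{polylog}(|\Lambda|)+\cO(|\Lambda|\operatorname{polylog}(|\Lambda|/\varepsilon))=\cO(|\Lambda|\operatorname{polylog}(|\Lambda|))$.

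\emph{CKG regime.} Here I would invoke \cref{thm:non_commuting}: for $\beta<\beta_c$ on $\mathbb{Z}^D$ the family $\{\cL^{\text{CKG}}_\Lambda\}$ is rapidly mixing, i.e.\ $\tau_{\operatorname{mix}}(\varepsilon)=\operatorname{polylog}(|\Lambda|,\log\tfrac1\varepsilon)$. Since the CKG protocol implements $e^{t\cL^{\text{CKG}}_\Lambda}$ with a circuit of depth $\widetilde\cO(t\,\beta)$, taking $t=\tau_{\operatorname{mix}}(\varepsilon)$ gives a Gibbs sampler for $\sigma^\Lambda(\beta)$ of depth $\widetilde\cO(\operatorname{polylog}(|\Lambda|))$, and composing with $U_\Lambda$ via \cref{lem:efficientgibbssampling} keeps the total depth within $\cO(|\Lambda|\operatorname{polylog}(|\Lambda|))$.

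The step I expect to be the main obstacle is the alternative, Lindbladian‑native formulation that the surrounding text advertises. By \cref{thm:preservation_mixing_times}(iii)--(iv) and \cref{cor:unitary_invariance_families} the dual family $\{\widetilde\cL_\Lambda\}$ inherits the same MLSI constant and the same mixing time, but it is no longer strictly local, since conjugation by $U_\Lambda$ spreads each jump operator over a region of diameter $\operatorname{polylog}(|\Lambda|)$. The hard part is to check that the block‑encoding and qubitization pipelines behind \cref{thm efficient gibbs sampler} and behind the CKG protocol still run when the Lindblad and Hamiltonian operators are supported on polylog‑size regions, at a cost of only an extra $\operatorname{polylog}(|\Lambda|)$ depth per simulation step --- concretely, conjugating each block encoding of a jump operator or Hamiltonian term by $U_\Lambda$ adds $2\operatorname{depth}(U_\Lambda)$ to the depth of that step, and the number of steps is unchanged because mixing times are preserved, so the depth bound $\cO(|\Lambda|\operatorname{polylog}(|\Lambda|))$ survives. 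Everything else is bookkeeping with inputs already established in the excerpt.
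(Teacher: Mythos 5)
Your proposal is correct and follows the same route as the paper's terse one-line proof, which simply combines \cref{thm:MLSI_commuting}/\cref{thm:non_commuting} (to get the MLSI or rapid mixing for the original families), \cref{thm efficient gibbs sampler} and the CKG depth bound (for implementation), \cref{thm:preservation_mixing_times}, and the circuit-composition step of \cref{lem:efficientgibbssampling}. The locality concern you flag for the ``Lindbladian-native'' variant is real but is indeed moot for the primary route of sampling $\sigma^\Lambda(\beta)$ with the \emph{original} local Lindbladian and post-composing with $U_\Lambda$, which is what the paper intends.
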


\begin{proof}
   These are straightforward consequences of \cref{thm:preservation_mixing_times} jointly with the existence of a positive MLSI and/or rapid mixing for these models at the corresponding temperatures mentioned in the statement, as recalled in \cref{thm:MLSI_commuting} and \cref{thm:non_commuting}, respectively, and the efficiency on the implementation of the sampler of the original family of Gibbs states, as recalled in \cref{thm efficient gibbs sampler}. 
\end{proof}

This result can be translated to circuits of depth $\mathcal{O}(\operatorname{poly}(|\Lambda|))$ in the case in which the initial Lindbladian satisfies a positive spectral gap (possibly depending polynomially on the system size), and the unitaries considered are of poly-depth. This applies for all the models of  \cref{thm:double_models}.

\section{Lasso 1D Ising model}\label{sec:lasso}

This section is devoted to a more thorough understanding of the lasso 1D Ising model. In particular, here we show that the lasso 1D Ising model can be efficiently sampled. Even more, we show that the Davies generator with unique fixed point its Gibbs state at any positive temperature satisfies a positive MLSI constant.

Beforehand, we need to introduce another mathematically tractable conditional expectation (and therefore local Lindbladian) that has been employed in the recent literature of quantum Markov semigroups. This construction requires commuting Hamiltonians which are additionally nearest-neighbor, and can therefore be applied in our setting. 

The \textit{Schmidt conditional expectation} was introduced in \cite{art:2localPaper} following a construction of \cite{art:bravyi2004commutative}. Given a graph $(V,E)$ and a $2$-local commuting interaction on it, for $\Lambda \Subset V$ the corresponding Hamiltonian
\begin{equation}
    H_\Lambda  = \underset{(i,j)\in E_\Lambda}{\sum} H_{i,j}
\end{equation}
is composed of terms that act non-trivially only on vertices $i$ and $j$. Given any two terms of the interaction, they overlap in, at most, one vertex. Let us consider the simplified case of vertices $A,B,C$, and interactions $H_{AB}$ and $H_{BC}$, for a fixed $\beta >0$. Then, we can Schmidt decompose $e^{-\beta H_{AB}}$ and $e^{-\beta H_{BC}}$, simultaneously, with a compatible decomposition of $\cB_B$ for both interactions, in the sense that the algebras generated by the ``$B$-part'' of the decompositions of both $e^{-\beta H_{AB}}$ and $e^{-\beta H_{BC}}$ commute. This is extended to any subregion of $V$ in the same way, by considering any 3 consecutive sites of that region and performing this Schmidt decomposition, verifying that the decompositions of the algebras in the intermediate sites are always compatible. 

Then, for $A$, this allows us to define a conditional expectation ${E}_A^{\text{S}}$ onto a *-subalgebra that acts trivially on $A$, decomposes its boundary so that the action of two conditional expectations with two overlapping boundaries commute, and acts non-trivially in the complement of the lattice. For any subregion of $V$, the local Lindbladian onto such a subregion is defined as the difference between such a conditional expectation and the identity. This construction in particular yields the property that, for $A_1,A_2 \subset \Lambda \Subset V$ with $d(A_1,A_2)\geq 2$, the following holds:
\begin{equation}\label{eq:tensorization_Schmidt}
    {E}_{A_1}^{\text{S}} \circ {E}_{A_2}^{\text{S}}  = {E}_{A_2}^{\text{S}}\circ {E}_{A_1}^{\text{S}} = {E}_{A_1\cup A_2}^{\text{S}} \, ,
\end{equation}
which is crucial for the derivation of estimates of mixing times. For a general description of these dynamics, see \cite[Section 3.2]{Capel2024RapidThermalizationDissipative}. 

Now we are in position to state and prove the main result of this section.

\begin{theorem}
    For any $0 <\beta < \infty$, let $\{ H_\Lambda \}$ be a family of Lasso 1D Ising models (see \cref{fig:lassoising}) and let $\{ \cL^{\operatorname{D}}_\Lambda \}$ be the associated Davies generators with unique fixed points $\{ \sigma^\Lambda(\beta)\}$. Then, $\{ \cL^{\operatorname{D}}_\Lambda \}$ have a positive MLSI independent of the system size. 
\end{theorem}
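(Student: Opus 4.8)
The plan is to deduce the positive, size-independent MLSI for the lasso Davies generators from the approximate-tensorization machinery for commuting Gibbs states of \cite{Capel2024RapidThermalizationDissipative}, exploiting that the lasso is, for all practical purposes, a one-dimensional model. Realizing a lasso Ising model on a path $v_1-v_2-\cdots-v_n$ together with a single chord $\{v_n,v_k\}$, its Hamiltonian $H_\Lambda=-\sum_i J_i\,\sigma_z^{v_i}\sigma_z^{v_{i+1}}-J_0\,\sigma_z^{v_n}\sigma_z^{v_k}-\sum_i h_i\,\sigma_z^{v_i}$ is $2$-local, classical (a sum of $\sigma_z$-strings), commuting and nearest-neighbor on the lasso graph $G_n$, which has maximal degree $3$; moreover any two of its terms overlap in at most one vertex. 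Hence the Schmidt conditional expectations $E^{\mathrm S}_A$ of \cite{art:2localPaper,art:bravyi2004commutative} are well defined on every subregion $A\subseteq\Lambda$, they satisfy the tensorization property \eqref{eq:tensorization_Schmidt} whenever $d(A_1,A_2)\ge 2$, and the associated local Lindbladians are the building blocks of the argument. (The naive route of conjugating by a $\CX$ circuit to a genuine line graph fails: the chord cannot be removed, since conjugation by $\CX$ acts by column operations on the $Z$-tableau and cannot kill the cycle in the interaction graph.)

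The core of the proof is a size-independent \emph{quasi-factorization (approximate tensorization) of the relative entropy}: for a suitable covering of $G_n$ by overlapping bounded-diameter regions $\{A_j\}$ one wants a constant $C$, depending only on $\beta<\infty$, $\max_i|J_i|$ and $|J_0|$ — and crucially not on $n$ — such that $D(\rho\,\|\,\sigma^\Lambda(\beta))\le C\sum_j D(\rho\,\|\,E^{\mathrm S}_{A_j}(\rho))$ for every state $\rho$. Granting this, one combines it with the elementary fact that each $E^{\mathrm S}_{A_j}$ has range contained in a bounded neighborhood of $A_j$ and therefore a strictly positive MLSI constant bounded below uniformly in $n$, and runs the standard recursive/martingale scheme to obtain a positive MLSI for the full Schmidt Lindbladian independent of system size. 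Finally, since $H_\Lambda$ is commuting and nearest-neighbor, this transfers to the Davies generator through the comparison of the Schmidt and Davies entropy-production functionals valid for nearest-neighbor commuting Hamiltonians (cf. \cite{art:2localPaper,Capel2024RapidThermalizationDissipative}), which alters the MLSI constant only by a universal factor.

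To obtain the size-independent quasi-factorization constant one needs the lasso Gibbs state to satisfy a uniform \emph{decay-of-correlations} condition at every $\beta<\infty$. Since $\sigma^\Lambda(\beta)$ is the Gibbs measure of a classical Ising model on $G_n$, this can be checked directly: correlations along the tail decay exponentially by an elementary transfer-matrix estimate; correlations between two points of the loop decay exponentially in their distance \emph{around the cycle}, which grows with $n$; and the chord $\{v_n,v_k\}$ contributes only a single coupling of bounded strength. Feeding such a clustering estimate into the clustering $\Rightarrow$ approximate-tensorization implication of \cite{Capel2024RapidThermalizationDissipative} then gives $C$ independent of $n$, and hence exactly the statement we want: a positive MLSI uniform in $|\Lambda|$ and valid for all $0<\beta<\infty$, mirroring the first bullet of \cref{thm:MLSI_commuting}.

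The main obstacle is verifying that the recursion/martingale step closes on the lasso geometry with a constant that does not deteriorate with $n$. The subtlety is not the tail (which is literally $1$-dimensional) but the loop together with the chord: in the ordering $v_1,\dots,v_n$ the last vertex $v_n$ is adjacent, inside $\Lambda_{n-1}=\Lambda\setminus\{v_n\}$, to two previously added sites $v_{n-1}$ and $v_k$, so one step of the recursion involves a two-site boundary rather than a single one. This is expected to be benign — conditioned on $\{v_{n-1},v_k\}$ the spin $v_n$ is in a product state, and the model has no phase transition at any finite $\beta$ — so the required estimates are the same as in the $V=\mathbb Z$ case up to cosmetic changes; carefully establishing that the $1$D argument of \cite{Bardet.2023,Bardet.2024,Capel2024RapidThermalizationDissipative} extends essentially verbatim to the lasso graph is the bulk of the work.
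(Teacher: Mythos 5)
Your proposal takes essentially the same route as the paper: both realize the lasso as a $2$-local, commuting, nearest-neighbor model whose interaction terms overlap in at most one site, invoke the Schmidt conditional expectations together with the approximate-tensorization-plus-clustering recursion of \cite{Capel2024RapidThermalizationDissipative}, and transfer the resulting size-independent MLSI to the Davies generator by comparing the Schmidt and Davies entropy productions. The one concrete point the paper supplies where you defer (``the bulk of the work'') is the geometric first step adapted to the loop: the lasso must be split into two regions whose overlap has \emph{two} connected components (cutting the cycle in two places), after which the two complementary segments can be blocked and bent into a single 1D chain so that the exponential clustering controlling the approximate-tensorization constant follows from known one-dimensional decay-of-correlations results; from there the recursion proceeds exactly as in the chain case.
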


\begin{proof}
    Here we will only follow the steps of the proof of \cite[Theorem 5.3]{Capel2024RapidThermalizationDissipative}, where the analogous result for 1D spin chains was shown. As a 1D lasso is a slight modification of the geometry of 1D chains, we will only present here a sketch of the proof focusing particularly on those steps which differ from the aforementioned result. 

    \begin{figure}
    \centering
    \begin{tikzpicture}[scale=1]
    \draw[shift={(1.182, 0.82)}, rotate=60.63] (0, 0) -- (0, -0.847) .. controls (0, -0.941) and (0.047, -0.988) .. (0.141, -0.988) .. controls (0.235, -0.988) and (0.282, -0.941) .. (0.282, -0.847) -- (0.282, 0) .. controls (0.282, 0.094) and (0.235, 0.141) .. (0.141, 0.141) .. controls (0.047, 0.141) and (0, 0.094) .. (0, 0) -- cycle;
    \node[Mark, Mark_disk] at (1.922, 0.567) {};
    \draw[shift={(1.795, 0.473)}, rotate=62.561] (0, 0) -- (0, -0.847) .. controls (0, -0.941) and (0.047, -0.988) .. (0.141, -0.988) .. controls (0.235, -0.988) and (0.282, -0.941) .. (0.282, -0.847) -- (0.282, 0) .. controls (0.282, 0.094) and (0.235, 0.141) .. (0.141, 0.141) .. controls (0.047, 0.141) and (0, 0.094) .. (0, 0) -- cycle;
    \node[Mark, Mark_disk, Green!80] at (2.545, 0.236) {};
    \draw[shift={(2.501, 0.095)}, rotate=91.615] (0, 0) -- (0, -0.847) .. controls (0, -0.941) and (0.047, -0.988) .. (0.141, -0.988) .. controls (0.235, -0.988) and (0.282, -0.941) .. (0.282, -0.847) -- (0.282, 0) .. controls (0.282, 0.094) and (0.235, 0.141) .. (0.141, 0.141) .. controls (0.047, 0.141) and (0, 0.094) .. (0, 0) -- cycle;
    \node[Mark, Mark_disk, BrickRed!90] at (3.333, 0.253) {};
    \draw[shift={(3.376, 0.089)}, rotate=129.289] (0, 0) -- (0, -0.847) .. controls (0, -0.941) and (0.047, -0.988) .. (0.141, -0.988) .. controls (0.235, -0.988) and (0.282, -0.941) .. (0.282, -0.847) -- (0.282, 0) .. controls (0.282, 0.094) and (0.235, 0.141) .. (0.141, 0.141) .. controls (0.047, 0.141) and (0, 0.094) .. (0, 0) -- cycle;
    \node[Mark, Mark_disk, Green!80] at (3.887, 0.69) {};
    \draw[shift={(3.973, 0.552)}, rotate=147.109] (0, 0) -- (0, -0.847) .. controls (0, -0.941) and (0.047, -0.988) .. (0.141, -0.988) .. controls (0.235, -0.988) and (0.282, -0.941) .. (0.282, -0.847) -- (0.282, 0) .. controls (0.282, 0.094) and (0.235, 0.141) .. (0.141, 0.141) .. controls (0.047, 0.141) and (0, 0.094) .. (0, 0) -- cycle;
    \node[Mark, Mark_disk, BrickRed!90] at (4.287, 1.275) {};
    \draw[shift={(4.416, 1.172)}, rotate=164.809] (0, 0) -- (0, -0.847) .. controls (0, -0.941) and (0.047, -0.988) .. (0.141, -0.988) .. controls (0.235, -0.988) and (0.282, -0.941) .. (0.282, -0.847) -- (0.282, 0) .. controls (0.282, 0.094) and (0.235, 0.141) .. (0.141, 0.141) .. controls (0.047, 0.141) and (0, 0.094) .. (0, 0) -- cycle;
    \node[Mark, Mark_disk, Green!80] at (4.499, 1.967) {};
    \draw[shift={(4.626, 1.959)}, rotate=-162.238] (0, 0) -- (0, -0.847) .. controls (0, -0.941) and (0.047, -0.988) .. (0.141, -0.988) .. controls (0.235, -0.988) and (0.282, -0.941) .. (0.282, -0.847) -- (0.282, 0) .. controls (0.282, 0.094) and (0.235, 0.141) .. (0.141, 0.141) .. controls (0.047, 0.141) and (0, 0.094) .. (0, 0) -- cycle;
    \node[Mark, Mark_disk, BrickRed!90] at (4.233, 2.709) {};
    \draw[shift={(4.365, 2.795)}, rotate=-136.812] (0, 0) -- (0, -0.847) .. controls (0, -0.941) and (0.047, -0.988) .. (0.141, -0.988) .. controls (0.235, -0.988) and (0.282, -0.941) .. (0.282, -0.847) -- (0.282, 0) .. controls (0.282, 0.094) and (0.235, 0.141) .. (0.141, 0.141) .. controls (0.047, 0.141) and (0, 0.094) .. (0, 0) -- cycle;
    \node[Mark, Mark_disk, Green!80] at (3.723, 3.282) {};
    \draw[shift={(3.745, 3.429)}, rotate=-79.075] (0, 0) -- (0, -0.847) .. controls (0, -0.941) and (0.047, -0.988) .. (0.141, -0.988) .. controls (0.235, -0.988) and (0.282, -0.941) .. (0.282, -0.847) -- (0.282, 0) .. controls (0.282, 0.094) and (0.235, 0.141) .. (0.141, 0.141) .. controls (0.047, 0.141) and (0, 0.094) .. (0, 0) -- cycle;
    \node[Mark, Mark_disk, BrickRed!90] at (2.952, 3.12) {};
    \draw[shift={(2.899, 3.276)}, rotate=-47.537] (0, 0) -- (0, -0.847) .. controls (0, -0.941) and (0.047, -0.988) .. (0.141, -0.988) .. controls (0.235, -0.988) and (0.282, -0.941) .. (0.282, -0.847) -- (0.282, 0) .. controls (0.282, 0.094) and (0.235, 0.141) .. (0.141, 0.141) .. controls (0.047, 0.141) and (0, 0.094) .. (0, 0) -- cycle;
    \node[Mark, Mark_disk, Green!80] at (2.422, 2.648) {};
    \draw[shift={(2.305, 2.734)}, rotate=-31.659] (0, 0) -- (0, -0.847) .. controls (0, -0.941) and (0.047, -0.988) .. (0.141, -0.988) .. controls (0.235, -0.988) and (0.282, -0.941) .. (0.282, -0.847) -- (0.282, 0) .. controls (0.282, 0.094) and (0.235, 0.141) .. (0.141, 0.141) .. controls (0.047, 0.141) and (0, 0.094) .. (0, 0) -- cycle;
    \node[Mark, Mark_disk, BrickRed!90] at (2.005, 1.978) {};
    \draw[shift={(1.854, 2.061)}, rotate=-10.968] (0, 0) -- (0, -0.847) .. controls (0, -0.941) and (0.047, -0.988) .. (0.141, -0.988) .. controls (0.235, -0.988) and (0.282, -0.941) .. (0.282, -0.847) -- (0.282, 0) .. controls (0.282, 0.094) and (0.235, 0.141) .. (0.141, 0.141) .. controls (0.047, 0.141) and (0, 0.094) .. (0, 0) -- cycle;
    \node[Mark, Mark_disk, Green!80] at (1.845, 1.282) {};
    \draw[shift={(1.674, 1.315)}, rotate=8.288] (0, 0) -- (0, -0.847) .. controls (0, -0.941) and (0.047, -0.988) .. (0.141, -0.988) .. controls (0.235, -0.988) and (0.282, -0.941) .. (0.282, -0.847) -- (0.282, 0) .. controls (0.282, 0.094) and (0.235, 0.141) .. (0.141, 0.141) .. controls (0.047, 0.141) and (0, 0.094) .. (0, 0) -- cycle;
    \node[Mark, Mark_disk, BrickRed!90] at (1.926, 0.568) {};
    \draw[shift={(0.583, 1.302)}, rotate=52.009] (0, 0) -- (0, -0.847) .. controls (0, -0.941) and (0.047, -0.988) .. (0.141, -0.988) .. controls (0.235, -0.988) and (0.282, -0.941) .. (0.282, -0.847) -- (0.282, 0) .. controls (0.282, 0.094) and (0.235, 0.141) .. (0.141, 0.141) .. controls (0.047, 0.141) and (0, 0.094) .. (0, 0) -- cycle;
    \node[Mark, Mark_disk, Green!80] at (1.305, 0.907) {};
    \draw[shift={(0.096, 1.831)}, rotate=42.998] (0, 0) -- (0, -0.847) .. controls (0, -0.941) and (0.047, -0.988) .. (0.141, -0.988) .. controls (0.235, -0.988) and (0.282, -0.941) .. (0.282, -0.847) -- (0.282, 0) .. controls (0.282, 0.094) and (0.235, 0.141) .. (0.141, 0.141) .. controls (0.047, 0.141) and (0, 0.094) .. (0, 0) -- cycle;
    \node[Mark, Mark_disk, BrickRed!90] at (0.723, 1.373) {};
    \node[Mark, Mark_disk, Green!80] at (0.247, 1.875) {};
\end{tikzpicture}
    \caption{2-coloring of the lasso 1D Ising, where we are coloring $\Lambda_0$ in red and $\Lambda_1$ in green.}
    \label{fig:fig1}
\end{figure}
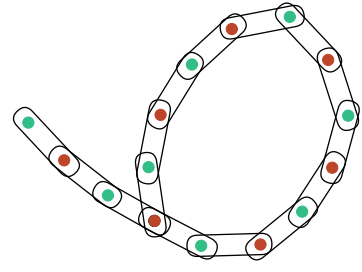

    Let us first fix $\Lambda$, an inverse temperature $0 < \beta < \infty$, and consider a 2-coloring of the lasso $\{ \Lambda_0 , \Lambda_1\}$ as in \cref{fig:fig1}, where we enforce that the site of connection with the lasso belongs to $\Lambda_0$. In the case of an odd number of sites in the cycle we block two sites into one, so that a 2-coloring exists without loss of generality. Let us recall that we want to prove 
    \begin{equation}
      2 \alpha (\cL^{\operatorname{D}}_\Lambda)   D(\rho \| \sigma^\Lambda) \leq - \tr[ \cL^{\operatorname{D}}_\Lambda( \rho ) (\log \rho - \log \sigma^\Lambda)]
    \end{equation}
    for every $\rho\in \cD(\cH)$, where we drop $\beta$ from $\sigma^\Lambda$ for simplicity. The first step of the proof requires conditioning onto the set of points $\Lambda_0$ and the chain rule for the relative entropy \cite{OhyaPetz1993Book}, \cite[Lemma 3.4]{art:ChainRuleForRelEntropyConExpJunge_2022}:
    \begin{equation}\label{eq:chain_rule_RelEnt}
        D(\rho \| \sigma^\Lambda) = D(\rho \| E_{\Lambda_0}^{\text{S}} (\rho) ) + D( E_{\Lambda_0}^{\text{S}} (\rho) \| \sigma^\Lambda ) \, . 
    \end{equation}
    Let us denote $\omega := E_{\Lambda_0}^{\text{S}} (\rho) $. The first summand of \eqref{eq:chain_rule_RelEnt} satisfies exact tensorization due to \eqref{eq:tensorization_Schmidt}, as all points of $\Lambda_0$ are at least at distance $2$. Therefore,
    \begin{equation*}
        D(\rho \| \omega) \leq \underset{x_k \in \Lambda_0}{\sum} D(\rho \| E_{\{ x_k\}}^{\text{S}} (\rho)) \, .
    \end{equation*}
    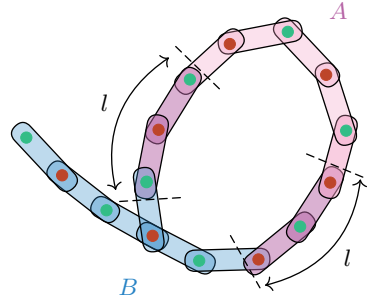
\begin{figure}
    \centering
    \begin{tikzpicture}[scale=1]
    \filldraw[shift={(1.182, 1.047)}, rotate=60.63, fill=RoyalBlue!80, fill opacity=0.3] (0, 0) -- (0, -0.847) .. controls (0, -0.941) and (0.047, -0.988) .. (0.141, -0.988) .. controls (0.235, -0.988) and (0.282, -0.941) .. (0.282, -0.847) -- (0.282, 0) .. controls (0.282, 0.094) and (0.235, 0.141) .. (0.141, 0.141) .. controls (0.047, 0.141) and (0, 0.094) .. (0, 0) -- cycle;
    \filldraw[shift={(1.795, 0.7)}, rotate=62.561, fill=RoyalBlue!80, fill opacity=0.3] (0, 0) -- (0, -0.847) .. controls (0, -0.941) and (0.047, -0.988) .. (0.141, -0.988) .. controls (0.235, -0.988) and (0.282, -0.941) .. (0.282, -0.847) -- (0.282, 0) .. controls (0.282, 0.094) and (0.235, 0.141) .. (0.141, 0.141) .. controls (0.047, 0.141) and (0, 0.094) .. (0, 0) -- cycle;
    \filldraw[shift={(2.501, 0.323)}, rotate=91.615, fill=RoyalBlue!80, fill opacity=0.3] (0, 0) -- (0, -0.847) .. controls (0, -0.941) and (0.047, -0.988) .. (0.141, -0.988) .. controls (0.235, -0.988) and (0.282, -0.941) .. (0.282, -0.847) -- (0.282, 0) .. controls (0.282, 0.094) and (0.235, 0.141) .. (0.141, 0.141) .. controls (0.047, 0.141) and (0, 0.094) .. (0, 0) -- cycle;
    \filldraw[shift={(3.376, 0.317)}, rotate=129.289, fill=Mulberry!80, fill opacity=0.5] (0, 0) -- (0, -0.847) .. controls (0, -0.941) and (0.047, -0.988) .. (0.141, -0.988) .. controls (0.235, -0.988) and (0.282, -0.941) .. (0.282, -0.847) -- (0.282, 0) .. controls (0.282, 0.094) and (0.235, 0.141) .. (0.141, 0.141) .. controls (0.047, 0.141) and (0, 0.094) .. (0, 0) -- cycle;
    \filldraw[shift={(3.973, 0.779)}, rotate=147.109, fill=Mulberry!80, fill opacity=0.5] (0, 0) -- (0, -0.847) .. controls (0, -0.941) and (0.047, -0.988) .. (0.141, -0.988) .. controls (0.235, -0.988) and (0.282, -0.941) .. (0.282, -0.847) -- (0.282, 0) .. controls (0.282, 0.094) and (0.235, 0.141) .. (0.141, 0.141) .. controls (0.047, 0.141) and (0, 0.094) .. (0, 0) -- cycle;
    \filldraw[shift={(4.416, 1.399)}, rotate=164.809, fill=Lavender, fill opacity=0.5] (0, 0) -- (0, -0.847) .. controls (0, -0.941) and (0.047, -0.988) .. (0.141, -0.988) .. controls (0.235, -0.988) and (0.282, -0.941) .. (0.282, -0.847) -- (0.282, 0) .. controls (0.282, 0.094) and (0.235, 0.141) .. (0.141, 0.141) .. controls (0.047, 0.141) and (0, 0.094) .. (0, 0) -- cycle;
    \filldraw[shift={(4.626, 2.187)}, rotate=-162.238, fill=Lavender, fill opacity=0.3] (0, 0) -- (0, -0.847) .. controls (0, -0.941) and (0.047, -0.988) .. (0.141, -0.988) .. controls (0.235, -0.988) and (0.282, -0.941) .. (0.282, -0.847) -- (0.282, 0) .. controls (0.282, 0.094) and (0.235, 0.141) .. (0.141, 0.141) .. controls (0.047, 0.141) and (0, 0.094) .. (0, 0) -- cycle;
    \filldraw[shift={(4.365, 3.023)}, rotate=-136.812, fill=Lavender, fill opacity=0.3] (0, 0) -- (0, -0.847) .. controls (0, -0.941) and (0.047, -0.988) .. (0.141, -0.988) .. controls (0.235, -0.988) and (0.282, -0.941) .. (0.282, -0.847) -- (0.282, 0) .. controls (0.282, 0.094) and (0.235, 0.141) .. (0.141, 0.141) .. controls (0.047, 0.141) and (0, 0.094) .. (0, 0) -- cycle;
    \filldraw[shift={(3.745, 3.656)}, rotate=-79.075, fill=Lavender, fill opacity=0.3] (0, 0) -- (0, -0.847) .. controls (0, -0.941) and (0.047, -0.988) .. (0.141, -0.988) .. controls (0.235, -0.988) and (0.282, -0.941) .. (0.282, -0.847) -- (0.282, 0) .. controls (0.282, 0.094) and (0.235, 0.141) .. (0.141, 0.141) .. controls (0.047, 0.141) and (0, 0.094) .. (0, 0) -- cycle;
    \filldraw[shift={(2.899, 3.503)}, rotate=-47.537, fill=Lavender, fill opacity=0.3] (0, 0) -- (0, -0.847) .. controls (0, -0.941) and (0.047, -0.988) .. (0.141, -0.988) .. controls (0.235, -0.988) and (0.282, -0.941) .. (0.282, -0.847) -- (0.282, 0) .. controls (0.282, 0.094) and (0.235, 0.141) .. (0.141, 0.141) .. controls (0.047, 0.141) and (0, 0.094) .. (0, 0) -- cycle;
    \filldraw[shift={(2.305, 2.962)}, rotate=-31.659, fill=Mulberry!80, fill opacity=0.5] (0, 0) -- (0, -0.847) .. controls (0, -0.941) and (0.047, -0.988) .. (0.141, -0.988) .. controls (0.235, -0.988) and (0.282, -0.941) .. (0.282, -0.847) -- (0.282, 0) .. controls (0.282, 0.094) and (0.235, 0.141) .. (0.141, 0.141) .. controls (0.047, 0.141) and (0, 0.094) .. (0, 0) -- cycle;
    \filldraw[shift={(1.854, 2.289)}, rotate=-10.968, fill=Mulberry!80, fill opacity=0.5] (0, 0) -- (0, -0.847) .. controls (0, -0.941) and (0.047, -0.988) .. (0.141, -0.988) .. controls (0.235, -0.988) and (0.282, -0.941) .. (0.282, -0.847) -- (0.282, 0) .. controls (0.282, 0.094) and (0.235, 0.141) .. (0.141, 0.141) .. controls (0.047, 0.141) and (0, 0.094) .. (0, 0) -- cycle;
    \filldraw[shift={(1.674, 1.542)}, rotate=8.288, fill=RoyalBlue!80, fill opacity=0.3] (0, 0) -- (0, -0.847) .. controls (0, -0.941) and (0.047, -0.988) .. (0.141, -0.988) .. controls (0.235, -0.988) and (0.282, -0.941) .. (0.282, -0.847) -- (0.282, 0) .. controls (0.282, 0.094) and (0.235, 0.141) .. (0.141, 0.141) .. controls (0.047, 0.141) and (0, 0.094) .. (0, 0) -- cycle;
    \filldraw[shift={(0.583, 1.53)}, rotate=52.009, fill=RoyalBlue!80, fill opacity=0.3] (0, 0) -- (0, -0.847) .. controls (0, -0.941) and (0.047, -0.988) .. (0.141, -0.988) .. controls (0.235, -0.988) and (0.282, -0.941) .. (0.282, -0.847) -- (0.282, 0) .. controls (0.282, 0.094) and (0.235, 0.141) .. (0.141, 0.141) .. controls (0.047, 0.141) and (0, 0.094) .. (0, 0) -- cycle;
    \filldraw[shift={(0.096, 2.058)}, rotate=42.998, fill=RoyalBlue!80, fill opacity=0.3] (0, 0) -- (0, -0.847) .. controls (0, -0.941) and (0.047, -0.988) .. (0.141, -0.988) .. controls (0.235, -0.988) and (0.282, -0.941) .. (0.282, -0.847) -- (0.282, 0) .. controls (0.282, 0.094) and (0.235, 0.141) .. (0.141, 0.141) .. controls (0.047, 0.141) and (0, 0.094) .. (0, 0) -- cycle;
    \filldraw[shift={(3.973, 1.884)}, rotate=-22.406, densely dashed, fill=Lavender] (0, 0) -- (0.847, 0);
    \filldraw[shift={(2.961, 0.798)}, rotate=-15.595, densely dashed, fill=Lavender] (0, 0) -- (0.564, -0.564);
    \node[anchor=center, text=Mulberry!80] at (4.393, 3.79) {$A$};
    \node[anchor=center, text=RoyalBlue!80] at (1.605, 0.12) {$B$};
    \filldraw[shift={(2.233, 3.314)}, rotate=-44.133, densely dashed, fill=Lavender] (0, 0) -- (0.847, 0);
    \filldraw[shift={(1.488, 1.252)}, rotate=3.637, densely dashed, fill=Lavender] (0, 0) -- (0.847, 0);
    \draw[<->] (3.419, 0.136) .. controls (4.172, 0.324) and (4.595, 0.795) .. (4.689, 1.547);
    \draw[shift={(2.144, 3.166)}, rotate=-159.891, <->] (0, 0) .. controls (0.753, 0.188) and (1.176, 0.659) .. (1.27, 1.411);
    \node[anchor=center] at (1.29, 2.57) {$l$};
    \node[anchor=center] at (4.521, 0.499) {$l$};
    \node[Mark, Mark_disk, BrickRed!90] at (0.723, 1.6) {};
    \node[Mark, Mark_disk, Green!80] at (0.247, 2.102) {};
    \node[Mark, Mark_disk, BrickRed!90] at (1.926, 0.795) {};
    \node[Mark, Mark_disk, Green!80] at (1.845, 1.509) {};
    \node[Mark, Mark_disk, Green!80] at (1.846, 1.515) {};
    \node[Mark, Mark_disk, BrickRed!90] at (2.005, 2.206) {};
    \node[Mark, Mark_disk, Green!80] at (2.422, 2.875) {};
    \node[Mark, Mark_disk, BrickRed!90] at (2.952, 3.347) {};
    \node[Mark, Mark_disk, Green!80] at (3.723, 3.509) {};
    \node[Mark, Mark_disk, BrickRed!90] at (4.233, 2.937) {};
    \node[Mark, Mark_disk, Green!80] at (4.499, 2.194) {};
    \node[Mark, Mark_disk, BrickRed!90] at (4.287, 1.503) {};
    \node[Mark, Mark_disk, Green!80] at (3.887, 0.918) {};
    \node[Mark, Mark_disk, BrickRed!90] at (3.333, 0.481) {};
    \node[Mark, Mark_disk, Green!80] at (1.313, 1.135) {};
    \node[Mark, Mark_disk, Green!80] at (2.545, 0.464) {};
\end{tikzpicture}
    \caption{Splitting of $\Lambda$ into two overlapping regions $A$ and $B$, where the interactions fully contained in $A$ are represented in pink, those in $B$ in blue, and those in the overlap in purple. The overlap has two connected components, each of size $\ell$, and in the figure $\ell=3$, as the size is measured in number of sites.}
    \label{fig:fig2}
\end{figure}
    Moreover, since we are interested in obtaining an MLSI for the Davies generator, not the Schmidt one, we can further upper bound this term as in \cite[Lemma 3.10]{Capel2024RapidThermalizationDissipative} by 
    \begin{equation}\label{eq:bound_ChR_1}
        D(\rho \| \omega) \leq \underset{x_k \in \Lambda_0}{\sum} D(\rho \| E_{\{ x_k\}\partial}^{\text{D}} (\rho)) \, ,
    \end{equation}
    where $E_{\{ x_k\}\partial}^{\text{D}}$ is the Davies conditional expectation from \eqref{eq:cond_exp_Davies}. The second summand in the RHS of \eqref{eq:chain_rule_RelEnt} is more complicated. It requires a standard argument of ``divide and conquer'' in order to upper bound $D(\omega \| \sigma^\Lambda ) $ by a sum of relative entropies involving Davies conditional expectations. This argument typically splits the graph considered at each step into two overlapping pieces, and obtains an upper bound for a conditional relative entropy in that graph in terms of two conditional relative entropies in each of the pieces, and a multiplicative error term depending on how correlations decay on the Gibbs state between the regions in the complement of the overlap. In particular, it completely follows the steps of  \cite[Lemma 5.11]{Capel2024RapidThermalizationDissipative}, except for the first step, since now, in order to split the lasso into two pieces, we have to ``cut'' from two parts, and thus consider two overlaps (see \cref{fig:fig2}). By \cite[Theorem 8]{art:2localPaper} and its modified version \cite[Theorem 5.9]{Capel2024RapidThermalizationDissipative}, we have
    \begin{equation}\label{eq:approximate_tensorization}
        D(\omega \| \sigma^\Lambda) \leq \frac{1}{1 - 2 \eta_{A,B}} \left[ D(\omega \| E_A^{\operatorname{S}}(\omega)) + D(\omega \| E_B^{\operatorname{S}}(\omega)) \right] \, ,
    \end{equation}
    where $\eta_{A,B}$ is a condition of decay of correlations between $A^c$ and $B^c$ called q$\mathbb{L}_1\rightarrow\mathbb{L}_\infty$-clustering. Note that, up to blocking of a few sites and their corresponding interactions as shown in \cref{fig:fig3},  $A^c$ and $B^c$ constitute two 1D segments, which we denote by $X$ and $Y$, respectively (see \cref{fig:fig4}). Thus, we can use \cite[Theorem 4.22 and Theorem 4.23]{Capel2024RapidThermalizationDissipative} to conclude that the decay of the standard covariance 
\begin{figure}
    \centering
    \begin{tikzpicture}[scale=1]
    \draw[shift={(1.457, 3.295)}, rotate=60.63] (0, 0) -- (0, -0.847) .. controls (0, -0.941) and (0.047, -0.988) .. (0.141, -0.988) .. controls (0.235, -0.988) and (0.282, -0.941) .. (0.282, -0.847) -- (0.282, 0) .. controls (0.282, 0.094) and (0.235, 0.141) .. (0.141, 0.141) .. controls (0.047, 0.141) and (0, 0.094) .. (0, 0) -- cycle;
    \node[Mark, Mark_disk] at (2.197, 3.043) {};
    \draw[shift={(2.07, 2.948)}, rotate=62.561] (0, 0) -- (0, -0.847) .. controls (0, -0.941) and (0.047, -0.988) .. (0.141, -0.988) .. controls (0.235, -0.988) and (0.282, -0.941) .. (0.282, -0.847) -- (0.282, 0) .. controls (0.282, 0.094) and (0.235, 0.141) .. (0.141, 0.141) .. controls (0.047, 0.141) and (0, 0.094) .. (0, 0) -- cycle;
    \node[Mark, Mark_disk, Green!80] at (2.82, 2.711) {};
    \draw[shift={(2.775, 2.571)}, rotate=91.615] (0, 0) -- (0, -0.847) .. controls (0, -0.941) and (0.047, -0.988) .. (0.141, -0.988) .. controls (0.235, -0.988) and (0.282, -0.941) .. (0.282, -0.847) -- (0.282, 0) .. controls (0.282, 0.094) and (0.235, 0.141) .. (0.141, 0.141) .. controls (0.047, 0.141) and (0, 0.094) .. (0, 0) -- cycle;
    \node[Mark, Mark_disk, BrickRed!90] at (3.607, 2.728) {};
    \draw[shift={(3.651, 2.565)}, rotate=129.289] (0, 0) -- (0, -0.847) .. controls (0, -0.941) and (0.047, -0.988) .. (0.141, -0.988) .. controls (0.235, -0.988) and (0.282, -0.941) .. (0.282, -0.847) -- (0.282, 0) .. controls (0.282, 0.094) and (0.235, 0.141) .. (0.141, 0.141) .. controls (0.047, 0.141) and (0, 0.094) .. (0, 0) -- cycle;
    \node[Mark, Mark_disk, Green!80] at (4.162, 3.165) {};
    \draw[shift={(4.248, 3.027)}, rotate=147.109] (0, 0) -- (0, -0.847) .. controls (0, -0.941) and (0.047, -0.988) .. (0.141, -0.988) .. controls (0.235, -0.988) and (0.282, -0.941) .. (0.282, -0.847) -- (0.282, 0) .. controls (0.282, 0.094) and (0.235, 0.141) .. (0.141, 0.141) .. controls (0.047, 0.141) and (0, 0.094) .. (0, 0) -- cycle;
    \node[Mark, Mark_disk, BrickRed!90] at (4.562, 3.751) {};
    \node[Mark, Mark_disk, BrickRed!90] at (2.28, 4.453) {};
    \draw[shift={(2.129, 4.536)}, rotate=-10.968] (0, 0) -- (0, -0.847) .. controls (0, -0.941) and (0.047, -0.988) .. (0.141, -0.988) .. controls (0.235, -0.988) and (0.282, -0.941) .. (0.282, -0.847) -- (0.282, 0) .. controls (0.282, 0.094) and (0.235, 0.141) .. (0.141, 0.141) .. controls (0.047, 0.141) and (0, 0.094) .. (0, 0) -- cycle;
    \node[Mark, Mark_disk, Green!80] at (2.119, 3.757) {};
    \draw[shift={(1.949, 3.79)}, rotate=8.288] (0, 0) -- (0, -0.847) .. controls (0, -0.941) and (0.047, -0.988) .. (0.141, -0.988) .. controls (0.235, -0.988) and (0.282, -0.941) .. (0.282, -0.847) -- (0.282, 0) .. controls (0.282, 0.094) and (0.235, 0.141) .. (0.141, 0.141) .. controls (0.047, 0.141) and (0, 0.094) .. (0, 0) -- cycle;
    \node[Mark, Mark_disk, BrickRed!90] at (2.2, 3.043) {};
    \draw[shift={(0.857, 3.778)}, rotate=52.009] (0, 0) -- (0, -0.847) .. controls (0, -0.941) and (0.047, -0.988) .. (0.141, -0.988) .. controls (0.235, -0.988) and (0.282, -0.941) .. (0.282, -0.847) -- (0.282, 0) .. controls (0.282, 0.094) and (0.235, 0.141) .. (0.141, 0.141) .. controls (0.047, 0.141) and (0, 0.094) .. (0, 0) -- cycle;
    \node[Mark, Mark_disk, Green!80] at (1.58, 3.382) {};
    \draw[shift={(0.371, 4.306)}, rotate=42.998] (0, 0) -- (0, -0.847) .. controls (0, -0.941) and (0.047, -0.988) .. (0.141, -0.988) .. controls (0.235, -0.988) and (0.282, -0.941) .. (0.282, -0.847) -- (0.282, 0) .. controls (0.282, 0.094) and (0.235, 0.141) .. (0.141, 0.141) .. controls (0.047, 0.141) and (0, 0.094) .. (0, 0) -- cycle;
    \node[Mark, Mark_disk, BrickRed!90] at (0.998, 3.848) {};
    \node[Mark, Mark_disk, Green!80] at (0.522, 4.35) {};
    \draw[shift={(1.182, 0.82)}, rotate=60.63] (0, 0) -- (0, -0.847) .. controls (0, -0.941) and (0.047, -0.988) .. (0.141, -0.988) .. controls (0.235, -0.988) and (0.282, -0.941) .. (0.282, -0.847) -- (0.282, 0) .. controls (0.282, 0.094) and (0.235, 0.141) .. (0.141, 0.141) .. controls (0.047, 0.141) and (0, 0.094) .. (0, 0) -- cycle;
    \node[Mark, Mark_disk, Mark_large, Peach] at (1.922, 0.567) {};
    \draw[shift={(1.795, 0.473)}, rotate=62.561] (0, 0) -- (0, -0.847) .. controls (0, -0.941) and (0.047, -0.988) .. (0.141, -0.988) .. controls (0.235, -0.988) and (0.282, -0.941) .. (0.282, -0.847) -- (0.282, 0) .. controls (0.282, 0.094) and (0.235, 0.141) .. (0.141, 0.141) .. controls (0.047, 0.141) and (0, 0.094) .. (0, 0) -- cycle;
    \node[Mark, Mark_disk, Green!80] at (2.545, 0.236) {};
    \draw[shift={(2.501, 0.095)}, rotate=91.615] (0, 0) -- (0, -0.847) .. controls (0, -0.941) and (0.047, -0.988) .. (0.141, -0.988) .. controls (0.235, -0.988) and (0.282, -0.941) .. (0.282, -0.847) -- (0.282, 0) .. controls (0.282, 0.094) and (0.235, 0.141) .. (0.141, 0.141) .. controls (0.047, 0.141) and (0, 0.094) .. (0, 0) -- cycle;
    \node[Mark, Mark_disk, BrickRed!90] at (3.333, 0.253) {};
    \draw[shift={(3.376, 0.089)}, rotate=129.289] (0, 0) -- (0, -0.847) .. controls (0, -0.941) and (0.047, -0.988) .. (0.141, -0.988) .. controls (0.235, -0.988) and (0.282, -0.941) .. (0.282, -0.847) -- (0.282, 0) .. controls (0.282, 0.094) and (0.235, 0.141) .. (0.141, 0.141) .. controls (0.047, 0.141) and (0, 0.094) .. (0, 0) -- cycle;
    \node[Mark, Mark_disk, Green!80] at (3.887, 0.69) {};
    \draw[shift={(3.973, 0.552)}, rotate=147.109] (0, 0) -- (0, -0.847) .. controls (0, -0.941) and (0.047, -0.988) .. (0.141, -0.988) .. controls (0.235, -0.988) and (0.282, -0.941) .. (0.282, -0.847) -- (0.282, 0) .. controls (0.282, 0.094) and (0.235, 0.141) .. (0.141, 0.141) .. controls (0.047, 0.141) and (0, 0.094) .. (0, 0) -- cycle;
    \node[Mark, Mark_disk, BrickRed!90] at (4.287, 1.275) {};
    \draw[shift={(0.583, 1.302)}, rotate=52.009] (0, 0) -- (0, -0.847) .. controls (0, -0.941) and (0.047, -0.988) .. (0.141, -0.988) .. controls (0.235, -0.988) and (0.282, -0.941) .. (0.282, -0.847) -- (0.282, 0) .. controls (0.282, 0.094) and (0.235, 0.141) .. (0.141, 0.141) .. controls (0.047, 0.141) and (0, 0.094) .. (0, 0) -- cycle;
    \node[Mark, Mark_disk, Green!80] at (1.305, 0.907) {};
    \draw[shift={(0.096, 1.831)}, rotate=42.998] (0, 0) -- (0, -0.847) .. controls (0, -0.941) and (0.047, -0.988) .. (0.141, -0.988) .. controls (0.235, -0.988) and (0.282, -0.941) .. (0.282, -0.847) -- (0.282, 0) .. controls (0.282, 0.094) and (0.235, 0.141) .. (0.141, 0.141) .. controls (0.047, 0.141) and (0, 0.094) .. (0, 0) -- cycle;
    \node[Mark, Mark_disk, BrickRed!90] at (0.723, 1.373) {};
    \node[Mark, Mark_disk, Green!80] at (0.247, 1.875) {};
    \draw[very thick, ->] (2.334, 2.474) -- (2.334, 1.486);
    \node[Mark, Mark_disk, Green!80] at (2.696, 5.123) {};
    \draw[shift={(2.58, 5.209)}, rotate=-31.659] (0, 0) -- (0, -0.847) .. controls (0, -0.941) and (0.047, -0.988) .. (0.141, -0.988) .. controls (0.235, -0.988) and (0.282, -0.941) .. (0.282, -0.847) -- (0.282, 0) .. controls (0.282, 0.094) and (0.235, 0.141) .. (0.141, 0.141) .. controls (0.047, 0.141) and (0, 0.094) .. (0, 0) -- cycle;
    \draw 
    (1.922, 0.567) circle[radius=0.272];
\end{tikzpicture}
    \caption{Blocking performed in one of the sites of the $B$ region, where 4 sites are blocked into 1, and all their 2-local interactions are transformed into a larger 1-local interaction. }
    \label{fig:fig3}
\end{figure}
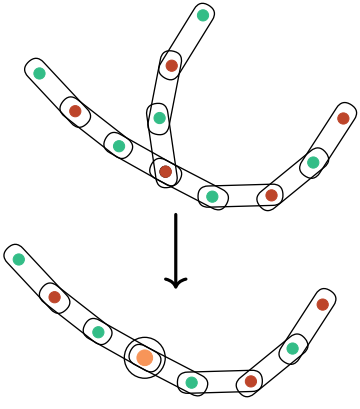
    \begin{figure}
    \centering
    \begin{tikzpicture}[scale=1]
    \filldraw[shift={(1.237, 0.873)}, rotate=60.63, fill=RoyalBlue!80, fill opacity=0.3] (0, 0) -- (0, -0.847) .. controls (0, -0.941) and (0.047, -0.988) .. (0.141, -0.988) .. controls (0.235, -0.988) and (0.282, -0.941) .. (0.282, -0.847) -- (0.282, 0) .. controls (0.282, 0.094) and (0.235, 0.141) .. (0.141, 0.141) .. controls (0.047, 0.141) and (0, 0.094) .. (0, 0) -- cycle;
    \filldraw[shift={(1.85, 0.526)}, rotate=62.561, fill=RoyalBlue!80, fill opacity=0.3] (0, 0) -- (0, -0.847) .. controls (0, -0.941) and (0.047, -0.988) .. (0.141, -0.988) .. controls (0.235, -0.988) and (0.282, -0.941) .. (0.282, -0.847) -- (0.282, 0) .. controls (0.282, 0.094) and (0.235, 0.141) .. (0.141, 0.141) .. controls (0.047, 0.141) and (0, 0.094) .. (0, 0) -- cycle;
    \filldraw[shift={(4.681, 2.013)}, rotate=-162.238, fill=Lavender, fill opacity=0.3] (0, 0) -- (0, -0.847) .. controls (0, -0.941) and (0.047, -0.988) .. (0.141, -0.988) .. controls (0.235, -0.988) and (0.282, -0.941) .. (0.282, -0.847) -- (0.282, 0) .. controls (0.282, 0.094) and (0.235, 0.141) .. (0.141, 0.141) .. controls (0.047, 0.141) and (0, 0.094) .. (0, 0) -- cycle;
    \filldraw[shift={(4.42, 2.849)}, rotate=-136.812, fill=Lavender, fill opacity=0.3] (0, 0) -- (0, -0.847) .. controls (0, -0.941) and (0.047, -0.988) .. (0.141, -0.988) .. controls (0.235, -0.988) and (0.282, -0.941) .. (0.282, -0.847) -- (0.282, 0) .. controls (0.282, 0.094) and (0.235, 0.141) .. (0.141, 0.141) .. controls (0.047, 0.141) and (0, 0.094) .. (0, 0) -- cycle;
    \filldraw[shift={(3.8, 3.483)}, rotate=-79.075, fill=Lavender, fill opacity=0.3] (0, 0) -- (0, -0.847) .. controls (0, -0.941) and (0.047, -0.988) .. (0.141, -0.988) .. controls (0.235, -0.988) and (0.282, -0.941) .. (0.282, -0.847) -- (0.282, 0) .. controls (0.282, 0.094) and (0.235, 0.141) .. (0.141, 0.141) .. controls (0.047, 0.141) and (0, 0.094) .. (0, 0) -- cycle;
    \filldraw[shift={(0.638, 1.356)}, rotate=52.009, fill=RoyalBlue!80, fill opacity=0.3] (0, 0) -- (0, -0.847) .. controls (0, -0.941) and (0.047, -0.988) .. (0.141, -0.988) .. controls (0.235, -0.988) and (0.282, -0.941) .. (0.282, -0.847) -- (0.282, 0) .. controls (0.282, 0.094) and (0.235, 0.141) .. (0.141, 0.141) .. controls (0.047, 0.141) and (0, 0.094) .. (0, 0) -- cycle;
    \filldraw[shift={(0.151, 1.884)}, rotate=42.998, fill=RoyalBlue!80, fill opacity=0.3] (0, 0) -- (0, -0.847) .. controls (0, -0.941) and (0.047, -0.988) .. (0.141, -0.988) .. controls (0.235, -0.988) and (0.282, -0.941) .. (0.282, -0.847) -- (0.282, 0) .. controls (0.282, 0.094) and (0.235, 0.141) .. (0.141, 0.141) .. controls (0.047, 0.141) and (0, 0.094) .. (0, 0) -- cycle;
    \draw[<->] (0, 1.931) .. controls (0.565, 0.797) and (1.441, 0.153) .. (2.63, 0);
    \node[anchor=center, text=RoyalBlue!80] at (0.841, 0.403) {$X$};
    \node[anchor=center, text=Mulberry!80] at (4.639, 3.96) {$Y$};
    \draw[<->] (2.838, 3.348) .. controls (4.244, 4.28) and (4.903, 3.808) .. (4.817, 1.932);
    \draw[<->] (2.767, 2.807) -- (2.049, 1.104);
    \draw[<->] (4.372, 1.649) -- (3.114, 0.36);
    \node[rotate=65.845, anchor=center] at (2.593, 1.855) {$l+1$};
    \node[rotate=44.144, anchor=center] at (3.587, 1.179) {$l+1$};
    \node[Mark, Mark_disk, BrickRed!90] at (0.778, 1.426) {};
    \node[Mark, Mark_disk, Green!80] at (0.302, 1.929) {};
    \node[Mark, Mark_disk, Green!80] at (3.778, 3.336) {};
    \node[Mark, Mark_disk, BrickRed!90] at (4.288, 2.763) {};
    \node[Mark, Mark_disk, Green!80] at (1.368, 0.962) {};
    \node[Mark, Mark_disk, Green!80] at (2.6, 0.29) {};
    \node[Mark, Mark_disk, BrickRed!90] at (3.388, 0.307) {};
    \node[Mark, Mark_disk, Green!80] at (3.942, 0.744) {};
    \node[Mark, Mark_disk, BrickRed!90] at (4.342, 1.329) {};
    \node[Mark, Mark_disk, Green!80] at (4.554, 2.021) {};   
    \node[Mark, Mark_disk, BrickRed!90] at (3.007, 3.173) {};
    \node[Mark, Mark_disk, Green!80] at (2.477, 2.702) {};
    \node[Mark, Mark_disk, BrickRed!90] at (2.06, 2.032) {};
    \node[Mark, Mark_disk, Green!80] at (1.9, 1.336) {};
    \node[Mark, Mark_disk, Green!80] at (1.901, 1.341) {};
    \node[Mark, Mark_disk, BrickRed!90] at (1.981, 0.621) {};
\end{tikzpicture}
    \caption{Complements of $A$ and $B$, denoted by $X$ and $Y$, respectively. The distance between $X$ and $Y$ is $\ell+1$, i.e. 4 in the figure. }
    \label{fig:fig4}
\end{figure}
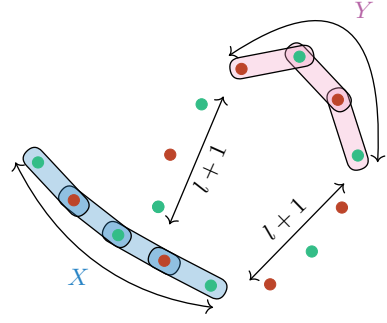
    \begin{align*}
        &\operatorname{Cov}_{\sigma^\Lambda}(A^c:B^c) \\
        &=   \hspace{-0.3cm} \underset{\norm{O_{A^c}}=\norm{O_{B^c}}=1}{\operatorname{sup}} \hspace{-0.3cm} \big| \tr[\sigma^\Lambda O_{A^c} O_{B^c}]-\tr[\sigma^\Lambda O_{A^c}]\tr[\sigma^\Lambda O_{B^c}]\big|
    \end{align*}
    is equivalent to the q$\mathbb{L}_1\rightarrow\mathbb{L}_\infty$-clustering, and thus, by \cite{Kimura_2025}, it decays exponentially with the distance between $A^c$ and $B^c$. In more detail, note that even though $X$ and $Y$ are 1D segments, it is not immediate to interpret the decay of correlations between them with $\ell$ as contained in a 1D chain as in \cite{Araki1969} and \cite{art:Bluhm2022exponentialdecayof}. However, we could pairwise identify and block all the sites in  \cref{fig:fig4} except for the last two on the left, and ``bend'' both segments $X$ and $Y$ so that the construction is transformed into a fully 1D model. Using thus that correlations decay exponentially in 1D even for commuting, non-translation invariant interactions at any positive temperature, as proven in \cite{Kimura_2025}, we conclude the exponential decay of $\eta_{A,B}$ with $\ell +1$. 
    
    This is the first step for the recursion in \cite[Lemma 5.11]{Capel2024RapidThermalizationDissipative}. In the next step, each of the segments $X$ and $Y$ is split into two overlapping parts, using a similar result of approximate tensorization to \eqref{eq:approximate_tensorization}. This yields 4 conditional relative entropies (in each of the ``halves'' of each of the 2 segments above) and 2 multiplicative error terms of the form of $\eta$, which will decay exponentially with the size of the overlaps. By carefully choosing the size of the ``halves'' and the size of the overlaps, it can be shown as in \cite[Lemma 5.11]{Capel2024RapidThermalizationDissipative} that, after a multiple step recursion, we have
    \begin{align*}
        D(\omega \| \sigma^\Lambda) &= D(\omega \| E_{\Lambda}(\omega))  \leq C \underset{R_k \subset \Lambda}{\sum} D(\omega \| E_{R_k}(\omega)) \, ,
    \end{align*}
    for $C$ a positive constant, where all the $R_k$ are given by segments of size $\ell_0$ intersected with $\Lambda_0$. By data-processing inequality and again the use of \eqref{eq:tensorization_Schmidt} and \cite[Lemma 3.10]{Capel2024RapidThermalizationDissipative}, we have
    \begin{align}\label{eq:bound_ChR_2}
         D(\omega \| \sigma^\Lambda) & \leq C \underset{R_k \subset \Lambda}{\sum} D(E_{\Lambda_0}^{\operatorname{S}}(\rho) \| E_{R_k}^{\operatorname{S}}\circ E_{\Lambda_0}^{\operatorname{S}}(\rho)) \nonumber \\
         & \leq C \underset{R_k \subset \Lambda}{\sum} D( \rho \| E_{R_k}^{\operatorname{S}} (\rho)) \nonumber \\
         & \leq C \underset{R_k \subset \Lambda}{\sum} D( \rho \| E_{R_k \partial}^{\operatorname{D}} (\rho)) \, .
    \end{align}
    Now, combining \eqref{eq:chain_rule_RelEnt} with \eqref{eq:bound_ChR_1} and \eqref{eq:bound_ChR_2}, and using that by \cite[Theorem 1.4]{art:CompleteEntropicInequalities_GaoRouze_2022} there exist positive cMLSI constants $\alpha_1$ and $\alpha_2$ such that for any $j,k$,
    \begin{equation}
        \alpha_1 \hspace{-1pt} D(\rho \| E_{\{ x_j\}\partial}^{\text{D}} (\rho)\hspace{-1pt})\hspace{-2.5pt} \leq \hspace{-2.5pt} \underbrace{- \hspace{-1pt} \tr [ \hspace{-0.5pt} \cL_{\{ x_j\}\partial}^{\text{D}}(\rho) \hspace{-1pt} \big(\hspace{-1.5pt} \log \rho \hspace{-1pt} - \hspace{-1pt} \log \hspace{-0.5pt} E_{\{ x_j\}\partial}^{\text{D}} (\rho) \big)\hspace{-1pt}]}_{\operatorname{EP}_{\cL_{\{ x_j\}\partial}^{\text{D}}}(\rho)},
    \end{equation}
    and 
    \begin{equation}
        \alpha_2 D(\rho \| E_{R_k\partial}^{\text{D}} (\rho)) \leq \underbrace{- \tr[ \cL_{R_k\partial}^{\text{D}}(\rho) \big( \log \rho - \log E_{R_k\partial}^{\text{D}} (\rho) \big)]}_{\operatorname{EP}_{\cL_{R_k\partial}^{\text{D}}}(\rho)} ,
    \end{equation}
    where EP stands for \textit{entropy production}, we conclude
    \begin{align*}
        D(\rho \| \sigma^\Lambda) \leq & \frac{1}{\operatorname{min} \{ \alpha_1 , \alpha_2\}} \\
        & \cdot \left( \underset{x_j \in \Lambda_0}{\sum}  \operatorname{EP}_{\cL_{\{ x_j\}\partial}^{\text{D}}}(\rho) + C\underset{R_k \subset \Lambda}{\sum}  \operatorname{EP}_{\cL_{R_k\partial}^{\text{D}}}(\rho) \right) \\
       & \leq \frac{m C}{\operatorname{min} \{ \alpha_1 , \alpha_2\}} \operatorname{EP}_{\cL_{\Lambda}^{\text{D}}}(\rho) \, ,
    \end{align*}
    where we are using the positivity and additivity of the entropy production, as well as the fact that, by construction, each site is contained in at most a constant number of $R_k$, say $m$. Since all multiplicative factors in the last term are positive constants, this concludes the proof.  
\end{proof}

In an analogous way, we could have proven a positive MLSI for classical generators associated to the Lasso 1D Ising, such as the Glauber dynamics, as the Hamiltonian is purely classical. We have preferred to include the proof for the Davies dynamics, as it concerns quantum systems evolving towards the Gibbs state of the Lasso, and thus is more general in principle. 
\end{document}